\documentclass[a4paper,12pt, reqno,accepted=2024-03-06,onecolumn,allowfontchangeintitle]{quantumarticle}
\pdfoutput=1 

\usepackage{amssymb,amsmath,amsthm,amsfonts,amscd,bbm}
\usepackage{enumitem}
\usepackage{caption}
\usepackage{bm}

\usepackage{geometry} 

\usepackage{ifpdf}
\ifpdf
\usepackage[pdftex]{graphicx}
\else
\usepackage[dvips]{graphicx}
\fi
\usepackage[all]{xy}
\usepackage{hyperref}
\usepackage[dvipsnames]{xcolor}
\usepackage{graphicx}
\usepackage[outdir=./]{epstopdf}
\usepackage{enumitem}
\usepackage{tensor}
\usepackage{tikz-cd}
\usepackage{multicol}
\usepackage{mathtools}
\usepackage{tocvsec2}
\usepackage{bbm}
\usepackage{longtable}
\usepackage{fullpage}

\newcommand{\googlebooks}[1]{(preview at \href{https://books.google.com/books?id=#1}{google books})}

\newcommand{\numdam}[1]{}
\usepackage{mathrsfs}

\usepackage{mathrsfs}
\DeclareMathAlphabet{\mathpzc}{OT1}{pzc}{m}{it}

\usepackage{imakeidx}
\makeindex[options=-s dotted, title=Index of Symbols, columns=1]

\newcommand{\IN}[1]{\index{#1|BH}}

\def\semicolon{;}
\def\applytolist#1{
    \expandafter\def\csname multi#1\endcsname##1{
        \def\multiack{##1}\ifx\multiack\semicolon
            \def\next{\relax}
        \else
            \csname #1\endcsname{##1}
            \def\next{\csname multi#1\endcsname}
        \fi
        \next}
    \csname multi#1\endcsname}

\def\calc#1{\expandafter\def\csname c#1\endcsname{{\mathcal #1}}}
\applytolist{calc}QWERTYUIOPLKJHGFDSAZXCVBNM;
\def\bbc#1{\expandafter\def\csname bb#1\endcsname{{\mathbb #1}}}
\applytolist{bbc}QWERTYUIOPLKJHGFDSAZXCVBNM;
\def\bfc#1{\expandafter\def\csname bf#1\endcsname{{\mathbf #1}}}
\applytolist{bfc}QWERTYUIOPLKJHGFDSAZXCVBNM;
\def\sfc#1{\expandafter\def\csname s#1\endcsname{{\sf #1}}}
\applytolist{sfc}QWERTYUIOPLKJHGFDSAZXCVBNM;
\def\fc#1{\expandafter\def\csname f#1\endcsname{{\mathfrak #1}}}
\applytolist{fc}QWERTYUIOPLKJHGFDSAZXCVBNM;
\def\rmc#1{\expandafter\def\csname rm#1\endcsname{{\mathrm #1}}}
\applytolist{rmc}QWERTYUIOPLKJHGFDSAZXCVBNM;

\usepackage{tikz}
\usepackage{tikz-cd}

\usetikzlibrary{arrows,backgrounds,patterns.meta}
\usetikzlibrary{positioning,shadings,cd}
\usetikzlibrary{shapes}
\usetikzlibrary{backgrounds}
\usetikzlibrary{decorations,decorations.pathreplacing,decorations.markings,decorations.pathmorphing}
\usetikzlibrary{fit,calc,through}
\usetikzlibrary{external}
\usetikzlibrary{arrows}
\tikzset{vertex/.style = {shape=circle,draw,fill=black,inner sep=0pt,minimum size=5pt}}
\tikzset{edge/.style = {->,> = latex', bend right}}
\tikzset{
	super thick/.style={line width=3pt}
}
\tikzset{
    quadruple/.style args={[#1] in [#2] in [#3] in [#4]}{
        #1,preaction={preaction={preaction={draw,#4},draw,#3}, draw,#2}
    }
}
\tikzset{squiggly/.style={decorate, decoration=snake}}
\tikzstyle{knot}=[preaction={super thick, white, draw}]
\tikzstyle{shaded}=[fill=red!10!blue!20!gray!30!white]
\tikzstyle{unshaded}=[fill=white]
\tikzstyle{empty box}=[circle, draw, thick, fill=white, opaque, inner sep=2mm]
\tikzstyle{annular}=[scale=.7, inner sep=1mm, baseline]
\tikzstyle{rectangular}=[scale=.75, inner sep=1mm, baseline=-.1cm]
\tikzstyle{mid>}=[decoration={markings, mark=at position 0.5 with {\arrow{>}}}, postaction={decorate}]
\tikzstyle{mid<}=[decoration={markings, mark=at position 0.5 with {\arrow{<}}}, postaction={decorate}]
\tikzstyle{over}=[double, draw=white, super thick, double=]
\tikzstyle{snake}=[decorate, decoration={snake, segment length=1mm, amplitude=.3mm}]
\tikzstyle{saw}=[decorate, decoration={saw, segment length=.7mm, amplitude=.25mm}]

\tikzstyle{coupon}=[draw, very thick, rectangle, rounded corners=5pt]
\tikzset{Rightarrow/.style={double equal sign distance,>={Implies},->},
triplecd/.style={-,preaction={draw,Rightarrow}},
quadruplecd/.style={preaction={draw,Rightarrow,
shorten >=0pt
},
shorten >=1pt,
-,double,double
distance=0.2pt}}
\tikzset{
    tripleline/.style args={[#1] in [#2] in [#3]}{
        #1,preaction={preaction={draw,#3},draw,#2}
    }
}
\tikzstyle{triple}=[tripleline={[line width=.15mm,black] in
      [line width=.7mm,white] in
      [line width=1mm,black]}] 
\tikzset{
    quadrupleline/.style args={[#1] in [#2] in [#3] in [#4]}{
        #1,preaction={preaction={preaction={draw,#4},draw,#3}, draw,#2}
    }
}
\tikzstyle{quadruple}=[quadrupleline={[line width=.3mm,white] in
      [line width=.6mm,black] in
      [line width=1.2mm,white] in
      [line width=1.5mm,black]}]

\newcommand{\roundNbox}[6]{
	\draw[rounded corners=5pt, very thick, #1] ($#2+(-#3,-#3)+(-#4,0)$) rectangle ($#2+(#3,#3)+(#5,0)$);
	\coordinate (ZZa) at ($#2+(-#4,0)$);
	\coordinate (ZZb) at ($#2+(#5,0)$);
	\node at ($1/2*(ZZa)+1/2*(ZZb)$) {#6};
}

\newcommand{\halfDome}[4]{
        \draw[thick] ($ #1 + (#2,0) $) ellipse ({#2} and {#3});
        \draw[thick, dotted] ($ #1 $) arc(-180:0:#2);
	\node at ($ #1 + (#2,0) $) {#4};
}

\newcommand{\halfDomeNoDots}[4]{
        \draw[thick] ($ #1 + (#2,0) $) ellipse ({#2} and {#3});
        \draw[thick] ($ #1 $) arc(-180:0:#2);
	\node at ($ #1 + (#2,0) $) {#4};
}

\newcommand{\tikzmath}[2][]
     {\vcenter{\hbox{\begin{tikzpicture}[#1]#2
                     \end{tikzpicture}}}
     }

\theoremstyle{plain}
\newtheorem{thm}{Theorem}[section]
\newtheorem*{thm*}{Theorem}

\newtheorem*{cor*}{Corollary}

\newtheorem*{conj*}{Conjecture}
\newtheorem{lem}[thm]{Lemma}
\newtheorem*{lem*}{Lemma}

\newtheorem{prop}[thm]{Proposition}

\newtheorem*{quest*}{Question}
\newtheorem*{claim*}{Claim}

\theoremstyle{definition}
\newtheorem{defn}[thm]{Definition}
\newtheorem{fact}[thm]{Fact}

\newtheorem{ex}[thm]{Example}
\newtheorem{sub-ex}[thm]{Sub-Example}
\newtheorem{counter-ex}[thm]{Counter-Example}
\newtheorem{rem}[thm]{Remark}
\newtheorem*{rem*}{Remark}

\usepackage{xcolor}
\definecolor{dark-red}{rgb}{0.7,0.25,0.25}
\definecolor{dark-blue}{rgb}{0.15,0.15,0.55}
\definecolor{medium-blue}{rgb}{0,0,.8}
\definecolor{DarkGreen}{RGB}{0,150,0}
\definecolor{rho}{named}{red}
\definecolor{OIorange}{HTML}{E69F00}
\definecolor{OIskyblue}{HTML}{56B4E9}
\definecolor{OIbluishgreen}{HTML}{009E73}
\definecolor{OIyellow}{HTML}{F0E442}
\definecolor{OIblue}{HTML}{0072B2}
\definecolor{OIvermillion}{HTML}{D55E00}
\definecolor{OIreddishpurple}{HTML}{CC79A7}
\definecolor{OIblack}{HTML}{000000}
\hypersetup{
   colorlinks, linkcolor={purple},
   citecolor={medium-blue}, urlcolor={medium-blue}
}



\newcommand{\coev}{\operatorname{coev}}
\newcommand{\Dome}{\operatorname{Dome}}
\newcommand{\End}{\operatorname{End}}
\newcommand{\ev}{\operatorname{ev}}
\newcommand{\eval}{\operatorname{eval}}
\newcommand{\Forget}{\operatorname{Forget}}
\newcommand{\id}{\operatorname{id}}
\newcommand{\im}{\operatorname{im}}
\newcommand{\Irr}{\operatorname{Irr}}

\newcommand{\rev}{\operatorname{rev}}
\newcommand{\Sphere}{\operatorname{Sphere}}
\newcommand{\tot}{\operatorname{tot}}
\newcommand{\tr}{\operatorname{tr}}
\newcommand{\Tube}{\operatorname{Tube}}

\newcommand{\Hilb}{\mathsf{Hilb}}
\newcommand{\Mod}{\mathsf{Mod}}
\newcommand{\Rep}{\mathsf{Rep}}
\newcommand{\Vect}{\mathsf{Vect}}


\newcommand{\MR}[1]{}

\newcommand{\set}[2]{\left\{#1 \middle| #2\right\}}

\hypersetup{pdftitle=Enriched string-net models and their excitations,} 

\makeatletter
\renewcommand{\@printtitletextwithappropriatefontsize}{\@titleatfontsize{\LARGE}}
\makeatother

\title{Enriched string-net models and their excitations}
\author[1]{David Green}
\author[2]{Peter Huston}
\author[1]{Kyle Kawagoe}
\author[1]{David Penneys}
\author[1]{Anup Poudel}
\author[1]{Sean Sanford}
\affiliation[1]{The Ohio State University}
\affiliation[2]{Vanderbilt University}
\date{March 6, 2024}

\begin{document}

\maketitle
\begin{abstract}
Boundaries of Walker-Wang models have been used to construct commuting projector models which realize chiral unitary modular tensor categories (UMTCs) as boundary excitations.
Given a UMTC $\mathcal{A}$ representing the Witt class of an anomaly, the article [arXiv:2208.14018] gave a commuting projector model associated to an $\mathcal{A}$-enriched unitary fusion category $\cX$ on a 2D boundary of the 3D Walker-Wang model associated to $\mathcal{A}$.
That article claimed that the boundary excitations were given by the enriched center/M\"uger centralizer $Z^\mathcal{A}(\cX)$ of $\cA$ in $Z(\cX)$.

In this article, we give a rigorous treatment of this 2D boundary model, and we verify this assertion using topological quantum field theory (TQFT) techniques, including skein modules and a certain semisimple algebra whose representation category describes boundary excitations.
We also use TQFT techniques to show the 3D bulk point excitations of the Walker-Wang bulk are given by the M\"uger center $Z_2(\mathcal{A})$, and we construct bulk-to-boundary hopping operators $Z_2(\mathcal{A})\to Z^{\mathcal{A}}(\cX)$ reflecting how the UMTC of boundary excitations $Z^{\mathcal{A}}(\cX)$ is symmetric-braided enriched in $Z_2(\mathcal{A})$.

This article also includes a self-contained comprehensive review of the Levin-Wen string net model from a unitary tensor category viewpoint, as opposed to the skeletal 6j symbol viewpoint.

\end{abstract}

\section{Introduction}

One of the most powerful features of topologically ordered phases of matter \cite{Wen_2017B} is the robustness of their relevant properties under perturbation \cite{Wen_1990A,Wen_1990B}.  
As an example, despite the fact that Laughlin's wavefunction \cite{Laughlin_1983} is not a precise solution to the problem of an electron gas in the presence of a magnetic field, it accurately captures the phenomena of fractionalized charge and fractionally quantized Hall conductance. 
Furthering this philosophy, Levin and Wen wrote down their famous string-net models \cite{PhysRevB.71.045110}, which made concrete the connection between fusion categories and the physics they represent. 

String net models condense topological order from the vacuum; indeed, a fusion category $\cX$ is a condensable algebra in the symmetric monoidal 2-category $2\Vect$ \cite{1905.09566,MR4654609}.
The procedure to condense $\cX$ \cite{2206.02611} corresponds to passing to the ground state space of the commuting projector Levin-Wen local Hamiltonian \cite{PhysRevB.71.045110,PhysRevB.103.195155}, whose localized excitations can be described by the Drinfeld center $Z(\cX)$\IN{$Z(\cX)$ the Drinfeld center of $X$} \cite{MR2726654,1106.6033}, which is Witt-equivalent to the trivial topological order $\Vect$ \cite{MR3039775}.

It is widely accepted that (2+1)D commuting projector models produce exactly those unitary modular tensor categories (UMTCs) which are Drinfeld centers as their categories of localized anyonic excitations.
In particular, chiral UMTCs should only appear as boundaries of invertible (3+1)D topological quantum field theories (TQFTs) \cite{MR4444089,MR4334249}, which correspond to the anomaly \cite{MR3702386}.
This anomaly can be represented by a Witt-class \cite{MR3039775} of UMTCs \cite{MR4302495}. 
This is a manifestation of a bulk-boundary correspondence; such correspondences appear throughout topological physics \cite{Wen_1995,Wen_2013,PhysRevX.3.021009,1702.00673}.

To build a (3+1)D commuting projector model which realizes a chiral UMTC $\cC$ as boundary excitations, we first fix a UMTC $\cA$ representing the anomaly.
We then choose an $\cA$-enriched unitary fusion category $(\cX,\Phi^Z)$, i.e. a unitary fusion category (UFC) $\cX$ equipped with a unitary braided central functor $\Phi^Z:\cA\to Z(\cX)$ \cite[\S~II.B]{MR4640433}, so that the centralizer $Z^{\cA}(\cX)$ of $\Phi^Z(\cA)$ in $Z(\cX)$ is $\cC$.
The functor $\Phi^Z$ is the data necessary to attach a (2+1)D Levin-Wen model \cite{PhysRevB.71.045110} for $\cX$ as the boundary of a 
(3+1)D Walker-Wang model \cite{1104.2632} for $\cA$ (in red below).
We write $\Phi:= F\circ \Phi^Z: \cA\to \cX$, where $F: Z(\cX)\to \cX$ is the forgetful functor.
$$
\begin{tikzpicture}
	\pgfmathsetmacro{\lattice}{1};	
	\pgfmathsetmacro{\xoffset}{.4};	
	\pgfmathsetmacro{\yoffset}{.2};	
	\pgfmathsetmacro{\extra}{.15};	
	\pgfmathsetmacro{\zextra}{.35};	
	\pgfmathsetmacro{\zoom}{6};	
	\coordinate (z1) at ($ -.6*(\zoom,0) + (0,\lattice) $);
	\coordinate (z2) at ($ (\zoom,\lattice) $);
	\coordinate (aaa) at (0,0);
	\coordinate (baa) at ($ (aaa) + (\lattice,0) $);
	\coordinate (caa) at ($ (aaa) + 2*(\lattice,0) $);
	\coordinate (aba) at ($ (aaa) + (0,\lattice) $);
	\coordinate (bba) at ($ (aaa) + (\lattice,0) + (0,\lattice) $);
	\coordinate (cba) at ($ (aaa) + 2*(\lattice,0) + (0,\lattice) $);
	\coordinate (aca) at ($ (aaa) + 2*(0,\lattice) $);
	\coordinate (bca) at ($ (aaa) + (\lattice,0) + 2*(0,\lattice) $);
	\coordinate (cca) at ($ (aaa) + 2*(\lattice,0) + 2*(0,\lattice) $);
	\coordinate (aab) at ($ (aaa) + (\xoffset,\yoffset) $);
	\coordinate (bab) at ($ (aaa) + (\lattice,0) + (\xoffset,\yoffset) $);
	\coordinate (cab) at ($ (aaa) + 2*(\lattice,0) + (\xoffset,\yoffset) $);
	\coordinate (abb) at ($ (aaa) + (0,\lattice) + (\xoffset,\yoffset) $);
	\coordinate (bbb) at ($ (aaa) + (\lattice,0) + (0,\lattice) + (\xoffset,\yoffset) $);
	\coordinate (cbb) at ($ (aaa) + 2*(\lattice,0) + (0,\lattice) + (\xoffset,\yoffset) $);
	\coordinate (acb) at ($ (aaa) + 2*(0,\lattice) + (\xoffset,\yoffset) $);
	\coordinate (bcb) at ($ (aaa) + (\lattice,0) + 2*(0,\lattice) + (\xoffset,\yoffset) $);
	\coordinate (ccb) at ($ (aaa) + 2*(\lattice,0) + 2*(0,\lattice) + (\xoffset,\yoffset) $);
	\coordinate (aac) at ($ (aaa) + 2*(\xoffset,\yoffset) $);
	\coordinate (bac) at ($ (aaa) + (\lattice,0) + 2*(\xoffset,\yoffset) $);
	\coordinate (cac) at ($ (aaa) + 2*(\lattice,0) + 2*(\xoffset,\yoffset) $);
	\coordinate (abc) at ($ (aaa) + (0,\lattice) + 2*(\xoffset,\yoffset) $);
	\coordinate (bbc) at ($ (aaa) + (\lattice,0) + (0,\lattice) + 2*(\xoffset,\yoffset) $);
	\coordinate (cbc) at ($ (aaa) + 2*(\lattice,0) + (0,\lattice) + 2*(\xoffset,\yoffset) $);
	\coordinate (acc) at ($ (aaa) + 2*(0,\lattice) + 2*(\xoffset,\yoffset) $);
	\coordinate (bcc) at ($ (aaa) + (\lattice,0) + 2*(0,\lattice) + 2*(\xoffset,\yoffset) $);
	\coordinate (ccc) at ($ (aaa) + 2*(\lattice,0) + 2*(0,\lattice) + 2*(\xoffset,\yoffset) $);
	\draw[thick, red] ($ (aac) - \extra*(\lattice,0) $) -- (cac);
	\draw[thick, red] ($ (abc) - \extra*(\lattice,0) $) -- (cbc);
	\draw[thick, red] ($ (acc) - \extra*(\lattice,0) $) -- (ccc);
	\draw[thick, red] ($ (aac) - \extra*(0,\lattice) $) -- ($ (acc) + \extra*(0,\lattice) $);
	\draw[thick, red] ($ (bac) - \extra*(0,\lattice) $) -- ($ (bcc) + \extra*(0,\lattice) $);
	\draw[thick] ($ (cac) - \extra*(0,\lattice) $) -- ($ (ccc) + \extra*(0,\lattice) $);
	\draw[thick, red, knot] ($ (aab) - \extra*(\lattice,0) $) -- (cab);
	\draw[thick, red, knot] ($ (abb) - \extra*(\lattice,0) $) -- (cbb);
	\draw[thick, red, knot] ($ (acb) - \extra*(\lattice,0) $) -- (ccb);
	\draw[thick, red, knot] ($ (aab) - \extra*(0,\lattice) $) -- ($ (acb) + \extra*(0,\lattice) $);
	\draw[thick, red, knot] ($ (bab) - \extra*(0,\lattice) $) -- ($ (bcb) + \extra*(0,\lattice) $);
	\draw[very thick, white] ($ (cab) - \extra*(0,\lattice) $) -- ($ (ccb) + \extra*(0,\lattice) $);
	\draw[thick] ($ (cab) - \extra*(0,\lattice) $) -- ($ (ccb) + \extra*(0,\lattice) $);
	\draw[thick, red, knot] ($ (aaa) - \extra*(\lattice,0) $) -- (caa);
	\draw[thick, red, knot] ($ (aba) - \extra*(\lattice,0) $) -- (cba);
	\draw[thick, red, knot] ($ (aca) - \extra*(\lattice,0) $) -- (cca);
	\draw[thick, red, knot] ($ (aaa) - \extra*(0,\lattice) $) -- ($ (aca) + \extra*(0,\lattice) $);
	\draw[thick, red, knot] ($ (baa) - \extra*(0,\lattice) $) -- ($ (bca) + \extra*(0,\lattice) $);
	\draw[very thick, white] ($ (caa) - \extra*(0,\lattice) $) -- ($ (cca) + \extra*(0,\lattice) $);
	\draw[thick] ($ (caa) - \extra*(0,\lattice) $) -- ($ (cca) + \extra*(0,\lattice) $);
	\draw[thick, red] ($ (aaa) - \zextra*(\xoffset,\yoffset) $) -- ($ (aac) + \zextra*(\xoffset,\yoffset) $);
	\draw[thick, red] ($ (aba) - \zextra*(\xoffset,\yoffset) $) -- ($ (abc) + \zextra*(\xoffset,\yoffset) $);
	\draw[thick, red] ($ (aca) - \zextra*(\xoffset,\yoffset) $) -- ($ (acc) + \zextra*(\xoffset,\yoffset) $);
	\draw[thick, red] ($ (baa) - \zextra*(\xoffset,\yoffset) $) -- ($ (bac) + \zextra*(\xoffset,\yoffset) $);
	\draw[thick, red] ($ (bba) - \zextra*(\xoffset,\yoffset) $) -- ($ (bbc) + \zextra*(\xoffset,\yoffset) $);
	\draw[thick, red] ($ (bca) - \zextra*(\xoffset,\yoffset) $) -- ($ (bcc) + \zextra*(\xoffset,\yoffset) $);
	\draw[thick] ($ (caa) - \zextra*(\xoffset,\yoffset) $) -- ($ (cac) + \zextra*(\xoffset,\yoffset) $);
	\draw[thick] ($ (cba) - \zextra*(\xoffset,\yoffset) $) -- ($ (cbc) + \zextra*(\xoffset,\yoffset) $);
	\draw[thick] ($ (cca) - \zextra*(\xoffset,\yoffset) $) -- ($ (ccc) + \zextra*(\xoffset,\yoffset) $);
	\filldraw[red] (aaa) circle (.05cm);
	\filldraw[red] (aba) circle (.05cm);
	\filldraw[red] (aca) circle (.05cm);
	\filldraw[red] (aab) circle (.05cm);
	\filldraw[red] (abb) circle (.05cm);
	\filldraw[red] (acb) circle (.05cm);
	\filldraw[red] (aac) circle (.05cm);
	\filldraw[red] (abc) circle (.05cm);
	\filldraw[red] (acc) circle (.05cm);
	\filldraw[red] (baa) circle (.05cm);
	\filldraw[red] (bab) circle (.05cm);
	\filldraw[red] (bac) circle (.05cm);
	\filldraw[red] (bba) circle (.05cm);
	\filldraw[red] (bbb) circle (.05cm);
	\filldraw[red] (bbc) circle (.05cm);
	\filldraw[red] (bca) circle (.05cm);
	\filldraw[red] (bcb) circle (.05cm);
	\filldraw[red] (bcc) circle (.05cm);
	\filldraw (caa) circle (.05cm);
	\filldraw (cba) circle (.05cm);
	\filldraw (cca) circle (.05cm);
	\filldraw (cab) circle (.05cm);
	\filldraw (cbb) circle (.05cm);
	\filldraw (ccb) circle (.05cm);
	\filldraw (cac) circle (.05cm);
	\filldraw (cbc) circle (.05cm);
	\filldraw (ccc) circle (.05cm);
	\draw[blue!50, very thin] (cbb) circle (.15cm);
		\foreach \x/\y/\s in {155/100/24, 175/120/23, 185/140/22, 200/160/23, 220/180/25, 240/200/26} 
		{\draw[dotted, blue!50] ($(cbb)+(\x:\extra)$) to[bend left=\s] ($(z2) + (\y:\lattice)$);}
		\foreach \x/\y/\s in {135/70/30,290/225/28}
		{\draw[blue!50, very thin] ($ (cbb) + (\x:\extra) $) to[bend left=\s] ($ (z2) + (\y:\lattice) $);}
			\draw[blue!50, very thin] (z2) circle (\lattice);
			\draw[thick, red] ($ (z2) - .75*(\lattice,0) $) node[above] {\scriptsize{$a$}} -- (z2);
			\draw ($ (z2) - .75*(0,\lattice) $) node[right] {\scriptsize{$x_1$}} -- ($ (z2) + .75*(0,\lattice) $) node[left] {\scriptsize{$x_2$}};
			\draw ($ (z2) - 1.25*(\xoffset,\yoffset) $) node[below] {\scriptsize{$y_1$}} -- ($ (z2) + 1.25*(\xoffset,\yoffset) $) node[above] {\scriptsize{$y_2$}};
			\filldraw (z2) circle (.05cm);
			\node at ($ (z2) -1.3*(0,\lattice) $) {$\cX(\Phi(\textcolor{red}{a})y_1x_1 \to x_2y_2)$};
			\draw[blue!50, very thin] ($ (z2) + (135:\lattice) $) -- ($ (z2) +(-45:\lattice) $);
			\foreach \x in {135,-45}
			{\draw[blue!50, very thin, -stealth] ($ (z2) + (\x:.5*\lattice) - (.1,.1)$) to ($ (z2) + (\x:.5*\lattice) + (.1,.1)$);}
	\pgftransformxscale{-1}
	\draw[blue!50, very thin] (aba) circle (.15cm);
		\foreach \x/\y/\s in {155/100/24, 175/120/23, 185/140/22, 200/160/23, 220/180/25, 240/200/26} 
		{\draw[dotted, blue!50] ($(aba)+(\x:\extra)$) to[bend left=\s] ($(z1) + (\y:\lattice)$);}
		\foreach \x/\y/\s in {135/70/30,290/225/28}
		{\draw[blue!50, very thin] ($ (aba) + (\x:\extra) $) to[bend left=\s] ($ (z1) + (\y:\lattice) $);}
			\draw[blue!50, very thin] (z1) circle (\lattice);
			\draw[thick, red] ($ (z1) - .75*(\lattice,0) $) node[below] {\scriptsize{$a_2$}} -- ($ (z1) + .75*(\lattice,0) $) node[above] {\scriptsize{$a_1$}};
			\draw[thick, red] ($ (z1) - .75*(0,\lattice) $) node[right] {\scriptsize{$c_1$}} -- ($ (z1) + .75*(0,\lattice) $) node[left] {\scriptsize{$c_2$}};
			\draw[thick, red] ($ (z1) - 1.25*(\xoffset,0) + 1.25*(0,\yoffset) $) node[above] {\scriptsize{$b_2$}} -- ($ (z1) + 1.25*(\xoffset,0) - 1.25*(0,\yoffset) $) node[below] {\scriptsize{$b_1$}};
			\filldraw[thick, red] (z1) circle (.05cm);
			\node at ($ (z1) -1.3*(0,\lattice) $) {\textcolor{red}{$\cA(a_1b_1c_1 \to c_2b_2a_2)$}};
			\draw[blue!50, very thin] ($ (z1) + (225:\lattice) $) -- ($ (z1) + (45:\lattice) $);
			\foreach \x in {225,45}
			{\draw[blue!50, very thin, -stealth] ($ (z1) + (\x:.5*\lattice) + (.1,-.1)$) to ($ (z1) + (\x:.5*\lattice) - (.1,-.1)$);}
\end{tikzpicture}
$$
The local Hamiltonian is made up of 1D edge terms and 2D plaquette terms for every 2D square in the lattice. The half-braiding for $\Phi(a)$ with $\cX$ is used in a crucial way to define the plaquette terms at the interface of the 3D bulk with the 2D boundary. 
We refer the reader to \S\ref{sec:Enriched} below for more details.
We remark that related (2+1)D boundary theories for (3+1)D Walker-Wang models also appear in \cite{PhysRevB.90.245122,PhysRevB.87.045107}.

These $\cA$-enriched string net models condense topological order Witt equivalent to $\cA$ from $\cA$-topological order;
indeed an $\cA$-enriched fusion category is a condensable algebra in the fusion 2-category $\Mod(\cA)$ \cite[Ex.~1.4.6]{MR4561584}.
Condensing $\cX$ (cf.~\cite{2206.02611}) corresponds to passing to the ground state space of our $\cA$-enriched string net local Hamiltonian, whose localized excitations can be described by the enriched center $Z^\cA(\cX):=\cA'\subset Z(\cX)$\IN{$Z^\cA(\cX)$ the enriched center of $\cX$} \cite{MR3725882},
which is Witt-equivalent to $\cA$ \cite{MR3039775}.

It was claimed in \cite[\S~{II}.B]{MR4640433} that the topological excitations which live on the boundary of the $\cA$-enriched string net model for $\cX$ are described by the enriched center $Z^{\cA}(\cX)$, i.e., the M\"uger centralizer of $\cA$ in $Z(\cX)$.
We give a rigorous proof of this fact using
TQFT techniques, including $\cA$-enriched skein modules for $\cX$ and the \emph{dome algebra} $\Dome^{\cA}(\cX)$, a certain quotient of the tube algebra $\Tube(\cX)$ \cite{MR1832764,MR1966525}.
In an ordinary (2+1)D string-net model, the algebra $\Tube(\cX)$ acts as local operators near the site of a topological excitation, so that anyon types correspond to (isomorphism classes of) indecomposable representations of $\Tube(\cX)$.
The presence of the (3+1)D $\cA$-bulk induces additional requirements on topological local operators, which cuts down $\Tube(\cX)$ to $\Dome^{\cA}(\cX)$.

The structure of this paper is as follows. 
First, \S\ref{sec:unenriched} is a self-contained primer on the Levin-Wen string net model \cite{PhysRevB.71.045110} from a unitary tensor category viewpoint.
For simplicity, we use a square lattice, and all degrees of freedom occur at vertices, similar to \cite{MR3204497}.
In \S\ref{sec:SkeinModules}, we define the disk skein module associated to a UFC $\cX$, and prove basic results about skein modules which will be used throughout the paper.
In \S\ref{sec:StringNet}, we provide some helpful conceptual proofs of well-known facts in the literature, including the matrix coefficients for the plaquette operators in Lemma \ref{lem:PlaquetteCoefficients} and the proof that the product of all plaquette operators is a scalar multiple of the projection to the skein module in Theorem \ref{thm:ProjectToSkeinModule}.
We also give an elegant description of the ground state space of the edge terms in terms of hom spaces in the UFC $\cX$ in Proposition \ref{prop:DescriptionOfIm(PA)}.
In \S\ref{sec:tubealgebra}, we review the basics of the tube algebra $\Tube(\cX)$ and give the well-known correspondence between its category of representations and objects in $Z(\cX)$ \cite{MR1832764}.
In \S\ref{sec:StringOperators}, we fully describe excitations localized on one edge and its two adjacent plaquettes, together with a shortened and simplified presentation of general string operators and their connection to $\Tube(\cX)$-representations from \cite{MR4642306}.
Finally, in \S\ref{sec:Hoppingoperators}, we give a shortened and simplified presentation of hopping operators from \cite{MR4642306}.

Next, we analyze the $\cA$-enriched string net model for an $\cA$-enriched UFC introduced in \cite[II.B]{MR4640433} in \S\ref{sec:Enriched}.
We do so in more generality, assuming that $\cA$ is unitary braided (but not necessarily modular), but still requiring $\Phi^Z: \cA\to Z(\cX)$ to be fully faithful.
In \S\ref{sec:EnrichedStringNets}, we give explicit detail on the plaquette operators and their connection to $\cA$-enriched skein modules.
In \S\ref{sec:DomeAlgebra}, we introduce the \emph{dome algebra} $\Dome^\cA(\cX)$, communicated to us by Corey Jones and inspired by Kevin Walker, which is a quotient of $\Tube(\cX)$ by an ideal $\cJ^\cA$ coming from the $\cA$-enrichment.
The point of $\Dome^\cA(\cX)$ is Theorem \ref{thm:DomeReps=EnrichedCenter} which shows that $\Rep(\Dome^\cA(\cX))\cong Z^\cA(\cX)$, the enriched center \cite{MR3725882} of $\cX$.
We then show that our $\Tube(\cX)$-action on the $\cX$ string net model also gives an action on our $\cA$-enriched model which descends to an action of $\Dome^\cA(\cX)$. 
Our ideal $\cJ^\cA$ acts as zero on $\Tube(\cX)$-representations from precisely those string operators corresponding to anyons in $\Irr(Z^{\cA}(\cX))$, so localized enriched excitations exactly correspond to the enriched center $Z^\cA(\cX)$ as claimed in \cite{MR4640433}.
Finally, in \S\ref{sec:SphereAlgebra}, we analyze bulk excitations using the \emph{sphere algebra} of $\cA$, again communicated to us by Corey Jones and inspired by Kevin Walker.
Our analysis is consistent with \cite[Lem.~2.16]{MR4654609}; the point excitations in the bulk are $\Omega Z(\Sigma \cA)= Z_2(\cA)$.
Under the additional assumption that the braided central functor $\Phi^Z:\cA\to Z(\cX)$ is full, $Z_2(Z^\cA(\cX))=Z_2(\cA)$ by \cite[Prop.~4.3]{MR3022755}, and the canonical map is exactly bringing a bulk excitation to the boundary via a modified hopping operator.

\subsection*{Acknowledgements}
The authors are indebted to Corey Jones for numerous ideas in this article, including the statement of Proposition \ref{prop:DescriptionOfIm(PA)}, the definitions of the dome and sphere algebras, and a modified string net model which can host excitations at vertices in Remark \ref{rem:HostExcitationsAtVertices}.
We also would like to further thank Kevin Walker for his generosity with his ideas and for sharing his insight in many helpful conversations over the years.
The authors would like to further thank
Fiona Burnell,
Daniel Wallick,
and
Dominic Williamson
for helpful discussions.
KK was partially supported by the NSF grant DMS 1654159 as well as the Center for Emergent Materials, an NSF-funded MRSEC, under Grant No. DMR-2011876.
DG and DP were supported by NSF grants DMS 1654159 and 2154389.
PH was partially supported by US ARO grant W911NF2310026. 

\section{String net models for unitary fusion categories}\label{sec:unenriched}

Here we review the commuting projector Levin-Wen string net model built from a unitary fusion category (UFC) $\cX$, including the braided category of excitations and string operators. 
General references include
\cite{MR3242743} for fusion categories,
\cite{MR3663592,MR4133163} for UFCs, and 
\cite{PhysRevB.71.045110,MR3204497,PhysRevB.103.195155} for 2D Levin-Wen string net models. 
In particular, we always equip our UFC with its canonical unitary spherical structure, allowing us to form closed diagrams on a sphere which are invariant under isotopy. 

We view these 2D string net models as condensing topological order from the vacuum, which is mathematically supported by \cite{1905.09566,2206.02611,MR4444089} and informs our diagrammatic calculus of shaded regions. 
Starting from the vacuum, from any UFC $\cX$ one can attach a finite dimensional Hilbert space built from hom spaces in $\cX$ to every vertex of a square lattice as in the first picture below. 
Enforcing the consistency condition that labels of simple objects on edges/links (the $A_\ell$ terms) has the effect of condensing a 1-skeleton. 
Passing to the image of the plaquette/face operators (the $B_p$ terms) allows isotopy across each plaquette, effectively condensing 2-cells onto the 1-skeleton.
$$
\tikzmath{
\clip[rounded corners=5pt] (-.5,-.5) rectangle (2.5,2.5);
\foreach \x in {0,1,2,3}{
\foreach \y in {0,1,2}{
\filldraw (\x,\y) circle (.05cm);
\draw[thick] ($ (\x,\y) - (.3,0) $) -- ($ (\x,\y) + (.3,0) $);
\draw[thick] ($ (\x,\y) - (0,.3) $) -- ($ (\x,\y) + (0,.3) $);
\fill[gray!30] ($ (\x,\y) + (-.4,-.2) $) rectangle ($ (\x,\y) + (-.6,.2) $);
\fill[gray!30] ($ (\x,\y) + (.4,-.2) $) rectangle ($ (\x,\y) + (.6,.2) $);
\fill[gray!30] ($ (\x,\y) + (-.2,-.4) $) rectangle ($ (\x,\y) + (.2,-.6) $);
\fill[gray!30] ($ (\x,\y) + (-.2,.4) $) rectangle ($ (\x,\y) + (.2,.6) $);
}}
\foreach \x in {-.5,...,3.5}{
\foreach \y in {-.5,...,2.5}{
\filldraw[gray!30, rounded corners=5pt] ($ (\x,\y) - (.4,.4)$) rectangle ($ (\x,\y) + (.4,.4)$);
}}
}
\qquad
\overset{-\sum A_\ell}{
\tikzmath{\draw[squiggly,->] (0,0) -- (1,0);}
}
\qquad
\tikzmath{
\clip[rounded corners=5pt] (-.5,-.5) rectangle (2.5,2.5);
\draw[thick] (-.5,-.5) grid (3.5,2.5);
\foreach \x in {0,1,2,3}{
\foreach \y in {0,1,2}{
\filldraw (\x,\y) circle (.05cm);
}}
\foreach \x in {-.5,...,3.5}{
\foreach \y in {-.5,...,2.5}{
\filldraw[gray!30, rounded corners=5pt] ($ (\x,\y) - (.4,.4)$) rectangle ($ (\x,\y) + (.4,.4)$);
}}
}
\qquad
\overset{-\sum B_p}{
\tikzmath{\draw[squiggly,->] (0,0) -- (1,0);}
}
\qquad
\tikzmath{
\clip[rounded corners=5pt] (-.5,-.5) rectangle (2.5,2.5);
\draw[thick] (-.5,-.5) grid (3.5,2.5);
\foreach \x in {0,1,2,3}{
\foreach \y in {0,1,2}{
\filldraw (\x,\y) circle (.05cm);
}}
\foreach \x in {-.5,...,3.5}{
\foreach \y in {-.5,...,2.5}{
}}
}
$$

\subsection{Skein modules and string nets}
\label{sec:SkeinModules}

For this section, $\cX$\IN{$\cX$ a unitary fusion category} is a unitary fusion category and $\Irr(\cX)$ is the set of isomorphism classes of simple objects of $\cX$. \IN{$\Irr(\cX)$ the set of isomorphism classes of simple objects of $\cX$}

\begin{defn}
The disk skein module $\cS_\cX(\bbD,n)$\IN{$\cS_\cX(\bbD,n)$ the disk skein module} for $\cX$ with $n$ boundary points is the Hilbert space orthogonal direct sum
$$
\bigoplus_{x_1,\dots, x_n \in \Irr(\cX)} \cX(1\to x_1\otimes \cdots \otimes x_n)
=
\left\{
\tikzmath{
\roundNbox{}{(0,0)}{.3}{.2}{.2}{$f$}
\draw (-.3,.3) --node[left]{$\scriptstyle x_1$} (-.3,.7) ;
\draw (.3,.3) --node[right]{$\scriptstyle x_n$} (.3,.7) ;
\node at (90:.5cm) {$\cdots$};
}
\right\}
$$
where each $\cX(1\to x_1\otimes \cdots \otimes x_n)$ is equipped with the inner product
$$
\langle f | g\rangle
:=
\frac{1}{\sqrt{d_{x_1}\cdots d_{x_n}}} 
f^\dag \circ g
=
\frac{1}{\sqrt{d_{x_1}\cdots d_{x_n}}} 
\cdot
\tikzmath{
\roundNbox{}{(0,1)}{.3}{.2}{.2}{$f^\dag$}
\roundNbox{}{(0,0)}{.3}{.2}{.2}{$g$}
\draw (-.3,.3) --node[left]{$\scriptstyle x_1$} (-.3,.7) ;
\draw (.3,.3) --node[right]{$\scriptstyle x_n$} (.3,.7) ;
\node at (90:.5cm) {$\cdots$};
}\,.
$$
\end{defn}

\begin{ex}[Trivalent skein module]
Using the canonical unitary spherical structure of $\cX$, the skein module $\cS_\cX(\bbD,3)$ can also be viewed as the orthogonal direct sum $\bigoplus_{a,b,c\in\Irr(\cC)}\cX(a b \to c)$ with inner product
$$
\left\langle 
\tikzmath{
\draw (0,.3) --node[left]{$\scriptstyle z$} (0,.7);
\draw (-.15,-.3) --node[left]{$\scriptstyle x$} (-.15,-.7);
\draw (.15,-.3) --node[right]{$\scriptstyle y$} (.15,-.7);
\roundNbox{}{(0,0)}{.3}{0}{0}{$\xi$}
}
\middle| 
\tikzmath{
\draw (0,.3) --node[left]{$\scriptstyle z'$} (0,.7);
\draw (-.15,-.3) --node[left]{$\scriptstyle x'$} (-.15,-.7);
\draw (.15,-.3) --node[right]{$\scriptstyle y'$} (.15,-.7);
\roundNbox{}{(0,0)}{.3}{0}{0}{$\xi'$}
}
\right\rangle
:=
\delta_{x=x'}
\delta_{y=y'}
\delta_{z=z'}
\frac{1}{\sqrt{d_xd_yd_z}}
\cdot
\tikzmath{
\draw (0,1.3) arc (180:0:.3cm) --node[right]{$\scriptstyle \overline{z}$} (.6,-.3) arc (0:-180:.3cm);
\draw (-.15,.3) --node[left]{$\scriptstyle x$} (-.15,.7);
\draw (.15,.3) --node[right]{$\scriptstyle y$} (.15,.7);
\roundNbox{}{(0,1)}{.3}{0}{0}{$\xi'$}
\roundNbox{}{(0,0)}{.3}{0}{0}{$\xi^\dag$}
}\,.
$$
We draw a pair of shaded nodes 
$$
\tikzmath{
\draw (-.3,-.3) node[below]{$\scriptstyle x$} -- (0,0); 
\draw (.3,-.3) node[below]{$\scriptstyle y$} -- (0,0); 
\draw (0,.3) node[above]{$\scriptstyle z$} -- (0,0); 
\filldraw[blue] (0,0) circle (.05cm);
}
\otimes
\tikzmath{
\draw (-.3,.3) node[above]{$\scriptstyle x$} -- (0,0); 
\draw (.3,.3) node[above]{$\scriptstyle y$} -- (0,0); 
\draw (0,-.3) node[below]{$\scriptstyle z$} -- (0,0); 
\filldraw[blue] (0,0) circle (.05cm);
}
$$
to indicate summing over an orthonormal basis (ONB) for $\cS_\cX(\bbD,3)$ and its dagger.
We have the following relations for trivalent shaded nodes as in \cite[\S2.5]{MR3663592}:
\begin{align}
\begin{tikzpicture}[baseline=-.1cm]
	\draw (0,-.6) -- (0,-.3);
	\draw (0,-.3) .. controls ++(45:.2cm) and ++(-45:.2cm) .. (0,.3);
	\draw (0,-.3) .. controls ++(135:.2cm) and ++(225:.2cm) .. (0,.3);
	\draw (0,.6) -- (0,.3);
	\fill[fill=blue] (0,-.3) circle (.05cm);
	\fill[fill=blue] (0,.3) circle (.05cm);	
	\node at (0,-.8) {\scriptsize{$z$}};
	\node at (.3,0) {\scriptsize{$y$}};
	\node at (-.3,0) {\scriptsize{$x$}};
	\node at (0,.8) {\scriptsize{$z$}};
\end{tikzpicture}
=& \,\,\,
\sqrt{d_xd_yd_z^{-1}}\cdot N_{x,y}^z\,
\begin{tikzpicture}[baseline=-.1cm]
	\draw (-.2,-.6) -- (-.2,.6);
	\node at (-.2,-.8) {\scriptsize{$z$}};
\end{tikzpicture}
&&
\tag{Bigon 1}
\label{eq:Bigon 1}
\\
\begin{tikzpicture}[baseline=-.1cm]
	\draw (0,-.6) -- (0,-.3);
	\draw (0,-.3) .. controls ++(45:.2cm) and ++(-45:.2cm) .. (0,.3);
	\draw (0,-.3) .. controls ++(135:.2cm) and ++(225:.2cm) .. (0,.3);
	\draw (0,.6) -- (0,.3);
	\fill[fill=blue] (0,-.3) circle (.05cm);
	\fill[fill=orange] (0,.3) circle (.05cm);	
	\node at (0,-.8) {\scriptsize{$z$}};
	\node at (.3,0) {\scriptsize{$y$}};
	\node at (-.3,0) {\scriptsize{$x$}};
	\node at (0,.8) {\scriptsize{$z$}};
\end{tikzpicture}
\otimes
\begin{tikzpicture}[baseline=.2cm]
	\draw (.2,.6) -- (0,.3) -- (-.2,.6);
	\draw (0,0) -- (0,.3);
	\fill[fill=orange] (0,.3) circle (.05cm);	
	\node at (-.2,.8) {\scriptsize{$x$}};
	\node at (.2,.8) {\scriptsize{$y$}};
	\node at (0,-.2) {\scriptsize{$z$}};
\end{tikzpicture}
\otimes
\begin{tikzpicture}[baseline=-.4cm]
	\draw (.2,-.6) -- (0,-.3) -- (-.2,-.6);
	\draw (0,0) -- (0,-.3);
	\fill[fill=blue] (0,-.3) circle (.05cm);	
	\node at (-.2,-.8) {\scriptsize{$x$}};
	\node at (.2,-.8) {\scriptsize{$y$}};
	\node at (0,.2) {\scriptsize{$z$}};
\end{tikzpicture}
&\,= \,\,
\sqrt{d_xd_yd_z^{-1}}\cdot
\begin{tikzpicture}[baseline=-.1cm]
	\draw (-.2,-.6) -- (-.2,.6);
	\node at (-.2,-.8) {\scriptsize{$z$}};
\end{tikzpicture}
\otimes
\begin{tikzpicture}[baseline=.2cm]
	\draw (.2,.6) -- (0,.3) -- (-.2,.6);
	\draw (0,0) -- (0,.3);
	\fill[fill=blue] (0,.3) circle (.05cm);	
	\node at (-.2,.8) {\scriptsize{$x$}};
	\node at (.2,.8) {\scriptsize{$y$}};
	\node at (0,-.2) {\scriptsize{$z$}};
\end{tikzpicture}
\otimes
\begin{tikzpicture}[baseline=-.4cm]
	\draw (.2,-.6) -- (0,-.3) -- (-.2,-.6);
	\draw (0,0) -- (0,-.3);
	\fill[fill=blue] (0,-.3) circle (.05cm);	
	\node at (-.2,-.8) {\scriptsize{$x$}};
	\node at (.2,-.8) {\scriptsize{$y$}};
	\node at (0,.2) {\scriptsize{$z$}};
\end{tikzpicture}
&&
\tag{Bigon 2}
\label{eq:Bigon 2}
\\
\sum_{z\in \Irr(\cX)}
\sqrt{d_z}&
\begin{tikzpicture}[baseline=-.1cm]
	\draw (.2,-.6) -- (0,-.3) -- (-.2,-.6);
	\draw (.2,.6) -- (0,.3) -- (-.2,.6);
	\draw (0,-.3) -- (0,.3);
	\fill[fill=blue] (0,-.3) circle (.05cm);
	\fill[fill=blue] (0,.3) circle (.05cm);	
	\node at (-.2,-.8) {\scriptsize{$x$}};
	\node at (.2,-.8) {\scriptsize{$y$}};
	\node at (-.2,.8) {\scriptsize{$x$}};
	\node at (.2,.8) {\scriptsize{$y$}};
	\node at (.2,0) {\scriptsize{$z$}};
\end{tikzpicture}
\,=\,\sqrt{d_xd_y}\cdot
\begin{tikzpicture}[baseline=-.1cm]
	\draw (.2,-.6) -- (.2,.6);
	\draw (-.2,-.6) -- (-.2,.6);
	\node at (-.2,-.8) {\scriptsize{$x$}};
	\node at (.2,-.8) {\scriptsize{$y$}};
\end{tikzpicture}
&&
\tag{Fusion}
\label{eq:Fusion}
\\
\qquad
\sum_{v\in \Irr(\cX)}
\begin{tikzpicture}[baseline=-.1cm]
	\draw (.2,-.6) -- (0,-.3) -- (-.2,-.6);
	\draw (.2,.6) -- (0,.3) -- (-.2,.6);
	\draw (0,-.3) -- (0,.3);
	\fill[fill=blue] (0,-.3) circle (.05cm);
	\fill[fill=orange] (0,.3) circle (.05cm);
	\node at (-.2,-.8) {\scriptsize{$x$}};
	\node at (.2,-.8) {\scriptsize{$w$}};
	\node at (-.2,.8) {\scriptsize{$y$}};
	\node at (.2,.8) {\scriptsize{$z$}};
	\node at (.2,0) {\scriptsize{$v$}};
\end{tikzpicture}
\,\otimes
\begin{tikzpicture}[baseline=-.1cm]
	\draw (.2,-.6) -- (0,-.3) -- (-.2,-.6);
	\draw (.2,.6) -- (0,.3) -- (-.2,.6);
	\draw (0,-.3) -- (0,.3);
	\fill[fill=blue] (0,-.3) circle (.05cm);
	\fill[fill=orange] (0,.3) circle (.05cm);
	\node at (-.2,-.8) {\scriptsize{$\overline w$}};
	\node at (.2,-.8) {\scriptsize{$\overline x$}};
	\node at (-.2,.8) {\scriptsize{$\overline z$}};
	\node at (.2,.8) {\scriptsize{$\overline y$}};
	\node at (.2,0) {\scriptsize{$\overline v$}};
\end{tikzpicture}
&\,=
\sum_{u\in \Irr(\cX)}
\begin{tikzpicture}[baseline=-.1cm, rotate=90]
	\draw (.4,-.5) -- (-.03,-.3) -- (-.4,-.5);
	\draw (.4,.5) -- (.03,.3) -- (-.4,.5);
	\draw (-.03,-.3) -- (.03,.3);
	\fill[fill=blue] (-.03,-.3) circle (.05cm);
	\fill[fill=orange] (.03,.3) circle (.05cm);
	\node at (-.6,-.5) {\scriptsize{$w$}};
	\node at (.6,-.5) {\scriptsize{$z$}};
	\node at (-.6,.5) {\scriptsize{$x$}};
	\node at (.6,.5) {\scriptsize{$y$}};
	\node at (.2,0) {\scriptsize{$u$}};
\end{tikzpicture}
\otimes
\begin{tikzpicture}[baseline=-.1cm, rotate=90]
	\draw (.4,-.5) -- (.03,-.3) -- (-.4,-.5);
	\draw (.4,.5) -- (-.03,.3) -- (-.4,.5);
	\draw (.03,-.3) -- (-.03,.3);
	\fill[fill=orange] (.03,-.3) circle (.05cm);
	\fill[fill=blue] (-.03,.3) circle (.05cm);
	\node at (-.6,-.5) {\scriptsize{$\overline x$}};
	\node at (.6,-.5) {\scriptsize{$\overline y$}};
	\node at (-.6,.5) {\scriptsize{$\overline w$}};
	\node at (.6,.5) {\scriptsize{$\overline z$}};
	\node at (.2,0) {\scriptsize{$\overline u$}};
\end{tikzpicture}
&&
\tag{I=H}
\label{eq:I=H}
\end{align}
\end{ex}

\begin{fact}
\label{fact:VaccuumSeparate}
Given $f: 1\to yx$ and $g: yx\to 1$,
$$
\tikzmath{
\draw (-.2,.3) -- node[left]{$\scriptstyle y$} (-.2,1.3);
\draw (.2,.3) -- node[right]{$\scriptstyle x$} (.2,1.3);
\roundNbox{}{(0,0)}{.3}{.2}{.2}{$f$}
\roundNbox{}{(0,1.6)}{.3}{.2}{.2}{$g$}
}
\underset{\text{\eqref{eq:Fusion}}}{=}
\sum_{y\in \Irr(\cX)}
\frac{\sqrt{d_z}}{\sqrt{d_xd_y}}\cdot
\tikzmath{
\draw (-.2,.3) -- node[left]{$\scriptstyle y$} (0,.6);
\draw (-.2,1.3) -- node[left]{$\scriptstyle y$} (0,1);
\draw (.2,.3) -- node[right]{$\scriptstyle x$} (0,.6);
\draw (.2,1.3) -- node[right]{$\scriptstyle x$} (0,1);
\draw (0,.6) -- node[right]{$\scriptstyle z$} (0,1);
\filldraw[blue] (0,.6) circle (.05cm);
\filldraw[blue] (0,1) circle (.05cm);
\roundNbox{}{(0,0)}{.3}{.2}{.2}{$f$}
\roundNbox{}{(0,1.6)}{.3}{.2}{.2}{$g$}
}
=
\frac{\delta_{y=\overline{x}}}{d_x}\cdot
\tikzmath{
\draw (-.2,.3) node[above, xshift=-.15cm]{$\scriptstyle \overline{x}$} arc (180:0:.2cm) node[above, xshift=.15cm]{$\scriptstyle x$};
\draw (-.2,1.3) node[below, xshift=-.15cm]{$\scriptstyle \overline{x}$} arc (-180:0:.2cm) node[below, xshift=.15cm]{$\scriptstyle x$};
\roundNbox{}{(0,0)}{.3}{.2}{.2}{$f$}
\roundNbox{}{(0,1.6)}{.3}{.2}{.2}{$g$}
}\,.
$$
Indeed, there is a non-zero map $1_\cX\to z$ if and only if $z=1_\cX$.
In this case,
$\overline{y}=x$ and $\{\ev_x\}$ is an ONB of $\cX(\overline{x}x\to 1_\cX)\subset \cS_\cX(\bbD,3)$.
\end{fact}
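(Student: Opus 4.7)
The plan is to apply the Fusion relation \eqref{eq:Fusion} in reverse to the two parallel strands between $f$ and $g$, producing the displayed intermediate expression, and then to argue that only the summand $z=1_\cX$ survives. The first equality is literal: solving \eqref{eq:Fusion} for the two parallel strands yields the sum over $z\in\Irr(\cX)$ weighted by $\sqrt{d_z}/\sqrt{d_xd_y}$, with the shaded nodes inserted above and below on a new $z$-strand.

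For the second equality, I would invoke semisimplicity. The bottom shaded node is, by convention, the ONB sum for $\cS_\cX(\bbD,3)$, so at each $z$ its contribution composed with $f\in\cX(1_\cX\to yx)$ is an element of $\cX(1_\cX\to z)$; this hom-space vanishes unless $z=1_\cX$. (One can equivalently observe that the top shaded node composed with $g\in\cX(yx\to 1_\cX)$ kills all $z\neq 1_\cX$.) Hence the sum collapses to the single term $z=1_\cX$, where the $z$-strand becomes the trivial strand and the trivalent nodes degenerate to caps/cups between the $y$ and $x$ strands.

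Finally, $\cX(\overline{x}x\to 1_\cX)$ is nonzero only when $y=\overline{x}$, which accounts for the $\delta_{y=\overline{x}}$, and in that case it is one-dimensional spanned by $\ev_x$. The ONB claim is a direct calculation using the inner product on $\cS_\cX(\bbD,3)$ restricted to $\cX(\overline{x}x\to 1_\cX)$:
$$
\langle \ev_x\mid \ev_x\rangle
=\frac{1}{\sqrt{d_{\overline{x}}\,d_x\,d_{1_\cX}}}\;\ev_x\circ \ev_x^\dag
=\frac{1}{d_x}\cdot d_x=1,
$$
using the canonical unitary spherical structure. The surviving coefficient $\sqrt{d_{1_\cX}}/\sqrt{d_xd_{\overline{x}}}=1/d_x$ then matches the displayed prefactor.

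There is no essential obstacle; the only care required is bookkeeping, namely matching the ONB-sum conventions of the shaded nodes, the Fusion-move coefficient, and the inner-product normalization so that the degenerate $z=1_\cX$ summand is exactly $\ev_x\otimes \ev_x^\dag$ inserted between $f$ and $g$.
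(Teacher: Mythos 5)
Your proof is correct and follows essentially the same path as the paper's brief justification of this Fact: apply \eqref{eq:Fusion} in reverse to the parallel $y,x$-strands, observe that the resulting trivalent node composed with $f$ (equivalently, with $g$) lives in $\cX(1_\cX\to z)$, which forces $z=1_\cX$ by semisimplicity, and then note that $\cX(\overline{x}x\to 1_\cX)$ is one-dimensional and spanned by $\ev_x$. Your extra bookkeeping — verifying $\langle\ev_x\mid\ev_x\rangle=1$ directly from the $\cS_\cX(\bbD,3)$ inner product and checking that the $z=1_\cX$ coefficient $\sqrt{d_{1_\cX}}/\sqrt{d_xd_{\overline{x}}}=1/d_x$ matches the displayed prefactor — merely unpacks the assertion that $\{\ev_x\}$ is an ONB, which the paper states without calculation.
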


We define a gluing map $\operatorname{gl}:\cS_\cX(\bbD,m+1)\otimes \cS_\cX(\bbD,n+1)\to \cS_\cX(\bbD,m+n)$
by
\begin{equation}
\label{eq:GluingMap}
\tikzmath{
\roundNbox{}{(0,0)}{.3}{.6}{.6}{$f$}
\draw (-.7,.3) -- (-.7,.7) node[above]{$\scriptstyle x_1$};
\draw (0,.3) -- (0,.7) node[above]{$\scriptstyle x_m$};
\draw (.7,.3) -- (.7,.7) node[above]{$\scriptstyle x_{m+1}$};
\node at (-.3,.55) {$\cdots$};
}
\otimes
\tikzmath{
\roundNbox{}{(0,0)}{.3}{.6}{.6}{$g$}
\draw (-.7,.3) -- (-.7,.7) node[above]{$\scriptstyle y_1$};
\draw (0,.3) -- (0,.7) node[above]{$\scriptstyle y_2$};
\draw (.7,.3) -- (.7,.7) node[above]{$\scriptstyle y_{n+1}$};
\node at (.35,.55) {$\cdots$};
}
\longmapsto
\delta_{x_{m+1}=\overline{y_1}}
\cdot
\tikzmath{
\roundNbox{}{(2,0)}{.3}{.6}{.6}{$g$}
\roundNbox{}{(0,0)}{.3}{.6}{.6}{$f$}
\draw (-.7,.3) -- (-.7,.7) node[above]{$\scriptstyle x_1$};
\draw (0,.3) -- (0,.7) node[above]{$\scriptstyle x_m$};
\draw (.7,.3) node[above, xshift=-.15cm]{$\scriptstyle \overline{y_1}$} arc (180:0:.3cm) node[above, xshift=.15cm]{$\scriptstyle y_1$};
\draw (2,.3) -- (2,.7) node[above]{$\scriptstyle y_2$};
\draw (2.7,.3) -- (2.7,.7) node[above]{$\scriptstyle y_n$};
\node at (-.3,.55) {$\cdots$};
\node at (2.35,.55) {$\cdots$};
}\,.
\end{equation}

\begin{lem}
\label{lem:GluingUnitary}
The gluing map \eqref{eq:GluingMap} restricts to a unitary on the subspace of $\cS_\cX(\bbD,m+1)\otimes \cS_\cX(\bbD,n+1)$ where $x_{m+1}=\overline{y_1}$.
In particular, gluing two ONBs for $\cS_\cX(\bbD,m+1)$ and $\cS_\cX(\bbD,n+1)$ 
whose string labels match appropriately
gives an ONB for $\cS_\cX(\bbD,m+n)$.
\end{lem}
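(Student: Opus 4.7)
The strategy is to verify directly that gluing preserves inner products, then compare dimensions to upgrade this isometry to a unitary. Fix vectors $f_1,f_2 \in \cX(1\to x_1\cdots x_m x_{m+1})$ and $g_1,g_2 \in \cX(1\to y_1 y_2\cdots y_{n+1})$ with $x_{m+1}=\overline{y_1}$, and compute
\[
\bigl\langle \operatorname{gl}(f_1\otimes g_1)\,\big|\,\operatorname{gl}(f_2\otimes g_2)\bigr\rangle
\]
using the definition of the inner product on $\cS_\cX(\bbD,m+n)$.  After stacking $\operatorname{gl}(f_1\otimes g_1)^\dagger$ on top of $\operatorname{gl}(f_2\otimes g_2)$, the resulting closed diagram has two pieces: one built from $f_1^\dagger$ and $f_2$ with all the $x_i$ strands closed off, and one built from $g_1^\dagger$ and $g_2$ with all the $y_j$ strands closed off, linked only through the single strand $x_{m+1}=\overline{y_1}$ coming from the two gluing caps.

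First I would isotope the two caps so that this single connecting strand appears as a simple vertical line joining an otherwise closed $f$-diagram to an otherwise closed $g$-diagram.  Then I would apply Fact~\ref{fact:VaccuumSeparate} to cut this connecting strand at a cost of $1/d_{x_{m+1}}$, yielding
\[
\bigl\langle \operatorname{gl}(f_1\otimes g_1)\,\big|\,\operatorname{gl}(f_2\otimes g_2)\bigr\rangle
=\frac{1}{\sqrt{d_{x_1}\cdots d_{x_m}\,d_{y_2}\cdots d_{y_{n+1}}}}\cdot\frac{1}{d_{x_{m+1}}}\cdot (f_1^\dagger\circ f_2)(g_1^\dagger\circ g_2).
\]
Using $d_{x_{m+1}}=d_{y_1}$ to rewrite $1/d_{x_{m+1}}=1/\sqrt{d_{x_{m+1}}d_{y_1}}$ and absorbing it into the square root, this becomes $\langle f_1|f_2\rangle\langle g_1|g_2\rangle$.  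Thus $\operatorname{gl}$ is an isometry on the $x_{m+1}=\overline{y_1}$ subspace.

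To conclude unitarity, I would count dimensions.  The dimension of the source subspace is
\[
\sum_{x_1,\ldots,x_m,y_2,\ldots,y_{n+1}}\sum_{a\in\Irr(\cX)}\dim\cX(1\to x_1\cdots x_m a)\cdot\dim\cX(1\to \overline{a}\,y_2\cdots y_{n+1}),
\]
and after applying the adjunction $\cX(1\to\overline{a}\otimes Y)\cong\cX(a\to Y)$ and semisimplicity (inserting a complete set of simples between the $x$-block and the $y$-block), this equals $\dim\cX(1\to x_1\cdots x_m y_2\cdots y_{n+1})=\dim\cS_\cX(\bbD,m+n)$.  Hence the isometry is surjective, so it is unitary.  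The ONB statement is then immediate: the image of a product of ONBs under a unitary map is an ONB.

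The only slightly delicate point is bookkeeping with the normalization factors in the inner product and verifying that the $1/d_{x_{m+1}}$ produced by Fact~\ref{fact:VaccuumSeparate} exactly compensates for the missing factors of $\sqrt{d_{x_{m+1}}d_{y_1}}$ in the denominator; everything else is a straightforward isotopy argument.
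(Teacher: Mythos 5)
Your proof is correct and the core calculation — using Fact~\ref{fact:VaccuumSeparate} to cut the single connecting strand between the $f$-block and the $g$-block, then tracking the normalization $1/d_{x_{m+1}}=1/\sqrt{d_{x_{m+1}}d_{y_1}}$ against the inner product denominators — matches what the paper does (the paper phrases it as ``applying Fact~\ref{fact:VaccuumSeparate} in the sideways direction,'' which is the same isotopy you describe). The one genuine addition you make is the dimension count upgrading the isometry to a unitary: the paper just verifies $\langle\operatorname{gl}(f\otimes g)|\operatorname{gl}(f'\otimes g')\rangle=\langle f|f'\rangle\langle g|g'\rangle$ and stops, implicitly relying on surjectivity being apparent from semisimplicity and the \eqref{eq:Fusion} relation (which lets one insert a complete set of simples between positions $m$ and $m{+}1$ of any element of $\cS_\cX(\bbD,m+n)$, giving an explicit preimage). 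Your count $\sum_a \dim\cX(\overline{X}\to a)\dim\cX(a\to Y)=\dim\cX(\overline{X}\to Y)$ is a clean way to make that surjectivity explicit, and is arguably the more self-contained version of the argument; the Fusion-relation inverse is perhaps more in the spirit of how the lemma is used later (one needs to know \emph{how} to glue and unglue, not just that the dimensions match).
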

\begin{proof}
When the appropriate labels of $f,g$ match ($x_{m+1}=\overline{y_1}$)
as well as those of $f',g'$ ($x_{m+1}'=\overline{y_1'}$),
applying Fact \ref{fact:VaccuumSeparate} in the sideways direction yields 
\begin{align*}
\langle \operatorname{gl}(f\otimes g)|\operatorname{gl}(f'\otimes g')\rangle
&=
\left(
\prod_{i=1}^{m}
\frac{\delta_{x_i=x_i'}}{\sqrt{d_{x_i}}}
\prod_{j=2}^{n+1}
\frac{\delta_{y_j=y_j'}}{\sqrt{d_{y_j}}}
\right)
\cdot
\tikzmath{
\roundNbox{}{(2,0)}{.3}{.6}{.6}{$g'$}
\roundNbox{}{(0,0)}{.3}{.6}{.6}{$f'$}
\roundNbox{}{(2,2)}{.3}{.6}{.6}{$g^\dag$}
\roundNbox{}{(0,2)}{.3}{.6}{.6}{$f^\dag$}
\draw (-.7,.3) -- node[left]{$\scriptstyle x_1$} (-.7,1.7);
\draw (0,.3) -- node[right]{$\scriptstyle x_m$} (0,1.7);
\draw (.7,.3) node[above, xshift=-.15cm]{$\scriptstyle \overline{y_1'}$} arc (180:0:.3cm) node[above, xshift=.15cm]{$\scriptstyle y_1'$};
\draw (.7,1.7) node[below, xshift=-.15cm]{$\scriptstyle \overline{y_1}$} arc (-180:0:.3cm) node[below, xshift=.15cm]{$\scriptstyle y_1$};
\draw (2,.3) -- node[left]{$\scriptstyle y_2$} (2,1.7) ;
\draw (2.7,.3) -- node[right]{$\scriptstyle y_n$} (2.7,1.7);
\node at (-.3,1) {$\cdots$};
\node at (2.35,1) {$\cdots$};
}
\\&=
\left(
\prod_{i=1}^{m+1}
\frac{\delta_{x_i=x_i'}}{\sqrt{d_{x_i}}}
\prod_{j=1}^{n+1}
\frac{\delta_{y_j=y_j'}}{\sqrt{d_{y_j}}}
\right)
\cdot
\tikzmath{
\roundNbox{}{(3,0)}{.3}{.6}{.6}{$g'$}
\roundNbox{}{(0,0)}{.3}{.6}{.6}{$f'$}
\roundNbox{}{(3,2)}{.3}{.6}{.6}{$g^\dag$}
\roundNbox{}{(0,2)}{.3}{.6}{.6}{$f^\dag$}
\draw (-.7,.3) -- node[left]{$\scriptstyle x_1$} (-.7,1.7);
\draw (0,.3) -- node[right]{$\scriptstyle x_m$} (0,1.7);
\draw (.7,.3) -- node[right]{$\scriptstyle x_{m+1}$} (.7,1.7);
\draw (2.3,.3) -- node[left]{$\scriptstyle y_1$} (2.3,1.7);
\draw (3,.3) -- node[left]{$\scriptstyle y_2$} (3,1.7) ;
\draw (3.7,.3) -- node[right]{$\scriptstyle y_n$} (3.7,1.7);
\node at (-.3,1) {$\cdots$};
\node at (3.35,1) {$\cdots$};
}
\\&=
\langle f|f'\rangle\cdot \langle g|g'\rangle
=
\langle f\otimes g | g\otimes g'\rangle.
\qedhere
\end{align*}
\end{proof}

We end with the following observation about
how morphism spaces act on skein modules.
First, using the unitary spherical structure of $\cX$, we may identify 
$$
\cS_\cX(\bbD,m+n) \cong 
\bigoplus_{
\substack{
x_1,\dots, x_m
\\y_1,\dots, y_n
\\
\in\Irr(\cX)
}}
\cX(x_1\otimes \cdots \otimes x_m \to y_1\otimes \cdots \otimes y_n).
$$
Given a map $f: y_1,\dots, y_n\to z_1,\dots, z_p$,
we may consider the map $\gamma_f$ that post-composes with $f$ when the string labels agree and gives zero otherwise.
One computes that for $g\in \cS_\cX(\bbD,m+n)$ and $h\in \cS_\cX(\bbD,m+p)$
\begin{align*}
\langle h | \gamma_f g\rangle
&=
\frac{1}{\sqrt{d_{x_1}\cdots d_{x_m}d_{z_1}\cdots d_{z_p}}} 
\cdot
\tr_\cX(h^\dag \circ f\circ g)
\\
\langle \gamma_{f^\dag} h |  g\rangle
&=
\frac{1}{\sqrt{d_{x_1}\cdots d_{x_m}d_{y_1}\cdots d_{y_n}}} 
\cdot
\tr_\cX((f^\dag\circ h)^\dag \circ g).
\end{align*}
Thus we see that
$$
\gamma_f^\dag 
= 
\frac{\sqrt{d_{y_1}\cdots d_{y_n}}}{\sqrt{d_{z_1}\cdots d_{z_p}}}
\gamma_{f^\dag}.
$$

\begin{defn}
We define $\Gamma_f$ to be the operator which post-composes with $f$ and multiplies by a scalar factor as in \cite{MR4642306} so that $\Gamma_f^\dag = \Gamma_{f^\dag}$ on the nose:
\begin{equation}
\label{eq:GluingOperatorWithScalars}
\Gamma_f := 
\left(\frac{d_{z_1}\cdots d_{z_p}}{d_{y_1}\cdots d_{y_n}}\right)^{1/4}
\cdot \gamma_f.
\end{equation}
\end{defn}

\subsection{String nets}
\label{sec:StringNet}

As before, $\cX$ is a unitary fusion category.
We now consider a rectangular lattice and a collection of Hilbert spaces defined in terms of $\cX$ and the lattice. 
$$
\tikzmath{
\clip[rounded corners=5pt] (-.5,-.5) rectangle (3.5,2.5);
\foreach \x in {0,1,2,3}{
\foreach \y in {0,1,2}{
\filldraw (\x,\y) circle (.05cm);
\draw[thick] ($ (\x,\y) - (.3,0) $) -- ($ (\x,\y) + (.3,0) $);
\draw[thick] ($ (\x,\y) - (0,.3) $) -- ($ (\x,\y) + (0,.3) $);
\fill[gray!30] ($ (\x,\y) + (-.4,-.2) $) rectangle ($ (\x,\y) + (-.6,.2) $);
\fill[gray!30] ($ (\x,\y) + (.4,-.2) $) rectangle ($ (\x,\y) + (.6,.2) $);
\fill[gray!30] ($ (\x,\y) + (-.2,-.4) $) rectangle ($ (\x,\y) + (.2,-.6) $);
\fill[gray!30] ($ (\x,\y) + (-.2,.4) $) rectangle ($ (\x,\y) + (.2,.6) $);
}}
\foreach \x in {-.5,...,3.5}{
\foreach \y in {-.5,...,2.5}{
\filldraw[gray!30, rounded corners=5pt] ($ (\x,\y) - (.4,.4)$) rectangle ($ (\x,\y) + (.4,.4)$);
}}
}
\qquad\qquad
\tikzmath{
\draw[thick] (-.5,0) node[left]{$\scriptstyle w$} -- (.5,0) node[right]{$\scriptstyle z$};
\draw[thick] (0,-.5) node[below]{$\scriptstyle x$} -- (0,.5) node[above]{$\scriptstyle y$};
\filldraw (0,0) circle (.05cm);
\node at (-.2,-.2) {$\scriptstyle v$};
\draw[blue!50, very thin] (-.5,.5) -- (.5,-.5);
{\draw[blue!50, very thin, -stealth] ($ (-.3,.3) - (.1,.1)$) to ($ (-.3,.3) + (.1,.1)$);}
{\draw[blue!50, very thin, -stealth] ($ (.3,-.3) - (.1,.1)$) to ($ (.3,-.3) + (.1,.1)$);}
}
\leftrightarrow
\cH_v
=
\bigoplus_{w,x,y,z\in \Irr(\cX)} \cX(w\otimes x \to y\otimes z).
$$
The inner product on $\cH_v$\IN{$\cH_v$ the local hilbert space at the node $v$} is the inner product on $\cS_{\cX}(\bbD, 4)$ up to isotopy of boundary strings:
$$
\langle f | g\rangle
:=
\frac{1}{\sqrt{d_{w}d_xd_yd_z} }
\tr_\cC(f^\dag \circ g)
=
\frac{1}{\sqrt{d_{w}d_xd_yd_z} }
\cdot
\tikzmath{
\roundNbox{}{(0,1)}{.3}{.1}{.1}{$f^\dag$}
\roundNbox{}{(0,0)}{.3}{.1}{.1}{$g$}
\draw (-.2,.3) --node[left]{$\scriptstyle y$} (-.2,.7) ;
\draw (.2,.3) --node[left]{$\scriptstyle z$} (.2,.7) ;
\draw (-.2,1.3) to[out=90,in=90] (1,1.3) --node[right]{$\scriptstyle \overline{w}$} (1,-.3) to[out=270,in=270] (-.2,-.3);
\draw (.2,1.3) arc (180:0:.2cm) (.6,1.3) --node[right]{$\scriptstyle \overline{x}$} (.6,-.3) arc (0:-180:.2cm);
}.
$$

The total Hilbert space is $\cH_{\tot}:=\bigotimes_v \cH_v$\IN{$\cH_{\tot}$ the total Hilbert space}, represented by the disconnected lattice above, in which the gray region is vacuum. 
The lattice is directed by the blue arrows in the above figure, so that every edge has an incoming and outgoing vertex. 
The next step is to define the low-energy Hilbert space.  
We do this by defining the two terms that make up the Hamiltonian $H$. 

First,
for each edge $\ell$, the term $A_\ell$\IN{$A_\ell$ the edge term associated to the edge $\ell$} is a projector onto states where the types of simple objects labeling the two halves of $\ell$ match.
Let $P_A$\IN{$P_A$ the projection onto the edge ground state} denote the projection onto the ground state space of $-\sum_\ell A_\ell$.
Passing to $P_A\cH_{\tot}$ has the effect of connecting the edges to condense the 1-skeleton of the lattice above. 
$$
\tikzmath{
\clip[rounded corners=5pt] (-.5,-.5) rectangle (3.5,2.5);
\draw[thick] (-.5,-.5) grid (3.5,2.5);
\foreach \x in {0,1,2,3}{
\foreach \y in {0,1,2}{
\filldraw (\x,\y) circle (.05cm);
}}
\foreach \x in {-.5,...,3.5}{
\foreach \y in {-.5,...,2.5}{
\filldraw[gray!30, rounded corners=5pt] ($ (\x,\y) - (.4,.4)$) rectangle ($ (\x,\y) + (.4,.4)$);
}}
}
$$
There is an elegant description of $P_A\cH_{\tot}$ in terms of the UFC $\cX$.
Let $I: \cX\to Z(\cX)$ be the adjoint of the forgetful functor $F: Z(\cX)\to \cX$ \cite[Prop. 8.1]{MR1966525}, for which there are canonical natural isomorphisms
$$
\cX(F(X)\to y) \cong Z(\cX)(X\to I(y))
\qquad\qquad
\qquad\qquad
\forall\, X\in Z(\cX)\quad\forall\,y\in\cX.
$$
In particular, we have the formula $FI(x)=\bigoplus_{y\in\Irr(\cX)} yx\overline{y}$.

\begin{prop}
\label{prop:DescriptionOfIm(PA)}
On a contractible patch $\Lambda$ of lattice that contains $\# p=\# p(\Lambda)$ many plaquettes,
$$
P_A\bigotimes_{v\in\Lambda} \cH_v
=
\bigoplus_{\vec{x}\in \partial \Lambda}
\cX(\vec{x} \to FI(1_\cX)^{\otimes\# p}),
$$
where $\vec{x}$ denotes a boundary condition, i.e., a tensor product over the simple labels $x_i$ on the boundary of $\Lambda$, and we sum over the set $\partial \Lambda$ of all such boundary conditions.
\end{prop}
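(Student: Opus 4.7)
The plan is to realize $P_A\bigotimes_{v\in\Lambda}\cH_v$ as a skein module on the surface $\Sigma$ obtained by regarding each vertex's local Hilbert space as a disk skein module and identifying matched boundary arcs along interior edges. Since $\cH_v \cong \cS_\cX(\bbD,4)$ and the edge projector $A_\ell$ is exactly the projection onto the subspace where the two halves of $\ell$ carry matching simple labels, iterated application of the gluing isometry from Lemma \ref{lem:GluingUnitary} produces
\[
P_A\bigotimes_{v\in\Lambda}\cH_v \;\cong\; \cS_\cX(\Sigma,\partial\Lambda),
\]
where $\vec{x}\in\partial\Lambda$ labels the boundary points of $\Sigma$.

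Topologically, each vertex contributes a disk and each interior edge glues two disks along an arc, so $\Sigma$ is a $2$-surface whose $1$-skeleton collapses to that of $\Lambda$. Since $\Lambda$ is contractible and has exactly $\#p$ faces, $\Sigma$ is a disk with $\#p$ open holes, one per plaquette; the holes arise because the four disks around the corners of a plaquette are glued along their shared edges but leave the plaquette interior uncovered.

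It remains to compute the skein module of a disk with $\#p$ holes. Cut $\Sigma$ along $\#p$ disjoint radial arcs, one from the outer boundary to each inner hole; this produces a genuine disk with $2\#p$ additional paired boundary arcs. Elements of $\cS_\cX(\Sigma,\partial\Lambda)$ correspond to elements of the enlarged disk skein module for which each pair of matched arcs carries dually labeled boundary points, summed over the common simple label. Summing over the label $u$ on one such pair amounts to replacing the pair by $u\otimes\overline{u}$ and summing, producing a factor of $\bigoplus_{u\in\Irr(\cX)}u\otimes\overline{u}\cong FI(1_\cX)$; reassembling all $\#p$ cuts yields $\bigoplus_{\vec{x}\in\partial\Lambda}\cX(\vec{x}\to FI(1_\cX)^{\otimes \#p})$ as claimed.

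The main technical obstacle is that \S\ref{sec:SkeinModules} only develops disk skein modules, so the identification with skein modules on surfaces with holes must either be added to the formalism or circumvented by induction on $\#p$: the base case is a tree-like $\Lambda$, where iterated gluing directly yields $\bigoplus_{\vec{x}}\cX(\vec{x}\to 1_\cX)$, and each additional plaquette is created by gluing a new vertex to the existing boundary along two edges simultaneously, which closes one loop and, via the matched-pair summation above, contributes one additional factor of $FI(1_\cX)$.
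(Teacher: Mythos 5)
Your topological picture---$P_A\bigotimes_{v\in\Lambda}\cH_v$ as a skein module on a disk-with-$\#p$-holes $\Sigma$, with each hole contributing a factor of $FI(1_\cX)$ via a cut to the outer boundary---is the right geometric intuition, and it is the same picture underlying the paper's proof. However, as written your argument omits the step that actually produces the claimed formula. After cutting $\#p$ disjoint arcs from the holes to the outer boundary, the resulting matched pairs $u_i,\overline{u_i}$ appear on the enlarged disk boundary either interleaved with the $\vec{x}$-strands (if the arcs end at different places) or nested inside each other (if they end near a common point), so what Lemma~\ref{lem:GluingUnitary} gives you is a sum of hom spaces of the form $\cX(1\to\cdots x_j\cdots u_i\overline{u_i}\cdots x_k\cdots)$ or $\bigoplus\cX(\vec{x}\to u_1u_2\overline{u_2}\overline{u_1})$, \emph{not} $\cX(\vec{x}\to FI(1_\cX)^{\otimes\#p})$. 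To collect the $\bigoplus_u u\overline{u}\cong FI(1_\cX)$ factors on one side you must slide them past the intervening $\cX$-strands, and this is precisely what the half-braiding of $I(1_\cX)\in Z(\cX)$ supplies; that step is the crux of the paper's argument and is the reason the statement singles out $I$ as the adjoint of the forgetful functor rather than, say, just the object $\bigoplus_j j\overline{j}\in\cX$. The same gap reappears in your backup induction: gluing a vertex along two edges closes a loop and creates a new $u\overline{u}$ pair at the site of the new vertex, which again must be re-braided to sit next to the previously accumulated $FI(1_\cX)$'s. You also lean on a notion of ``skein module on a surface with holes'' that \S\ref{sec:SkeinModules} does not develop; the paper avoids this by working entirely with disk skein modules, proceeding plaquette-by-plaquette, regrouping with a unitary $F$-symbol, and then applying the half-braiding of $I(1_\cX)$ at each step. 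Supplying the half-braiding step (and either defining the punctured-surface skein modules you use or replacing them with the plaquette-by-plaquette disk argument) would make your proof complete.
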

\begin{proof}
We begin with the analysis of a single plaquette, which should be viewed as having a puncture in it corresponding to the vacuum.
Using Fact \ref{fact:VaccuumSeparate} three times, we glue the 4 copies of $\cS_\cX(\bbD,4)$ along matched boundary conditions to obtain
\begin{align*}
\bigoplus_{\substack{g,h,i,j
\\
s,\cdots, z
\\
\in \Irr(\cX)}}
\tikzmath{
\draw[step=1.0,black,thin] (0.5,0.5) grid (2.5,2.5);
\filldraw[gray!30, rounded corners=5pt] (1.1,1.1) rectangle (1.9,1.9);
\node at (1.5,.85) {$\scriptstyle g$};
\node at (2.15,1.5) {$\scriptstyle h$};
\node at (1.5,2.15) {$\scriptstyle i$};
\node at (.85,1.5) {$\scriptstyle j$};
\node at (.3,1) {$\scriptstyle s$};
\node at (1,.3) {$\scriptstyle t$};
\node at (2,.3) {$\scriptstyle u$};
\node at (2.7,1) {$\scriptstyle v$};
\node at (2.7,2) {$\scriptstyle w$};
\node at (2,2.7) {$\scriptstyle x$};
\node at (1,2.7) {$\scriptstyle y$};
\node at (.3,2) {$\scriptstyle z$};
}
&:=
P_A
\bigotimes_{i=1}^4 \cH_v
=
\bigoplus_{\substack{g,h,i,j
\\
s,\cdots, z
\\
\in \Irr(\cX)}}
\parbox{5cm}{
$\cX(zj\to yi)\otimes \cX(ih\to xw)$
\\
$\qquad\qquad\otimes \cX(gu\to hv)\otimes \cX(st\to jg)$
}
\\&\cong
\bigoplus_{\substack{
s,\cdots, z
\\
\in \Irr(\cX)}}
\cX\left(
s\cdot t\cdot u\cdot\overline{v}\cdot\overline{w}\cdot\overline{x}\cdot\overline{y}\cdot z
\to
\underbrace{\bigoplus_j j\cdot\overline{j}}_{FI(1_\cX)} 
\right).
\end{align*}
Here, we add the dots to disambiguate the bar of the tensor product and the tensor product of the bars.
As it was obtained from gluing skein modules, the above isomorphism is unitary when the last space is equipped with the skein module inner product.
Indeed,
up to rotations of our $\cS_\cX(\bbD,4)$ ONB, using Fact \ref{fact:VaccuumSeparate} three times,
\begin{align*}
\langle \eta| \xi\rangle
&=
\left\langle
\tikzmath{
\draw (0,0) to[out=180,in=-90] (-.4,.5) node[above]{$\scriptstyle j$};
\draw (3,0) to[out=0,in=-90] (3.4,.5) node[above]{$\scriptstyle \overline{j}$};
\draw (0,0) -- (3,0);
\draw (-.4,-.5) node[below]{$\scriptstyle s$} -- (0,0) -- (.4,-.5) node[below]{$\scriptstyle t$};
\draw (.6,-.5) node[below]{$\scriptstyle u$} -- (1,0) -- (1.4,-.5) node[below]{$\scriptstyle \overline{v}$};
\draw (1.6,-.5) node[below]{$\scriptstyle \overline{w}$} -- (2,0) -- (2.4,-.5) node[below]{$\scriptstyle \overline{x}$};
\draw (2.6,-.5) node[below]{$\scriptstyle \overline{y}$} -- (3,0) -- (3.4,-.5) node[below]{$\scriptstyle z$};
\node at (.5,-.2) {$\scriptstyle g'$};
\node at (1.5,-.2) {$\scriptstyle h'$};
\node at (2.5,-.2) {$\scriptstyle \overline{i'}$};
\node at (0,.2) {$\scriptstyle \eta_{1,1}$};
\node at (1,.2) {$\scriptstyle \eta_{2,1}$};
\node at (2,.2) {$\scriptstyle \eta_{2,2}$};
\node at (3,.2) {$\scriptstyle \eta_{1,2}$};
}
\middle|
\tikzmath{
\draw (0,0) to[out=180,in=-90] (-.4,.5) node[above]{$\scriptstyle j$};
\draw (3,0) to[out=0,in=-90] (3.4,.5) node[above]{$\scriptstyle \overline{j}$};
\draw (0,0) -- (3,0);
\draw (-.4,-.5) node[below]{$\scriptstyle s$} -- (0,0) -- (.4,-.5) node[below]{$\scriptstyle t$};
\draw (.6,-.5) node[below]{$\scriptstyle u$} -- (1,0) -- (1.4,-.5) node[below]{$\scriptstyle \overline{v}$};
\draw (1.6,-.5) node[below]{$\scriptstyle \overline{w}$} -- (2,0) -- (2.4,-.5) node[below]{$\scriptstyle \overline{x}$};
\draw (2.6,-.5) node[below]{$\scriptstyle \overline{y}$} -- (3,0) -- (3.4,-.5) node[below]{$\scriptstyle z$};
\node at (.5,-.2) {$\scriptstyle g$};
\node at (1.5,-.2) {$\scriptstyle h$};
\node at (2.5,-.2) {$\scriptstyle \overline{i}$};
\node at (0,.2) {$\scriptstyle \xi_{1,1}$};
\node at (1,.2) {$\scriptstyle \xi_{2,1}$};
\node at (2,.2) {$\scriptstyle \xi_{2,2}$};
\node at (3,.2) {$\scriptstyle \xi_{1,2}$};
}
\right\rangle
\\&=
\frac{1}{d_j\sqrt{d_s\cdots d_z}}
\cdot
\tikzmath{
\draw (0,0) -- (3,0);
\draw (0,1) -- (3,1);
\draw (0,0) arc (270:90:.5cm);
\draw (3,0) arc (-90:90:.5cm);
\node at (-.6,.5) {$\scriptstyle \overline{j}$};
\node at (3.6,.5) {$\scriptstyle j$};
\draw (0,0) to[out=135,in=-135] node[left,xshift=.1cm]{$\scriptstyle s$} (0,1);
\draw (0,0) to[out=45,in=-45] node[right,xshift=-.1cm]{$\scriptstyle t$} (0,1);
\draw (1,0) to[out=135,in=-135] node[left,xshift=.1cm]{$\scriptstyle u$} (1,1);
\draw (1,0) to[out=45,in=-45] node[right,xshift=-.1cm]{$\scriptstyle \overline{v}$} (1,1);
\draw (2,0) to[out=135,in=-135] node[left,xshift=.1cm]{$\scriptstyle \overline{w}$} (2,1);
\draw (2,0) to[out=45,in=-45] node[right,xshift=-.1cm]{$\scriptstyle \overline{x}$} (2,1);
\draw (3,0) to[out=135,in=-135] node[left,xshift=.1cm]{$\scriptstyle \overline{y}$} (3,1);
\draw (3,0) to[out=45,in=-45] node[right,xshift=-.1cm]{$\scriptstyle z$} (3,1);
\node at (.5,.2) {$\scriptstyle g'$};
\node at (1.5,.2) {$\scriptstyle h'$};
\node at (2.5,.2) {$\scriptstyle \overline{i'}$};
\node at (0,-.2) {$\scriptstyle \eta^\dag_{1,1}$};
\node at (1,-.2) {$\scriptstyle \eta^\dag_{2,1}$};
\node at (2,-.2) {$\scriptstyle \eta^\dag_{2,2}$};
\node at (3,-.2) {$\scriptstyle \eta^\dag_{1,2}$};
\node at (.5,.8) {$\scriptstyle g$};
\node at (1.5,.8) {$\scriptstyle h$};
\node at (2.5,.8) {$\scriptstyle \overline{i}$};
\node at (0,1.2) {$\scriptstyle \xi_{1,1}$};
\node at (1,1.2) {$\scriptstyle \xi_{2,1}$};
\node at (2,1.2) {$\scriptstyle \xi_{2,2}$};
\node at (3,1.2) {$\scriptstyle \xi_{1,2}$};
}
\\&=
\frac{\delta_{g=g'}\delta_{h=h'}\delta_{i=i'}}{d_id_jd_gd_h\sqrt{d_s\cdots d_z}}
\cdot
\tikzmath{
\draw (0,0) arc (270:90:.5cm);
\draw (0,0) arc (-90:90:.5cm);
\node at (-.6,.5) {$\scriptstyle \overline{j}$};
\node at (.6,.5) {$\scriptstyle g$};
\draw (0,0) to[out=135,in=-135] node[left,xshift=.1cm]{$\scriptstyle s$} (0,1);
\draw (0,0) to[out=45,in=-45] node[right,xshift=-.1cm]{$\scriptstyle t$} (0,1);
\node at (0,-.2) {$\scriptstyle \eta^\dag_{1,1}$};
\node at (0,1.2) {$\scriptstyle \xi_{1,1}$};
}
\cdot
\tikzmath{
\draw (0,0) arc (270:90:.5cm);
\draw (0,0) arc (-90:90:.5cm);
\node at (-.6,.5) {$\scriptstyle \overline{g}$};
\node at (.6,.5) {$\scriptstyle h$};
\draw (0,0) to[out=135,in=-135] node[left,xshift=.1cm]{$\scriptstyle u$} (0,1);
\draw (0,0) to[out=45,in=-45] node[right,xshift=-.1cm]{$\scriptstyle \overline{v}$} (0,1);
\node at (0,-.2) {$\scriptstyle \eta^\dag_{2,1}$};
\node at (0,1.2) {$\scriptstyle \xi_{2,1}$};
}
\cdot
\tikzmath{
\draw (0,0) arc (270:90:.5cm);
\draw (0,0) arc (-90:90:.5cm);
\node at (-.6,.5) {$\scriptstyle \overline{h}$};
\node at (.6,.5) {$\scriptstyle \overline{i}$};
\draw (0,0) to[out=135,in=-135] node[left,xshift=.1cm]{$\scriptstyle \overline{w}$} (0,1);
\draw (0,0) to[out=45,in=-45] node[right,xshift=-.1cm]{$\scriptstyle \overline{x}$} (0,1);
\node at (0,-.2) {$\scriptstyle \eta^\dag_{2,2}$};
\node at (0,1.2) {$\scriptstyle \xi_{2,2}$};
}
\cdot
\tikzmath{
\draw (0,0) arc (270:90:.5cm);
\draw (0,0) arc (-90:90:.5cm);
\node at (-.6,.5) {$\scriptstyle i$};
\node at (.6,.5) {$\scriptstyle j$};
\draw (0,0) to[out=135,in=-135] node[left,xshift=.1cm]{$\scriptstyle \overline{y}$} (0,1);
\draw (0,0) to[out=45,in=-45] node[right,xshift=-.1cm]{$\scriptstyle z$} (0,1);
\node at (0,-.2) {$\scriptstyle \eta^\dag_{1,2}$};
\node at (0,1.2) {$\scriptstyle \xi_{1,2}$};
}
\end{align*}
which is exactly the inner product on $P_\cA \bigotimes_{i = 1}^4 \cH_v$.

We now consider two neighboring plaquettes:
\begin{equation}
\label{eq:GlueSkeinModsForI(1)}
\bigoplus
\tikzmath{
\draw[step=1.0,black,thin] (0.5,0.5) grid (3.5,2.5);
\filldraw[gray!30, rounded corners=5pt] (1.1,1.1) rectangle (1.9,1.9);
\filldraw[gray!30, rounded corners=5pt] (2.1,1.1) rectangle (2.9,1.9);
\node at (1.5,.85) {$\scriptstyle g$};
\node at (2.5,.85) {$\scriptstyle h$};
\node at (2.15,1.5) {$\scriptstyle m$};
\node at (3.15,1.5) {$\scriptstyle i$};
\node at (2.5,2.15) {$\scriptstyle j$};
\node at (1.5,2.15) {$\scriptstyle k$};
\node at (.85,1.5) {$\scriptstyle \ell$};
\node at (.3,1) {$\scriptstyle q$};
\node at (1,.3) {$\scriptstyle r$};
\node at (2,.3) {$\scriptstyle s$};
\node at (3,.3) {$\scriptstyle t$};
\node at (3.7,1) {$\scriptstyle u$};
\node at (3.7,2) {$\scriptstyle v$};
\node at (3,2.7) {$\scriptstyle w$};
\node at (2,2.7) {$\scriptstyle x$};
\node at (1,2.7) {$\scriptstyle y$};
\node at (.3,2) {$\scriptstyle z$};
\draw[thick, blue, dashed, rounded corners=5pt] (1.7,.7) rectangle (2.3,2.3);
}
\quad
\xrightarrow[\cong]{F}
\quad
\bigoplus
\tikzmath{
\filldraw[gray!30, rounded corners=5pt] (2.1,1.1) rectangle (2.9,1.9);
\filldraw[gray!30, rounded corners=5pt] (4.1,2.1) rectangle (4.9,2.9);
\draw (2,.5) -- (2,3.5);
\draw[rounded corners=5pt] (1.5,1) -- (3,1) -- (3,3.5);
\draw (1.5,2) -- (5.5,2); 
\draw (5,.5) -- (5,3.5); 
\draw[rounded corners=5pt] (4,.5) -- (4,3) -- (5.5,3);
\node at (2.5,.85) {$\scriptstyle g$};
\node at (4.5,1.85) {$\scriptstyle h$};
\node at (3.5,1.8) {$\scriptstyle n$};
\node at (5.15,2.5) {$\scriptstyle i$};
\node at (4.5,3.15) {$\scriptstyle j$};
\node at (2.5,2.15) {$\scriptstyle k$};
\node at (1.85,1.5) {$\scriptstyle \ell$};
\node at (1.3,1) {$\scriptstyle q$};
\node at (2,.3) {$\scriptstyle r$};
\node at (4,.3) {$\scriptstyle s$};
\node at (5,.3) {$\scriptstyle t$};
\node at (5.7,2) {$\scriptstyle u$};
\node at (5.7,3) {$\scriptstyle v$};
\node at (5,3.7) {$\scriptstyle w$};
\node at (3,3.7) {$\scriptstyle x$};
\node at (2,3.7) {$\scriptstyle y$};
\node at (1.3,2) {$\scriptstyle z$};
}\,.
\end{equation}
We get a unitary above on $P_A\bigotimes_{i = 1}^4  \cH_v$ by applying a unitary $F$-symbol from $\cX$.
Indeed, using Lemma \ref{lem:GluingUnitary}, gluing ONBs for $\cS_\cX(\bbD,4)$ along the $m$-string (as in the dotted blue line above) gives an ONB for $\cS_\cX(\bbD,6)$, and thus there is a unitary isomorphism
$$
\bigoplus
\tikzmath{
\draw[step=1.0,black,thin] (1.5,0.5) grid (2.5,2.5);
\node at (1.5,.85) {$\scriptstyle g$};
\node at (2.5,.85) {$\scriptstyle h$};
\node at (2.15,1.5) {$\scriptstyle m$};
\node at (2.5,2.15) {$\scriptstyle j$};
\node at (1.5,2.15) {$\scriptstyle k$};
\node at (2,.3) {$\scriptstyle s$};
\node at (2,2.7) {$\scriptstyle x$};
}
\quad
\xrightarrow[\cong]{F}
\quad
\bigoplus
\tikzmath{
\draw[rounded corners=5pt] (2.5,1) -- (3,1) -- (3,3.5);
\draw (2.5,2) -- (4.5,2); 
\draw[rounded corners=5pt] (4,.5) -- (4,3) -- (4.5,3);
\node at (2.5,.85) {$\scriptstyle g$};
\node at (4.5,1.85) {$\scriptstyle h$};
\node at (3.5,1.8) {$\scriptstyle n$};
\node at (4.5,3.15) {$\scriptstyle j$};
\node at (2.5,2.15) {$\scriptstyle k$};
\node at (4,.3) {$\scriptstyle s$};
\node at (3,3.7) {$\scriptstyle x$};
}\,.
$$
Now looking at the right hand side of \eqref{eq:GlueSkeinModsForI(1)},
we have glued together 
$$
\bigoplus
\tikzmath{
\draw (2,.5) -- (2,2.5);
\draw[rounded corners=5pt] (1.5,1) -- (3,1) -- (3,2.5);
\filldraw[gray!30, rounded corners=5pt] (2.1,1.9) rectangle (2.9,1.1);
\draw (1.5,2) -- (3.5,2); 
\node at (2.5,.85) {$\scriptstyle g$};
\node at (3.5,1.8) {$\scriptstyle n$};
\node at (2.5,2.15) {$\scriptstyle k$};
\node at (1.85,1.5) {$\scriptstyle \ell$};
\node at (1.3,1) {$\scriptstyle q$};
\node at (2,.3) {$\scriptstyle r$};
\node at (3,2.7) {$\scriptstyle x$};
\node at (2,2.7) {$\scriptstyle y$};
\node at (1.3,2) {$\scriptstyle z$};
}
\qquad\text{and}\qquad
\bigoplus
\tikzmath{
\draw (3.5,2) -- (5.5,2); 
\draw (5,1.5) -- (5,3.5); 
\draw[rounded corners=5pt] (4,1.5) -- (4,3) -- (5.5,3);
\filldraw[gray!30, rounded corners=5pt] (4.1,2.9) rectangle (4.9,2.1);
\node at (4.5,1.85) {$\scriptstyle h$};
\node at (3.5,1.8) {$\scriptstyle n$};
\node at (5.15,2.5) {$\scriptstyle i$};
\node at (4.5,3.15) {$\scriptstyle j$};
\node at (4,1.3) {$\scriptstyle s$};
\node at (5,1.3) {$\scriptstyle t$};
\node at (5.7,2) {$\scriptstyle u$};
\node at (5.7,3) {$\scriptstyle v$};
\node at (5,3.7) {$\scriptstyle w$};
}
$$
along the $n$-strand.
Similar to the argument above for a single plaquette, these spaces correspond to the following hom spaces with their skein-module inner products:
$$
\bigoplus
\cX\left(
q\cdot r\cdot\overline{n}\cdot\overline{x}\cdot\overline{y}\cdot z
\to
\underbrace{\bigoplus_\ell \ell \overline{\ell}}_{FI(1_\cX)}
\right)
\qquad\text{and}\qquad
\bigoplus
\cX\left(
n\cdot s\cdot t\cdot\overline{u}\cdot\overline{v}\cdot\overline{w}
\to
\underbrace{\bigoplus_j j \overline{j}}_{FI(1_\cX)}
\right).
$$
Now using $I(1_\cX)\in Z(\cX)$ and Lemma \ref{lem:GluingUnitary}, the result is the glued skein module
$$
\cX\left(
q\cdot r\cdot s\cdot t\cdot \overline{u}\cdot\overline{v}\cdot\overline{w}\cdot\overline{x}\cdot\overline{y}\cdot z
\to
FI(1_\cX)^{\otimes 2}
\right)
$$
as claimed via the following unitary isomorphism:
$$
\tikzmath{
\draw[thick, blue] (0,.3) -- (0,.7) node[above]{$\scriptstyle FI(1_\cX)$};
\draw (-.75,-.3) -- (-.75,-.7) node[below]{$\scriptstyle q$};
\draw (-.45,-.3) -- (-.45,-.7) node[below]{$\scriptstyle r$};
\draw (-.15,-.3) -- (-.15,-.7) node[below]{$\scriptstyle \overline{n}$};
\draw (.15,-.3) -- (.15,-.7) node[below]{$\scriptstyle \overline{x}$};
\draw (.45,-.3) -- (.45,-.7) node[below]{$\scriptstyle \overline{y}$};
\draw (.75,-.3) -- (.75,-.7) node[below]{$\scriptstyle z$};
\roundNbox{}{(0,0)}{.3}{.6}{.6}{$f$}
}
\otimes
\tikzmath{
\draw[thick, blue] (0,.3) -- (0,.7) node[above]{$\scriptstyle FI(1_\cX)$};
\draw (-.75,-.3) -- (-.75,-.7) node[below]{$\scriptstyle n$};
\draw (-.45,-.3) -- (-.45,-.7) node[below]{$\scriptstyle s$};
\draw (-.15,-.3) -- (-.15,-.7) node[below]{$\scriptstyle t$};
\draw (.15,-.3) -- (.15,-.7) node[below]{$\scriptstyle \overline{u}$};
\draw (.45,-.3) -- (.45,-.7) node[below]{$\scriptstyle \overline{v}$};
\draw (.75,-.3) -- (.75,-.7) node[below]{$\scriptstyle \overline{w}$};
\roundNbox{}{(0,0)}{.3}{.6}{.6}{$g$}
}
\longmapsto
\tikzmath{
\draw[thick, blue] (0,.3) to[out=90,in=-90] (1,2.2) node[above]{$\scriptstyle FI(1_\cX)$};
\draw[thick, blue] (-1,1.8) to[out=90,in=-90] (-1,2.2) node[above]{$\scriptstyle FI(1_\cX)$};
\draw[knot] (-1.5,1.2) to[out=-90,in=90] (-1.85,-.7) node[below]{$\scriptstyle q$};
\draw[knot] (-1.2,1.2) to[out=-90,in=90] (-1.55,-.7) node[below]{$\scriptstyle r$};
\draw[knot] (-.9,1.2) to[out=-90,in=90] (-1.25,-.3) node[below, xshift=-.1cm]{$\scriptstyle \overline{n}$} arc (-180:0:.25cm);
\draw[knot] (-.6,1.2) to[out=-90,in=90] (1.4,-.2) -- (1.4,-.7) node[below]{$\scriptstyle \overline{x}$};
\draw[knot] (-.3,1.2) to[out=-90,in=90] (1.7,-.1) -- (1.7,-.7) node[below]{$\scriptstyle \overline{y}$};
\draw[knot] (0,1.2) to[out=-90,in=90] (2,0) -- (2,-.7) node[below]{$\scriptstyle z$};
\draw (-.75,-.3) node[below, xshift=.1cm]{$\scriptstyle n$};
\draw (-.45,-.3) -- (-.45,-.7) node[below]{$\scriptstyle s$};
\draw (-.15,-.3) -- (-.15,-.7) node[below]{$\scriptstyle t$};
\draw (.15,-.3) -- (.15,-.7) node[below]{$\scriptstyle \overline{u}$};
\draw (.45,-.3) -- (.45,-.7) node[below]{$\scriptstyle \overline{v}$};
\draw (.75,-.3) -- (.75,-.7) node[below]{$\scriptstyle \overline{w}$};
\roundNbox{}{(0,0)}{.3}{.6}{.6}{$g$}
\roundNbox{}{(-.75,1.5)}{.3}{.6}{.6}{$f$}
}\,.
$$
We now leave the general argument for many plaquettes to the reader.
\end{proof}

Second, 
on $P_A\cH_{\tot}$, for each plaquette/face $p$, 
we have an operator $B_p$\IN{$B_p$ the plaquette operator associated to the plaquette $p$} as follows:
\begin{align*}
\frac{1}{D_\cX}\sum_{r\in \Irr(\cX)}
d_r\cdot
\tikzmath{
\draw[step=1.0,black,thin] (0.5,0.5) grid (2.5,2.5);
\node at (2.3,.8) {$\scriptstyle \xi_{2,1}$};
\node at (.7,2.2) {$\scriptstyle \xi_{1,2}$};
\node at (.7,.8) {$\scriptstyle \xi_{1,1}$};
\node at (2.3,2.2) {$\scriptstyle \xi_{2,2}$};
\node at (1.5,.85) {$\scriptstyle g$};
\node at (2.15,1.5) {$\scriptstyle h$};
\node at (1.5,2.15) {$\scriptstyle i$};
\node at (.85,1.5) {$\scriptstyle j$};
\filldraw[knot, thick, blue, rounded corners=5pt, fill=gray!30] (1.15,1.15) rectangle (1.85,1.85);
\node[blue] at (1.3,1.5) {$\scriptstyle r$};
}
&=
\frac{1}{D_\cX}\sum_{r,s,t,u,v \in \Irr(\cX)}
\frac{\sqrt{d_kd_\ell d_md_n}}{d_r\sqrt{d_gd_hd_id_j}}
\tikzmath{
\filldraw[gray!30, rounded corners=5pt] (1.15,1.15) rectangle (1.85,1.85);
\draw[step=1.0,black,thin] (0.5,0.5) grid (2.5,2.5);
\draw[thick, blue] (1.3,1) -- (1,1.3);
\draw[thick, blue] (1.7,1) -- (2,1.3);
\draw[thick, blue] (1.3,2) -- (1,1.7);
\draw[thick, blue] (1.7,2) -- (2,1.7);
\fill[fill=green] (1.3,1) circle (.05cm);
\fill[fill=green] (1.7,1) circle (.05cm);
\fill[fill=red] (2,1.3) circle (.05cm);
\fill[fill=red] (2,1.7) circle (.05cm);
\fill[fill=yellow] (1.3,2) circle (.05cm);
\fill[fill=yellow] (1.7,2) circle (.05cm);
\fill[fill=orange] (1,1.3) circle (.05cm);
\fill[fill=orange] (1,1.7) circle (.05cm);
\node at (2.3,.8) {$\scriptstyle \xi_{2,1}$};
\node at (.7,2.2) {$\scriptstyle \xi_{1,2}$};
\node at (.7,.8) {$\scriptstyle \xi_{1,1}$};
\node at (2.3,2.2) {$\scriptstyle \xi_{2,2}$};
\node at (1.15,.85) {$\scriptstyle g$};
\node at (1.85,.85) {$\scriptstyle g$};
\node at (2.15,1.15) {$\scriptstyle h$};
\node at (2.15,1.85) {$\scriptstyle h$};
\node at (1.15,2.15) {$\scriptstyle i$};
\node at (1.85,2.15) {$\scriptstyle i$};
\node at (.85,1.15) {$\scriptstyle j$};
\node at (.85,1.85) {$\scriptstyle j$};
\node at (1.5,.85) {$\scriptstyle k$};
\node at (2.15,1.5) {$\scriptstyle \ell$};
\node at (1.5,2.15) {$\scriptstyle m$};
\node at (.85,1.5) {$\scriptstyle n$};
}
\\&=
\sum_{\eta}
C(\xi,\eta)
\tikzmath{
\draw[step=1.0,black,thin] (0.5,0.5) grid (2.5,2.5);
\filldraw[gray!30, rounded corners=5pt] (1.1,1.1) rectangle (1.9,1.9);
\node at (2.3,.8) {$\scriptstyle \eta_{2,1}$};
\node at (.7,1.8) {$\scriptstyle \eta_{1,2}$};
\node at (.7,.8) {$\scriptstyle \eta_{1,1}$};
\node at (2.3,1.8) {$\scriptstyle \eta_{2,2}$};
}\,.
\end{align*}
Here, $D_\cX=\sum_{x\in \Irr(\cX)} d_x^2$\IN{$D_\cX$ the global dimension of $\cX$} is the global dimension of $\cX$.
In the final sum, we write $\xi$ and $\eta$ for simple tensors over our orthonormal basis of $\cH_v$ whose edge labels match, thus defining elements of $P_A\cH_{\tot}$.

We will see below that passing to the ground state space of $B_p$ from $P_A\cH_{\tot}$ has the effect of gluing/condensing a 2-cell onto the plaquette $p$.

\begin{fact}
\label{fact:VaccuumUnzip}
For all $f:1\to xyz$ and $g: xyz\to 1$ for $x,y,z\in \Irr(\cX)$,
we can apply the \eqref{eq:Fusion} relation twice to obtain
$$
\tikzmath{
\draw (-.4,.3) -- node[left]{$\scriptstyle x$} (-.4,1.9);
\draw (0,.3) -- node[left]{$\scriptstyle y$} (0,1.9);
\draw (.4,.3) -- node[right]{$\scriptstyle z$} (.4,1.9);
\roundNbox{}{(0,0)}{.3}{.3}{.3}{$f$}
\roundNbox{}{(0,2.2)}{.3}{.3}{.3}{$g$}
}
=
\frac{\sqrt{d_w}}{\sqrt{d_xd_y}}
\cdot
\tikzmath{
\draw (-.2,.6) -- node[left]{$\scriptstyle w$} (-.2,1.6);
\draw (-.4,.3) -- node[left]{$\scriptstyle x$} (-.2,.6);
\draw (0,.3) -- node[right]{$\scriptstyle y$} (-.2,.6);
\draw (-.4,1.9) -- node[left]{$\scriptstyle x$} (-.2,1.6);
\draw (0,1.9) -- node[right]{$\scriptstyle y$} (-.2,1.6);
\draw (.4,.3) -- node[right]{$\scriptstyle z$} (.4,1.9);
\filldraw[blue] (-.2,.6) circle (.05cm);
\filldraw[blue] (-.2,1.6) circle (.05cm);
\roundNbox{}{(0,0)}{.3}{.3}{.3}{$f$}
\roundNbox{}{(0,2.2)}{.3}{.3}{.3}{$g$}
}
=
\frac{\sqrt{d_v}}{\sqrt{d_xd_yd_z}}
\cdot
\tikzmath{
\draw (.1,.9) -- node[right]{$\scriptstyle v$} (.1,1.3);
\draw (-.2,.6) -- node[left]{$\scriptstyle w$} (.1,.9);
\draw (-.2,1.6) -- node[left]{$\scriptstyle w$} (.1,1.3);
\draw (-.4,.3) -- node[left]{$\scriptstyle x$} (-.2,.6);
\draw (0,.3) -- node[right]{$\scriptstyle y$} (-.2,.6);
\draw (-.4,1.9) -- node[left]{$\scriptstyle x$} (-.2,1.6);
\draw (0,1.9) -- node[right]{$\scriptstyle y$} (-.2,1.6);
\draw (.4,.3) -- node[right]{$\scriptstyle z$} (.1,.9);
\draw (.4,1.9) -- node[right]{$\scriptstyle z$} (.1,1.3);
\filldraw[blue] (-.2,.6) circle (.05cm);
\filldraw[blue] (-.2,1.6) circle (.05cm);
\filldraw[red] (.1,.9) circle (.05cm);
\filldraw[red] (.1,1.3) circle (.05cm);
\roundNbox{}{(0,0)}{.3}{.3}{.3}{$f$}
\roundNbox{}{(0,2.2)}{.3}{.3}{.3}{$g$}
}
=
\frac{\sqrt{1}}{\sqrt{d_xd_yd_z}}
\cdot
\tikzmath{
\draw (-.4,.3) -- node[left]{$\scriptstyle x$} (0,.9);
\draw (0,.3) -- node[left, xshift=.1cm,yshift=-.1cm]{$\scriptstyle y$} (0,.9);
\draw (.4,.3) -- node[right]{$\scriptstyle z$} (0,.9);
\draw (-.4,1.9) -- node[left]{$\scriptstyle x$} (0,1.3);
\draw (0,1.9) -- node[left, xshift=.1cm,yshift=.1cm]{$\scriptstyle y$} (0,1.3);
\draw (.4,1.9) -- node[right]{$\scriptstyle z$} (0,1.3);
\filldraw[blue] (0,.9) circle (.05cm);
\filldraw[blue] (0,1.3) circle (.05cm);
\roundNbox{}{(0,0)}{.3}{.3}{.3}{$f$}
\roundNbox{}{(0,2.2)}{.3}{.3}{.3}{$g$}
}
\,.
$$
In the final equality above, we see that we have zero unless $v=1_\cX$ (so $w=\overline{z}$) as there are no non-zero maps between distinct simples in $\cX$.
Moreover, we may choose the red vertices to be $\coev_z$ and its dagger, as
$\{\ev_z\}$ is an ONB for $\cX(\overline{z}z\to 1)$.
\end{fact}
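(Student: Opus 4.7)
The plan is to apply the \eqref{eq:Fusion} relation twice to the three parallel strands $x,y,z$ between $f$ and $g$, and then use the vacuum boundary conditions on $f$ and $g$ to identify which intermediate simple object survives.

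First I would apply \eqref{eq:Fusion} to the pair of parallel strands labelled $x$ and $y$ lying between $f$ and $g$. Reading \eqref{eq:Fusion} as a resolution, this inserts a sum over $w \in \Irr(\cX)$, in each term replacing the two parallel strands with a pair of trivalent shaded nodes connected by an internal $w$-strand, weighted by $\sqrt{d_w}/\sqrt{d_x d_y}$. This gives the second picture in the statement. I would next apply \eqref{eq:Fusion} a second time to the now-parallel $w$ and $z$ strands between $f$ and $g$, producing a sum over $v \in \Irr(\cX)$ with additional factor $\sqrt{d_v}/\sqrt{d_w d_z}$; the two factors combine to $\sqrt{d_v}/\sqrt{d_xd_yd_z}$, matching the third picture.

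For the final equality I would argue that only the $v = 1_\cX$ term survives. The internal $v$-strand in the third picture is sandwiched between (i) the composite of $f: 1_\cX \to xyz$ with a trivalent morphism $xyz \to wz \to v$, which defines an element of $\cX(1_\cX \to v)$, and (ii) the dagger of the analogous composite with $g$, defining an element of $\cX(v \to 1_\cX)$. By semisimplicity, both hom spaces vanish unless $v = 1_\cX$, in which case $d_v = 1$ (explaining the $\sqrt{1}$ in the numerator). When $v = 1_\cX$ one necessarily has $w = \overline{z}$, and the pair of red trivalent vertices collapse (in the standard ONB convention of Fact \ref{fact:VaccuumSeparate}) to $\ev_z$ and its adjoint, so the picture simplifies to two single shaded vertices with strings $x,y,z$ attached and no internal edge; this is precisely the final diagram.

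The only step that requires genuine care is the last one, namely verifying that the collapse of the $v$-sum onto $v = 1_\cX$ produces the correct normalization and basis vectors at the surviving trivalent vertices. The bookkeeping is nearly identical to Fact \ref{fact:VaccuumSeparate}, since we have already reduced the situation to that of a single parallel pair closing against a vacuum-valued cap, so no new input beyond \eqref{eq:Fusion} and the semisimplicity argument is required.
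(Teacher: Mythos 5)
Your proposal is correct and follows essentially the same route as the paper: two applications of \eqref{eq:Fusion} (first $x,y\to w$, then $w,z\to v$) followed by the semisimplicity observation that only $v=1_\cX$ contributes, forcing $w=\overline{z}$ and collapsing the red vertices to $\coev_z$ and its dagger as in Fact \ref{fact:VaccuumSeparate}. Nothing to add.
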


\begin{lem}[{\cite[\S5]{0907.2204} and \cite[\S5]{YanbaiZhang}}]
\label{lem:PlaquetteCoefficients}
For
$$
\xi=
\tikzmath{
\draw[step=1.0,black,thin] (0.5,0.5) grid (2.5,2.5);
\filldraw[gray!30, rounded corners=5pt] (1.1,1.1) rectangle (1.9,1.9);
\node at (2.3,.8) {$\scriptstyle \xi_{2,1}$};
\node at (.7,1.8) {$\scriptstyle \xi_{1,2}$};
\node at (.7,.8) {$\scriptstyle \xi_{1,1}$};
\node at (2.3,1.8) {$\scriptstyle \xi_{2,2}$};
\node at (.3,1) {$\scriptstyle s$};
\node at (1,.3) {$\scriptstyle t$};
\node at (2,.3) {$\scriptstyle u$};
\node at (2.7,1) {$\scriptstyle v$};
\node at (2.7,2) {$\scriptstyle w$};
\node at (2,2.7) {$\scriptstyle x$};
\node at (1,2.7) {$\scriptstyle y$};
\node at (.3,2) {$\scriptstyle z$};
}
\qquad\qquad
\eta=
\tikzmath{
\draw[step=1.0,black,thin] (0.5,0.5) grid (2.5,2.5);
\filldraw[gray!30, rounded corners=5pt] (1.1,1.1) rectangle (1.9,1.9);
\node at (2.3,.8) {$\scriptstyle \eta_{2,1}$};
\node at (.7,1.8) {$\scriptstyle \eta_{1,2}$};
\node at (.7,.8) {$\scriptstyle \eta_{1,1}$};
\node at (2.3,1.8) {$\scriptstyle \eta_{2,2}$};
\node at (.3,1) {$\scriptstyle s$};
\node at (1,.3) {$\scriptstyle t$};
\node at (2,.3) {$\scriptstyle u$};
\node at (2.7,1) {$\scriptstyle v$};
\node at (2.7,2) {$\scriptstyle w$};
\node at (2,2.7) {$\scriptstyle x$};
\node at (1,2.7) {$\scriptstyle y$};
\node at (.3,2) {$\scriptstyle z$};
}\,,
$$
the coefficient $C(\xi,\eta)$\IN{$C(\xi,\eta)$ the matrix coefficients of a $B_p$ operator.} above is given by
\begin{equation}
\label{eq:CoefficientOfBp}
C(\xi,\eta)
=
\frac{1}{D_\cX\sqrt{d_sd_td_ud_vd_wd_xd_yd_z}}
\tikzmath{
\draw[knot, mid>] (1,1) -- node[above, xshift=-.1cm]{$\scriptstyle g$} (2,1);
\draw[mid>] (2,1) -- node[left]{$\scriptstyle h$} (2,2);
\draw[mid>] (1,2) -- node[above]{$\scriptstyle i$} (2,2);
\draw[mid>] (1,1) -- node[left]{$\scriptstyle j$} (1,2);
\draw[knot, mid<] (0,0) to[out=-90,in=-90] node[above]{$\scriptstyle k$} (3,0);
\draw[mid<] (3,0) to[out=0,in=0] node[left]{$\scriptstyle\ell$} (3,3);
\draw[mid<] (0,3) to[out=90,in=90] node[below]{$\scriptstyle m$} (3,3);
\draw[mid<] (0,0) to[out=180,in=180] node[right]{$\scriptstyle n$} (0,3);
\draw[mid>] (0,0) to[out=90,in=180] node[above]{$\scriptstyle s$} (1,1);
\draw[mid>] (0,0) to[out=0,in=-90] node[below]{$\scriptstyle t$} (1,1);
\draw[mid>] (3,0) to[out=180,in=-90] node[below]{$\scriptstyle u$} (2,1);
\draw[mid<] (3,0) to[out=90,in=0] node[right]{$\scriptstyle v$} (2,1);
\draw[mid<] (3,3) to[out=180,in=90] node[left]{$\scriptstyle w$} (2,2);
\draw[mid<] (3,3) to[out=-90,in=0] node[right]{$\scriptstyle x$} (2,2);
\draw[mid<] (0,3) to[out=0,in=90] node[right]{$\scriptstyle y$} (1,2);
\draw[mid>]  (0,3) to[out=-90,in=180] node[below]{$\scriptstyle z$} (1,2);
\node at (1.3,1.8) {$\scriptstyle \xi_{1,2}$};
\node at (1.3,.8) {$\scriptstyle \xi_{1,1}$};
\node at (2.3,1.2) {$\scriptstyle \xi_{2,1}$};
\node at (2.3,1.8) {$\scriptstyle \xi_{2,2}$};
\node at (-.3,-.2) {$\scriptstyle \eta^\dag_{1,1}$};
\node at (3.6,-.2) {$\scriptstyle \overline{\eta_{2,1}}$};
\node at (3.3,3.2) {$\scriptstyle \eta^\dag_{2,2}$};
\node at (-.3,3.2) {$\scriptstyle \overline{\eta_{1,2}}$};
}\,.
\end{equation}
When not every external edge/link label of $\eta,\xi$ match, $C(\xi,\eta)=0$.
In particular, $B_p$ is a self-adjoint projection.
\end{lem}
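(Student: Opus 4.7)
The plan is to compute the matrix element $\langle \eta, B_p \xi \rangle$ directly from the definition of $B_p$ and to recognize the result as the closed-diagram evaluation in \eqref{eq:CoefficientOfBp}. First, I would apply the definition of $B_p$: it inserts $\frac{1}{D_\cX}\sum_r d_r$ times an $r$-loop in the interior of plaquette $p$, crossing each of the four plaquette edges (labeled $g, h, i, j$). At each of the four corner vertices $\xi_{i,j}$, the $r$-loop passes alongside two incident edges, and the fusion relation \eqref{eq:Fusion} (together with the shaded trivalent conventions) resolves this configuration into new trivalent vertices carrying new internal edge labels $k, \ell, m, n$; the normalization factors collectively produce the scalar $\sqrt{d_k d_\ell d_m d_n}/(d_r\sqrt{d_g d_h d_i d_j})$ displayed in the intermediate expression.

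Second, to extract $C(\xi,\eta)$, I would compute the inner product $\langle \eta, B_p \xi\rangle$ using the tensor-product inner product on $\cH_{\tot} = \bigotimes_v \cH_v$. Under each local $\cH_v$ inner product, $\eta^\dag_{i,j}$ is stacked above the resolved $\xi_{i,j}$ and the external strings are closed up, contributing the overall normalization $1/\sqrt{d_s d_t d_u d_v d_w d_x d_y d_z}$ from the four local $\cH_v$ inner products. If any external label of $\xi$ and $\eta$ disagree, the inner product vanishes by Fact \ref{fact:VaccuumSeparate}, yielding $C(\xi,\eta) = 0$ as claimed. When all external labels agree, the resulting closed graph on the sphere, after collapsing the bigons arising at each corner via Fact \ref{fact:VaccuumSeparate}, is precisely the diagram in \eqref{eq:CoefficientOfBp}; the sum $\frac{1}{D_\cX}\sum_r d_r$ from the Kirby-color loop combines with the loop-evaluation factors produced by the bigon collapse to give the stated prefactor $1/(D_\cX\sqrt{d_s\cdots d_z})$.

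For the projection properties: self-adjointness $\overline{C(\eta,\xi)} = C(\xi,\eta)$ is visible directly from \eqref{eq:CoefficientOfBp} by reflecting the diagram through a horizontal axis of the sphere, which swaps the roles of $\xi$ and $\eta$ and daggers each vertex, thanks to the unitary spherical structure of $\cX$. Idempotency $B_p^2 = B_p$ reduces to the standard Kirby-color identity: two concentric loops $\frac{1}{D_\cX}\sum_r d_r [r]$ inside the plaquette fuse via \eqref{eq:Fusion} to a single such loop, using $\sum_r d_r^2 = D_\cX$. The main obstacle I anticipate is the careful bookkeeping of all scalar normalizations --- the $1/\sqrt{d_{x_1}\cdots d_{x_n}}$ factors from each local $\cH_v$ inner product, the factors from \eqref{eq:Fusion} and \eqref{eq:Bigon 1} at each corner, and the $d_r$ contribution from closing the $r$-loop --- combining to give exactly the prefactor appearing in \eqref{eq:CoefficientOfBp}.
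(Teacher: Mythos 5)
Your proposal follows essentially the same route as the paper: apply $B_p$, resolve the $r$-loop into trivalent vertices at each corner via \eqref{eq:Fusion}, pair with $\eta$ to obtain a closed spherical diagram, and collapse the $r$-sum to recover the prefactor and \eqref{eq:CoefficientOfBp}. The one small imprecision is that merging the four corner traces into a single connected diagram requires the three-strand Fact \ref{fact:VaccuumUnzip} (each glue interface carries the lattice edge, the $r$-strand, and the fused label $k$/$\ell$/$m$) followed by one final \eqref{eq:Fusion} application to unzip the residual $r$-loop, rather than the two-strand Fact \ref{fact:VaccuumSeparate} you cite, but this is a detail of bookkeeping rather than a gap in the strategy.
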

We give a short conceptual proof for convenience and completeness.
We will prove an enriched version in \S\ref{sec:Enriched} below.
In the diagram on the right hand side of \eqref{eq:CoefficientOfBp} above, two different involutions are applied to morphisms in $\cX$, namely the adjoint $\dag$ and the conjugate $\overline{\,\cdot\,}$.
The adjoint is the usual one in the UFC $\cX$, and the conjugate is given by the adjoint of the rotation by $180^\circ$.
That is, if $f: x\to y$, then
$$
\tikzmath{
\draw (0,-.7) --node[right]{$\scriptstyle\overline{x}$} (0,-.3);
\draw (0,.7) --node[right]{$\scriptstyle\overline{y}$} (0,.3);
\roundNbox{}{(0,0)}{.3}{0}{0}{$\overline{f}$}
}
:=
\tikzmath{
\draw (0,.3) node[right,yshift=.2cm]{$\scriptstyle x$} arc(0:180:.3cm) --node[left]{$\scriptstyle\overline{x}$} (-.6,-.7);
\draw (0,-.3) node[left,yshift=-.2cm]{$\scriptstyle y$} arc(-180:0:.3cm) --node[right]{$\scriptstyle\overline{y}$} (.6,.7);
\roundNbox{}{(0,0)}{.3}{0}{0}{$f^\dag$}
}
=
\tikzmath{
\draw (0,.3) node[left,yshift=.2cm]{$\scriptstyle x$} arc(180:0:.3cm) --node[right]{$\scriptstyle\overline{x}$} (.6,-.7);
\draw (0,-.3) node[right,yshift=-.2cm]{$\scriptstyle y$} arc(0:-180:.3cm) --node[left]{$\scriptstyle\overline{y}$} (-.6,.7);
\roundNbox{}{(0,0)}{.3}{0}{0}{$f^\dag$}
}
\,.
$$

\begin{proof}
To ease the notation, we set
$$
K
:=
\frac{1}{D_\cX\sqrt{d_sd_td_ud_vd_wd_xd_yd_z}}.
$$
Under the usual Fourier expansion of each of the four corners in the $\{\eta\}$ ONB of $\cH_v=\cS_\cX(\bbD,4)$ (assuming the internal edges match giving a vector in $P_A\cH_{\tot}$),
$$
C(\xi,\eta)
=
K
\sum_{\substack{r\in \Irr(\cX)
}}
\frac{1}{ d_r\sqrt{d_gd_hd_id_jd_kd_\ell d_md_n}}
\tikzmath{
\clip (-1,-1) rectangle (5,5);
\draw[mid>] (1,1) -- node[below]{$\scriptstyle g$} (1.5,1);
\draw[mid>] (1.5,1) .. controls ++(0:1.5cm) and ++(-90:2cm) .. node[right]{$\scriptstyle k$} (0,0);
\draw[mid>] (1,1) -- node[left]{$\scriptstyle j$} (1,1.5);
\draw[mid>] (1,1.5) .. controls ++(90:1.5cm) and ++(180:2cm) .. node[above]{$\scriptstyle n$} (0,0);
\draw[mid>] (1,2.5) -- node[left]{$\scriptstyle j$} (1,3);
\draw[mid>] (0,4) .. controls ++(180:2cm) and ++(-90:1.5cm) .. node[below]{$\scriptstyle n$} (1,2.5);
\draw[mid>] (1,3) -- node[above]{$\scriptstyle i$} (1.5,3);
\draw[mid>] (1.5,3) .. controls ++(0:1.5cm) and ++(90:2cm) .. node[right]{$\scriptstyle m$} (0,4);
\draw[mid>] (2.5,1) -- node[below]{$\scriptstyle g$} (3,1);
\draw[mid<] (2.5,1) .. controls ++(180:1.5cm) and ++(-90:2cm) .. node[below]{$\scriptstyle k$} (4,0);
\draw[mid>] (3,1) -- node[right]{$\scriptstyle h$} (3,1.5);
\draw[mid>] (3,1.5) .. controls ++(90:1.5cm) and ++(0:2cm) .. node[above]{$\scriptstyle \ell$} (4,0);
\draw[mid>] (3,2.5) -- node[right]{$\scriptstyle h$} (3,3);
\draw[mid>] (4,4) .. controls ++(0:2cm) and ++(-90:1.5cm) .. node[below]{$\scriptstyle \ell$} (3,2.5);
\draw[mid>] (2.5,3) -- node[above]{$\scriptstyle i$} (3,3);
\draw[mid>] (4,4) .. controls ++(90:2cm) and ++(180:1.5cm) .. node[left]{$\scriptstyle m$} (2.5,3);
\draw[mid>] (0,0) to[out=90,in=180] node[above]{$\scriptstyle s$} (1,1);
\draw[mid>] (0,0) to[out=0,in=-90] node[below]{$\scriptstyle t$} (1,1);
\draw[mid>] (4,0) to[out=180,in=-90] node[below]{$\scriptstyle u$} (3,1);
\draw[mid<] (4,0) to[out=90,in=0] node[right]{$\scriptstyle v$} (3,1);
\draw[mid<] (4,4) to[out=-90,in=0] node[right]{$\scriptstyle w$} (3,3);
\draw[mid<] (4,4) to[out=180,in=90] node[left]{$\scriptstyle x$} (3,3);
\draw[mid<] (0,4) to[out=0,in=90] node[right]{$\scriptstyle y$} (1,3);
\draw[mid>]  (0,4) to[out=-90,in=180] node[below]{$\scriptstyle z$} (1,3);
\node at (.5,2.85) {$\scriptstyle \xi_{1,2}$};
\node at (.7,.7) {$\scriptstyle \xi_{1,1}$};
\node at (3.25,.8) {$\scriptstyle \xi_{2,1}$};
\node at (3.4,3.3) {$\scriptstyle \xi_{2,2}$};
\node at (-.3,-.2) {$\scriptstyle \eta^\dag_{1,1}$};
\node at (4.6,-.2) {$\scriptstyle \overline{\eta_{2,1}}$};
\node at (4.3,4.2) {$\scriptstyle \eta^\dag_{2,2}$};
\node at (-.3,4.2) {$\scriptstyle \overline{\eta_{1,2}}$};
\draw[thick, blue, mid>] (1.5,1) --node[above]{$\scriptstyle r$} (1,1.5);
\draw[thick, blue, mid>] (1,2.5) --node[right]{$\scriptstyle r$} (1.5,3);
\draw[thick, blue, mid>] (3,1.5) --node[left]{$\scriptstyle r$} (2.5,1);
\draw[thick, blue, mid>] (2.5,3) --node[below]{$\scriptstyle r$} (3,2.5);
\fill[fill=green] (1.5,1) circle (.05cm);
\fill[fill=green] (2.5,1) circle (.05cm);
\fill[fill=purple] (3,1.5) circle (.05cm);
\fill[fill=purple] (3,2.5) circle (.05cm);
\fill[fill=yellow] (1.5,3) circle (.05cm);
\fill[fill=yellow] (2.5,3) circle (.05cm);
\fill[fill=orange] (1,1.5) circle (.05cm);
\fill[fill=orange] (1,2.5) circle (.05cm);
}\,.
$$
Indeed, each of the four closed diagrams represents the inner product in the skein module (up to coefficients) of one of the four corners of \eqref{eq:EnrichedBpAction} with an appropriate $\eta_{i,j}$.
Observe here that there is no longer a sum over $k,\ell,m,n$ as picking particular $\eta_{i,j}$ determines these labels.
We now apply Fact \ref{fact:VaccuumUnzip} to three places amongst these four closed diagrams corresponding to pairs of green, purple, and yellow nodes
to simplify
$$
C(\xi,\eta)
=
K
\sum_{\substack{r\in \Irr(\cX)
}}
\frac{\sqrt{d_r}}{\sqrt{d_jd_n}}
\tikzmath{
\clip (-1.5,-1.5) rectangle (5.5,5.5);
\draw[mid>] (4,0) to[out=-90,in=-90] node[above]{$\scriptstyle k$} (0,0);
\draw[mid>] (4,4) to[out=0,in=0] node[left]{$\scriptstyle \ell$} (4,0);
\draw[mid>] (4,4) to[out=90,in=90] node[below]{$\scriptstyle m$} (0,4);
\draw[mid>] (1,1.5) .. controls ++(90:1.5cm) and ++(180:2cm) .. node[above]{$\scriptstyle n$} (0,0);
\draw[mid>] (0,4) .. controls ++(180:2cm) and ++(-90:1.5cm) .. node[below]{$\scriptstyle n$} (1,2.5);
\draw[mid>] (1,1) -- node[below]{$\scriptstyle g$} (3,1);
\draw[mid>] (3,1) -- node[right]{$\scriptstyle h$} (3,3);
\draw[mid>] (1,3) -- node[above]{$\scriptstyle i$} (3,3);
\draw[mid>] (1,1) -- node[left]{$\scriptstyle j$} (1,1.5);
\draw[mid>] (1,2.5) -- node[left]{$\scriptstyle j$} (1,3);
\draw[mid>] (0,0) to[out=90,in=180] node[above]{$\scriptstyle s$} (1,1);
\draw[mid>] (0,0) to[out=0,in=-90] node[below]{$\scriptstyle t$} (1,1);
\draw[mid>] (4,0) to[out=180,in=-90] node[below]{$\scriptstyle u$} (3,1);
\draw[mid<] (4,0) to[out=90,in=0] node[right]{$\scriptstyle v$} (3,1);
\draw[mid<] (4,4) to[out=-90,in=0] node[right]{$\scriptstyle w$} (3,3);
\draw[mid<] (4,4) to[out=180,in=90] node[left]{$\scriptstyle x$} (3,3);
\draw[mid<] (0,4) to[out=0,in=90] node[right]{$\scriptstyle y$} (1,3);
\draw[mid>]  (0,4) to[out=-90,in=180] node[below]{$\scriptstyle z$} (1,3);
\node at (.5,2.85) {$\scriptstyle \xi_{1,2}$};
\node at (.7,.7) {$\scriptstyle \xi_{1,1}$};
\node at (3.25,.8) {$\scriptstyle \xi_{2,1}$};
\node at (3.4,3.3) {$\scriptstyle \xi_{2,2}$};
\node at (-.3,-.2) {$\scriptstyle \eta^\dag_{1,1}$};
\node at (4.6,-.2) {$\scriptstyle \overline{\eta_{2,1}}$};
\node at (4.3,4.2) {$\scriptstyle \eta^\dag_{2,2}$};
\node at (-.3,4.2) {$\scriptstyle \overline{\eta_{1,2}}$};
\draw[thick, blue, mid>] (1,2.5) to[out=45, in=180] (1.5,2.8) -- (2.5,2.8) to[out=0,in=90] (2.8,2.5) -- (2.8,1.5) to[out=-90,in=0] (2.5,1.2) -- node[above]{$\scriptstyle r$} (1.5,1.2) to[out=180, in=-45] (1,1.5);
\fill[fill=orange] (1,1.5) circle (.05cm);
\fill[fill=orange] (1,2.5) circle (.05cm);
}\,.
$$
As this diagram is a closed diagram in $\cX$, we can apply the \eqref{eq:Fusion} relation again to `unzip' along the $r$ string to obtain \eqref{eq:CoefficientOfBp} as claimed.

To see that $B_p$ is self-adjoint, we simply observe $C(\eta, \xi) = \overline{C(\xi,\eta)}$, and thus the matrix representation of $B_p$ is self-adjoint.
That $B_p^2=B_p$ and $[B_p,B_q]=0$ for distinct plaquettes $p,q$ follows from \eqref{eq:I=H} and \eqref{eq:Bigon 2} as depicted in \cite[\S5.4]{YanbaiZhang}.
\end{proof}

The string-net local Hamiltonian is given by
\begin{equation}
\label{eq:XHamiltonian}
H := -\sum_\ell A_\ell - \sum_p B_p.
\end{equation}
We can describe the ground state space of $H$ in terms of skein modules for $\cX$ \cite{MR3204497}.
There is an obvious \emph{evaluation map} $\eval: P_A \cH_{\tot}\to \cS(\bbD,N)$ given by evaluating a morphism in the skein module, where $N$ is the number of external edges of our lattice.
The following result is claimed in \cite{MR3204497}; we give a short conceptual proof for convenience and completeness.
We will prove an enriched version in \S\ref{sec:Enriched} below.

\begin{thm}[\cite{MR3204497}]
\label{thm:ProjectToSkeinModule}
On $P_A \cH_{\tot}$, $\prod_p B_p = D_\cX^{-\#p}\eval^\dag \circ \eval$,
where $\#p$ is the number of plaquettes in the lattice.
\end{thm}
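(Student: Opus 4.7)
The plan is to verify the identity by computing matrix coefficients of both sides in the canonical ONB $\{\eta\}$ of $P_A\cH_{\tot}$. By the definition of adjoint,
$$
\langle \eta | \eval^\dag \eval | \xi\rangle = \langle \eval(\eta) | \eval(\xi)\rangle_{\cS_\cX(\bbD,N)},
$$
so it suffices to show
$$
\langle \eta | \prod_p B_p | \xi\rangle = D_\cX^{-\#p}\,\langle \eval(\eta) | \eval(\xi)\rangle_{\cS_\cX(\bbD,N)}.
$$

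The crucial observation is that the closed diagram in the matrix coefficient formula \eqref{eq:CoefficientOfBp} of Lemma \ref{lem:PlaquetteCoefficients} is, up to its normalizing scalar, precisely the $\cS_\cX(\bbD,8)$ inner product of the local evaluations $\eval_p(\eta)$ and $\eval_p(\xi)$ obtained by gluing the four corner morphisms around plaquette $p$. That is, the one-plaquette version reads $\langle \eta | B_p | \xi\rangle = D_\cX^{-1}\langle \eval_p(\eta)| \eval_p(\xi)\rangle_{\cS_\cX(\bbD,8)}$; checking this is a direct matching of the $\sqrt{d_\bullet}$ prefactors in the inner product on $\cS_\cX(\bbD,n)$ against those in \eqref{eq:CoefficientOfBp}.

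To upgrade to the full product, I would proceed by induction on the number of plaquettes. Since the $B_p$ commute, I can apply them one at a time in any order, say starting with an outermost plaquette and working inward. At the inductive step, applying a new $B_p$ to a state already modified by the previously applied $B_{p'}$'s contributes another factor of $D_\cX^{-1}$ and extends the closed diagram by gluing in one more plaquette. Lemma \ref{lem:GluingUnitary} is the conceptual engine ensuring that these successive local gluings of skein modules assemble into the global skein-module inner product on $\cS_\cX(\bbD,N)$ once all $\#p$ plaquettes have been processed. Alternatively, one could shortcut this bookkeeping by working under the identification of Proposition \ref{prop:DescriptionOfIm(PA)}, in which each $B_p$ acts as $D_\cX^{-1}$ times composition with a specific endomorphism of the $p$-th tensor factor $FI(1_\cX)$ that factors through $1_\cX$, and $\eval$ corresponds to composing all $\#p$ factors down to $1_\cX^{\otimes\#p}$.

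The main obstacle will be the bookkeeping of the quantum-dimension prefactors across the induction: when a new $B_p$ is applied, certain edges that were external for the partial evaluation become internal afterward, and the corresponding $\sqrt{d_\bullet}$ factors in the $\cS_\cX(\bbD,n)$ inner-product normalization must combine correctly with those appearing in \eqref{eq:CoefficientOfBp}. Conceptually this is forced by the unitarity in Lemma \ref{lem:GluingUnitary}, but a clean explicit verification at the level of diagrams will need to be carried out carefully.
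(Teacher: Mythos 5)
Your primary plan---compute matrix coefficients in the canonical ONB, read off the one-plaquette case from Lemma~\ref{lem:PlaquetteCoefficients}, and induct by adjoining one plaquette at a time with Lemma~\ref{lem:GluingUnitary} assembling the local gluings into the global skein-module inner product---is exactly the paper's approach; the paper carries it out explicitly for two adjacent plaquettes (the quantum-dimension bookkeeping you flag is precisely what that diagrammatic computation resolves, via Fact~\ref{fact:VaccuumUnzip} and the \eqref{eq:Fusion} relation) and then leaves the remaining plaquettes to the reader. Your proposed shortcut via Proposition~\ref{prop:DescriptionOfIm(PA)}---viewing $B_p$ as $D_\cX^{-1}$ times a projection on the $p$-th $FI(1_\cX)$ factor and $\eval$ as the map to the trivial summand---is also sound and corresponds to the Remark the paper records immediately after this theorem, where $B_p$ is identified with a \emph{Kirby element} of $\cX$; it is a cleaner conceptual route once Proposition~\ref{prop:DescriptionOfIm(PA)} is in hand, but the paper elected to give the direct diagrammatic proof in the body of the argument.
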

\begin{proof}
The proof proceeds by adjoining adjacent plaquettes by induction.
We provide the detailed argument for 2 adjacent plaquettes $o,p$, and we leave the remaining details to the reader.
We will show that if $\xi,\xi'$ are choices of 6 vectors from our distinguished ONB of $\cH_v$,
$$
\langle \xi | B_oB_p \xi'\rangle_{\bigotimes_{v\in \cB} \cH_v} 
= 
D_\cX^{-2}
\langle \eval\xi | \eval\xi'\rangle_{\cS_\cX(\bbD,10)}
\qquad\Longleftrightarrow
\qquad
B_oB_p=D_\cX^{-2}\eval^\dag\circ \eval.
$$
To ease the notation, we set
\begin{align*}
K&:= \frac{1}{\sqrt{d_qd_rd_sd_td_ud_vd_wd_xd_yd_z}},
\displaybreak[1]\\
K_o&:=\frac{\sqrt{d_jd_hd_md_g}}{\sqrt{d_{j'}d_{h''}d_{m'}d_{g'}}}
&
K_o'&:=\frac{1}{\sqrt{d_jd_{j'}d_{h}d_md_{m'}d_gd_{g'}}},
\displaybreak[1]\\
K_p(h'')&:=\frac{\sqrt{d_kd_id_\ell d_{h''}}}{\sqrt{d_{k'}d_{i'}d_{\ell'}d_{h'}}}
&
K_p'&:=\frac{1}{\sqrt{d_kd_{k'}d_id_{i'}d_\ell d_{\ell'}d_{h'}}}.
\end{align*}
In the calculation below, we label strings and vertices as much as possible, but sometimes we omit labels that can be determined from the previous diagram when the diagram is getting very intricate.
Moreover, since we have fixed $\eta_{i,j}\in \operatorname{ONB}$, some of the sums over simples from the plaquette operators collapse as in the proof of Lemma \ref{lem:PlaquetteCoefficients}.
We suppress shadings on squares for readability.
\begin{align*}
C_{o,p}(\xi',\xi)
&=
\left\langle
\tikzmath{
\begin{scope}
\draw[step=1.0,black] (.5,.5) grid (3.5,2.5);
\end{scope}
\foreach \i in {1,2,3} {
\foreach \j in {1, 2} {
        \node at ($ (\i,\j) + (.25,-.15) $) {$\scriptstyle \xi_{\i,\j}$};
}
}
\node at (.3,1) {$\scriptstyle q$};
\node at (1,.3) {$\scriptstyle r$};
\node at (2,.3) {$\scriptstyle s$};
\node at (3,.3) {$\scriptstyle t$};
\node at (3.7,1) {$\scriptstyle u$};
\node at (3.7,2) {$\scriptstyle v$};
\node at (3,2.7) {$\scriptstyle w$};
\node at (2,2.7) {$\scriptstyle x$};
\node at (1,2.7) {$\scriptstyle y$};
\node at (.3,2) {$\scriptstyle z$};
\node at (1.5,1.5) {$\scriptstyle o$};
\node at (2.5,1.5) {$\scriptstyle p$};
\node at (.85,1.5) {$\scriptstyle g$};
\node at (2.15,1.5) {$\scriptstyle h$};
\node at (3.15,1.5) {$\scriptstyle i$};
\node at (1.7,1.15) {$\scriptstyle j$};
\node at (2.7,1.1) {$\scriptstyle k$};
\node at (2.7,2.15) {$\scriptstyle \ell$};
\node at (1.7,2.1) {$\scriptstyle m$};
}
\Bigg|
B_o B_p
\tikzmath{
\begin{scope}
\draw[step=1.0,black] (.5,.5) grid (3.5,2.5);
\end{scope}
\foreach \i in {1,2,3} {
\foreach \j in {1,2} {
        \node at ($ (\i,\j) + (.25,-.15) $) {$\scriptstyle \xi'_{\i,\j}$};
}
}
\node at (.3,1) {$\scriptstyle q$};
\node at (1,.3) {$\scriptstyle r$};
\node at (2,.3) {$\scriptstyle s$};
\node at (3,.3) {$\scriptstyle t$};
\node at (3.7,1) {$\scriptstyle u$};
\node at (3.7,2) {$\scriptstyle v$};
\node at (3,2.7) {$\scriptstyle w$};
\node at (2,2.7) {$\scriptstyle x$};
\node at (1,2.7) {$\scriptstyle y$};
\node at (.3,2) {$\scriptstyle z$};
\node at (1.5,1.5) {$\scriptstyle o$};
\node at (2.5,1.5) {$\scriptstyle p$};
\node at (.85,1.5) {$\scriptstyle g'$};
\node at (2.15,1.5) {$\scriptstyle h'$};
\node at (3.15,1.5) {$\scriptstyle i'$};
\node at (1.7,1.15) {$\scriptstyle j'$};
\node at (2.7,1.1) {$\scriptstyle k'$};
\node at (2.7,2.15) {$\scriptstyle \ell'$};
\node at (1.7,2.15) {$\scriptstyle m'$};
}
\right\rangle
\displaybreak[1]\\&=
\frac{1}{\cD_\cX}
\sum_{n',h''\in \Irr(\cX)} 
\frac{K_p(h'')}{d_{n'}}
\left\langle
\tikzmath{
\begin{scope}
\draw[step=1.0,black] (.5,.5) grid (3.5,2.5);
\end{scope}
\node at (.3,1) {$\scriptstyle q$};
\node at (1,.3) {$\scriptstyle r$};
\node at (2,.3) {$\scriptstyle s$};
\node at (3,.3) {$\scriptstyle t$};
\node at (3.7,1) {$\scriptstyle u$};
\node at (3.7,2) {$\scriptstyle v$};
\node at (3,2.7) {$\scriptstyle w$};
\node at (2,2.7) {$\scriptstyle x$};
\node at (1,2.7) {$\scriptstyle y$};
\node at (.3,2) {$\scriptstyle z$};
\node at (1.15,1.5) {$\scriptstyle g$};
\node at (2.15,1.5) {$\scriptstyle h$};
\node at (3.15,1.5) {$\scriptstyle i$};
\node at (1.5,.9) {$\scriptstyle j$};
\node at (2.5,.9) {$\scriptstyle k$};
\node at (2.5,1.9) {$\scriptstyle \ell$};
\node at (1.5,1.9) {$\scriptstyle m$};
}
\Bigg|
B_o
\tikzmath{
\begin{scope}
\draw[step=1.0,black] (.5,.5) grid (3.5,2.5);
\end{scope}
\draw[thick, cyan] (2.3,1) -- (2,1.3);
\draw[thick, cyan] (2.7,1) -- (3,1.3);
\draw[thick, cyan] (2.3,2) -- (2,1.7);
\draw[thick, cyan] (2.7,2) -- (3,1.7);
\node[cyan] at (2.3,1.3) {$\scriptstyle n'$};
\fill[fill=green] (2.3,1) circle (.05cm);
\fill[fill=green] (2.7,1) circle (.05cm);
\fill[fill=purple] (3,1.3) circle (.05cm);
\fill[fill=purple] (3,1.7) circle (.05cm);
\fill[fill=yellow] (2.3,2) circle (.05cm);
\fill[fill=yellow] (2.7,2) circle (.05cm);
\fill[fill=orange] (2,1.3) circle (.05cm);
\fill[fill=orange] (2,1.7) circle (.05cm);
\node at (.3,1) {$\scriptstyle q$};
\node at (1,.3) {$\scriptstyle r$};
\node at (2,.3) {$\scriptstyle s$};
\node at (3,.3) {$\scriptstyle t$};
\node at (3.7,1) {$\scriptstyle u$};
\node at (3.7,2) {$\scriptstyle v$};
\node at (3,2.7) {$\scriptstyle w$};
\node at (2,2.7) {$\scriptstyle x$};
\node at (1,2.7) {$\scriptstyle y$};
\node at (.3,2) {$\scriptstyle z$};
\node at (1.85,1.15) {$\scriptstyle h'$};
\node at (1.85,1.5) {$\scriptstyle h''$};
\node at (1.85,1.85) {$\scriptstyle h'$};
\node at (3.15,1.85) {$\scriptstyle i'$};
\node at (3.15,1.5) {$\scriptstyle i$};
\node at (3.15,1.15) {$\scriptstyle i'$};
\node at (2.15,.9) {$\scriptstyle k'$};
\node at (2.5,.9) {$\scriptstyle k$};
\node at (2.85,.9) {$\scriptstyle k'$};
\node at (2.15,2.15) {$\scriptstyle \ell'$};
\node at (2.5,2.1) {$\scriptstyle \ell$};
\node at (2.85,2.15) {$\scriptstyle \ell'$};
}
\right\rangle
\displaybreak[1]\\&=
\frac{1}{\cD_\cX^2}
\sum_{n,n',h''\in \Irr(\cX)} 
\frac{K_o K_p(h'')}{d_nd_{n'}}
\left\langle
\tikzmath{
\begin{scope}
\draw[step=1.0,black] (.5,.5) grid (3.5,2.5);
\end{scope}
\node at (.3,1) {$\scriptstyle q$};
\node at (1,.3) {$\scriptstyle r$};
\node at (2,.3) {$\scriptstyle s$};
\node at (3,.3) {$\scriptstyle t$};
\node at (3.7,1) {$\scriptstyle u$};
\node at (3.7,2) {$\scriptstyle v$};
\node at (3,2.7) {$\scriptstyle w$};
\node at (2,2.7) {$\scriptstyle x$};
\node at (1,2.7) {$\scriptstyle y$};
\node at (.3,2) {$\scriptstyle z$};
\node at (1.15,1.5) {$\scriptstyle g$};
\node at (2.15,1.5) {$\scriptstyle h$};
\node at (3.15,1.5) {$\scriptstyle i$};
\node at (1.5,.9) {$\scriptstyle j$};
\node at (2.5,.9) {$\scriptstyle k$};
\node at (2.5,1.9) {$\scriptstyle \ell$};
\node at (1.5,1.9) {$\scriptstyle m$};
}
\Bigg|
\tikzmath{
\begin{scope}
\draw[step=1.0,black] (.5,.5) grid (3.5,2.5);
\end{scope}
\draw[thick, blue] (1.3,1) -- (1,1.3);
\draw[thick, blue] (1.7,1) -- (2,1.3);
\draw[thick, blue] (1.3,2) -- (1,1.7);
\draw[thick, blue] (1.7,2) -- (2,1.7);
\fill[fill=green] (1.3,1) circle (.05cm);
\fill[fill=green] (1.7,1) circle (.05cm);
\fill[fill=purple] (2,1.3) circle (.05cm);
\fill[fill=purple] (2,1.7) circle (.05cm);
\fill[fill=yellow] (1.3,2) circle (.05cm);
\fill[fill=yellow] (1.7,2) circle (.05cm);
\fill[fill=orange] (1,1.3) circle (.05cm);
\fill[fill=orange] (1,1.7) circle (.05cm);
\node[blue] at (1.3,1.3) {$\scriptstyle n$};
\draw[thick, cyan] (2.3,1) -- (2,1.2);
\draw[thick, cyan] (2.7,1) -- (3,1.3);
\draw[thick, cyan] (2.3,2) -- (2,1.8);
\draw[thick, cyan] (2.7,2) -- (3,1.7);
\node[cyan] at (2.3,1.3) {$\scriptstyle n'$};
\node at (.3,1) {$\scriptstyle q$};
\node at (1,.3) {$\scriptstyle r$};
\node at (2,.3) {$\scriptstyle s$};
\node at (3,.3) {$\scriptstyle t$};
\node at (3.7,1) {$\scriptstyle u$};
\node at (3.7,2) {$\scriptstyle v$};
\node at (3,2.7) {$\scriptstyle w$};
\node at (2,2.7) {$\scriptstyle x$};
\node at (1,2.7) {$\scriptstyle y$};
\node at (.3,2) {$\scriptstyle z$};
\node at (.85,1.85) {$\scriptstyle g'$};
\node at (.85,1.5) {$\scriptstyle g$};
\node at (.85,1.15) {$\scriptstyle g'$};
\node at (1.85,1.5) {$\scriptstyle h$};
\node at (1.15,.9) {$\scriptstyle j'$};
\node at (1.5,.9) {$\scriptstyle j$};
\node at (1.85,.9) {$\scriptstyle j'$};
\node at (1.15,2.15) {$\scriptstyle m'$};
\node at (1.5,2.1) {$\scriptstyle m$};
\node at (1.85,2.15) {$\scriptstyle m'$};
\node at (3.15,1.85) {$\scriptstyle i'$};
\node at (3.15,1.5) {$\scriptstyle i$};
\node at (3.15,1.15) {$\scriptstyle i'$};
\node at (2.15,.9) {$\scriptstyle k'$};
\node at (2.5,.9) {$\scriptstyle k$};
\node at (2.85,.9) {$\scriptstyle k'$};
\node at (2.15,2.15) {$\scriptstyle \ell'$};
\node at (2.5,2.1) {$\scriptstyle \ell$};
\node at (2.85,2.15) {$\scriptstyle \ell'$};
}
\right\rangle
\displaybreak[1]\\&=
\frac{K}{\cD_\cX^2}
\sum_{n,n',h''\in \Irr(\cX)} 
\frac{K_o'K_p'}{d_nd_{n'}}
\tikzmath{
\clip (-1,-1) rectangle (9,5);
\draw[mid>] (1,1) -- node[below]{$\scriptstyle j'$} (1.5,1);
\draw[mid>] (1.5,1) .. controls ++(0:1.5cm) and ++(-90:2cm) .. node[right]{$\scriptstyle j$} (0,0);
\draw[mid>] (1,1) -- node[left]{$\scriptstyle g'$} (1,1.5);
\draw[mid>] (1,1.5) .. controls ++(90:1.5cm) and ++(180:2cm) .. node[above]{$\scriptstyle g$} (0,0);
\draw[mid>] (1,2.5) -- node[left]{$\scriptstyle g'$} (1,3);
\draw[mid>] (0,4) .. controls ++(180:2cm) and ++(-90:1.5cm) .. node[below]{$\scriptstyle g$} (1,2.5);
\draw[mid>] (1,3) -- node[above]{$\scriptstyle m'$} (1.5,3);
\draw[mid>] (1.5,3) .. controls ++(0:1.5cm) and ++(90:2cm) .. node[right]{$\scriptstyle m$} (0,4);
\draw[mid>] (2.5,1) -- node[below]{$\scriptstyle j'$} (3,1);
\draw[mid<] (2.5,1) .. controls ++(180:1.5cm) and ++(-90:2cm) .. node[below]{$\scriptstyle j$} (4,0);
\draw[mid>] (3,1) -- (3,1.5); 
\draw[mid>] (3,1.5) .. controls ++(90:1.5cm) and ++(0:2cm) .. node[above]{$\scriptstyle h$} (4,0);
\draw[mid>] (3,2.5) -- (3,3); 
\draw[mid>] (4,4) .. controls ++(0:2cm) and ++(-90:1.5cm) .. node[below]{$\scriptstyle h$} (3,2.5);
\draw[mid>] (2.5,3) -- node[above]{$\scriptstyle m'$} (3,3);
\draw[mid>] (4,4) .. controls ++(90:2cm) and ++(180:1.5cm) .. node[left]{$\scriptstyle m$} (2.5,3);
\draw[mid>] (0,0) to[out=90,in=180] node[above]{$\scriptstyle q$} (1,1);
\draw[mid>] (0,0) to[out=0,in=-90] node[below]{$\scriptstyle r$} (1,1);
\draw[mid>] (4,0) to[out=180,in=-90] node[below]{$\scriptstyle s$} (3,1);
\draw[mid<] (4,0) to[out=90,in=0] node[right]{$\scriptstyle k$} (3.5,1) -- (3,1);
\draw[mid<] (4,4) to[out=180,in=90] node[left]{$\scriptstyle x$} (3,3);
\draw[mid<] (4,4) to[out=-90,in=0] node[right]{$\scriptstyle \ell$} (3.5,3) -- (3,3);
\draw[mid<] (0,4) to[out=0,in=90] node[right]{$\scriptstyle y$} (1,3);
\draw[mid>]  (0,4) to[out=-90,in=180] node[below]{$\scriptstyle z$} (1,3);
\node at (.4,2.85) {$\scriptstyle \xi_{1,2}$};
\node at (.7,.7) {$\scriptstyle \xi_{1,1}$};
\node at (3.25,.8) {$\scriptstyle \xi_{2,1}$};
\node at (3.4,3.3) {$\scriptstyle \xi_{2,2}$};
\node at (-.3,-.2) {$\scriptstyle \eta^\dag_{1,1}$};
\node at (4.3,-.2) {$\scriptstyle \overline{\eta_{2,1}}$};
\node at (4.3,4.2) {$\scriptstyle \eta^\dag_{2,2}$};
\node at (-.3,4.2) {$\scriptstyle \overline{\eta_{1,2}}$};
\draw[thick, blue, mid>] (1.5,1) --node[above]{$\scriptstyle n$} (1,1.5);
\draw[thick, blue, mid>] (1,2.5) --node[right]{$\scriptstyle n$} (1.5,3);
\draw[thick, blue, mid>] (3,1.5) --node[left]{$\scriptstyle n$} (2.5,1);
\draw[thick, blue, mid>] (2.5,3) --node[below]{$\scriptstyle n$} (3,2.5);
\draw[thick, cyan, mid>] (3.5,1) --node[above]{$\scriptstyle n'$} (3,1.3);
\draw[thick, cyan, mid>] (3,2.7) --node[right]{$\scriptstyle n'$} (3.5,3);
\fill[fill=green] (1.5,1) circle (.05cm);
\fill[fill=green] (2.5,1) circle (.05cm);
\fill[fill=purple] (3,1.5) circle (.05cm);
\fill[fill=purple] (3,2.5) circle (.05cm);
\fill[fill=yellow] (1.5,3) circle (.05cm);
\fill[fill=yellow] (2.5,3) circle (.05cm);
\fill[fill=orange] (1,1.5) circle (.05cm);
\fill[fill=orange] (1,2.5) circle (.05cm);
\draw[mid>] (5.5,1) -- node[below]{$\scriptstyle k'$} (6,1);
\draw[mid<] (5.5,1) .. controls ++(180:1.5cm) and ++(-90:2cm) .. node[below]{$\scriptstyle k$} (7,0);
\draw[mid>] (6,1) -- node[right]{$\scriptstyle i'$} (6,1.5);
\draw[mid>] (6,1.5) .. controls ++(90:1.5cm) and ++(0:2cm) .. node[above]{$\scriptstyle i$} (7,0);
\draw[mid>] (6,2.5) -- node[right]{$\scriptstyle i'$} (6,3);
\draw[mid>] (7,4) .. controls ++(0:2cm) and ++(-90:1.5cm) .. node[below]{$\scriptstyle i$} (6,2.5);
\draw[mid>] (5.5,3) -- node[above]{$\scriptstyle \ell'$} (6,3);
\draw[mid>] (7,4) .. controls ++(90:2cm) and ++(180:1.5cm) .. node[left]{$\scriptstyle \ell$} (5.5,3);
\draw[mid>] (7,0) to[out=180,in=-90] node[below]{$\scriptstyle t$} (6,1);
\draw[mid<] (7,0) to[out=90,in=0] node[right]{$\scriptstyle u$} (6.5,1) -- (6,1);
\draw[mid<] (7,4) to[out=-90,in=0] node[right]{$\scriptstyle v$} (6.5,3) -- (6,3);
\draw[mid<] (7,4) to[out=180,in=90] node[left]{$\scriptstyle w$} (6,3);
\draw[thick, cyan, mid>] (6,1.5) --node[left]{$\scriptstyle n'$} (5.5,1);
\draw[thick, cyan, mid>] (5.5,3) --node[below]{$\scriptstyle n'$} (6,2.5);
\node at (6.25,.8) {$\scriptstyle \xi_{3,1}$};
\node at (6.4,3.3) {$\scriptstyle \xi_{3,2}$};
\node at (7.3,-.2) {$\scriptstyle \overline{\eta_{3,1}}$};
\node at (7.3,4.2) {$\scriptstyle \eta^\dag_{3,2}$};
}
\displaybreak[1]\\&=
\frac{K}{\cD_\cX^2}
\sum_{n',h''\in \Irr(\cX)} 
\frac{K_p'\sqrt{d_{h''}}}{d_{n'}}
\cdot
\tikzmath{
\clip (-1,-1.5) rectangle (9,4.5);
\draw[mid>] (1,1) -- node[below]{$\scriptstyle j'$} (2,1);
\draw[mid>] (2,1) -- (2,2); 
\draw[mid>] (1,2) -- node[above]{$\scriptstyle m'$} (2,2);
\draw[mid>] (1,1) -- node[left]{$\scriptstyle g'$} (1,2);
\draw[mid<] (0,0) to[out=-90,in=-90] node[above]{$\scriptstyle j$} (3,0);
\draw[mid<] (3,0) to[out=0,in=0] node[left]{$\scriptstyle h$} (3,3);
\draw[mid<] (0,3) to[out=90,in=90] node[below]{$\scriptstyle m$} (3,3);
\draw[mid<] (0,0) to[out=180,in=180] node[right]{$\scriptstyle g$} (0,3);
\draw[mid>] (0,0) to[out=90,in=180] node[above]{$\scriptstyle q$} (1,1);
\draw[mid>] (0,0) to[out=0,in=-90] node[below]{$\scriptstyle r$} (1,1);
\draw[mid>] (3,0) to[out=180,in=-90] node[below]{$\scriptstyle s$} (2,1);
\draw[mid<] (3,0) to[out=90,in=0] node[right]{$\scriptstyle k$} (2.5,1) -- (2,1);
\draw[mid<] (3,3) to[out=180,in=90] node[left]{$\scriptstyle x$} (2,2);
\draw[mid<] (3,3) to[out=-90,in=0] node[right]{$\scriptstyle \ell$} (2.5,2) -- (2,2);
\draw[mid<] (0,3) to[out=0,in=90] node[right]{$\scriptstyle y$} (1,2);
\draw[mid>]  (0,3) to[out=-90,in=180] node[below]{$\scriptstyle z$} (1,2);
\draw[thick, cyan, mid>] (2.5,1) --node[above]{$\scriptstyle n'$} (2,1.3);
\draw[thick, cyan, mid>] (2,1.7) --node[right]{$\scriptstyle n'$} (2.5,2);
\node at (.7,1.8) {$\scriptstyle \xi_{1,2}$};
\node at (.7,.8) {$\scriptstyle \xi_{1,1}$};
\node at (2.3,.8) {$\scriptstyle \xi_{2,1}$};
\node at (2.3,2.2) {$\scriptstyle \xi_{2,2}$};
\node at (-.3,-.2) {$\scriptstyle \eta^\dag_{1,1}$};
\node at (3.3,-.2) {$\scriptstyle \overline{\eta_{2,1}}$};
\node at (3.3,3.2) {$\scriptstyle \eta^\dag_{2,2}$};
\node at (-.3,3.2) {$\scriptstyle \overline{\eta_{1,2}}$};
\node at (1.8,1.5) {$\scriptstyle h''$};
\node at (1.8,1.15) {$\scriptstyle h'$};
\node at (1.8,1.85) {$\scriptstyle h'$};
\begin{scope}[yshift=-.5cm, xshift=-.5cm]
\draw[mid>] (5.5,1) -- node[below]{$\scriptstyle k'$} (6,1);
\draw[mid<] (5.5,1) .. controls ++(180:1.5cm) and ++(-90:2cm) .. node[below]{$\scriptstyle k$} (7,0);
\draw[mid>] (6,1) -- node[right]{$\scriptstyle i'$} (6,1.5);
\draw[mid>] (6,1.5) .. controls ++(90:1.5cm) and ++(0:2cm) .. node[above]{$\scriptstyle i$} (7,0);
\draw[mid>] (6,2.5) -- node[right]{$\scriptstyle i'$} (6,3);
\draw[mid>] (7,4) .. controls ++(0:2cm) and ++(-90:1.5cm) .. node[below]{$\scriptstyle i$} (6,2.5);
\draw[mid>] (5.5,3) -- node[above]{$\scriptstyle \ell'$} (6,3);
\draw[mid>] (7,4) .. controls ++(90:2cm) and ++(180:1.5cm) .. node[left]{$\scriptstyle \ell$} (5.5,3);
\draw[mid>] (7,0) to[out=180,in=-90] node[below]{$\scriptstyle t$} (6,1);
\draw[mid<] (7,0) to[out=90,in=0] node[right]{$\scriptstyle u$} (6.5,1) -- (6,1);
\draw[mid<] (7,4) to[out=-90,in=0] node[right]{$\scriptstyle v$} (6.5,3) -- (6,3);
\draw[mid<] (7,4) to[out=180,in=90] node[left]{$\scriptstyle w$} (6,3);
\draw[thick, cyan, mid>] (6,1.5) --node[left]{$\scriptstyle n'$} (5.5,1);
\draw[thick, cyan, mid>] (5.5,3) --node[below]{$\scriptstyle n'$} (6,2.5);
\node at (6.25,.8) {$\scriptstyle \xi_{3,1}$};
\node at (6.4,3.3) {$\scriptstyle \xi_{3,2}$};
\node at (7.3,-.2) {$\scriptstyle \overline{\eta_{3,1}}$};
\node at (7.3,4.2) {$\scriptstyle \eta^\dag_{3,2}$};
\end{scope}
}
\displaybreak[1]\\&=
\frac{K}{\cD_\cX^2}
\sum_{n',h''\in \Irr(\cX)} 
\frac{K_p'\sqrt{d_{h''}}}{d_{n'}}
\cdot
\tikzmath{
\clip (-1.7,-1.7) rectangle (9,4.7);
\draw[mid>] (1,1) -- node[below]{$\scriptstyle j'$} (2,1);
\draw[mid>] (2,1) -- (2,2); 
\draw[mid>] (1,2) -- node[above]{$\scriptstyle m'$} (2,2);
\draw[mid>] (1,1) -- node[left]{$\scriptstyle g'$} (1,2);
\draw[mid<] (0,0) to[out=-90,in=-90] node[above]{$\scriptstyle j$} (3,0);
\draw[mid<] (3,0) .. controls ++(0:1cm) and ++(0:1cm) .. (3,-1.5) -- (0,-1.5) to[out=180,in=-90] (-1.5,1.5) node[right]{$\scriptstyle h$} to[out=90,in=180] (0,4.5) -- (3,4.5) .. controls ++(0:1cm) and ++(0:1cm) .. (3,3);
\draw[mid<] (0,3) to[out=90,in=90] node[below]{$\scriptstyle m$} (3,3);
\draw[mid<] (0,0) to[out=180,in=180] node[right]{$\scriptstyle g$} (0,3);
\draw[mid>] (0,0) to[out=90,in=180] node[above]{$\scriptstyle q$} (1,1);
\draw[mid>] (0,0) to[out=0,in=-90] node[below]{$\scriptstyle r$} (1,1);
\draw[mid>] (3,0) to[out=180,in=-90] node[below]{$\scriptstyle s$} (2,1);
\draw[mid<] (3,0) to[out=90,in=0] node[right]{$\scriptstyle k$} (2.5,1) -- (2,1);
\draw[mid<] (3,3) to[out=180,in=90] node[left]{$\scriptstyle x$} (2,2);
\draw[mid<] (3,3) to[out=-90,in=0] node[right]{$\scriptstyle \ell$} (2.5,2) -- (2,2);
\draw[mid<] (0,3) to[out=0,in=90] node[right]{$\scriptstyle y$} (1,2);
\draw[mid>]  (0,3) to[out=-90,in=180] node[below]{$\scriptstyle z$} (1,2);
\draw[thick, cyan, mid>] (2.5,1) --node[above]{$\scriptstyle n'$} (2,1.3);
\draw[thick, cyan, mid>] (2,1.7) --node[right]{$\scriptstyle n'$} (2.5,2);
\node at (.7,1.8) {$\scriptstyle \xi_{1,2}$};
\node at (.7,.8) {$\scriptstyle \xi_{1,1}$};
\node at (2.3,.8) {$\scriptstyle \xi_{2,1}$};
\node at (2.3,2.2) {$\scriptstyle \xi_{2,2}$};
\node at (-.3,-.2) {$\scriptstyle \eta^\dag_{1,1}$};
\node at (3.3,-.2) {$\scriptstyle \overline{\eta_{2,1}}$};
\node at (3.3,3.2) {$\scriptstyle \eta^\dag_{2,2}$};
\node at (-.3,3.2) {$\scriptstyle \overline{\eta_{1,2}}$};
\node at (1.8,1.5) {$\scriptstyle h''$};
\node at (1.8,1.15) {$\scriptstyle h'$};
\node at (1.8,1.85) {$\scriptstyle h'$};
\fill[fill=green] (2.5,1) circle (.05cm);
\fill[fill=yellow] (2.5,2) circle (.05cm);
\fill[fill=orange] (2,1.3) circle (.05cm);
\fill[fill=orange] (2,1.7) circle (.05cm);
\begin{scope}[yshift=-.5cm, xshift=-1cm]
\draw[mid>] (5.5,1) -- node[below]{$\scriptstyle k'$} (6,1);
\draw[mid<] (5.5,1) .. controls ++(180:1.5cm) and ++(-90:2cm) .. node[below]{$\scriptstyle k$} (7,0);
\draw[mid>] (6,1) -- node[right]{$\scriptstyle i'$} (6,1.5);
\draw[mid>] (6,1.5) .. controls ++(90:1.5cm) and ++(0:2cm) .. node[above]{$\scriptstyle i$} (7,0);
\draw[mid>] (6,2.5) -- node[right]{$\scriptstyle i'$} (6,3);
\draw[mid>] (7,4) .. controls ++(0:2cm) and ++(-90:1.5cm) .. node[below]{$\scriptstyle i$} (6,2.5);
\draw[mid>] (5.5,3) -- node[above]{$\scriptstyle \ell'$} (6,3);
\draw[mid>] (7,4) .. controls ++(90:2cm) and ++(180:1.5cm) .. node[left]{$\scriptstyle \ell$} (5.5,3);
\draw[mid>] (7,0) to[out=180,in=-90] node[below]{$\scriptstyle t$} (6,1);
\draw[mid<] (7,0) to[out=90,in=0] node[right]{$\scriptstyle u$} (6.5,1) -- (6,1);
\draw[mid<] (7,4) to[out=-90,in=0] node[right]{$\scriptstyle v$} (6.5,3) -- (6,3);
\draw[mid<] (7,4) to[out=180,in=90] node[left]{$\scriptstyle w$} (6,3);
\draw[thick, cyan, mid>] (6,1.5) --node[left]{$\scriptstyle n'$} (5.5,1);
\draw[thick, cyan, mid>] (5.5,3) --node[below]{$\scriptstyle n'$} (6,2.5);
\fill[fill=green] (5.5,1) circle (.05cm);
\fill[fill=purple] (6,1.5) circle (.05cm);
\fill[fill=purple] (6,2.5) circle (.05cm);
\fill[fill=yellow] (5.5,3) circle (.05cm);
\node at (6.25,.8) {$\scriptstyle \xi_{3,1}$};
\node at (6.4,3.3) {$\scriptstyle \xi_{3,2}$};
\node at (7.3,-.2) {$\scriptstyle \overline{\eta_{3,1}}$};
\node at (7.3,4.2) {$\scriptstyle \eta^\dag_{3,2}$};
\end{scope}
}
\displaybreak[1]\\&=
\frac{K}{\cD_\cX^2}
\cdot
\tikzmath{
\clip (-1.7,-1.7) rectangle (9,4.7);
\draw[mid>] (1,1) -- node[above, xshift=-.1cm]{$\scriptstyle j'$} (2,1);
\draw[mid>] (2,1) -- node[left]{$\scriptstyle h'$} (2,2);
\draw[mid>] (1,2) -- node[above]{$\scriptstyle m'$} (2,2);
\draw[mid>] (1,1) -- node[left]{$\scriptstyle g'$} (1,2);
\draw[mid>] (2,2) -- node[above]{$\scriptstyle \ell'$} (5,2);
\draw[mid>] (2,1) -- node[above]{$\scriptstyle k'$} (5,1);
\draw[mid>] (5,1) -- node[left]{$\scriptstyle i'$} (5,2);
\draw[mid<] (0,0) to[out=-90,in=-90] node[above]{$\scriptstyle j$} (3,0);
\draw[mid<] (3,0) .. controls ++(0:1cm) and ++(0:1cm) .. (3,-1.5) -- (0,-1.5) to[out=180,in=-90] (-1.5,1.5) node[right]{$\scriptstyle h$} to[out=90,in=180] (0,4.5) -- (3,4.5) .. controls ++(0:1cm) and ++(0:1cm) .. (3,3);
\draw[mid<] (0,3) to[out=90,in=90] node[below]{$\scriptstyle m$} (3,3);
\draw[mid<] (0,0) to[out=180,in=180] node[right]{$\scriptstyle g$} (0,3);
\draw[mid>] (6,3) to[out=0,in=0] node[right]{$\scriptstyle i$} (6,0);
\draw[mid>] (0,0) to[out=90,in=180] node[above]{$\scriptstyle q$} (1,1);
\draw[mid>] (0,0) to[out=0,in=-90] node[below]{$\scriptstyle r$} (1,1);
\draw[mid>] (3,0) to[out=180,in=-90] node[below]{$\scriptstyle s$} (2,1);
\draw[mid<] (3,0) to[out=90,in=90] (4.5,0) to[out=-90,in=-90] node[below]{$\scriptstyle k$} (6,0);
\draw[mid<] (3,3) to[out=180,in=90] node[left]{$\scriptstyle x$} (2,2);
\draw[mid<] (3,3) to[out=-90,in=-90] (4,3) to[out=90, in=90] node[above]{$\scriptstyle \ell$} (6,3);
\draw[mid>] (6,0) to[out=180,in=-90] node[below]{$\scriptstyle t$} (5,1);
\draw[mid<] (6,0) to[out=90,in=0] node[right]{$\scriptstyle u$} (5,1);
\draw[mid<] (6,3) to[out=-90,in=0] node[right]{$\scriptstyle v$} (5,2);
\draw[mid<] (6,3) to[out=180,in=90] node[left]{$\scriptstyle w$} (5,2);
\draw[mid<] (0,3) to[out=0,in=90] node[right]{$\scriptstyle y$} (1,2);
\draw[mid>]  (0,3) to[out=-90,in=180] node[below]{$\scriptstyle z$} (1,2);
\node at (1.3,1.8) {$\scriptstyle \xi_{1,2}$};
\node at (1.3,.8) {$\scriptstyle \xi_{1,1}$};
\node at (2.3,.8) {$\scriptstyle \xi_{2,1}$};
\node at (2.3,2.2) {$\scriptstyle \xi_{2,2}$};
\node at (5.3,.8) {$\scriptstyle \xi_{3,1}$};
\node at (5.3,2.2) {$\scriptstyle \xi_{3,2}$};
\node at (-.3,-.2) {$\scriptstyle \eta^\dag_{1,1}$};
\node at (-.3,3.2) {$\scriptstyle \overline{\eta_{1,2}}$};
\node at (2.8,.2) {$\scriptstyle \overline{\eta_{2,1}}$};
\node at (2.8,2.7) {$\scriptstyle \eta^\dag_{2,2}$};
\node at (6.3,3.2) {$\scriptstyle \eta^\dag_{3,2}$};
\node at (6.3,-.2) {$\scriptstyle \overline{\eta_{3,1}}$};
}
\end{align*}
This final diagram is exactly the $D_\cX^{-2}$ times the skein module $\cS_\cX(\bbD,10)$ inner product.
\end{proof}

\begin{rem}
Under the unitary isomorphism from Proposition \ref{prop:DescriptionOfIm(PA)}, the operator $B_p$ corresponds to a \textit{Kirby element} of $\cX$ (see \cite[Def.~2.1]{MR2251160}) using the strings on tubes graphical calculus from \cite{MR3578212,MR4528312}.
Here, we require that $I:\cX\to Z(\cX)$ is the right unitary adjoint of $F: Z(\cX)\to \cX$ from \cite[\S2.1]{2301.11114}.
Recall that $\End_{Z(\cX)}(I(1_\cX)) \cong K_0(\cX)$, the fusion algebra of $\cX$, as
$$
\End_{Z(\cX)}(I(1_\cX)) 
\cong
\cX(FI(1_\cX)\to 1_\cX) 
\cong
\bigoplus_{x\in \Irr(\cX)}\cX(\overline{x}x\to 1_\cX).
$$
We write $i: 1_\cX\to I(1_\cX)=IF(1_{Z(\cX)})$ for the unit of the adjunction, and we have $i^\dag \circ i = 1_\bbC\in \End_\cX(1_\cX)$ by \cite[Lem.~3.8]{2301.11114}.
By \cite[Prop.~3.3.6]{MR3242743}, we see that
$$
\tikzmath{
\draw[thick] (0,0) arc (-180:0:.3 and .1);
\draw[thick, dotted] (0,0) arc (180:0:.3 and .1);
\draw[thick] (0,0) -- (0,.1) arc (180:0:.3) -- (.6,0);
\draw[thick] (0,1) arc (180:0:.3 and .1);
\draw[thick] (0,1) arc (-180:0:.3 and .1);
\draw[thick] (0,1) -- (0,.9) arc (-180:0:.3) -- (.6,1);
}
\quad
=
\quad
\frac{1}{D_\cX}
\sum_{x\in \cX}d_x\cdot
\tikzmath{
\draw[thick] (0,0) arc (-180:0:.3 and .1);
\draw[thick, dotted] (0,0) arc (180:0:.3 and .1);
\draw[thick] (0,1) arc (180:0:.3 and .1);
\draw[thick] (0,1) arc (-180:0:.3 and .1);
\draw[thick] (0,0) -- (0,1);
\draw[thick] (.6,0) -- (.6,1);
\draw[thick, blue] (0,.5) arc (-180:0:.3 and .1);
\draw[thick, blue, dotted] (0,.5) arc (180:0:.3 and .1);
\node[blue] at (.2,.3) {$\scriptstyle x$};
}
$$
as both are orthogonal projections $p\in K_0(\cX)$ (using \cite[Prop.~3.15]{2301.11114}) satisfying
$$
\tikzmath{
\draw[thick] (0,0) arc (-180:0:.3 and .1);
\draw[thick, dotted] (0,0) arc (180:0:.3 and .1);
\draw[thick] (0,1) arc (180:0:.3 and .1);
\draw[thick] (0,1) arc (-180:0:.3 and .1);
\draw[thick] (0,0) -- (0,1);
\draw[thick] (.6,0) -- (.6,1);
\draw[thick, blue] (0,.5) arc (-180:0:.3 and .1);
\draw[thick, blue, dotted] (0,.5) arc (180:0:.3 and .1);
\node[blue] at (.2,.3) {$\scriptstyle x$};
}
\cdot p
= 
d_x p
\qquad\qquad\qquad
\forall x\in \Irr(\cX).
$$
\end{rem}

Hence when our total Hilbert space $\cH_{\tot}$ has $N$ external edges, 
the ground state space of our Hamlitonian is unitarily isomorphic to $\cS_\cX(\bbD,N)$. 
$$
\tikzmath{
\clip[rounded corners=5pt] (-.5,-.5) rectangle (2.5,2.5);
\draw[thick] (-.5,-.5) grid (3.5,2.5);
\foreach \x in {0,1,2,3}{
\foreach \y in {0,1,2}{
\filldraw (\x,\y) circle (.05cm);
}}
\foreach \x in {-.5,...,3.5}{
\foreach \y in {-.5,...,2.5}{
}}
}
\qquad
\overset{\cong}{
\tikzmath{\draw[squiggly,->] (0,0) -- (1,0);}
}
\qquad
\cS_\cX(\bbD,N)
=
\left\{
\tikzmath{
\roundNbox{}{(0,0)}{.3}{.2}{.2}{$f$}
\draw (-.3,.3) --node[left]{$\scriptstyle x_1$} (-.3,.7) ;
\draw (.3,.3) --node[right]{$\scriptstyle x_N$} (.3,.7) ;
\node at (90:.5cm) {$\cdots$};
}
\right\}
$$

\subsection{The tube algebra}\label{sec:tubealgebra}

Suppose $\cX$ is a unitary fusion category.
We review Ocneanu's \emph{tube algebra}, which is a powerful tool for calculating the Drinfeld center $Z(\cX)$ \cite{MR1832764,MR1966525}.
Let $\Irr(\cX)$ denote a set of representatives of the simple objects of $\cX$.
We suppress tensor symbols, associators, and unitors whenever possible.

\begin{defn}
The \emph{tube algebra}\IN{$\Tube(\cX)$ the tube algebra of $\cX$} of $\cX$ is the finite dimensional $\rm C^*$-algebra
$$
\Tube(\cX)
:=
\bigoplus_{v,x,y\in \Irr(\cX)} \cX(vx \to yv)
$$
with multiplication and adjoint given respectively by
$$
\tikzmath{
\roundNbox{}{(0,0)}{.3}{.2}{.2}{$\phi$}
\draw (-.3,-.3) --node[left]{$\scriptstyle u$} (-.3,-.7);
\draw (.3,-.3) --node[right]{$\scriptstyle y'$} (.3,-.7);
\draw (-.3,.3) --node[left]{$\scriptstyle z$} (-.3,.7);
\draw (.3,.3) --node[right]{$\scriptstyle u$} (.3,.7);
}
\cdot
\tikzmath{
\roundNbox{}{(0,0)}{.3}{.2}{.2}{$\psi$}
\draw (-.3,-.3) --node[left]{$\scriptstyle v$} (-.3,-.7);
\draw (.3,-.3) --node[right]{$\scriptstyle x$} (.3,-.7);
\draw (-.3,.3) --node[left]{$\scriptstyle y$} (-.3,.7);
\draw (.3,.3) --node[right]{$\scriptstyle v$} (.3,.7);
}
:=
\delta_{y=y'}
\sum_{w\in \Irr(\cC)}
\frac{\sqrt{d_w}}{\sqrt{d_ud_v}}
\tikzmath{
\draw (.5,-.2) --node[right]{$\scriptstyle v$} (.5,.8) to[out=90,in=-45] (.3,1.3);
\draw (-.5,.2) --node[left]{$\scriptstyle u$} (-.5,-.8) to[out=270,in=135] (-.3,-1.3);
\draw (.1,.8) to[out=90,in=-135] node[left]{$\scriptstyle u$} (.3,1.3);
\draw (-.1,-.8) to[out=270,in=45] node[right]{$\scriptstyle v$} (-.3,-1.3);
\draw (.3,1.3) --node[right]{$\scriptstyle w$} (.3,1.7);
\draw (-.3,-1.3) --node[right]{$\scriptstyle w$} (-.3,-1.7);
\draw (.5,-.8) --node[right]{$\scriptstyle x$} (.5,-1.7);
\draw (0,-.2) --node[right]{$\scriptstyle y$} (0,.2);
\draw (-.5,.8) --node[left]{$\scriptstyle z$} (-.5,1.7);
\roundNbox{}{(.2,-.5)}{.3}{.2}{.2}{$\psi$}
\roundNbox{}{(-.2,.5)}{.3}{.2}{.2}{$\phi$}
\fill[fill=red] (.3,1.3) circle (.05cm);
\fill[fill=red] (-.3,-1.3) circle (.05cm);
}
\qquad\qquad
\left(
\tikzmath{
\roundNbox{}{(0,0)}{.3}{.2}{.2}{$\psi$}
\draw (-.3,-.3) --node[left]{$\scriptstyle v$} (-.3,-.7);
\draw (.3,-.3) --node[right]{$\scriptstyle x$} (.3,-.7);
\draw (-.3,.3) --node[left]{$\scriptstyle y$} (-.3,.7);
\draw (.3,.3) --node[right]{$\scriptstyle v$} (.3,.7);
}\right)^*
:=
\tikzmath{
\roundNbox{}{(.2,-.5)}{.3}{.2}{.2}{$\psi^\dag$}
\draw (0,-.2) arc (0:180:.3cm) -- node[left, yshift=-.35cm]{$\bar{\scriptstyle v}$}(-.6,-1.7);
\draw (0,-.8) --node[right, yshift=-.1cm]{$\scriptstyle y$} (0,-1.7);
\draw (0.4,-.2) --node[left, yshift=0.3cm]{$\scriptstyle x$} (0.4,1.1);
\draw (0.4,-0.8) arc (-180:0:.3cm) -- node[right, yshift=0.65cm]{$\bar{\scriptstyle v}$}(1.0,1.1);
}
$$
One uses \eqref{eq:I=H} to show $\Tube(\cX)$ is associative.
\end{defn}


Elements of $\Tube(\cX)$ can also be visualized as annuli (or tubes), where the composition is given by nesting (or stacking) and using the fusion relation for parallel strands. 
$$
\tikzmath{
\roundNbox{}{(0,0)}{.3}{.2}{.2}{$\psi$}
\draw (-.3,-.3) --node[left]{$\scriptstyle v$} (-.3,-.7);
\draw (.3,-.3) --node[right]{$\scriptstyle x$} (.3,-.7);
\draw (-.3,.3) --node[left]{$\scriptstyle y$} (-.3,.7);
\draw (.3,.3) --node[right]{$\scriptstyle v$} (.3,.7);
}
=\,\,
\tikzmath{
\draw[very thick] (0,0) circle (.3cm);
\draw (0,0) circle (1cm);
\node at (1.3,0) {$\scriptstyle \overline{v}$};
\node at (-1.3,0) {$\scriptstyle v$};
\draw (0,.3) --node[right]{$\scriptstyle x$} (0,.7);
\draw (0,1.3) --node[right]{$\scriptstyle y$} (0,1.7);
\draw[very thick] (0,0) circle (1.7cm);
\roundNbox{fill=white}{(90:1cm)}{.3}{0}{0}{$\psi$}
}
\,\,=\,\,
\tikzmath{
\draw (0,-.3) --node[right]{$\scriptstyle x$} (0,.5);
\draw (0,1.1) --node[right]{$\scriptstyle y$} (0,1.7);
\draw (-.9,1.1) arc (-180:0:.9 and .3);
\draw[dotted] (-.9,1.1) arc (180:0:.9 and .3);
\node at (.7,.7) {$\scriptstyle v$};
\node at (-.7,.7) {$\scriptstyle v$};
\draw[very thick] (-.9,0) -- (-.9,2);
\draw[very thick] (.9,0) -- (.9,2);
\draw[very thick] (0,2) ellipse (.9 and .3);
\draw[very thick] (-.9,0) arc(-180:0:.9 and .3);
\draw[very thick, dotted] (-.9,0) arc(180:0:.9 and .3);
\roundNbox{fill=white}{(0,.8)}{.3}{.2}{.2}{$\psi$}
}
$$

There is a well-known equivalence of categories $\Rep(\Tube(\cX)) \simeq Z(\cX)$. 
Our convention for $Z(\cX)$ is that objects $(X,\sigma_X)\in Z(\cX)$\IN{$\sigma_X$ the half braiding for an object $(X,\sigma_X)$ in the Drinfeld center} will be denoted by pairs consisting of an upper-case Roman letter and a lower-case Greek letter, usually $\sigma$ or $\tau$ representing the half-braiding. Our graphical calculus for $\sigma_X$ is that the string for $X$, which is drawn in color, goes \emph{under} to signify that $\sigma_X$ is natural with respect to all morphisms in $\cX$, but only special morphisms in $\cX$ define morphisms in $Z(\cX)$.
\begin{equation}
\label{eq:HalfBraidConvention}
\sigma_{X,y}
=
\tikzmath{
\draw[thick, orange] (.4,-.6) node[below]{$\scriptstyle X$} to[out=90, in=270] (-.4,.6) node[above]{$\scriptstyle X$};
\draw[knot] (-.4,-.6) node[below]{$\scriptstyle y$} to[out=90, in=270] (.4,.6) node[above]{$\scriptstyle y$};
}\,.
\end{equation}

Given $(X,\sigma_X)\in Z(\cX)$, we get a $*$-representation of $\Tube(\cX)$ on the Hilbert space
$\cH_X:= \bigoplus_{y\in \Irr(\cX)} \cX(X\to y)$\IN{$\cH_X$ the $*$-representation associated to the object $(X,\sigma_X)\in Z(\cC)$}
by
$$
\tikzmath{
\roundNbox{}{(0,0)}{.3}{.2}{.2}{$\phi$}
\draw (-.3,-.3) --node[left]{$\scriptstyle v$} (-.3,-.7);
\draw (.3,-.3) --node[right]{$\scriptstyle y$} (.3,-.7);
\draw (-.3,.3) --node[left]{$\scriptstyle z$} (-.3,.7);
\draw (.3,.3) --node[right]{$\scriptstyle v$} (.3,.7);
}
\rhd \,\,
\tikzmath{
\draw[thick,orange] (0,-.7) --node[right]{$\scriptstyle X$} (0,-.3);
\draw (0,.7) --node[right]{$\scriptstyle y$} (0,.3);
\roundNbox{}{(0,0)}{.3}{0}{0}{$m$}
}
:=
\tikzmath{
\draw[thick, orange] (.3,-2) node[below]{$\scriptstyle X$} -- (.3,-1.3);
\draw (.3,-.7) -- node[right]{$\scriptstyle y$} (.3,-.3);
\draw (-.3,.3) -- (-.3,.7) node[above]{$\scriptstyle z$};
\draw[knot] (.3,.3) node[left, yshift=.15cm]{$\scriptstyle v$} arc (180:0:.3cm) -- node[right]{$\scriptstyle \overline{v}$} (.9,-1) arc (0:-180:.6cm) -- node[left]{$\scriptstyle v$} (-.3,-.3);
\roundNbox{fill=white}{(0,0)}{.3}{.3}{.3}{$\phi$}
\roundNbox{fill=white}{(.3,-1)}{.3}{0}{0}{$m$}
}
\qquad\qquad\qquad
\forall\,m\in \cX(X\to y).
$$
Here, $\cX(X\to y)$ has the isometry inner product determined by $\langle f|g\rangle \id_y = f^\dag\circ g$.
That this representation is a $*$-representation is easily verified using the identity
\begin{equation}
\label{eq:TraceInnerProduct}
\langle n | \phi\rhd m \rangle_{\cX(X\to z)}
=
\frac{1}{d_X}
\tr^\cX_{X}\left(
\tikzmath{
\draw[thick, orange] (.3,-2) -- node[right]{$\scriptstyle X$} (.3,-1.6) --(.3,-1.3);
\draw[thick, orange] (-.3,1.7) -- node[right]{$\scriptstyle X$} (-.3,1.3);
\draw (.3,-.7) -- node[right]{$\scriptstyle y$} (.3,-.3);
\draw (-.3,.3) -- node[left]{$\scriptstyle z$} (-.3,.7);
\draw[knot] (.3,.3) node[left, yshift=.15cm]{$\scriptstyle v$} arc (180:0:.3cm) -- node[right]{$\scriptstyle \overline{v}$} (.9,-1) arc (0:-180:.6cm) -- node[left]{$\scriptstyle v$} (-.3,-.3);
\roundNbox{fill=white}{(0,0)}{.3}{.3}{.3}{$\phi$}
\roundNbox{fill=white}{(.3,-1)}{.3}{0}{0}{$m$}
\roundNbox{fill=white}{(-.3,1)}{.3}{0}{0}{$n^\dag$}
}
\right).
\end{equation}
By \cite{MR1832764,MR1966525}, all finite dimensional representations of $\Tube(\cX)$ are of this form.

We now explain how the above formula of the action completely determines the half-braiding $\sigma_X$.
Indeed, given this representation, we can recover `matrix elements' for the half-braiding $\sigma_X$.
First, the vector $\phi \rhd m$ is determined by its inner product against vectors $n\in \cX(X\to z)$
as in \eqref{eq:TraceInnerProduct} above.
Now, we can always factor $\phi$ through simples in $\Irr(\cX)$ to write it as a linear combination of the form
\begin{equation}
\label{eq:FactorTubeThroughSimples}
\tikzmath{
\draw (.3,-.7) -- node[right]{$\scriptstyle y$} (.3,-.3);
\draw (-.3,.3) -- node[left]{$\scriptstyle z$} (-.3,.7);
\draw (.3,.3) --  node[right]{$\scriptstyle v$} (.3,.7);
\draw (-.3,-.3) -- node[left]{$\scriptstyle v$} (-.3,-.7);
\roundNbox{fill=white}{(0,0)}{.3}{.3}{.3}{$\phi$}
}
=
\sum_{x\in \Irr(\cX)}
\sum_{i=1}^{n_x}
\tikzmath{
\draw (.3,-1.7) -- node[right]{$\scriptstyle y$} (.3,-1.3);
\draw (-.3,.3) -- node[left]{$\scriptstyle z$} (-.3,.7);
\draw (.3,.3) --  node[right]{$\scriptstyle v$} (.3,.7);
\draw (-.3,-1.3) -- node[left]{$\scriptstyle v$} (-.3,-1.7);
\draw (0,-.7) --node[right]{$\scriptstyle x$} (0,-.3);
\roundNbox{fill=white}{(0,0)}{.3}{.3}{.3}{$g_{x,i}^\dag$}
\roundNbox{fill=white}{(0,-1)}{.3}{.3}{.3}{$f_{x,i}$}
}
\end{equation}
Using the tracial property of $\tr^\cX$, we see that \eqref{eq:TraceInnerProduct} is equal to
$$
\sum_{x\in \Irr(\cX)}\sum_{i=1}^{n_x}
\frac{1}{d_X}\tr^\cX_x\left(
\tikzmath{
\draw[thick, orange] (-.3,-.45) node[left, yshift=.2cm]{$\scriptstyle X$} .. controls ++(90:.3cm) and ++(-90:.3cm) .. (.3,.45);
\draw[knot] (.2,-1.4) -- (.2,-.45) .. controls ++(90:.3cm) and ++(-90:.3cm) .. (-.2,.45) --node[left]{$\scriptstyle v$} (-.2,1.4);
\draw (.3,.95) --node[right]{$\scriptstyle y$} (.3,1.4);
\draw (-.3,-.95) --node[left]{$\scriptstyle z$} (-.3,-1.4);
\draw (0,2) --node[right]{$\scriptstyle x$} (0,2.4);
\draw (0,-2) --node[right]{$\scriptstyle x$} (0,-2.4);
\roundNbox{fill=white}{(0,1.7)}{.3}{.2}{.2}{$f_{x,i}$}
\roundNbox{fill=white}{(.3,.7)}{.25}{0}{0}{$m$}
\roundNbox{fill=white}{(-.3,-.7)}{.25}{0}{0}{$n^\dag$}
\roundNbox{fill=white}{(0,-1.7)}{.3}{.2}{.2}{$g^\dag_{x,i}$}
}
\right)
$$
Since $x\in \Irr(\cX)$, the above number is completely determined by the `matrix elements' for the half-braiding $\sigma_{X,v}$
$$
\left\langle
(g,n)
\bigg|
\tikzmath{
\draw[thick, orange] (-.4,-.6) node[below]{$\scriptstyle X$} to[out=90, in=270] (.4,.6) node[above]{$\scriptstyle X$};
\draw[knot] (.4,-.6) node[below]{$\scriptstyle v$} to[out=90, in=270] (-.4,.6) node[above]{$\scriptstyle v$};
}
\bigg|
(f,m)
\right\rangle
=
\tikzmath{
\draw[thick, orange] (-.3,-.45) node[left, yshift=.2cm]{$\scriptstyle X$} .. controls ++(90:.3cm) and ++(-90:.3cm) .. (.3,.45);
\draw[knot] (.2,-1.4) -- (.2,-.45) .. controls ++(90:.3cm) and ++(-90:.3cm) .. (-.2,.45) --node[left]{$\scriptstyle v$} (-.2,1.4);
\draw (.3,.95) --node[right]{$\scriptstyle y$} (.3,1.4);
\draw (-.3,-.95) --node[left]{$\scriptstyle z$} (-.3,-1.4);
\draw (0,2) --node[right]{$\scriptstyle x$} (0,2.4);
\draw (0,-2) --node[right]{$\scriptstyle x$} (0,-2.4);
\roundNbox{fill=white}{(0,1.7)}{.3}{.2}{.2}{$f$}
\roundNbox{fill=white}{(.3,.7)}{.25}{0}{0}{$m$}
\roundNbox{fill=white}{(-.3,-.7)}{.25}{0}{0}{$n^\dag$}
\roundNbox{fill=white}{(0,-1.7)}{.3}{.2}{.2}{$g^\dag$}
}
=
\frac{1}{d_x}\tr^\cX_x
\left(
\tikzmath{
\draw[thick, orange] (-.3,-.45) node[left, yshift=.2cm]{$\scriptstyle X$} .. controls ++(90:.3cm) and ++(-90:.3cm) .. (.3,.45);
\draw[knot] (.2,-1.4) -- (.2,-.45) .. controls ++(90:.3cm) and ++(-90:.3cm) .. (-.2,.45) --node[left]{$\scriptstyle v$} (-.2,1.4);
\draw (.3,.95) --node[right]{$\scriptstyle y$} (.3,1.4);
\draw (-.3,-.95) --node[left]{$\scriptstyle z$} (-.3,-1.4);
\draw (0,2) --node[right]{$\scriptstyle x$} (0,2.4);
\draw (0,-2) --node[right]{$\scriptstyle x$} (0,-2.4);
\roundNbox{fill=white}{(0,1.7)}{.3}{.2}{.2}{$f$}
\roundNbox{fill=white}{(.3,.7)}{.25}{0}{0}{$m$}
\roundNbox{fill=white}{(-.3,-.7)}{.25}{0}{0}{$n^\dag$}
\roundNbox{fill=white}{(0,-1.7)}{.3}{.2}{.2}{$g^\dag$}
}
\right)
$$
indexed by ONBs 
\begin{align*}
&\set{(f,m)}{f\in \cX(vy\to x),\,m\in \cX(X\to y) }
&&
\text{and}
\\
&\set{(g,n)}{g\in \cX(zv\to x),\,n\in \cX(X\to z) }
\intertext{for}
&\cX(vX\to x)
\cong 
\bigoplus_{y\in \Irr(\cX)} \cX(vy\to x)\otimes \cX(X\to y)
&&\text{and}
\\
&\cX(Xv\to x)
\cong 
\bigoplus_{z\in \Irr(\cX)} \cX(zv\to x)\otimes \cX(X\to z).
\end{align*}
Conversely, these matrix elements completely determine $\sigma_{X,v}$.
Thus, given a $*$-representation $\cH$ of $\Tube(\cX)$,
we use the minimal central idempotents $p_x\in \Tube(\cX)$ for $x\in \Irr(\cX)$ to get an object in $Z(\cX)$ by
$$
X_\cH:= \bigoplus_{y\in \Irr(\cX)} p_y\cH \rhd y
$$
using the canonical $\Hilb$-module structure of $\cX$.
Observe that we have a canonical identification $\cX(X_{\cH}\to y)=p_y\cH$ by the Yoneda Lemma.
The half-braiding $\sigma_{\cH_X}$ is defined by the matrix-unit formula above.

\subsection{Localized excitations, punctured skein modules, and string operators}
\label{sec:StringOperators}
Violations of the terms in the Hamiltonian \eqref{eq:XHamiltonian} 
restricted to a small region of the lattice represent localized excitations. 
In this section, 
we consider only the excitations that are localized on one edge and its two adjacent plaquettes.
Such excitations violate (at most) 3 terms in the Hamiltonian, and can be viewed as elements in a punctured skein module $\cS_\cX(\bbA,n,2)$\IN{$\cS_\cX(\bbA,n,m)$ the annular skein module with $n$ external/$m$ internal boundary points} (where $\bbA$ denotes the annulus) with $n$ external boundary points and 2 internal boundary points.
$$
\tikzmath{
\draw[thick] (.5,.5) grid (3.5,2.5);
\fill[white] (1.8,1.3) rectangle (2.2,1.7);
\fill[gray!30, rounded corners=5pt] (1.5,1.1) -- (1.9,1.1) -- (1.9,1.4) -- (2.1,1.4) -- (2.1,1.1) -- (2.9,1.1) -- (2.9,1.9) -- (2.1, 1.9) -- (2.1,1.6) -- (1.9,1.6) -- (1.9,1.9) -- (1.1,1.9) -- (1.1,1.1) -- (1.5,1.1);
\foreach \x in {1,2,3}{
\foreach \y in {1,2}{
\filldraw (\x,\y) circle (.05cm);
}}
\foreach \x in {0.5,1.5,2.5}{
\foreach \y in {0.5,1.5}{
}}
\node at (2,1.5) {\tiny excitation};
\node at (2.15,2.15) {$\scriptstyle v$};
\node at (2.15,1.8) {$\scriptstyle \ell$};
}
$$
This space carries an action of the $\rm C^*$-algebra 
\begin{equation}
\label{eq:BigTube}
\fA
:=
\bigoplus_{w,x,y,x\in \Irr(\cX)}
\cX(\overline{r}w\to xs)\otimes \cX(ry\to z\overline{s})
=
\left\{
\tikzmath{
\draw[very thick] (0,0) circle (.3cm);
\draw (0,0) circle (1cm);
\node at (1.3,0) {$\scriptstyle s$};
\node at (-1.3,0) {$\scriptstyle r$};
\draw (0,-1.3) --node[right]{$\scriptstyle w$} (0,-1.7);
\draw (0,-.3) --node[right]{$\scriptstyle x$} (0,-.7);
\draw (0,.3) --node[right]{$\scriptstyle y$} (0,.7);
\draw (0,1.3) --node[right]{$\scriptstyle z$} (0,1.7);
\draw[very thick] (0,0) circle (1.7cm);
\roundNbox{fill=white}{(90:1cm)}{.3}{0}{0}{$\psi$}
\roundNbox{fill=white}{(-90:1cm)}{.3}{0}{0}{$\phi$}
}
\right\}
\end{equation}
whose $*$-algebra structure is given by
\begin{align*}
\tikzmath{
\draw[very thick] (0,0) circle (.3cm);
\draw (0,0) circle (1cm);
\node at (1.3,0) {$\scriptstyle s'$};
\node at (-1.3,0) {$\scriptstyle r'$};
\draw (0,-1.3) --node[right]{$\scriptstyle w'$} (0,-1.7);
\draw (0,-.3) --node[right]{$\scriptstyle x'$} (0,-.7);
\draw (0,.3) --node[right]{$\scriptstyle y'$} (0,.7);
\draw (0,1.3) --node[right]{$\scriptstyle z'$} (0,1.7);
\draw[very thick] (0,0) circle (1.7cm);
\roundNbox{fill=white}{(90:1cm)}{.3}{0}{0}{$\psi'$}
\roundNbox{fill=white}{(-90:1cm)}{.3}{0}{0}{$\phi'$}
}
\cdot\tikzmath{
\draw[very thick] (0,0) circle (.3cm);
\draw (0,0) circle (1cm);
\node at (1.3,0) {$\scriptstyle s$};
\node at (-1.3,0) {$\scriptstyle r$};
\draw (0,-1.3) --node[right]{$\scriptstyle w$} (0,-1.7);
\draw (0,-.3) --node[right]{$\scriptstyle x$} (0,-.7);
\draw (0,.3) --node[right]{$\scriptstyle y$} (0,.7);
\draw (0,1.3) --node[right]{$\scriptstyle z$} (0,1.7);
\draw[very thick] (0,0) circle (1.7cm);
\roundNbox{fill=white}{(90:1cm)}{.3}{0}{0}{$\psi$}
\roundNbox{fill=white}{(-90:1cm)}{.3}{0}{0}{$\phi$}
}
&:=
\sum_{r'',s''\in \Irr(\cX)}
\delta_{x'=w}
\delta_{y'=z}\,
\tikzmath{
\draw[very thick] (0,0) circle (.3cm);
\draw (-90:1cm) to[out=0,in=-135] (-20:1.5cm);
\draw (90:1cm) to[out=0,in=135] (20:1.5cm);
\draw (-20:1.5cm) arc (-20:20:1.5cm);
\draw (-90:2cm) to[out=0,in=-45] (-20:1.5cm);
\draw (90:2cm) to[out=0,in=45] (20:1.5cm);
\draw (-90:1cm) to[out=180,in=-45] (200:1.5cm);
\draw (90:1cm) to[out=180,in=45] (160:1.5cm);
\draw (160:1.5cm) arc (160:200:1.5cm);
\draw (-90:2cm) to[out=180,in=-135] (200:1.5cm);
\draw (90:2cm) to[out=180,in=135] (160:1.5cm);
\node at (50:1cm) {$\scriptstyle s$};
\node at (-50:1cm) {$\scriptstyle s$};
\node at (60:1.7cm) {$\scriptstyle s'$};
\node at (-60:1.7cm) {$\scriptstyle s'$};
\node at (1.3,0) {$\scriptstyle s''$};
\node at (130:1cm) {$\scriptstyle r$};
\node at (-130:1cm) {$\scriptstyle r$};
\node at (120:1.7cm) {$\scriptstyle r'$};
\node at (-120:1.7cm) {$\scriptstyle r'$};
\node at (-1.3,0) {$\scriptstyle r''$};
\draw (0,-2.3) --node[right]{$\scriptstyle w'$} (0,-2.7);
\draw (0,-1.3) --node[right]{$\scriptstyle w$} (0,-1.7);
\draw (0,-.3) --node[right]{$\scriptstyle x$} (0,-.7);
\draw (0,.3) --node[right]{$\scriptstyle y$} (0,.7);
\draw (0,1.3) --node[right]{$\scriptstyle z$} (0,1.7);
\draw (0,2.3) --node[right]{$\scriptstyle z'$} (0,2.7);
\draw[very thick] (0,0) ellipse (2cm and 2.7cm);
\filldraw[red] (20:1.5cm) circle (.05cm);
\filldraw[red] (-20:1.5cm) circle (.05cm);
\filldraw[blue] (160:1.5cm) circle (.05cm);
\filldraw[blue] (200:1.5cm) circle (.05cm);
\roundNbox{fill=white}{(90:1cm)}{.3}{0}{0}{$\psi$}
\roundNbox{fill=white}{(90:2cm)}{.3}{0}{0}{$\psi'$}
\roundNbox{fill=white}{(-90:1cm)}{.3}{0}{0}{$\phi$}
\roundNbox{fill=white}{(-90:2cm)}{.3}{0}{0}{$\phi'$}
}
\\
\tikzmath{
\draw[very thick] (0,0) circle (.3cm);
\draw (0,0) circle (1cm);
\node at (1.3,0) {$\scriptstyle s$};
\node at (-1.3,0) {$\scriptstyle r$};
\draw (0,-1.3) --node[right]{$\scriptstyle w$} (0,-1.7);
\draw (0,-.3) --node[right]{$\scriptstyle x$} (0,-.7);
\draw (0,.3) --node[right]{$\scriptstyle y$} (0,.7);
\draw (0,1.3) --node[right]{$\scriptstyle z$} (0,1.7);
\draw[very thick] (0,0) circle (1.7cm);
\roundNbox{fill=white}{(90:1cm)}{.3}{0}{0}{$\psi$}
\roundNbox{fill=white}{(-90:1cm)}{.3}{0}{0}{$\phi$}
}^*
&:=
\tikzmath{
\draw[very thick] (0,0) circle (.3cm);
\draw (0,0) circle (1cm);
\node at (1.3,0) {$\scriptstyle \overline{s}$};
\node at (-1.3,0) {$\scriptstyle \overline{r}$};
\draw (0,-1.3) --node[right]{$\scriptstyle x$} (0,-1.7);
\draw (0,-.3) --node[right]{$\scriptstyle w$} (0,-.7);
\draw (0,.3) --node[right]{$\scriptstyle z$} (0,.7);
\draw (0,1.3) --node[right]{$\scriptstyle y$} (0,1.7);
\draw[very thick] (0,0) circle (1.7cm);
\roundNbox{fill=white}{(90:1cm)}{.3}{0}{0}{$\psi^\dag$}
\roundNbox{fill=white}{(-90:1cm)}{.3}{0}{0}{$\phi^\dag$}
}\,.
\end{align*}
We will not give the explicit action here, as it is a straightforward generalization of the $\Tube(\cX)$-action on $\cS_\cX(\bbA,n,1)$ that we reduce to below.

Observe that Ocneanu's tube algebra $\Tube(\cX)$ sits inside this $\rm C^*$-algebra as the corner with $w=x=1_\cX$; i.e., $\Tube(\cX) = p\fA p$ for the orthogonal projection
$$
p=
\sum_{y\in \Irr(\cX)}
\tikzmath{
\draw[very thick] (0,0) circle (.3cm);
\draw (0,.3) --node[right, xshift=-.1cm]{$\scriptstyle y$} (0,.7);
\draw[very thick] (0,0) circle (.7cm);
}
\in \fA.
$$
Moreover, this corner is \emph{full}, i.e., $\fA = \fA\Tube(\cX) \fA$.
This immediately implies that $\fA$ and $\Tube(\cX)$ are Morita equivalent via the bimodule $\fA\Tube(\cX)$, and thus they have the same representation theory.
Indeed, this can be shown by applying an isotopy which contracts the $s$-string:
$$
\tikzmath{
\draw[very thick] (0,0) circle (.3cm);
\draw (0,0) circle (1cm);
\node at (1.3,0) {$\scriptstyle s$};
\node at (-1.3,0) {$\scriptstyle r$};
\draw (0,-1.3) --node[right]{$\scriptstyle w$} (0,-1.7);
\draw (0,-.3) --node[right]{$\scriptstyle x$} (0,-.7);
\draw (0,.3) --node[right]{$\scriptstyle y$} (0,.7);
\draw (0,1.3) --node[right]{$\scriptstyle z$} (0,1.7);
\draw[very thick] (0,0) circle (1.7cm);
\roundNbox{fill=white}{(90:1cm)}{.3}{0}{0}{$\psi$}
\roundNbox{fill=white}{(-90:1cm)}{.3}{0}{0}{$\phi$}
}
\begin{tikzcd}
\mbox{}
\arrow[r,squiggly]
&
\mbox{}
\end{tikzcd}
\tikzmath{
\draw (0,0) circle (1cm);
\node at (1.3,0) {$\scriptstyle r$};
\node at (-1.3,0) {$\scriptstyle \overline{r}$};
\draw[double] (0,.3) --node[right]{$\scriptstyle y\overline{x}$} (0,.7);
\draw[double] (0,1.3) --node[right]{$\scriptstyle z\overline{w}$} (0,1.7);
\draw[very thick] (0,0) circle (.3cm);
\draw[very thick] (0,0) circle (1.7cm);
\roundNbox{fill=white}{(90:1cm)}{.3}{0}{0}{$\varphi$}
}\,.
$$
Now applying the fusion relation to both $\id_{z\overline{w}}$ and $\id_{y\overline{x}}$ exhibits $\Tube(\cX)$ as a full corner.

By the above discussion, without loss of generality, we may focus our attention on the subspace of $\cS_\cX(\bbA,n,1)\subset \cS_\cX(\bbA,n,2)$ where the top section of the broken edge is labelled by $\textbf{1}_X$.
That is, we project to the image of the operator $\pi^1_{\ell,v}$ which enforces that the edge label on $\ell$ adjacent to the vertex $v$ is $1_\cX$:
$$
\begin{tikzcd}
\tikzmath{
\draw[thick] (.5,.5) grid (3.5,2.5);
\fill[white] (1.8,1.3) rectangle (2.2,1.7);
\fill[gray!30, rounded corners=5pt] (1.5,1.1) -- (1.9,1.1) -- (1.9,1.4) -- (2.1,1.4) -- (2.1,1.1) -- (2.9,1.1) -- (2.9,1.9) -- (2.1, 1.9) -- (2.1,1.6) -- (1.9,1.6) -- (1.9,1.9) -- (1.1,1.9) -- (1.1,1.1) -- (1.5,1.1);
\foreach \x in {1,2,3}{
\foreach \y in {1,2}{
\filldraw (\x,\y) circle (.05cm);
}}
\foreach \x in {0.5,1.5,2.5}{
\foreach \y in {0.5,1.5}{
}}
\node at (2,1.45) {\tiny excitation};
\node at (2.15,2.15) {$\scriptstyle v$};
\node at (2.15,1.75) {$\scriptstyle \ell$};
}
\arrow[r, squiggly, "\pi^1_{\ell,v}"]
&
\tikzmath{
\draw[thick] (.5,.5) grid (3.5,2.5);
\fill[white] (1.8,1.3) rectangle (2.2,1.95);
\fill[gray!30, rounded corners=5pt] (1.5,1.1) -- (1.9,1.1) -- (1.9,1.4) -- (2.1,1.4) -- (2.1,1.1) -- (2.9,1.1) -- (2.9,1.9) -- (1.1,1.9) -- (1.1,1.1) -- (1.5,1.1);
\foreach \x in {1,2,3}{
\foreach \y in {1,2}{
\filldraw (\x,\y) circle (.05cm);
}}
\foreach \x in {0.5,1.5,2.5}{
\foreach \y in {0.5,1.5}{
}}
}
\end{tikzcd}
$$

We now describe the action of $\Tube(\cX)$ on $\cS_\cX(\bbA, n,1)$ and construct a unitary isomorphism (as a tube algebra representation) to a direct sum of regular representations of $\Tube(\cX)$.
Suppose we have a state $|\Omega\rangle$ which is in the ground state of every $-A_k$ for $k\neq \ell$ and every $-B_p$ except the two plaquettes adjacent to $\ell$.
We further assume that $|\Omega\rangle = \pi^1_{\ell,v}|\Omega\rangle$.
The action of $f\in \cX(sy\to zs)\subset \Tube(\cX)$ on $|\Omega\rangle$ is given similar to a plaquette operator:
\begin{equation}
\label{eq:TubeAction}
f\rhd
\tikzmath{
\draw[thick] (.5,.5) grid (3.5,2.5);
\fill[white] (1.8,1.3) rectangle (2.2,1.95);
\fill[gray!30, rounded corners=5pt] (1.5,1.1) -- (1.9,1.1) -- (1.9,1.4) -- (2.1,1.4) -- (2.1,1.1) -- (2.9,1.1) -- (2.9,1.9) -- (1.1,1.9) -- (1.1,1.1) -- (1.5,1.1);
\foreach \x in {1,2,3}{
\foreach \y in {1,2}{
\filldraw (\x,\y) circle (.05cm);
}}
\foreach \x in {0.5,1.5,2.5}{
\foreach \y in {0.5,1.5}{
}}
\node at (2.15,1.2) {$\scriptstyle x$};
}
:=
\delta_{x=y}
\left(\frac{d_z}{d_x}\right)^{1/4}
\cdot
\tikzmath{
\draw[thick] (.5,.5) grid (3.5,2.5);
\fill[white] (1.8,1.6) rectangle (2.2,1.95);
\filldraw[red, thick, fill=gray!30] (1.5,1.1) -- (1.8,1.1) to[out=0,in=180] (2,1.3) to[out=0,in=180] (2.2,1.1) -- (2.8,1.1) arc (-90:0:.1cm) -- (2.9,1.8) arc (0:90:.1cm) -- (1.2,1.9) arc (90:180:.1cm) -- (1.1,1.2) arc (180:270:.1cm) -- (1.5,1.1);
\draw[thick] (2,1.3) -- (2,1.5);
\filldraw[red] (2,1.3) node[right, yshift=.1cm]{$\scriptstyle f$} circle (.05cm);
\foreach \x in {1,2,3}{
\foreach \y in {1,2}{
\filldraw (\x,\y) circle (.05cm);
}}
\foreach \x in {0.5,1.5,2.5}{
\foreach \y in {0.5,1.5}{
}}
\node[red] at (1.5,1.2) {$\scriptstyle s$};
\node at (2,1.65) {$\scriptstyle z$};
}
=
\delta_{x=y}
\cdot
\tikzmath{
\draw[thick] (.5,.5) grid (3.5,2.5);
\fill[white] (1.8,1.6) rectangle (2.2,1.95);
\fill[fill=gray!30] (1.5,1.1) to[out=0,in=180] (2,1.4) to[out=0,in=180] (2.5,1.1) to[out=0,in=-90] (2.9,1.5) to[out=90,in=0] (2.5,1.9) to[out=180,in=0] (2,1.7) to[out=180,in=0] (1.5,1.9) to[out=180,in=90] (1.1,1.5) to[out=-90,in=180] (1.5,1.1);
\draw[thick, red] (1.3,1) -- (1,1.3);
\draw[thick, red] (1.7,1) to[out=45,in=180] (2,1.3) to[out=0, in=135] (2.3,1);
\draw[thick, red] (2.7,1) -- (3,1.3);
\draw[thick, red] (2.7,2) -- (3,1.7);
\draw[thick, red] (1.7,2) to[out=-45,in=180] (2,1.8) to[out=0, in=-135] (2.3,2);
\draw[thick, red] (1.3,2) -- (1,1.7);
\filldraw[cyan] (1,1.3) circle (.05 cm);
\filldraw[cyan] (1,1.7) circle (.05 cm);
\filldraw[yellow] (1.3,1) circle (.05 cm);
\filldraw[yellow] (1.7,1) circle (.05 cm);
\filldraw[blue] (2.3,1) circle (.05 cm);
\filldraw[blue] (2.7,1) circle (.05 cm);
\filldraw[orange] (2.7,2) circle (.05 cm);
\filldraw[orange] (2.3,2) circle (.05 cm);
\filldraw[green] (1.3,2) circle (.05 cm);
\filldraw[green] (1.7,2) circle (.05 cm);
\filldraw[purple] (3,1.3) circle (.05 cm);
\filldraw[purple] (3,1.7) circle (.05 cm);
\draw[thick] (2,1.3) -- (2,1.5);
\filldraw[red] (2,1.3) node[right, yshift=.1cm]{$\scriptstyle f$} circle (.05cm);
\foreach \x in {1,2,3}{
\foreach \y in {1,2}{
\filldraw (\x,\y) circle (.05cm);
}}
\foreach \x in {0.5,1.5,2.5}{
\foreach \y in {0.5,1.5}{
}}
}
\end{equation}
where suppress the sum over simples and the resulting scalars in the final diagram.
We can thus classify all localized excitations as irreducible representations of $\Tube(\cX)$, which correspond to the anyon types in $Z(\cX)$.
The scalar $(d_z/d_x)^{1/4}$ is needed (see \eqref{eq:GluingOperatorWithScalars} above) to make this representation of $\Tube(\cX)$ a $*$-representation.

To see that all such anyons arise as excitations, we now define our \emph{string operators} as operators from the space of ground states to a twice punctured skein module living inside $\cH_{\tot}$.
Given $(X,\sigma_X)\in \Irr(Z(\cX))$, $x,y\in \Irr(\cX)$, $\psi\in \cX(X\to x)$, $\phi\in \cX(X\to y)$, two distinct edges $k,\ell$ of our lattice (which we assume are separated by several plaquettes), and a path $\gamma:\ell\to k$ on our lattice,
we get the string operator $S^{(X,\sigma_X)}_{\gamma}(\phi,\psi)$ 
on a ground state $|\Omega\rangle$
by first applying $\pi_\ell^1\pi_k^1$ and then `inserting' the $(X,\sigma_X)$ string, where $\phi$ and $\psi$ are used at the endpoints of the string to obtain an element in $\cH_{\tot}$.
Here, for an edge $\ell$, $\pi_\ell^1$ is the projection on $\cH_{\tot}$ which enforces both edge labels on $\ell$ to be $1_\cX$.
We include a cartoon below of one side of the string operator:
$$
\begin{tikzcd}[column sep = 6em]
\tikzmath{
\draw[thick] (.5,-.5) grid (3.5,2.5);
\foreach \x in {1,2,3}{
\foreach \y in {0,1,2}{
\filldraw (\x,\y) circle (.05cm);
}}
\foreach \x in {0.5,1.5,2.5}{
\foreach \y in {0.5,1.5}{
}}
\node at (2.15,1.5) {$\scriptstyle \ell$};
\node at (1.5,1.5) {$\scriptstyle p$};
\node at (2.5,1.5) {$\scriptstyle q$};
}
\arrow[r, squiggly, "\pi_\ell^1"]
&
\tikzmath{
\draw[thick] (.5,-.5) grid (3.5,2.5);
\fill[white] (1.8,1.05) rectangle (2.2,1.95);
\foreach \x in {1,2,3}{
\foreach \y in {0,1,2}{
\filldraw (\x,\y) circle (.05cm);
}}
\foreach \x in {0.5,1.5,2.5}{
\foreach \y in {0.5,1.5}{
}}
\node at (2,1.5) {$\scriptstyle p\vee q$};
}
\arrow[r, squiggly, "{S^{(X,\sigma_X)}_{\gamma}(\bullet,\psi)}"]
&
\tikzmath{
\draw[thick] (.5,-.5) grid (3.5,2.5);
\fill[white] (1.8,1.05) rectangle (2.2,1.95);
\fill[gray!30, rounded corners=5pt] (1.1,1.1) rectangle (2.9,1.9);
\foreach \x in {1,2,3}{
\foreach \y in {0,1,2}{
\filldraw (\x,\y) circle (.05cm);
}}
\foreach \x in {0.5,1.5,2.5}{
\foreach \y in {0.5,1.5}{
}}
\draw[thick, knot, orange, decorate, decoration={snake, segment length=1mm, amplitude=.2mm}] (.85,-.5) -- (.85,.15) -- (2.15,.15) --node[right]{$\scriptstyle X$} (2.15,1.25);
\draw[knot, thick] (2.15,1.3) -- (2.15,1.6);
\filldraw[orange] (2.15,1.3) node[right]{$\scriptstyle \psi$} circle (.05cm);
}
\end{tikzcd}
$$

\begin{rem}
Recall that our convention for half-braidings in \eqref{eq:HalfBraidConvention} is that we always write the object in $Z(\cX)$ as passing \emph{below} objects in $\cX$.
Here, we have instead drawn our excitations as living \emph{above} the $\cX$-lattice.
That is, we are using the convention
$$
\tikzmath{
\draw (-.4,-.6) node[below]{$\scriptstyle y$} to[out=90, in=270] (.4,.6) node[above]{$\scriptstyle y$};
\draw[thick, knot,orange,decorate, decoration={snake, segment length=1mm, amplitude=.2mm}] (.4,-.6) node[below]{$\scriptstyle X$} to[out=90, in=270] (-.4,.6) node[above]{$\scriptstyle X$};
}
:=
\tikzmath{
\draw[thick, orange] (.4,-.6) node[below]{$\scriptstyle X$} to[out=90, in=270] (-.4,.6) node[above]{$\scriptstyle X$};
\draw[knot] (-.4,-.6) node[below]{$\scriptstyle y$} to[out=90, in=270] (.4,.6) node[above]{$\scriptstyle y$};
}
$$
and we are thus describing the excitations as $Z(\cX)^{\rev}$, the \emph{reverse} of $Z(\cX)$, i.e.,
$$
\beta^{Z^\cA(\cX)^{\rev}}_{X,Y}
=
\tikzmath{
\draw[thick, orange, decorate, decoration={snake, segment length=1mm, amplitude=.2mm}] (.4,-.6) node[below]{$\scriptstyle X$} to[out=90, in=270] (-.4,.6) node[above]{$\scriptstyle X$};
\draw[thick, cyan, knot, decorate, decoration={snake, segment length=1mm, amplitude=.2mm}] (-.4,-.6) node[below]{$\scriptstyle Y$} to[out=90, in=270] (.4,.6) node[above]{$\scriptstyle Y$};
}
:=
\left(
\tikzmath{
\draw[thick, cyan] (.4,-.6) node[below]{$\scriptstyle Y$} to[out=90, in=270] (-.4,.6) node[above]{$\scriptstyle Y$};
\draw[thick, orange, knot] (-.4,-.6) node[below]{$\scriptstyle X$} to[out=90, in=270] (.4,.6) node[above]{$\scriptstyle X$};
}
\right)^{-1}
=
(\beta^{Z^\cA(\cX)}_{Y,X})^{-1}
=
\tikzmath{
\draw[thick, cyan] (-.4,-.6) node[below]{$\scriptstyle Y$} to[out=90, in=270] (.4,.6) node[above]{$\scriptstyle Y$};
\draw[thick, orange, knot] (.4,-.6) node[below]{$\scriptstyle X$} to[out=90, in=270] (-.4,.6) node[above]{$\scriptstyle X$};
}\,.
$$
This convention will be useful in the enriched setting in \S\ref{sec:Enriched} below.
There, we use an $\cA$-enriched UFC $\cX$ to construct a (2+1)D boundary for the (3+1)D $\cA$-Walker-Wang model.
We show in Theorem \ref{thm:DomeReps=EnrichedCenter} below that the boundary excitations are given by the enriched center $Z^\cA(\cX):= \cA'\subset Z(\cX)$.
We draw these excitations on the opposite side of the enriching lattice for clarity, viewing $\cX$ as an $\cA- Z^\cA(\cX)^{\rev}$ bimodule tensor category as $Z(\cX)\cong \cA\boxtimes Z^\cA(\cX)$.
\end{rem}

As with plaquette operators, we must resolve this final picture into $\cH_{\tot}$;
to do so, we decompose $F(X,\sigma_X)=X\in \cX$ into simples, where $F: Z(\cX)\to \cX$ is the forgetful functor, and use \eqref{eq:Fusion}.
We include a cartoon below, where we suppress sums over simples and ONBs and scalars to ease the notation.
\begin{align*}
\tikzmath{
\draw[thick] (-.8,0) -- (.8,0);
\draw[thick] (0,-2) -- (0,0);
\draw[thick] (-2,-2) -- (-2,-1.2);
\draw[thick] (-2,-2) -- (-2.8,-2);
\draw[thick] (0,-2) -- (.8,-2);
\draw[thick] (0,-2) -- (0,-2.8);
\draw[thick, knot, orange, decorate, decoration={snake, segment length=1mm, amplitude=.2mm}] (.4,.4) -- (.4,-1.6) -- (-2.4, -1.6) -- (-2.4,-3.8);
\draw[thick] (.4,.4) -- (.4,.8) node[above]{$\scriptstyle x$};
\draw[thick,dotted] (0,0) -- (0,.8);
\filldraw (0,0) circle (.05cm);
\filldraw (-2,-2) circle (.05cm);
\filldraw (0,-2) circle (.05cm);
\draw[thick] (0,-2) -- (-2,-2) -- (-2,-4);
\filldraw[orange] (.4,.4) node[right]{$\scriptstyle \psi$} circle (.05cm);
}
&=
\tikzmath{
\draw[thick] (-.8,0) -- (.8,0);
\draw[thick] (0,-2) -- (0,0);
\draw[thick] (-2,-2) -- (-2,-1.2);
\draw[thick] (-2,-2) -- (-2.8,-2);
\draw[thick] (0,-2) -- (.8,-2);
\draw[thick] (0,-2) -- (0,-2.8);
\draw[thick] (.4,-.3) -- (.4,-1.6);
\draw[thick] (-.4,-1.6) -- (-1.6,-1.6);
\draw[thick] (-2.4,-2.4) -- (-2.4,-3.4);
\draw[thick, knot, orange, decorate, decoration={snake, segment length=1mm, amplitude=.2mm}] (.4,.4) -- (.4,-.3) (.4,-1.6) -- (-.4,-1.6) (-1.6,-1.6)-- (-2.4, -1.6) -- (-2.4, -2.4) (-2.4,-3.4) -- (-2.4,-3.8);
\draw[thick] (.4,.4) to[out=90,in=-90] (0,.8) node[right]{$\scriptstyle x$} -- (0,1);
\draw[thick,dotted] (0,0) -- (0,1);
\filldraw (0,0) circle (.05cm);
\filldraw (-2,-2) circle (.05cm);
\filldraw (0,-2) circle (.05cm);
\draw[thick] (0,-2) -- (-2,-2) -- (-2,-4);
\filldraw[orange] (.4,.4) node[right]{$\scriptstyle \psi$} circle (.05cm);
\filldraw[red] (.4,-.3) node[right]{$\scriptstyle \alpha^\dagger$} circle (.05 cm);
\filldraw[red] (.4,-1.6) node[right]{$\scriptstyle \alpha$} circle (.05 cm);
\filldraw[blue] (-.4,-1.6) node[above]{$\scriptstyle \beta^\dagger$} circle (.05 cm);
\filldraw[blue] (-1.6,-1.6) node[above]{$\scriptstyle \beta$} circle (.05 cm);
\filldraw[cyan] (-2.4,-2.4) node[left]{$\scriptstyle \gamma^\dagger$} circle (.05 cm);
\filldraw[cyan] (-2.4,-3.4) node[left]{$\scriptstyle \gamma$} circle (.05 cm);
\draw (.6,-1) node{$\scriptstyle y$};
\draw (-1,-1.4) node{$\scriptstyle z$};
\draw (-2.8,-2.9) node{$\scriptstyle w$};
}
=
\tikzmath{
\draw[thick] (-.8,0) -- (.8,0);
\draw[thick] (0,-2) -- (0,0);
\draw[thick] (-2,-2) -- (-2,-1.2);
\draw[thick] (-2,-2) -- (-2.8,-2);
\draw[thick] (0,-2) -- (.8,-2);
\draw[thick] (0,-2) -- (0,-2.8);
\draw[thick] (.4,-.3) -- (0,-.7);
\draw[thick] (.4,-1.6) -- (0,-1);
\draw[thick] (-.4,-1.6) -- (-.7,-2);
\draw[thick] (-1.6,-1.6) -- (-1.3,-2);
\draw[thick] (-2.4,-2.4) -- (-2,-2.7);
\draw[thick] (-2.4,-3.4) -- (-2,-3.1);
\draw[thick, knot, orange, decorate, decoration={snake, segment length=1mm, amplitude=.2mm}] (.4,.4) -- (.4,-.3) (.4,-1.6) -- (-.4,-1.6) (-1.6,-1.6)-- (-2.4, -1.6) -- (-2.4, -2.4) (-2.4,-3.4) -- (-2.4,-3.8);
\draw[thick] (.4,.4) to[out=90,in=-90] (0,.8) node[right]{$\scriptstyle x$} -- (0,1);
\draw[thick,dotted] (0,0) -- (0,1);
\filldraw (0,0) circle (.05cm);
\filldraw (-2,-2) circle (.05cm);
\filldraw (0,-2) circle (.05cm);
\draw[thick] (0,-2) -- (-2,-2) -- (-2,-4);
\filldraw[orange] (.4,.4) node[right]{$\scriptstyle \psi$} circle (.05cm);
\filldraw[red] (.4,-.3) node[right]{$\scriptstyle \alpha^\dagger$} circle (.05 cm);
\filldraw[red] (.4,-1.6) node[right]{$\scriptstyle \alpha$} circle (.05 cm);
\filldraw[blue] (-.4,-1.6) node[above]{$\scriptstyle \beta^\dagger$} circle (.05 cm);
\filldraw[blue] (-1.6,-1.6) node[above]{$\scriptstyle \beta$} circle (.05 cm);
\filldraw[cyan] (-2.4,-2.4) node[left]{$\scriptstyle \gamma^\dagger$} circle (.05 cm);
\filldraw[cyan] (-2.4,-3.4) node[left]{$\scriptstyle \gamma$} circle (.05 cm);
\draw (.15,-.3) node{$\scriptstyle y$};
\draw (.35,-1.2) node{$\scriptstyle y$};
\draw (-.2,-.9) node{$\scriptstyle f$};
\draw (-.4,-1.8) node{$\scriptstyle z$};
\draw (-1.8,-1.8) node{$\scriptstyle z$};
\draw (-1,-2.2) node{$\scriptstyle g$};
\draw (-1.8,-2.9) node{$\scriptstyle h$};
\draw (-2.2,-2.3) node{$\scriptstyle w$};
\draw (-2.2,-3.5) node{$\scriptstyle w$};
\filldraw[yellow] (0,-.7) circle (.05 cm);
\filldraw[yellow] (0,-1) circle (.05 cm);
\filldraw[orange] (-1.3,-2) circle (.05 cm);
\filldraw[orange] (-.7,-2) circle (.05 cm);
\filldraw[green] (-2,-2.7) circle (.05 cm);
\filldraw[green] (-2,-3.1) circle (.05 cm);
}
\,.
\end{align*}
This decomposition occurs along the entire string, up to the other endpoint.
The trivalent vertices represent an ONB for $\cS_\cX(\bbD,3)$, while the bivalent vertices represent ONBs for the orthogonal direct sum $\bigoplus_{x\in\Irr(\cX)}\cX(x\to X)$ with respect to the \emph{isometry inner product} determined by the equation
\begin{equation} 
\label{eq:IsometryInnerProdDefn}
\left\langle
\tikzmath{
\draw[thick, orange, decorate, decoration={snake, segment length=1mm, amplitude=.2mm}] (0,.3) --node[right]{$\scriptstyle X$} (0,.7);
\draw (0,-.7) --node[right]{$\scriptstyle x$} (0,-.3);
\roundNbox{}{(0,0)}{.3}{0}{0}{$\alpha$}
}
\middle|
\tikzmath{
\draw[thick, orange, decorate, decoration={snake, segment length=1mm, amplitude=.2mm}] (0,.3) --node[right]{$\scriptstyle X$} (0,.7);
\draw (0,-.7) --node[right]{$\scriptstyle x$} (0,-.3);
\roundNbox{}{(0,0)}{.3}{0}{0}{$\beta$}
}
\right\rangle
\id_x
=
\alpha^\dag \circ \beta.
\end{equation}

We now analyze the tube algebra representation arising from the image of a string operator $S^{(X,\sigma_X)}_{k,\ell}(\phi,\psi)$.
By \cite[Cor.~2.10]{MR4642306}, on the space of ground states $\{|\Omega\rangle\}$ for our Hamiltonian $H$ from \eqref{eq:XHamiltonian}, 
\begin{equation}
\label{eq:BpvqForFree}
\pi^1_\ell=D_\cX\pi^1_\ell B_pB_q\pi^1_\ell=B_{p\vee q}\pi^1_\ell
\end{equation}
where $B_{p\vee q}$ is a plaquette operator on the $p\vee q$ plaquette of the mutated lattice obtained by deleting $\ell$.
$$
\pi^1_\ell|\Omega\rangle
=
B_{p\vee q}\pi^1_\ell|\Omega\rangle
=
\tikzmath{
\draw[thick] (.5,.5) grid (3.5,2.5);
\fill[white] (1.8,1.05) rectangle (2.2,1.95);
\draw[draw=blue, thick, rounded corners=5pt] (1.1,1.1) rectangle (2.9,1.9);
\node[blue] at (1.5,1.25) {$\scriptstyle r$};
\foreach \x in {1,2,3}{
\foreach \y in {1,2}{
\filldraw (\x,\y) circle (.05cm);
}}
\foreach \x in {0.5,1.5,2.5}{
\foreach \y in {0.5,1.5}{
}}
}
=
\frac{1}{D_\cX}
\sum_{r\in \Irr(\cX)} d_r\cdot K \cdot
\tikzmath{
\draw[thick] (.5,.5) grid (3.5,2.5);
\fill[white] (1.8,1.05) rectangle (2.2,1.95);
\fill[fill=gray!30] (1.5,1.1) to[out=0,in=180] (2,1.3) to[out=0,in=180] (2.5,1.1) to[out=0,in=-90] (2.9,1.5) to[out=90,in=0] (2.5,1.9) to[out=180,in=0] (2,1.7) to[out=180,in=0] (1.5,1.9) to[out=180,in=90] (1.1,1.5) to[out=-90,in=180] (1.5,1.1);
\draw[thick, blue] (1.3,1) -- (1,1.3);
\draw[thick, blue] (1.7,1) to[out=45,in=180] (2,1.2) to[out=0, in=135] (2.3,1);
\draw[thick, blue] (2.7,1) -- (3,1.3);
\draw[thick, blue] (2.7,2) -- (3,1.7);
\draw[thick, blue] (1.7,2) to[out=-45,in=180] (2,1.8) to[out=0, in=-135] (2.3,2);
\draw[thick, blue] (1.3,2) -- (1,1.7);
\filldraw[cyan] (1,1.3) circle (.05 cm);
\filldraw[cyan] (1,1.7) circle (.05 cm);
\filldraw[yellow] (1.3,1) circle (.05 cm);
\filldraw[yellow] (1.7,1) circle (.05 cm);
\filldraw[blue] (2.3,1) circle (.05 cm);
\filldraw[blue] (2.7,1) circle (.05 cm);
\filldraw[orange] (2.7,2) circle (.05 cm);
\filldraw[orange] (2.3,2) circle (.05 cm);
\filldraw[green] (1.3,2) circle (.05 cm);
\filldraw[green] (1.7,2) circle (.05 cm);
\filldraw[purple] (3,1.3) circle (.05 cm);
\filldraw[purple] (3,1.7) circle (.05 cm);
\foreach \x in {1,2,3}{
\foreach \y in {1,2}{
\filldraw (\x,\y) circle (.05cm);
}}
\foreach \x in {0.5,1.5,2.5}{
\foreach \y in {0.5,1.5}{
}}
\node[blue] at (1.5,1.2) {$\scriptstyle r$};
}
$$
again suppressing sums over simples and the resulting scalars $K$ in the final diagram.
This means $\pi^1_\ell|\Omega\rangle$ is a ground state on the mutated lattice with a modified local Hamiltonian.
This means that on the space of ground states,
$$
S^{(X,\sigma_X)}_{\gamma}(\phi,\psi)
=
S^{(X,\sigma_X)}_{\gamma}(\phi,\psi)\pi^1_\ell
=
S^{(X,\sigma_X)}_{\gamma}(\phi,\psi)B_{p\vee q}\pi^1_\ell
,$$
so one can encircle the end of the string operator by a $B_{p\vee q}$:
\begin{equation}
\label{eq:BpvqProtection}
\tikzmath{
\draw[thick] (.5,.5) grid (3.5,2.5);
\fill[white] (1.8,1.05) rectangle (2.2,1.95);
\fill[gray!30, rounded corners=5pt] (1.1,1.1) rectangle (2.9,1.9);
\foreach \x in {1,2,3}{
\foreach \y in {1,2}{
\filldraw (\x,\y) circle (.05cm);
}}
\foreach \x in {0.5,1.5,2.5}{
\foreach \y in {0.5,1.5}{
}}
\draw[thick, knot, orange, decorate, decoration={snake, segment length=1mm, amplitude=.2mm}](2.15,.5) --node[right]{$\scriptstyle X$} (2.15,1) -- (2.15,1.25);
\draw[knot, thick] (2.15,1.3) -- (2.15,1.6);
\filldraw[orange] (2.15,1.3) node[right]{$\scriptstyle \psi$} circle (.05cm);
}
=
\frac{1}{D_\cX}
\sum_{r\in \Irr(\cX)} d_r
\tikzmath{
\draw[thick] (.5,.5) grid (3.5,2.5);
\fill[white] (1.8,1.05) rectangle (2.2,1.95);
\filldraw[draw=blue, thick, fill=gray!30, rounded corners=5pt] (1.1,1.1) rectangle (2.9,1.9);
\node[blue] at (1.5,1.25) {$\scriptstyle r$};
\foreach \x in {1,2,3}{
\foreach \y in {1,2}{
\filldraw (\x,\y) circle (.05cm);
}}
\foreach \x in {0.5,1.5,2.5}{
\foreach \y in {0.5,1.5}{
}}
\draw[thick, knot, orange, decorate, decoration={snake, segment length=1mm, amplitude=.2mm}](2.15,.5) --node[right]{$\scriptstyle X$} (2.15,1) -- (2.15,1.25);
\draw[knot, thick] (2.15,1.3) -- (2.15,1.6);
\filldraw[orange] (2.15,1.3) node[right]{$\scriptstyle \psi$} circle (.05cm);
}
\end{equation}
Now we can use the \eqref{eq:Fusion} relation on the tube algebra action from \eqref{eq:TubeAction} above to see that the action of $f\in \cX(sx'\to ys)\subset \Tube(\cX)$ is given by
\begin{equation}
\label{eq:IzumiTubeAction}
\frac{\delta_{x=x'}}{D_\cX}
\left(\frac{d_y}{d_x}\right)^{1/4}
\sum_{r\in \Irr(\cX)} d_r
\tikzmath[xscale=1.25, yscale=1.5]{
\draw[thick] (.8,.8) grid (3.2,2.2);
\fill[white] (1.8,1.03) rectangle (2.2,1.97);
\draw[draw=blue, thick, rounded corners=5pt] (1.1,1.1) rectangle (2.9,1.9);
\node[blue] at (1.8,1.2) {$\scriptstyle r$};
\foreach \x in {1,2,3}{
\foreach \y in {1,2}{
\filldraw (\x,\y) circle (.03cm);
}}
\foreach \x in {0.5,1.5,2.5}{
\foreach \y in {0.5,1.5}{
}}
\draw[thick, knot, orange, decorate, decoration={snake, segment length=1mm, amplitude=.2mm}](2.15,.8) --node[right]{$\scriptstyle X$} (2.15,1) -- (2.15,1.2);
\draw[draw=red, thick, fill=gray!30, rounded corners=5pt] (1.2,1.35) rectangle (2.8,1.8);
\node[red] at (1.6,1.25) {$\scriptstyle s$};
\draw[thick] (2.15,1.2) -- (2.15,1.5);
\filldraw[red] (2.15,1.35) node[right, yshift=.15cm]{$\scriptstyle f$} circle (.03cm);
\filldraw[orange] (2.15,1.2) node[right]{$\scriptstyle \psi$} circle (.03cm);
}
=
\frac{\delta_{x=x'}}{D_\cX}
\left(\frac{d_y}{d_x}\right)^{1/4}
\sum_{t\in \Irr(\cX)} d_t
\tikzmath[scale=1.5]{
\draw[thick] (.8,.8) grid (3.2,2.2);
\fill[white] (1.8,1.03) rectangle (2.2,1.97);
\filldraw[draw=violet, thick, fill=gray!30, rounded corners=5pt] (1.1,1.1) rectangle (2.9,1.9);
\node[violet] at (1.8,1.2) {$\scriptstyle t$};
\foreach \x in {1,2,3}{
\foreach \y in {1,2}{
\filldraw (\x,\y) circle (.03cm);
}}
\foreach \x in {0.5,1.5,2.5}{
\foreach \y in {0.5,1.5}{
}}
\filldraw[draw=red, thick, fill=white, rounded corners=5pt] (2.15,1.35) circle (.2cm);
\draw[thick, knot, orange, decorate, decoration={snake, segment length=1mm, amplitude=.2mm}](2.15,.8) --node[right]{$\scriptstyle X$} (2.15,1) -- (2.15,1.35);
\node[red] at (2,1.6) {$\scriptstyle s$};
\draw[thick] (2.15,1.35) -- (2.15,1.7);
\filldraw[red] (2.15,1.55) node[right, yshift=.15cm]{$\scriptstyle f$} circle (.03cm);
\filldraw[orange] (2.15,1.35) node[right, xshift=-.1cm]{$\scriptstyle \psi$} circle (.03cm);
}\,.
\end{equation}
This is an amplification of Izumi's action of $\Tube(\cX)$ corresponding to $(X,\sigma_X)$; the scalar in front appears only because we use a different inner product for the punctured skein module than Izumi uses.
We therefore see that all simple objects in $Z(\cX)$ arise as localized excitations, and every excitation corresponds to some simple object in $Z(\cX)$.

\begin{rem}
\label{rem:TubeAlgebraMoritaCorrect}

Local operators cannot change the anyon type of a localized excitation.
Consider a twice-punctured sphere where the punctures are sufficiently separated.
We claim the space of ground-states on this punctured sphere is exactly $\Tube(\cX)$, which carries commuting left and right actions of $\Tube(\cX)$, one action at each puncture.
Indeed, by a variant of Proposition \ref{prop:DescriptionOfIm(PA)}, the ground state space of our twice punctured sphere with boundary strands that can host excitations is given by 
$$
\bigoplus_{x,y\in \cX} \cX(x\to FI(y))
\cong
\bigoplus_{v,x,y\in \cX} \cX(x\to \overline{v}yv)
\cong
\bigoplus_{v,x,y\in \cX} \cX(vx\to yv)
=
\Tube(\cX).
$$
Moreover, our two commuting $\Tube(\cX)$-actions correspond to the left and right regular representation.

Now as a $\Tube(\cX)-\Tube(\cX)$ bimodule, the regular representation $\Tube(\cX)$ has one irreducible summand for each type of irreducible representation of $\Tube(\cX)$, {i.e.}~each isomorphism class in $\Irr(Z(\cX))$.
We thus see that the anyon types of the two excitations at the punctures must agree, up to an opposite which occurs due to a left/right action of $\Tube(\cX)$, which translates into an anyon type and its dual at the end of the strings.

Now any local operator is localized away from at least one of the two excitations.
While the local operator can destroy the $\Tube(\cX)$ representation on one side, it is localized away from the other excitation, and therefore commutes with the action of $\Tube(\cX)$ at the site of the other excitation.
Hence the local operator still preserves the anyon type at the other puncture.
So if our local operator takes one excitation to another at the same (or nearby) site, then the above argument says it cannot change the anyon type.

We conclude that on a twice-punctured sphere, an irreducible $\Tube(\cX)$ representation type is a complete invariant of a superselection sector.
It is invariant under applying local operators, and any finer invariant would necessarily split irreducible summands of the regular $\Tube(\cX)$-$\Tube(\cX)$ bimodule, which disrespects the fact that $\Tube(\cX)$ acts by local operators.
\end{rem}

\subsection{Hopping operators}\label{sec:Hoppingoperators}

Hopping operators \cite{PhysRevB.97.195154} translate localized excitations in our system.
The following is an equivalent but more streamlined definition of hopping operator that appears in \cite[2.2.2]{MR4642306}:
$$
H^{(X,\sigma_X)}_{\delta}:= \sum_{y\in \Irr(\cX)}\sum_{\psi_y\in \operatorname{ONB}(X\to y)}B_p B_q T^{(X,\sigma_X)}_{\delta}(\psi_y,\psi_y)\pi^1_k,
$$
where $T^{(X,\sigma_X)}_{\delta}(\psi,\psi)$ is a modified string operator which does not apply $\pi^1_\ell$ at the start, but rather just $\pi^1_{\ell,v}$.
Here is a cartoon of the $H^{(X,\sigma_X)}_{\delta}$ action on an excited state.
$$
H^{(X,\sigma_X)}_{\delta}
\rhd
\tikzmath{
\draw[thick] (.5,.5) grid (4.5,4.5);
\fill[white] (1.8,1.3) rectangle (2.2,1.95);
\fill[gray!30, rounded corners=5pt] (1.5,1.1) -- (1.9,1.1) -- (1.9,1.4) -- (2.1,1.4) -- (2.1,1.1) -- (2.9,1.1) -- (2.9,1.9) -- (1.1,1.9) -- (1.1,1.1) -- (1.5,1.1);
\foreach \x in {1,2,3,4}{
\foreach \y in {1,2,3,4}{
\filldraw (\x,\y) circle (.05cm);
}}
\foreach \x in {0.5,1.5,2.5}{
\foreach \y in {0.5,1.5}{
}}
\node at (2.15,1.2) {$\scriptstyle x$};
\node at (3.15,3.5) {$\scriptstyle k$};
}
:=
\frac{1}{D_\cX^2}
\sum_{\psi \in \operatorname{ONB}(X\to x)}
\sum_{s,t\in \Irr(\cX)}
d_sd_t
\tikzmath{
\draw[thick] (.5,.5) grid (4.5,4.5);
\fill[white] (1.8,1.3) rectangle (2.2,1.95);
\fill[white] (2.8,3.05) rectangle (3.2,3.95);
\fill[gray!30, rounded corners=5pt] (2.1,3.1) rectangle (3.9,3.9);
\filldraw[blue, thick, fill=gray!30, rounded corners=5pt] (1.1,1.1) rectangle (1.7,1.9);
\filldraw[blue, thick, fill=gray!30, rounded corners=5pt] (2.3,1.1) rectangle (2.9,1.9);
\node[blue] at (1.25,1.5) {$\scriptstyle s$};
\node[blue] at (2.45,1.5) {$\scriptstyle t$};
\foreach \x in {1,2,3,4}{
\foreach \y in {1,2,3,4}{
\filldraw (\x,\y) circle (.05cm);
}}
\foreach \x in {0.5,1.5,2.5}{
\foreach \y in {0.5,1.5}{
}}
\draw[thick] (2,1.3) to[out=90,in=-90] (2.15,1.6);
\draw[thick, knot] (3.15,3.3) -- (3.15,3.6);
\draw[thick, knot, orange, decorate, decoration={snake, segment length=1mm, amplitude=.2mm}] (2.15,1.6) -- (2.15,2.15) --node[above]{$\scriptstyle X$} (2.85,2.15) -- (3.15,2.15) -- (3.15,3.3);
\filldraw[orange] (2.15,1.6) node[left, xshift=.1cm]{$\scriptstyle \psi^\dag$} circle (.05cm);
\filldraw[orange] (3.15,3.3) node[right]{$\scriptstyle \psi$} circle (.05cm);
\node at (2.15,1.2) {$\scriptstyle x$};
}
$$

We now explain how hopping operators and string operators are compatible.
Given anyons $(X,\sigma_X),(Y,\sigma_Y)\in Z(\cX)$, we can first apply the string operator $S^{(X,\sigma_X)}_{
\gamma}(\bullet,\phi)$ 
where $\phi: H\to x$ and we suppress the data at the other end of the string,
after which we can apply the hopping operator
$H^{(Y,\sigma_Y)}_{\delta}$.
Here is a cartoon of this setup ignoring the start of $\gamma$:
$$
\frac{1}{D_\cX^2}
\sum_{\psi \in \operatorname{ONB}(Y\to x)}
\sum_{s,t\in \Irr(\cX)}
d_sd_t
\tikzmath{
\draw[thick] (.5,.5) grid (4.5,4.5);
\fill[white] (1.8,1.05) rectangle (2.2,1.95);
\fill[white] (2.8,3.05) rectangle (3.2,3.95);
\fill[gray!30, rounded corners=5pt] (2.1,3.1) rectangle (3.9,3.9);
\filldraw[blue, thick, fill=gray!30, rounded corners=5pt] (1.1,1.1) rectangle (1.7,1.9);
\filldraw[blue, thick, fill=gray!30, rounded corners=5pt] (2.4,1.1) rectangle (2.9,1.9);
\node[blue] at (1.25,1.5) {$\scriptstyle s$};
\node[blue] at (2.55,1.5) {$\scriptstyle t$};
\foreach \x in {1,2,3,4}{
\foreach \y in {1,2,3,4}{
\filldraw (\x,\y) circle (.05cm);
}}
\foreach \x in {0.5,1.5,2.5}{
\foreach \y in {0.5,1.5}{
}}
\draw[thick] (2.15,1.3) --node[right, xshift=-.1cm]{$\scriptstyle x$} (2.15,1.7);
\draw[thick, knot] (3.15,3.3) -- (3.15,3.6);
\draw[thick, knot, orange, decorate, decoration={snake, segment length=1mm, amplitude=.2mm}] (2.15,.5) --node[right]{$\scriptstyle X$} (2.15,1) -- (2.15,1.3);
\draw[thick, knot, cyan, decorate, decoration={snake, segment length=1mm, amplitude=.2mm}] (2.15,1.7) -- (2.15,2.15) --node[above]{$\scriptstyle Y$} (2.85,2.15) -- (3.15,2.15) -- (3.15,3.3);
\filldraw[cyan] (2.15,1.7) node[left, xshift=.1cm]{$\scriptstyle \psi^\dag$} circle (.05cm);
\filldraw[orange] (2.15,1.3) node[left]{$\scriptstyle \phi$} circle (.05cm);
\filldraw[cyan] (3.15,3.3) node[right]{$\scriptstyle \psi$} circle (.05cm);
}
$$
If $Y$ does not contain $x$ as a simple summand, we clearly have $H^{(Y,\sigma_Y)}_{\delta}S^{(X,\sigma_X)}_{\gamma}(\bullet,\phi)=0$.
Now by \eqref{eq:BpvqProtection} above, we really have a $B_{p\vee q}$ when applying $S^{(X,\sigma_X)}_{\gamma}(\bullet,\phi)$, and looking closely at the edge $s(\delta)=t(\gamma)$ and ignoring the sum over $\psi$ for now, we get the following by \eqref{eq:Fusion}:
$$
\frac{1}{D_\cX^3}
\sum_{r,s,t\in \Irr(\cX)}
d_rd_sd_t
\tikzmath[scale=1.5]{
\draw[thick] (.8,.8) grid (3.2,2.2);
\fill[white] (1.8,1.03) rectangle (2.2,1.97);
\filldraw[blue, thick, fill=gray!30, rounded corners=5pt] (1.3,1.2) rectangle (1.8,1.8);
\draw[blue, thick, rounded corners=5pt] (1.1,1.1) rectangle (2.9,1.9);
\filldraw[blue, thick, fill=gray!30, rounded corners=5pt] (2.3,1.2) rectangle (2.8,1.8);
\node[blue] at (1.2,1.5) {$\scriptstyle r$};
\node[blue] at (1.4,1.5) {$\scriptstyle s$};
\node[blue] at (2.4,1.5) {$\scriptstyle t$};
\foreach \x in {1,2,3}{
\foreach \y in {1,2,}{
\filldraw (\x,\y) circle (.03cm);
}}
\foreach \x in {0.5,1.5,2.5}{
\foreach \y in {0.5,1.5}{
}}
\draw[thick] (2.15,1.3) --node[left]{$\scriptstyle x$} (2.15,1.7);
\draw[thick, knot, orange, decorate, decoration={snake, segment length=1mm, amplitude=.2mm}] (2.15,.8) --node[right]{$\scriptstyle X$} (2.15,1) -- (2.15,1.3);
\draw[thick, knot, cyan, decorate, decoration={snake, segment length=1mm, amplitude=.2mm}] (2.15,1.7) -- (2.15,2) --node[right]{$\scriptstyle Y$} (2.15,2.2);
\filldraw[cyan] (2.15,1.7) node[left, xshift=.1cm]{$\scriptstyle \psi^\dag$} circle (.05cm);
\filldraw[orange] (2.15,1.3) node[left]{$\scriptstyle \phi$} circle (.05cm);
}
=
\frac{1}{D_\cX^3}
\sum_{r,s,t\in \Irr(\cX)}
d_rd_sd_t
\tikzmath[scale=1.5]{
\draw[thick] (.8,.8) grid (3.2,2.2);
\fill[white] (1.8,1.03) rectangle (2.2,1.97);
\filldraw[blue, thick, fill=gray!30, rounded corners=5pt] (1.1,1.1) rectangle (1.6,1.9);
\draw[blue, thick, rounded corners=5pt] (1.7,1.1) rectangle (2.3,1.9);
\filldraw[blue, thick, fill=gray!30, rounded corners=5pt] (2.4,1.1) rectangle (2.9,1.9);
\node[blue] at (1.8,1.5) {$\scriptstyle r$};
\node[blue] at (1.2,1.5) {$\scriptstyle s$};
\node[blue] at (2.5,1.5) {$\scriptstyle t$};
\foreach \x in {1,2,3}{
\foreach \y in {1,2,}{
\filldraw (\x,\y) circle (.03cm);
}}
\foreach \x in {0.5,1.5,2.5}{
\foreach \y in {0.5,1.5}{
}}
\draw[thick] (2.15,1.3) --node[left]{$\scriptstyle x$} (2.15,1.7);
\draw[thick, knot, orange, decorate, decoration={snake, segment length=1mm, amplitude=.2mm}] (2.15,.8) --node[right]{$\scriptstyle X$} (2.15,1) -- (2.15,1.3);
\draw[thick, knot, cyan, decorate, decoration={snake, segment length=1mm, amplitude=.2mm}] (2.15,1.7) -- (2.15,2) --node[right]{$\scriptstyle Y$} (2.15,2.2);
\filldraw[cyan] (2.15,1.7) node[left, xshift=.1cm]{$\scriptstyle \psi^\dag$} circle (.05cm);
\filldraw[orange] (2.15,1.3) node[left]{$\scriptstyle \phi$} circle (.05cm);
}
\,.
$$
For each choice of $\psi$, we see that the encircled
$\psi^\dag \circ \phi$ in $\cX(X\to Y)$ actually lives in $Z(\cX)$:
\begin{equation}
\label{eq:MendStringOperators}
\frac{1}{D_\cX}
\sum_{r\in \Irr(\cX)}
d_r
\tikzmath[scale=1.5]{
\draw[thick, red] (1.5,.5) to[out=90, in=-90] (2.5,1.3) --node[right]{$\scriptstyle y$} (2.5,2.2);
\draw[blue, thick, rounded corners=5pt] (1.7,1.1) rectangle (2.3,1.9);
\node[blue] at (1.8,1.5) {$\scriptstyle r$};
\draw[thick] (2.15,1.3) --node[left]{$\scriptstyle x$} (2.15,1.7);
\draw[thick, knot, orange, decorate, decoration={snake, segment length=1mm, amplitude=.2mm}] (2.15,.5) --node[right]{$\scriptstyle X$} (2.15,1) -- (2.15,1.3);
\draw[thick, knot, cyan, decorate, decoration={snake, segment length=1mm, amplitude=.2mm}] (2.15,1.7) -- (2.15,2) --node[right]{$\scriptstyle Y$} (2.15,2.2);
\filldraw[cyan] (2.15,1.7) node[left, xshift=.1cm]{$\scriptstyle \psi^\dag$} circle (.05cm);
\filldraw[orange] (2.15,1.3) node[left]{$\scriptstyle \phi$} circle (.05cm);
}
=
\frac{1}{D_\cX}
\sum_{r\in \Irr(\cX)}
d_r
\tikzmath[scale=1.5]{
\draw[thick, red] (1.5,.8) --node[left]{$\scriptstyle y$} (1.5,1.7) to[out=90,in=-90] (2.5,2.5);
\draw[blue, thick, rounded corners=5pt] (1.7,1.1) rectangle (2.3,1.9);
\node[blue] at (1.8,1.5) {$\scriptstyle r$};
\draw[thick] (2.15,1.3) --node[left]{$\scriptstyle x$} (2.15,1.7);
\draw[thick, knot, orange, decorate, decoration={snake, segment length=1mm, amplitude=.2mm}] (2.15,.8) --node[right]{$\scriptstyle X$} (2.15,1) -- (2.15,1.3);
\draw[thick, knot, cyan, decorate, decoration={snake, segment length=1mm, amplitude=.2mm}] (2.15,1.7) -- (2.15,2) --node[left,yshift=.1cm]{$\scriptstyle Y$} (2.15,2.5);
\filldraw[cyan] (2.15,1.7) node[left, xshift=.1cm]{$\scriptstyle \psi^\dag$} circle (.05cm);
\filldraw[orange] (2.15,1.3) node[left]{$\scriptstyle \phi$} circle (.05cm);
}
\qquad\qquad
\forall y\in \cX.
\end{equation}
This immediately implies that
$$
H^{(Y,\sigma_Y)}_{\delta}S^{(X,\sigma_X)}_{\gamma}(\bullet,\phi)
=
\begin{cases}
S^{(X,\sigma_X)}_{\delta\cdot \gamma}(\bullet,\phi)
&\text{if } X=Y
\\
0 
&\text{else,}
\end{cases}
$$ 
where $\delta\cdot \gamma$ denotes the concatenation of $\delta$ and $\gamma$.
Indeed, summing over $\psi \in \operatorname{ONB}(Y\to x)$ yields
$$
\frac{1}{D_\cX^3}
\sum_{r,s,t\in \Irr(\cX)}
d_rd_sd_t
\tikzmath{
\draw[thick] (.5,.5) grid (4.5,4.5);
\fill[white] (1.8,1.05) rectangle (2.2,1.95);
\fill[white] (2.8,3.05) rectangle (3.2,3.95);
\fill[gray!30, rounded corners=5pt] (2.1,3.1) rectangle (3.9,3.9);
\draw[blue, thick, rounded corners=5pt] (1.8,1.1) rectangle (2.3,1.9);
\filldraw[blue, thick, fill=gray!30, rounded corners=5pt] (1.1,1.1) rectangle (1.7,1.9);
\filldraw[blue, thick, fill=gray!30, rounded corners=5pt] (2.4,1.1) rectangle (2.9,1.9);
\node[blue] at (1.25,1.5) {$\scriptstyle s$};
\node[blue] at (2.55,1.5) {$\scriptstyle t$};
\foreach \x in {1,2,3,4}{
\foreach \y in {1,2,3,4}{
\filldraw (\x,\y) circle (.05cm);
}}
\foreach \x in {0.5,1.5,2.5}{
\foreach \y in {0.5,1.5}{
}}
\draw[thick, knot] (3.15,3.3) -- (3.15,3.6);
\draw[thick, knot, orange, decorate, decoration={snake, segment length=1mm, amplitude=.2mm}] (2.15,.5) --node[right]{$\scriptstyle X$} (2.15,1) -- (2.15,2.15) -- (2.85,2.15) -- (3.15,2.15) -- (3.15,3.3);
\filldraw[orange] (3.15,3.3) node[right]{$\scriptstyle \phi$} circle (.05cm);
}
=
\tikzmath{
\draw[thick] (.5,.5) grid (4.5,4.5);
\fill[white] (2.8,3.05) rectangle (3.2,3.95);
\fill[gray!30, rounded corners=5pt] (2.1,3.1) rectangle (3.9,3.9);
\foreach \x in {1,2,3,4}{
\foreach \y in {1,2,3,4}{
\filldraw (\x,\y) circle (.05cm);
}}
\foreach \x in {0.5,1.5,2.5}{
\foreach \y in {0.5,1.5}{
}}
\draw[thick, knot] (3.15,3.3) -- (3.15,3.6);
\draw[thick, knot, orange, decorate, decoration={snake, segment length=1mm, amplitude=.2mm}] (2.15,.5) --node[right]{$\scriptstyle X$} (2.15,1) -- (2.15,2.15) -- (2.85,2.15) -- (3.15,2.15) -- (3.15,3.3);
\filldraw[orange] (3.15,3.3) node[right]{$\scriptstyle \phi$} circle (.05cm);
}\,.
$$

\section{String net models for enriched unitary fusion categories}
\label{sec:Enriched}

In this section we give a rigorous treatment of the enriched Levin-Wen string net model associated to an $\cA$-enriched UFC $\cX$ from \cite{MR4640433}, where $\cA$ is a UMTC representing the Witt class of the anomaly. 
In particular, we use TQFT techniques to prove the claim that the boundary excitations are the enriched center/M\"uger centeralizer $Z^\cA(\cX)$ of $\cA$ in $Z(\cX)$.
We do so in more generality, starting with an arbitrary unitary braided fusion category $\cA$ and a fully faithful unitary braided functor $\Phi^Z: \cA\to Z(\cX)$.

In the $\cA$-enriched model, objects of the form $\Phi(a)\in \cX$ for $a\in \cA$ can be `pulled off' the $\cX$-boundary into the $\cA$-bulk.
This process shows that the $\Tube(\cX)$-action induced from string operators on the boundary descends to a quotient of $\Tube(\cS)$ called the \emph{dome algebra}.
We analyze this algebra in \S\ref{sec:DomeAlgebra} and show that it classifies the boundary anyon types in \S\ref{sec:localEnrichedExcitations}.
We conclude with \S\ref{sec:SphereAlgebra} on point excitations in the bulk and their connection to point excitations on the boundary.

\subsection{Enriched skein modules and enriched string nets}
\label{sec:EnrichedStringNets}

Let $\cA$ \IN{$\cA$ a unitary braided fusion category (typically with braided functor $\cA \to Z(\cX)$)} be a unitary braided fusion category and $\cX$ be a UFC together with a fully faithful braided functor $\Phi^Z: \cA\to Z(\cX)$.
We denote by $\Phi: \cA\to \cX$ the composite $F\circ \Phi^Z$ where $F: Z(\cX) \to \cX$ is the forgetful functor.

\begin{defn}
The $\cA$-enriched disk skein module for $\cX$ with $m$ $\cA$-boundary points and $n$ $\cX$-boundary points denoted $\cS^{\textcolor{red}{\cA}}_\cX(\bbD, \textcolor{red}{m},n)$ is the Hilbert space orthogonal direct sum
$$
\bigoplus_{\substack{x_1,\dots, x_n \in \Irr(\cX)\\ \textcolor{red}{a_1,\dots, a_m\in \Irr(\cA)}}} \cX(\Phi(\textcolor{red}{a_1\otimes\dots \otimes a_m})\to x_1\otimes \cdots \otimes x_n)
=
\left\{
\tikzmath{
\draw[thick, red] (-120:.5cm and .3cm) -- (-120:1cm) node[below]{$\scriptstyle a_1$};
\draw[thick, red] (-60:.5cm and .3cm) -- (-60:1cm) node[below]{$\scriptstyle a_m$};
\draw[very thick, knot] (0,0) circle (1cm and .6cm);
\draw[very thick] (0,0) circle (.5cm and .3cm);
\draw[very thick] (-1,0) arc (-180:0:1cm);
\node at (0,0) {$f$};
\draw (140:.5cm and .3cm) -- (140:1cm and .6cm) node[left]{$\scriptstyle x_1$};
\draw (40:.5cm and .3cm) -- (40:1cm and .6cm) node[right]{$\scriptstyle x_n$};
\node at (70:.45cm) {$\cdot$};
\node at (90:.45cm) {$\cdot$};
\node at (110:.45cm) {$\cdot$};
\node[red] at (-70:.75cm) {$\cdot$};
\node[red] at (-90:.75cm) {$\cdot$};
\node[red] at (-110:.75cm) {$\cdot$};
}
\right\}.
$$
Here, each $\cX(\Phi(\textcolor{red}{a_1\otimes\dots \otimes a_m})\to x_1\otimes \cdots \otimes x_n)$ is equipped with the inner product
\begin{equation}
\label{eq:EnrichedSkeinModuleInnerProduct}
\langle f | g\rangle
:=
\frac{1}{\sqrt{\textcolor{red}{d_{a_1}\cdots d_{a_m}}d_{x_1}\cdots d_{x_n}}} 
\tr_\cC(f^\dag \circ g)
=
\frac{1}{\sqrt{\textcolor{red}{d_{a_1}\cdots d_{a_m}}d_{x_1}\cdots d_{x_n}}} 
\cdot
\cdot
\tikzmath{
\roundNbox{}{(0,1)}{.3}{.2}{.2}{$f^\dag$}
\roundNbox{}{(0,0)}{.3}{.2}{.2}{$g$}
\draw (-.3,.3) --node[left, xshift=.1cm]{$\scriptstyle x_1$} (-.3,.7) ;
\draw (.3,.3) --node[right, xshift=-.1cm]{$\scriptstyle x_n$} (.3,.7) ;
\draw[thick, red] (.3,1.3) arc (180:0:.2cm) --node[right, xshift=-.1cm]{$\scriptstyle \overline{\Phi(a_m)}$} (.7,-.3) arc (0:-180:.2cm);
\draw[thick, red] (-.3,1.3) arc (180:0:1cm and .5cm) -- node[right, xshift=-.1cm]{$\scriptstyle \overline{\Phi(a_1)}$} (1.7,-.3) arc (0:-180:1cm and .5cm);
\node[red] at (1.2,1cm) {$\cdots$};
\node at (0,.5cm) {$\cdots$};
}\,.
\end{equation}
\end{defn}

For simplicity we work with a cubic lattice, but this geometry is not necessary.
The 2D boundary corresponds to $\cX$, and the 3D bulk corresponds to $\cA$.
$$
\begin{tikzpicture}
	\pgfmathsetmacro{\lattice}{1};	
	\pgfmathsetmacro{\xoffset}{.4};	
	\pgfmathsetmacro{\yoffset}{.2};	
	\pgfmathsetmacro{\extra}{.15};	
	\pgfmathsetmacro{\zextra}{.35};	
	\pgfmathsetmacro{\zoom}{6};	
	\coordinate (z1) at ($ -.6*(\zoom,0) + (0,\lattice) $);
	\coordinate (z2) at ($ (\zoom,\lattice) $);
	\coordinate (aaa) at (0,0);
	\coordinate (baa) at ($ (aaa) + (\lattice,0) $);
	\coordinate (caa) at ($ (aaa) + 2*(\lattice,0) $);
	\coordinate (aba) at ($ (aaa) + (0,\lattice) $);
	\coordinate (bba) at ($ (aaa) + (\lattice,0) + (0,\lattice) $);
	\coordinate (cba) at ($ (aaa) + 2*(\lattice,0) + (0,\lattice) $);
	\coordinate (aca) at ($ (aaa) + 2*(0,\lattice) $);
	\coordinate (bca) at ($ (aaa) + (\lattice,0) + 2*(0,\lattice) $);
	\coordinate (cca) at ($ (aaa) + 2*(\lattice,0) + 2*(0,\lattice) $);
	\coordinate (aab) at ($ (aaa) + (\xoffset,\yoffset) $);
	\coordinate (bab) at ($ (aaa) + (\lattice,0) + (\xoffset,\yoffset) $);
	\coordinate (cab) at ($ (aaa) + 2*(\lattice,0) + (\xoffset,\yoffset) $);
	\coordinate (abb) at ($ (aaa) + (0,\lattice) + (\xoffset,\yoffset) $);
	\coordinate (bbb) at ($ (aaa) + (\lattice,0) + (0,\lattice) + (\xoffset,\yoffset) $);
	\coordinate (cbb) at ($ (aaa) + 2*(\lattice,0) + (0,\lattice) + (\xoffset,\yoffset) $);
	\coordinate (acb) at ($ (aaa) + 2*(0,\lattice) + (\xoffset,\yoffset) $);
	\coordinate (bcb) at ($ (aaa) + (\lattice,0) + 2*(0,\lattice) + (\xoffset,\yoffset) $);
	\coordinate (ccb) at ($ (aaa) + 2*(\lattice,0) + 2*(0,\lattice) + (\xoffset,\yoffset) $);
	\coordinate (aac) at ($ (aaa) + 2*(\xoffset,\yoffset) $);
	\coordinate (bac) at ($ (aaa) + (\lattice,0) + 2*(\xoffset,\yoffset) $);
	\coordinate (cac) at ($ (aaa) + 2*(\lattice,0) + 2*(\xoffset,\yoffset) $);
	\coordinate (abc) at ($ (aaa) + (0,\lattice) + 2*(\xoffset,\yoffset) $);
	\coordinate (bbc) at ($ (aaa) + (\lattice,0) + (0,\lattice) + 2*(\xoffset,\yoffset) $);
	\coordinate (cbc) at ($ (aaa) + 2*(\lattice,0) + (0,\lattice) + 2*(\xoffset,\yoffset) $);
	\coordinate (acc) at ($ (aaa) + 2*(0,\lattice) + 2*(\xoffset,\yoffset) $);
	\coordinate (bcc) at ($ (aaa) + (\lattice,0) + 2*(0,\lattice) + 2*(\xoffset,\yoffset) $);
	\coordinate (ccc) at ($ (aaa) + 2*(\lattice,0) + 2*(0,\lattice) + 2*(\xoffset,\yoffset) $);
	\draw[thick, red] ($ (aac) - \extra*(\lattice,0) $) -- (cac);
	\draw[thick, red] ($ (abc) - \extra*(\lattice,0) $) -- (cbc);
	\draw[thick, red] ($ (acc) - \extra*(\lattice,0) $) -- (ccc);
	\draw[thick, red] ($ (aac) - \extra*(0,\lattice) $) -- ($ (acc) + \extra*(0,\lattice) $);
	\draw[thick, red] ($ (bac) - \extra*(0,\lattice) $) -- ($ (bcc) + \extra*(0,\lattice) $);
	\draw[thick] ($ (cac) - \extra*(0,\lattice) $) -- ($ (ccc) + \extra*(0,\lattice) $);
	\draw[thick, red, knot] ($ (aab) - \extra*(\lattice,0) $) -- (cab);
	\draw[thick, red, knot] ($ (abb) - \extra*(\lattice,0) $) -- (cbb);
	\draw[thick, red, knot] ($ (acb) - \extra*(\lattice,0) $) -- (ccb);
	\draw[thick, red, knot] ($ (aab) - \extra*(0,\lattice) $) -- ($ (acb) + \extra*(0,\lattice) $);
	\draw[thick, red, knot] ($ (bab) - \extra*(0,\lattice) $) -- ($ (bcb) + \extra*(0,\lattice) $);
	\draw[very thick, white] ($ (cab) - \extra*(0,\lattice) $) -- ($ (ccb) + \extra*(0,\lattice) $);
	\draw[thick] ($ (cab) - \extra*(0,\lattice) $) -- ($ (ccb) + \extra*(0,\lattice) $);
	\draw[thick, red, knot] ($ (aaa) - \extra*(\lattice,0) $) -- (caa);
	\draw[thick, red, knot] ($ (aba) - \extra*(\lattice,0) $) -- (cba);
	\draw[thick, red, knot] ($ (aca) - \extra*(\lattice,0) $) -- (cca);
	\draw[thick, red, knot] ($ (aaa) - \extra*(0,\lattice) $) -- ($ (aca) + \extra*(0,\lattice) $);
	\draw[thick, red, knot] ($ (baa) - \extra*(0,\lattice) $) -- ($ (bca) + \extra*(0,\lattice) $);
	\draw[very thick, white] ($ (caa) - \extra*(0,\lattice) $) -- ($ (cca) + \extra*(0,\lattice) $);
	\draw[thick] ($ (caa) - \extra*(0,\lattice) $) -- ($ (cca) + \extra*(0,\lattice) $);
	\draw[thick, red] ($ (aaa) - \zextra*(\xoffset,\yoffset) $) -- ($ (aac) + \zextra*(\xoffset,\yoffset) $);
	\draw[thick, red] ($ (aba) - \zextra*(\xoffset,\yoffset) $) -- ($ (abc) + \zextra*(\xoffset,\yoffset) $);
	\draw[thick, red] ($ (aca) - \zextra*(\xoffset,\yoffset) $) -- ($ (acc) + \zextra*(\xoffset,\yoffset) $);
	\draw[thick, red] ($ (baa) - \zextra*(\xoffset,\yoffset) $) -- ($ (bac) + \zextra*(\xoffset,\yoffset) $);
	\draw[thick, red] ($ (bba) - \zextra*(\xoffset,\yoffset) $) -- ($ (bbc) + \zextra*(\xoffset,\yoffset) $);
	\draw[thick, red] ($ (bca) - \zextra*(\xoffset,\yoffset) $) -- ($ (bcc) + \zextra*(\xoffset,\yoffset) $);
	\draw[thick] ($ (caa) - \zextra*(\xoffset,\yoffset) $) -- ($ (cac) + \zextra*(\xoffset,\yoffset) $);
	\draw[thick] ($ (cba) - \zextra*(\xoffset,\yoffset) $) -- ($ (cbc) + \zextra*(\xoffset,\yoffset) $);
	\draw[thick] ($ (cca) - \zextra*(\xoffset,\yoffset) $) -- ($ (ccc) + \zextra*(\xoffset,\yoffset) $);
	\filldraw[red] (aaa) circle (.05cm);
	\filldraw[red] (aba) circle (.05cm);
	\filldraw[red] (aca) circle (.05cm);
	\filldraw[red] (aab) circle (.05cm);
	\filldraw[red] (abb) circle (.05cm);
	\filldraw[red] (acb) circle (.05cm);
	\filldraw[red] (aac) circle (.05cm);
	\filldraw[red] (abc) circle (.05cm);
	\filldraw[red] (acc) circle (.05cm);
	\filldraw[red] (baa) circle (.05cm);
	\filldraw[red] (bab) circle (.05cm);
	\filldraw[red] (bac) circle (.05cm);
	\filldraw[red] (bba) circle (.05cm);
	\filldraw[red] (bbb) circle (.05cm);
	\filldraw[red] (bbc) circle (.05cm);
	\filldraw[red] (bca) circle (.05cm);
	\filldraw[red] (bcb) circle (.05cm);
	\filldraw[red] (bcc) circle (.05cm);
	\filldraw (caa) circle (.05cm);
	\filldraw (cba) circle (.05cm);
	\filldraw (cca) circle (.05cm);
	\filldraw (cab) circle (.05cm);
	\filldraw (cbb) circle (.05cm);
	\filldraw (ccb) circle (.05cm);
	\filldraw (cac) circle (.05cm);
	\filldraw (cbc) circle (.05cm);
	\filldraw (ccc) circle (.05cm);
	\draw[blue!50, very thin] (cbb) circle (.15cm);
		\foreach \x/\y/\s in {155/100/24, 175/120/23, 185/140/22, 200/160/23, 220/180/25, 240/200/26} 
		{\draw[dotted, blue!50] ($(cbb)+(\x:\extra)$) to[bend left=\s] ($(z2) + (\y:\lattice)$);}
		\foreach \x/\y/\s in {135/70/30,290/225/28}
		{\draw[blue!50, very thin] ($ (cbb) + (\x:\extra) $) to[bend left=\s] ($ (z2) + (\y:\lattice) $);}
			\draw[blue!50, very thin] (z2) circle (\lattice);
			\draw[thick, red] ($ (z2) - .75*(\lattice,0) $) node[above] {\scriptsize{$a$}} -- (z2);
			\draw ($ (z2) - .75*(0,\lattice) $) node[right] {\scriptsize{$y_1$}} -- ($ (z2) + .75*(0,\lattice) $) node[left] {\scriptsize{$y_2$}};
			\draw ($ (z2) - 1.25*(\xoffset,\yoffset) $) node[below] {\scriptsize{$x_1$}} -- ($ (z2) + 1.25*(\xoffset,\yoffset) $) node[above] {\scriptsize{$x_2$}};
			\filldraw (z2) circle (.05cm);
			\node at ($ (z2) -1.6*(0,\lattice) $) {$\underset{\substack{x_i,y_j\in \Irr(\cX)\\a\in \Irr(\cA)}}{\bigoplus}\cX(\Phi(\textcolor{red}{a})x_1y_1 \to y_2x_2)$};
			\draw[blue!50, very thin] ($ (z2) + (135:\lattice) $) -- ($ (z2) +(-45:\lattice) $);
			\foreach \x in {135,-45}
			{\draw[blue!50, very thin, -stealth] ($ (z2) + (\x:.5*\lattice) - (.1,.1)$) to ($ (z2) + (\x:.5*\lattice) + (.1,.1)$);}
	\pgftransformxscale{-1}
	\draw[blue!50, very thin] (aba) circle (.15cm);
		\foreach \x/\y/\s in {155/100/24, 175/120/23, 185/140/22, 200/160/23, 220/180/25, 240/200/26} 
		{\draw[dotted, blue!50] ($(aba)+(\x:\extra)$) to[bend left=\s] ($(z1) + (\y:\lattice)$);}
		\foreach \x/\y/\s in {135/70/30,290/225/28}
		{\draw[blue!50, very thin] ($ (aba) + (\x:\extra) $) to[bend left=\s] ($ (z1) + (\y:\lattice) $);}
			\draw[blue!50, very thin] (z1) circle (\lattice);
			\draw[thick, red] ($ (z1) - .75*(\lattice,0) $) node[below] {\scriptsize{$a_2$}} -- ($ (z1) + .75*(\lattice,0) $) node[above] {\scriptsize{$a_1$}};
			\draw[thick, red] ($ (z1) - .75*(0,\lattice) $) node[right] {\scriptsize{$c_1$}} -- ($ (z1) + .75*(0,\lattice) $) node[left] {\scriptsize{$c_2$}};
			\draw[thick, red] ($ (z1) - 1.25*(\xoffset,0) + 1.25*(0,\yoffset) $) node[above] {\scriptsize{$b_2$}} -- ($ (z1) + 1.25*(\xoffset,0) - 1.25*(0,\yoffset) $) node[below] {\scriptsize{$b_1$}};
			\filldraw[thick, red] (z1) circle (.05cm);
			\node at ($ (z1) -1.5*(0,\lattice) $) {\textcolor{red}{$\underset{a_i,b_j,c_k\in \Irr(\cA)}{\bigoplus}\cA(a_1b_1c_1 \to c_2b_2a_2)$}};
			\draw[blue!50, very thin] ($ (z1) + (225:\lattice) $) -- ($ (z1) + (45:\lattice) $);
			\foreach \x in {225,45}
			{\draw[blue!50, very thin, -stealth] ($ (z1) + (\x:.5*\lattice) + (.1,-.1)$) to ($ (z1) + (\x:.5*\lattice) - (.1,-.1)$);}
\end{tikzpicture}
$$
Each vertex $v$ has a Hilbert space $\cH_v$ which is an orthogonal direct sum of hom spaces in $\cA$ or $\cX$, endowed with the following inner products:
\begin{align*}
\left\langle
\tikzmath{
\draw[thick, red] (-.3,-.3) -- (-.3,-.6) node[below]{$\scriptstyle a_1$};
\draw[thick, red] (0,-.3) -- (0,-.6) node[below]{$\scriptstyle b_1$};
\draw[thick, red] (.3,-.3) -- (.3,-.6) node[below]{$\scriptstyle c_1$};
\draw[thick, red] (-.3,.3) -- (-.3,.6) node[above]{$\scriptstyle c_2$};
\draw[thick, red] (0,.3) -- (0,.6) node[above]{$\scriptstyle b_2$};
\draw[thick, red] (.3,.3) -- (.3,.6) node[above]{$\scriptstyle a_2$};
\roundNbox{red}{(0,0)}{.3}{.2}{.2}{\textcolor{red}{$\eta$}}
}
\bigg|
\tikzmath{
\draw[thick, red] (-.3,-.3) -- (-.3,-.6) node[below]{$\scriptstyle a_1'$};
\draw[thick, red] (0,-.3) -- (0,-.6) node[below]{$\scriptstyle b_1'$};
\draw[thick, red] (.3,-.3) -- (.3,-.6) node[below]{$\scriptstyle c_1'$};
\draw[thick, red] (-.3,.3) -- (-.3,.6) node[above]{$\scriptstyle c_2'$};
\draw[thick, red] (0,.3) -- (0,.6) node[above]{$\scriptstyle b_2'$};
\draw[thick, red] (.3,.3) -- (.3,.6) node[above]{$\scriptstyle a_2'$};
\roundNbox{red}{(0,0)}{.3}{.2}{.2}{\textcolor{red}{$\eta'$}}
}
\right\rangle
&=
\delta_{a_i=a_i'}
\delta_{b_j=b_j'}
\delta_{c_k=c_k'}
\frac{1}{\sqrt{\textcolor{red}{d_{a_1}d_{a_2}d_{b_1}d_{b_2}d_{c_1}d_{c_2}}}}
\cdot
\tikzmath{
\draw[thick, red] (-.3,1.3) arc (180:0:.8cm) -- (1.3,-.3) arc (0:-180:.8cm);
\draw[thick, red] (0,1.3) arc (180:0:.5cm) -- (1,-.3) arc (0:-180:.5cm);
\draw[thick, red] (.3,1.3) arc (180:0:.2cm) -- (.7,-.3) arc (0:-180:.2cm);
\draw[thick, red] (-.3,.3) -- (-.3,.7); 
\draw[thick, red] (0,.3) -- (0,.7); 
\draw[thick, red] (.3,.3) -- (.3,.7); 
\roundNbox{red}{(0,0)}{.3}{.2}{.2}{\textcolor{red}{$\eta'$}}
\roundNbox{red}{(0,1)}{.3}{.2}{.2}{\textcolor{red}{$\eta^\dag$}}
}
\\
\left\langle
\tikzmath{
\draw[thick, red] (-.3,-.3) -- (-.3,-.6) node[left]{$\scriptstyle F(a)$};
\draw (0,-.3) -- (0,-.6) node[below]{$\scriptstyle x_1$};
\draw (.3,-.3) -- (.3,-.6) node[below]{$\scriptstyle y_1$};
\draw (0,.3) -- (0,.6) node[above]{$\scriptstyle y_2$};
\draw (.3,.3) -- (.3,.6) node[above]{$\scriptstyle x_2$};
\roundNbox{}{(0,0)}{.3}{.2}{.2}{$\xi$}
}
\bigg|
\tikzmath{
\draw[thick, red] (-.3,-.3) -- (-.3,-.6) node[left]{$\scriptstyle F(a')$};
\draw (0,-.3) -- (0,-.6) node[below]{$\scriptstyle x_1'$};
\draw (.3,-.3) -- (.3,-.6) node[below]{$\scriptstyle y_1'$};
\draw (0,.3) -- (0,.6) node[above]{$\scriptstyle y_2'$};
\draw (.3,.3) -- (.3,.6) node[above]{$\scriptstyle x_2'$};
\roundNbox{}{(0,0)}{.3}{.2}{.2}{$\xi'$}
}
\right\rangle
&=
\delta_{a=a'}
\delta_{x_i=x_i'}
\delta_{y_j=y_j'}
\frac{1}{\sqrt{\textcolor{red}{d_{a}}d_{x_1}d_{x_2}d_{y_1}d_{y_2}}}
\cdot
\tikzmath{
\draw[thick, red] (-.3,1.3) arc (180:0:.8cm) -- (1.3,-.3) arc (0:-180:.8cm);
\draw (0,1.3) arc (180:0:.5cm) -- (1,-.3) arc (0:-180:.5cm);
\draw (.3,1.3) arc (180:0:.2cm) -- (.7,-.3) arc (0:-180:.2cm);
\draw (0,.3) -- (0,.7); 
\draw (.3,.3) -- (.3,.7); 
\roundNbox{red}{(0,0)}{.3}{.2}{.2}{\textcolor{red}{$\xi'$}}
\roundNbox{red}{(0,1)}{.3}{.2}{.2}{\textcolor{red}{$\xi^\dag$}}
}\,.
\end{align*}
This second inner product is just the inner product on the $\cA$-enriched disk skein module $\cS_\cX^{\textcolor{red}{\cA}}(\bbD,\textcolor{red}{1},4)$.
We use the following shorthand notation for closed diagrams in $\cX$:
\begin{equation}
\label{eq:SkeinModuleIPShorthand}
\tikzmath{
\draw (-.5,-.25) node[below] {\scriptsize{$w$}} -- (.5,.25) node[right] {\scriptsize{$z$}};
\draw (-1.5,-.25) -- (-.5,.25) .. controls ++(30:.3cm) and ++(30:.3cm) .. (.5,.25);
\draw[knot] (0,-.5) node[right] {\scriptsize{$x$}} -- (0,.5) node[right, yshift=.2cm] {\scriptsize{$y$}} arc (0:180:.5cm);
\draw (0,-.5) arc (0:-180:.5cm) -- (-1,.5);
\draw[knot] (-1.5,-.25) .. controls ++(-150:.3cm) and ++(-150:.3cm) .. (-.5,-.25);
\draw[thick, red] (-1,0) -- node[above, xshift=.2cm, yshift=-.1cm] {\scriptsize{$a$}} (0,0);
\filldraw (0,0) node[right, yshift=-.1cm]{$\scriptstyle \xi$} circle (.05cm);
\filldraw (-1,0) node[left, yshift=.1cm]{$\scriptstyle \overline{\xi'}$} circle (.05cm);
}
:=
\tikzmath[xscale=-1]{
\draw (0,1) arc (90:270:.7cm and 1cm);
\draw (0,1) arc (90:270:.4cm and 1cm);
\draw (0,1) arc (90:-90:.3cm and 1cm);
\draw (0,1) arc (90:-90:.6cm and 1cm);
\draw[thick, red] (0,1) arc (90:-90:.9cm and 1cm);
\draw[thick] (0,0) circle (1cm);
\draw (-1,0) arc (-180:0:1cm and .3cm);
\draw[dotted] (-1,0) arc (180:0:1cm and .3cm);
\filldraw (0,-1) node[below]{$\scriptstyle \overline{\xi'}$} circle (.05cm);
\filldraw (0,1) node[above]{$\scriptstyle \xi$} circle (.05cm);
\node at (-.85,0) {$\scriptstyle \overline{y}$};
\node at (-.55,0) {$\scriptstyle \overline{z}$};
\node at (.45,0) {$\scriptstyle w$};
\node at (.15,0) {$\scriptstyle x$};
\node[red] at (1.3,0) {$\scriptstyle F(a)$};
}
=
\tikzmath{
\draw[thick, red] (0,1) --node[right]{$\scriptstyle a$} (0,-1);
\draw[knot] (-1,0) arc (-180:0:1cm and .3cm);
\draw (0,1) arc (90:270:.7cm and 1cm);
\draw (0,1) arc (90:270:.4cm and 1cm);
\draw (0,1) arc (90:-90:.4cm and 1cm);
\draw (0,1) arc (90:-90:.7cm and 1cm);
\draw[thick] (0,0) circle (1cm);
\draw[dotted] (-1,0) arc (180:0:1cm and .3cm);
\filldraw (0,-1) node[below]{$\scriptstyle \overline{\xi'}$} circle (.05cm);
\filldraw (0,1) node[above]{$\scriptstyle \xi$} circle (.05cm);
\node at (-.85,0) {$\scriptstyle w$};
\node at (-.55,0) {$\scriptstyle x$};
\node at (.85,0) {$\scriptstyle \overline{y}$};
\node at (.55,0) {$\scriptstyle \overline{z}$};
}
\end{equation}
The diagram on the left hand side above is shorthand, and the black crossings \emph{do not} represent any kind of braiding; rather the diagram is drawn on the surface of a 2-sphere.
As $\cX$ is $\cA$-enriched, we may draw the $\cA$-string as passing through the 3D bulk inside the 2-sphere.

The total Hilbert space is the tensor product over the local Hilbert spaces.
The Hamiltonian is made of 2 types of terms.
First, for every edge/link $\ell$ in our lattice, we have a projector $A_\ell$ enforcing that labels on edges match.
To define the face/plaquette terms $B_p$, we first pass to the image of the projector $P_A$ onto the ground state of $-\sum_{\ell\in \Lambda} A_\ell$.
There are three types of $B_p$ terms:
those in the $\cA$-bulk, those on plaquettes made of both $\cA$ and $\cX$-links, and those on the $\cX$ boundary. 
We will define the third type rigorously, and leave the first two to the reader.
On these plaquettes, the operator $B_p$ is given by 
\begin{align}
\frac{1}{D_\cX}\sum_{r\in \Irr(\cX)}
d_r\cdot
\tikzmath{
\draw[step=1.0,black,thin] (0.5,0.5) grid (2.5,2.5);
\draw[thick, red] (2,1) -- ($ (2,1) + (-.4,.4)$);
\draw[thick, red] (1,1) -- ($ (1,1) + (-.4,.4)$);
\draw[thick, red] (1,2) -- ($ (1,2) + (-.4,.4)$);
\draw[thick, red] (2,2) -- ($ (2,2) + (-.4,.4)$);
\node at (2.3,.8) {$\scriptstyle \xi_{2,1}$};
\node at (.5,2.2) {$\scriptstyle \xi_{1,2}$};
\node at (.7,.8) {$\scriptstyle \xi_{1,1}$};
\node at (2.3,2.2) {$\scriptstyle \xi_{2,2}$};
\node at (1.5,.85) {$\scriptstyle g$};
\node at (2.15,1.5) {$\scriptstyle h$};
\node at (1.5,2.15) {$\scriptstyle i$};
\node at (.85,1.5) {$\scriptstyle j$};
\draw[knot, thick, blue, rounded corners=5pt] (1.15,1.15) rectangle (1.85,1.85);
\fill[gray!60, rounded corners=5pt, opacity=.5] (1.16,1.16) rectangle (1.84,1.84);
\node[blue] at (1.3,1.5) {$\scriptstyle r$};
}
&=
\sum_{\substack{r, k,\ell,m,n\in \Irr(\cX)
\\
\eta \in \operatorname{ONB}
}}
\frac{\sqrt{d_kd_\ell d_md_n}}{d_r\sqrt{d_gd_hd_id_j}}
\tikzmath{
\draw[step=1.0,black,thin] (0.5,0.5) grid (2.5,2.5);
\draw[thick, red] (2,1) -- ($ (2,1) + (-.4,.4)$);
\draw[thick, red] (1,1) -- ($ (1,1) + (-.4,.4)$);
\draw[thick, red] (1,2) -- ($ (1,2) + (-.4,.4)$);
\draw[thick, red] (2,2) -- ($ (2,2) + (-.4,.4)$);
\draw[thick, blue] (1.3,1) -- (1,1.3);
\draw[knot, thick, blue] (1.7,1) -- (2,1.3);
\draw[thick, blue] (1.3,2) -- (1,1.7);
\draw[thick, blue] (1.7,2) -- (2,1.7);
\fill[fill=green] (1.3,1) circle (.05cm);
\fill[fill=green] (1.7,1) circle (.05cm);
\fill[fill=purple] (2,1.3) circle (.05cm);
\fill[fill=purple] (2,1.7) circle (.05cm);
\fill[fill=yellow] (1.3,2) circle (.05cm);
\fill[fill=yellow] (1.7,2) circle (.05cm);
\fill[fill=orange] (1,1.3) circle (.05cm);
\fill[fill=orange] (1,1.7) circle (.05cm);
\node at (2.3,.8) {$\scriptstyle \xi_{2,1}$};
\node at (.5,2.2) {$\scriptstyle \xi_{1,2}$};
\node at (.7,.8) {$\scriptstyle \xi_{1,1}$};
\node at (2.3,2.2) {$\scriptstyle \xi_{2,2}$};
\node at (1.5,.85) {$\scriptstyle k$};
\node at (2.15,1.5) {$\scriptstyle \ell$};
\node at (1.5,2.15) {$\scriptstyle m$};
\node at (.85,1.5) {$\scriptstyle n$};
\node at (1.15,.85) {$\scriptstyle g$};
\node at (1.85,.85) {$\scriptstyle g$};
\node at (2.15,1.15) {$\scriptstyle h$};
\node at (2.15,1.85) {$\scriptstyle h$};
\node at (1.15,2.15) {$\scriptstyle i$};
\node at (1.85,2.15) {$\scriptstyle i$};
\node at (.85,1.15) {$\scriptstyle j$};
\node at (.85,1.85) {$\scriptstyle j$};
\fill[gray!60, rounded corners=5pt, opacity=.5] (1.15,1.15) rectangle (1.85,1.85);
}
\label{eq:EnrichedBpAction}
\\&=
\sum_{\eta \in \operatorname{ONB}}
C(\xi,\eta)
\tikzmath{
\draw[step=1.0,black,thin] (0.5,0.5) grid (2.5,2.5);
\draw[thick, red] (2,1) -- ($ (2,1) + (-.4,.4)$);
\draw[thick, red] (1,1) -- ($ (1,1) + (-.4,.4)$);
\draw[thick, red] (1,2) -- ($ (1,2) + (-.4,.4)$);
\draw[thick, red] (2,2) -- ($ (2,2) + (-.4,.4)$);
\fill[gray!60, rounded corners=5pt, opacity=.5] (1.1,1.1) rectangle (1.9,1.9);
\node at (2.3,.8) {$\scriptstyle \eta_{2,1}$};
\node at (1.3,1.8) {$\scriptstyle \eta_{1,2}$};
\node at (1.3,.8) {$\scriptstyle \eta_{1,1}$};
\node at (2.3,1.8) {$\scriptstyle \eta_{2,2}$};
}
\notag
\end{align}
where $\xi,\eta$ are shorthand for choices of 4 vectors from a fixed ONB for the enriched skein module $\cS^{\textcolor{red}{\cA}}_\cX(\bbD, \textcolor{red}{1},4)$
such that the edges which meet are labelled by the same simple object,
and $D_\cX=\sum_{x\in \Irr(\cX)} d_x^2$ is the global dimension of $\cX$.

\begin{prop}
\label{prop:MatrixCoefficientsOfBp}
When all edge/link labels on simple tensors $\eta,\xi$ match
$$
\tikzmath{
\draw[step=1.0,black,thin] (0.5,0.5) grid (2.5,2.5);
\draw[thick, red] (2,1) -- ($ (2,1) + (-.4,.4)$) node[left]{$\scriptstyle b$};
\draw[thick, red] (1,1) -- ($ (1,1) + (-.4,.4)$) node[left]{$\scriptstyle a$};
\draw[thick, red] (1,2) -- ($ (1,2) + (-.4,.4)$) node[left]{$\scriptstyle d$};
\draw[thick, red] (2,2) -- ($ (2,2) + (-.4,.4)$) node[left]{$\scriptstyle c$};
\fill[gray!60, rounded corners=5pt, opacity=.5] (1.1,1.1) rectangle (1.9,1.9);
\node at (2.3,.8) {$\scriptstyle \xi_{2,1}$};
\node at (1.3,1.8) {$\scriptstyle \xi_{1,2}$};
\node at (1.3,.8) {$\scriptstyle \xi_{1,1}$};
\node at (2.3,1.8) {$\scriptstyle \xi_{2,2}$};
\node at (.3,1) {$\scriptstyle s$};
\node at (1,.3) {$\scriptstyle t$};
\node at (2,.3) {$\scriptstyle u$};
\node at (2.7,1) {$\scriptstyle v$};
\node at (2.7,2) {$\scriptstyle w$};
\node at (2,2.7) {$\scriptstyle x$};
\node at (1,2.7) {$\scriptstyle y$};
\node at (.3,2) {$\scriptstyle z$};
}
\qquad\qquad
\tikzmath{
\draw[step=1.0,black,thin] (0.5,0.5) grid (2.5,2.5);
\draw[thick, red] (2,1) -- ($ (2,1) + (-.4,.4)$) node[left]{$\scriptstyle b$};
\draw[thick, red] (1,1) -- ($ (1,1) + (-.4,.4)$) node[left]{$\scriptstyle a$};
\draw[thick, red] (1,2) -- ($ (1,2) + (-.4,.4)$) node[left]{$\scriptstyle d$};
\draw[thick, red] (2,2) -- ($ (2,2) + (-.4,.4)$) node[left]{$\scriptstyle c$};
\fill[gray!60, rounded corners=5pt, opacity=.5] (1.1,1.1) rectangle (1.9,1.9);
\node at (2.3,.8) {$\scriptstyle \eta_{2,1}$};
\node at (1.3,1.8) {$\scriptstyle \eta_{1,2}$};
\node at (1.3,.8) {$\scriptstyle \eta_{1,1}$};
\node at (2.3,1.8) {$\scriptstyle \eta_{2,2}$};
\node at (.3,1) {$\scriptstyle s$};
\node at (1,.3) {$\scriptstyle t$};
\node at (2,.3) {$\scriptstyle u$};
\node at (2.7,1) {$\scriptstyle v$};
\node at (2.7,2) {$\scriptstyle w$};
\node at (2,2.7) {$\scriptstyle x$};
\node at (1,2.7) {$\scriptstyle y$};
\node at (.3,2) {$\scriptstyle z$};
}\,,
$$
the matrix coefficient $C(\xi,\eta)$ is given by
\begin{align*}
C(\xi,\eta)
&=
\frac{1}{D_\cX\sqrt{\textcolor{red}{d_ad_bd_cd_d}d_sd_td_ud_vd_wd_xd_yd_z}}
\cdot
\tikzmath{
\begin{scope}
\clip (1.2,1.8) rectangle (3.5,3.6);
\begin{scope}[shift={(-.3,.3)}]
\draw[step=1.0,black] (0.5,0.5) grid (4,4);
\end{scope}
\draw[step=1.0,black, knot] (1,1) grid (4,4);
\end{scope}
\foreach \j in {2,3} {
\draw (1.2,\j) arc (270:90:.15cm);
\draw (3.5,\j) arc (-90:90:.15cm);
};
\foreach \i in {2,3} {
    \draw (\i,3.6) arc (0:180:.15cm);
    \draw (\i,1.8) arc (0:-180:.15cm);
};
\foreach \i in {1,2} {
\foreach \j in {1,2} {
        \draw[thick, red] ($ (\i,\j) + (1,1)$) -- ($ (\i,\j) + (.7,1.3)$);
        \node at ($ (\i,\j) + (1.25,.85) $) {$\scriptstyle \xi_{\i,\j}$};
        \node at ($ (\i,\j) + (.45,1.45) $) {$\scriptstyle \overline{\eta_{\i,\j}}$};
}}
}
\\&=
\frac{1}{D_\cX\sqrt{\textcolor{red}{d_ad_bd_cd_d}d_sd_td_ud_vd_wd_xd_yd_z}}
\cdot
\tikzmath{
\clip (-1,-1) rectangle (4,4);
\draw[thick, red, mid<] (1,1) .. controls ++(135:1cm) and ++(135:1cm) .. node[above, yshift=.1cm]{$\scriptstyle a$} (0,0);
\draw[thick, red, mid<] (2,1) .. controls ++(135:1cm) and ++(135:1cm) .. (2,0) node[left, xshift=-.1cm]{$\scriptstyle b$} .. controls ++(-45:1cm) and ++(-45:1cm) .. (3,0);
\draw[thick, red, mid<] (2,2) .. controls ++(135:1cm) and ++(135:1cm) .. node[above, yshift=.1cm]{$\scriptstyle c$} (3,3);
\draw[thick, red, mid<] (1,2) .. controls ++(135:1cm) and ++(-45:1cm) .. node[right, xshift=.05cm]{$\scriptstyle d$} (0,3);
\draw[knot, mid>] (1,1) -- node[above, xshift=-.1cm]{$\scriptstyle g$} (2,1);
\draw[mid>] (2,1) -- node[left]{$\scriptstyle h$} (2,2);
\draw[mid>] (1,2) -- node[above]{$\scriptstyle i$} (2,2);
\draw[mid>] (1,1) -- node[left]{$\scriptstyle j$} (1,2);
\draw[knot, mid<] (0,0) to[out=-90,in=-90] node[above]{$\scriptstyle k$} (3,0);
\draw[mid<] (3,0) to[out=0,in=0] node[left]{$\scriptstyle\ell$} (3,3);
\draw[mid<] (0,3) to[out=90,in=90] node[below]{$\scriptstyle m$} (3,3);
\draw[mid<] (0,0) to[out=180,in=180] node[right]{$\scriptstyle n$} (0,3);
\draw[mid>] (0,0) to[out=90,in=180] node[above]{$\scriptstyle s$} (1,1);
\draw[mid>] (0,0) to[out=0,in=-90] node[below]{$\scriptstyle t$} (1,1);
\draw[mid>] (3,0) to[out=180,in=-90] node[below]{$\scriptstyle u$} (2,1);
\draw[mid<] (3,0) to[out=90,in=0] node[right]{$\scriptstyle v$} (2,1);
\draw[mid<] (3,3) to[out=-90,in=0] node[right]{$\scriptstyle w$} (2,2);
\draw[mid<] (3,3) to[out=180,in=90] node[left]{$\scriptstyle x$} (2,2);
\draw[mid<] (0,3) to[out=0,in=90] node[right]{$\scriptstyle y$} (1,2);
\draw[mid>]  (0,3) to[out=-90,in=180] node[below]{$\scriptstyle z$} (1,2);
\node at (1.3,1.8) {$\scriptstyle \xi_{1,2}$};
\node at (1.3,.8) {$\scriptstyle \xi_{1,1}$};
\node at (2.3,1.2) {$\scriptstyle \xi_{2,1}$};
\node at (2.3,1.8) {$\scriptstyle \xi_{2,2}$};
\node at (-.3,-.2) {$\scriptstyle \eta^\dag_{1,1}$};
\node at (3.6,-.2) {$\scriptstyle \overline{\eta_{2,1}}$};
\node at (3.3,3.2) {$\scriptstyle \eta^\dag_{2,2}$};
\node at (-.3,3.2) {$\scriptstyle \overline{\eta_{1,2}}$};
}\,.
\end{align*}
Otherwise, $C(\xi,\eta)=0$.
In particular, the $B_p$ are self-adjoint, commuting orthogonal projections.
\end{prop}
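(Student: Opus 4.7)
The plan is to mirror the structure of the proof of Lemma \ref{lem:PlaquetteCoefficients}, tracking the additional $\cA$-strings that come off each corner vertex into the 3D bulk. The key technical observation is that because $\Phi$ factors through the braided central functor $\Phi^Z: \cA \to Z(\cX)$, each $\Phi(a)$-string carries a canonical half-braiding with every $\cX$-string, which means the $\cA$-strings may be freely transported through the bulk around any $\cX$-diagram. Consequently, the plaquette operator $B_p$ acts only on the $\cX$-degrees of freedom and commutes with the placement of the $\cA$-strings up to specified half-braidings that become isotopy in the 3D picture.

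First, I would fix the matching external labels and write out $\langle \eta \mid B_p \xi \rangle$ using the definition \eqref{eq:EnrichedBpAction}. Setting $K := (D_\cX \sqrt{d_a d_b d_c d_d d_s \cdots d_z})^{-1}$, this is a sum over $r \in \Irr(\cX)$ with prefactor $d_r$, inside which the modified corners $\xi_{i,j}$ acquire an extra $r$-strand passing through the plaquette. I would then perform a Fourier expansion of each such modified corner against the ONB $\{\eta_{i,j}\}$ of $\cS_\cX^{\cA}(\bbD,1,4)$, exactly as in Lemma \ref{lem:PlaquetteCoefficients}, producing four closed diagrams representing the enriched skein-module inner products \eqref{eq:EnrichedSkeinModuleInnerProduct}. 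Each of these diagrams lives on a 2-sphere in the shorthand of \eqref{eq:SkeinModuleIPShorthand}, with $\cX$-strings on the sphere and the $\cA$-strings threading through the 3-ball bounded by it.

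Next I would apply the analog of Fact \ref{fact:VaccuumUnzip} three times at the four corners to eliminate the $r$-strand insertions that are not needed (collapsing bigons using the \eqref{eq:Fusion} relation). Because the four $\cA$-strings do not touch $r$ (they emanate from the corners into the bulk), these unzipping moves proceed identically to the unenriched computation. The $r$-loop that remains can be further unzipped using \eqref{eq:Fusion} along the perimeter of the plaquette. At this stage the $\cX$-part of the diagram is precisely the one appearing in Lemma \ref{lem:PlaquetteCoefficients}, and the $\cA$-strings can be isotoped in the 3-ball (using the half-braidings $\sigma_{\Phi^Z(a)}$ to pass under/over the $\cX$-strings) to the configuration displayed in the statement. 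This isotopy is the main bookkeeping obstacle, since one must verify that each half-braiding inserted along the way cancels correctly against the corresponding one used to pair $\xi$ with $\eta$; this is where the assumption that $\Phi^Z$ is a braided central functor (so $\Phi(a)$ has a genuine half-braiding, not just a braiding) is essential.

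Finally, the vanishing when external labels mismatch is immediate from the definition of the enriched inner product. Self-adjointness $C(\eta,\xi) = \overline{C(\xi,\eta)}$ follows because reflecting the diagram exchanges $\xi \leftrightarrow \eta$ and conjugates the half-braidings, matching the formula on the nose after an orientation reversal. Commutativity $[B_p, B_q] = 0$ for adjacent plaquettes and idempotence $B_p^2 = B_p$ reduce to the same \eqref{eq:I=H} and \eqref{eq:Bigon 2} manipulations as in the proof of Lemma \ref{lem:PlaquetteCoefficients}, carried out on the $\cX$-part of the diagram; the $\cA$-strings remain inert throughout. The hardest single step will be exhibiting a precise isotopy that moves the four $\cA$-strands (which leave the corners of the plaquette toward the bulk) into the configuration shown in the target formula while keeping track of all braiding/half-braiding crossings, since the 3D bulk picture must be translated faithfully into a planar diagram on a sphere.
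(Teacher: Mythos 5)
Your proposal is correct and follows essentially the same route as the paper's proof: Fourier expansion of each corner against the $\{\eta_{i,j}\}$ ONB, three applications of Fact~\ref{fact:VaccuumUnzip} to collapse the bigons, a final \eqref{eq:Fusion} unzip of the $r$-loop, and the same arguments for self-adjointness, idempotence, and commutativity. The one place you overcomplicate is the ``hardest step'': because the closed diagrams arising from the enriched inner product live on a 2-sphere with the $\Phi(a)$-strands passing through the bounded 3-ball (cf.\ \eqref{eq:SkeinModuleIPShorthand}), and because $\Phi(a)$ is an honest object of $\cX$, Fact~\ref{fact:VaccuumUnzip} and \eqref{eq:Fusion} apply verbatim rather than as ``analogs,'' the $\cA$-strands simply ride along unobstructed throughout, and no half-braiding bookkeeping or explicit isotopy verification is required.
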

Again, this first inner product picture is meant to be interpreted as the inner product on the enriched skein module $\cS^{\textcolor{red}{\cA}}_\cX(\bbD, \textcolor{red}{4},8)$ which can be drawn on a sphere similar to \eqref{eq:SkeinModuleIPShorthand}.
\begin{proof}
To ease the notation, we set
$$
K
:=
\frac{1}{D_\cX\sqrt{\textcolor{red}{d_ad_bd_cd_d}d_sd_td_ud_vd_wd_xd_yd_z}}.
$$
Under the usual Fourier expansion of each of the four corners in the $\{\eta\}$ ONB of $\cH_v=\cS_\cX^{\textcolor{red}{\cA}}(\bbD,\textcolor{red}{1},4)$ (assuming the internal edges match giving a vector in $P_A\cH_{\tot}$),
$$
C(\xi,\eta)
=
K
\sum_{\substack{r\in \Irr(\cX)
}}
\frac{1}{ d_r\sqrt{d_gd_hd_id_jd_kd_\ell d_md_n}}
\tikzmath{
\clip (-1,-1) rectangle (5,5);
\draw[thick, red, mid<] (1,1) .. controls ++(135:.8cm) and ++(135:.8cm) .. node[above, yshift=.1cm]{$\scriptstyle a$} (0,0);
\draw[thick, red, mid<] (3,1) .. controls ++(135:.7cm) and ++(135:.8cm) .. (3,0) node[left, xshift=-.1cm]{$\scriptstyle b$} .. controls ++(-45:.8cm) and ++(-45:.7cm) .. (4,0);
\draw[thick, red, mid<] (3,3) .. controls ++(135:.8cm) and ++(135:.8cm) .. node[above, yshift=.1cm]{$\scriptstyle c$} (4,4);
\draw[thick, red, mid<] (1,3) .. controls ++(135:.8cm) and ++(-45:.8cm) .. node[right, xshift=.05cm]{$\scriptstyle d$} (0,4);
\draw[mid>] (1,1) -- node[below]{$\scriptstyle g$} (1.5,1);
\draw[mid>] (1.5,1) .. controls ++(0:1.5cm) and ++(-90:2cm) .. node[right]{$\scriptstyle k$} (0,0);
\draw[mid>] (1,1) -- node[left]{$\scriptstyle j$} (1,1.5);
\draw[mid>] (1,1.5) .. controls ++(90:1.5cm) and ++(180:2cm) .. node[above]{$\scriptstyle n$} (0,0);
\draw[mid>] (1,2.5) -- node[left]{$\scriptstyle j$} (1,3);
\draw[mid>] (0,4) .. controls ++(180:2cm) and ++(-90:1.5cm) .. node[below]{$\scriptstyle n$} (1,2.5);
\draw[mid>] (1,3) -- node[above]{$\scriptstyle i$} (1.5,3);
\draw[mid>] (1.5,3) .. controls ++(0:1.5cm) and ++(90:2cm) .. node[right]{$\scriptstyle m$} (0,4);
\draw[knot, mid>] (2.5,1) -- node[below]{$\scriptstyle g$} (3,1);
\draw[knot, mid<] (2.5,1) .. controls ++(180:1.5cm) and ++(-90:2cm) .. node[below]{$\scriptstyle k$} (4,0);
\draw[mid>] (3,1) -- node[right]{$\scriptstyle h$} (3,1.5);
\draw[mid>] (3,1.5) .. controls ++(90:1.5cm) and ++(0:2cm) .. node[above]{$\scriptstyle \ell$} (4,0);
\draw[mid>] (3,2.5) -- node[right]{$\scriptstyle h$} (3,3);
\draw[mid>] (4,4) .. controls ++(0:2cm) and ++(-90:1.5cm) .. node[below]{$\scriptstyle \ell$} (3,2.5);
\draw[mid>] (2.5,3) -- node[above]{$\scriptstyle i$} (3,3);
\draw[mid>] (4,4) .. controls ++(90:2cm) and ++(180:1.5cm) .. node[left]{$\scriptstyle m$} (2.5,3);
\draw[mid>] (0,0) to[out=90,in=180] node[above]{$\scriptstyle s$} (1,1);
\draw[mid>] (0,0) to[out=0,in=-90] node[below]{$\scriptstyle t$} (1,1);
\draw[mid>] (4,0) to[out=180,in=-90] node[below]{$\scriptstyle u$} (3,1);
\draw[mid<] (4,0) to[out=90,in=0] node[right]{$\scriptstyle v$} (3,1);
\draw[mid<] (4,4) to[out=-90,in=0] node[right]{$\scriptstyle w$} (3,3);
\draw[mid<] (4,4) to[out=180,in=90] node[left]{$\scriptstyle x$} (3,3);
\draw[mid<] (0,4) to[out=0,in=90] node[right]{$\scriptstyle y$} (1,3);
\draw[mid>]  (0,4) to[out=-90,in=180] node[below]{$\scriptstyle z$} (1,3);
\node at (.5,2.85) {$\scriptstyle \xi_{1,2}$};
\node at (.7,.7) {$\scriptstyle \xi_{1,1}$};
\node at (3.25,.8) {$\scriptstyle \xi_{2,1}$};
\node at (3.4,3.3) {$\scriptstyle \xi_{2,2}$};
\node at (-.3,-.2) {$\scriptstyle \eta^\dag_{1,1}$};
\node at (4.6,-.2) {$\scriptstyle \overline{\eta_{2,1}}$};
\node at (4.3,4.2) {$\scriptstyle \eta^\dag_{2,2}$};
\node at (-.3,4.2) {$\scriptstyle \overline{\eta_{1,2}}$};
\draw[thick, blue, mid>] (1.5,1) --node[above]{$\scriptstyle r$} (1,1.5);
\draw[thick, blue, mid>] (1,2.5) --node[right]{$\scriptstyle r$} (1.5,3);
\draw[thick, blue, mid>] (3,1.5) --node[left]{$\scriptstyle r$} (2.5,1);
\draw[thick, blue, mid>] (2.5,3) --node[below]{$\scriptstyle r$} (3,2.5);
\fill[fill=green] (1.5,1) circle (.05cm);
\fill[fill=green] (2.5,1) circle (.05cm);
\fill[fill=purple] (3,1.5) circle (.05cm);
\fill[fill=purple] (3,2.5) circle (.05cm);
\fill[fill=yellow] (1.5,3) circle (.05cm);
\fill[fill=yellow] (2.5,3) circle (.05cm);
\fill[fill=orange] (1,1.5) circle (.05cm);
\fill[fill=orange] (1,2.5) circle (.05cm);
}\,.
$$
Indeed, each of the four closed diagrams represents the inner product in the skein module (up to coeffients) of one of the four corners of \eqref{eq:EnrichedBpAction} with an appropriate $\eta_{i,j}$.
Observe here that there is no longer a sum over $k,\ell,m,n$ as picking particular $\eta_{i,j}$ determines these labels as in the proofs of Lemma \ref{lem:PlaquetteCoefficients} and Theorem \ref{thm:ProjectToSkeinModule}.
We now apply Fact \ref{fact:VaccuumUnzip} to three places amongst these four closed diagrams corresponding to pairs of green, purple, and yellow nodes
to simplify
$$
C(\xi,\eta)
=
K
\sum_{\substack{r\in \Irr(\cX)
}}
\frac{\sqrt{d_r}}{\sqrt{d_jd_n}}
\tikzmath{
\clip (-1.5,-1.5) rectangle (5.5,5.5);
\draw[thick, red, mid<] (1,1) .. controls ++(135:.8cm) and ++(135:.8cm) .. node[above, yshift=.1cm]{$\scriptstyle a$} (0,0);
\draw[thick, red, mid<] (3,1) .. controls ++(135:.7cm) and ++(135:.8cm) .. (3,0) node[left, xshift=-.1cm]{$\scriptstyle b$} .. controls ++(-45:.8cm) and ++(-45:.7cm) .. (4,0);
\draw[thick, red, mid<] (3,3) .. controls ++(135:.8cm) and ++(135:.8cm) .. node[above, yshift=.1cm]{$\scriptstyle c$} (4,4);
\draw[thick, red, mid<] (1,3) .. controls ++(135:.8cm) and ++(-45:.8cm) .. node[right, xshift=.05cm]{$\scriptstyle d$} (0,4);
\draw[knot, mid>] (4,0) to[out=-90,in=-90] node[above]{$\scriptstyle k$} (0,0);
\draw[mid>] (4,4) to[out=0,in=0] node[left]{$\scriptstyle \ell$} (4,0);
\draw[mid>] (4,4) to[out=90,in=90] node[below]{$\scriptstyle m$} (0,4);
\draw[mid>] (1,1.5) .. controls ++(90:1.5cm) and ++(180:2cm) .. node[above]{$\scriptstyle n$} (0,0);
\draw[mid>] (0,4) .. controls ++(180:2cm) and ++(-90:1.5cm) .. node[below]{$\scriptstyle n$} (1,2.5);
\draw[knot, mid>] (1,1) -- node[below]{$\scriptstyle g$} (3,1);
\draw[mid>] (3,1) -- node[right]{$\scriptstyle h$} (3,3);
\draw[mid>] (1,3) -- node[above]{$\scriptstyle i$} (3,3);
\draw[mid>] (1,1) -- node[left]{$\scriptstyle j$} (1,1.5);
\draw[mid>] (1,2.5) -- node[left]{$\scriptstyle j$} (1,3);
\draw[mid>] (0,0) to[out=90,in=180] node[above]{$\scriptstyle s$} (1,1);
\draw[mid>] (0,0) to[out=0,in=-90] node[below]{$\scriptstyle t$} (1,1);
\draw[mid>] (4,0) to[out=180,in=-90] node[below]{$\scriptstyle u$} (3,1);
\draw[mid<] (4,0) to[out=90,in=0] node[right]{$\scriptstyle v$} (3,1);
\draw[mid<] (4,4) to[out=-90,in=0] node[right]{$\scriptstyle w$} (3,3);
\draw[mid<] (4,4) to[out=180,in=90] node[left]{$\scriptstyle x$} (3,3);
\draw[mid<] (0,4) to[out=0,in=90] node[right]{$\scriptstyle y$} (1,3);
\draw[mid>]  (0,4) to[out=-90,in=180] node[below]{$\scriptstyle z$} (1,3);
\node at (.5,2.85) {$\scriptstyle \xi_{1,2}$};
\node at (.7,.7) {$\scriptstyle \xi_{1,1}$};
\node at (3.25,.8) {$\scriptstyle \xi_{2,1}$};
\node at (3.4,3.3) {$\scriptstyle \xi_{2,2}$};
\node at (-.3,-.2) {$\scriptstyle \eta^\dag_{1,1}$};
\node at (4.6,-.2) {$\scriptstyle \overline{\eta_{2,1}}$};
\node at (4.3,4.2) {$\scriptstyle \eta^\dag_{2,2}$};
\node at (-.3,4.2) {$\scriptstyle \overline{\eta_{1,2}}$};
\draw[thick, blue, mid>] (1,2.5) to[out=45, in=180] (1.5,2.8) -- (2.5,2.8) to[out=0,in=90] (2.8,2.5) -- (2.8,1.5) to[out=-90,in=0] (2.5,1.2) -- node[above]{$\scriptstyle r$} (1.5,1.2) to[out=180, in=-45] (1,1.5);
\fill[fill=orange] (1,1.5) circle (.05cm);
\fill[fill=orange] (1,2.5) circle (.05cm);
}\,.
$$
As this diagram is a closed diagram in $\cX$, we can apply the \eqref{eq:Fusion} relation again to `unzip' along the $r$ string to obtain
the claimed formula for $C(\xi,\eta)$.

To see that $B_p$ is self-adjoint, we simply observe $C(\eta, \xi) = \overline{C(\xi,\eta)}$, and thus the matrix representation of $B_p$ is self-adjoint with respect to our distinguished ONB $\{\xi\}$ for the ground state space of $-\sum_\ell A_\ell$.
That $B_p^2=B_p$ and $[B_p,B_q]=0$ for distinct plaquettes $p,q$ follows from \eqref{eq:I=H} and \eqref{eq:Bigon 2} as depicted in \cite[\S5.4]{YanbaiZhang}.
\end{proof}

\begin{lem}[$5=6$]
\label{lem:5=6}
The product of any 5 plaquette operators around a bulk cube is equal to the product of all 6 plaquette operators around the bulk cube, with the exception that the deleted plaquette operator cannot be a boundary $\cX$-plaquette operator.
\end{lem}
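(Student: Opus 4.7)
The plan is to interpret each plaquette operator $B_p$ topologically, via the matrix-coefficient formula of Proposition~\ref{prop:MatrixCoefficientsOfBp} extended to bulk $\cA$-plaquettes and mixed $\cA$-$\cX$ plaquettes, as inserting a closed loop colored by the Kirby element $\omega_\cA = D_\cA^{-1}\sum_{a\in \Irr(\cA)} d_a\cdot a$ (or its mixed analog) parallel to the plaquette. Under this interpretation, the product $\prod_{i=1}^{6} B_{p_i}$ of all plaquette operators around a cube inserts six Kirby-colored loops, one on each face, forming the 1-skeleton of a small combinatorial 2-sphere embedded in the cube's interior.

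With this setup, the assertion reduces to a skein-theoretic statement inside the $\cA$-bulk: a closed web consisting of six $\omega_\cA$-loops on the boundary of a 3-ball equals the web obtained by deleting any single face's loop, provided the deleted face is not the $\cX$-boundary plaquette. The intermediate step is to use the analog of Lemma~\ref{lem:GluingUnitary} together with repeated application of the \eqref{eq:Fusion} and \eqref{eq:I=H} relations in $\cA$ to combine the five retained loops into a single inserted network on a disk (the cube boundary minus the deleted face). Then the ``killing'' property of the Kirby element in the braided fusion category $\cA$ (namely, that capping off this disk by the sixth $\omega_\cA$-loop pushes through the interior of the cube and contracts to the identity by 3D isotopy) shows that the sixth factor is redundant once the other five have been applied. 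This parallels how Theorem~\ref{thm:ProjectToSkeinModule} was proved by gluing neighboring plaquettes, and the necessary closed-diagram evaluations use only properties of $\cA$ (braiding, rigidity, sphericality), which is why $\cA$ being braided is essential.

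The exception for the $\cX$-boundary plaquette is forced: if the deleted plaquette were the pure $\cX$-face, then among the five retained plaquettes would be the $\cX$-bulk-facing plaquette underneath the cube, whose loop insertion lives in $\cX$ rather than $\cA$. There is no 3D isotopy through the $\cA$-bulk that can absorb such an $\cX$-insertion, since $\cX$ itself is merely a UFC without a braiding to justify the sliding. Equivalently, the half-braiding $\Phi^Z$ lets one pull $\Phi(a)$-strings off the $\cX$-boundary into the $\cA$-bulk but not the reverse, so the argument is asymmetric in precisely the way the lemma states.

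The principal obstacle is to formalize the ``3D isotopy and absorption'' step rigorously in a way compatible with the enriched lattice setup. One route is to develop a 3D enriched skein calculus paralleling \S\ref{sec:SkeinModules}, so that the product $\prod_{i=1}^{6}B_{p_i}$ is identified with an evaluation map analogous to $\eval^\dag\circ\eval$ of Theorem~\ref{thm:ProjectToSkeinModule}, and the cube relation is then the sphere-collapse identity in this calculus; the other route is a direct but lengthy matrix-coefficient calculation that mirrors the proof of Proposition~\ref{prop:MatrixCoefficientsOfBp} by applying Fact~\ref{fact:VaccuumUnzip} repeatedly around the cube and then using Fusion to close up the final $r$-loop. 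Either way, the conceptual content is the same: the braided structure of $\cA$ is what makes the cube-boundary sphere of Kirby insertions collapse, and no such collapse is available when the $\cX$-boundary plaquette is the omitted one.
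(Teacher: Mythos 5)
Your sketch captures the right picture --- the six $B_p$'s insert Kirby loops on the faces of a small combinatorial $2$-sphere, and the braiding in $\cA$ licenses sliding such a loop through the $3$D interior of the cube --- but it defers the crucial step to an unspecified formalization and omits the identity that actually drives the paper's argument. That identity is the absorption property displayed at the start of the paper's proof: the $\cX$-plaquette operator absorbs the $\cA$-plaquette operator, which is to say that the $\cX$-Kirby loop is indifferent to whether a $\Phi(a)$-strand passes on one side of it or the other, because $\Phi(a)$ lies in the image of the central functor $\Phi^Z$. Given this, the proof is a single slide: conjure a free $\cA$-Kirby loop around the front $\cX$-face (at no cost, by absorption) and homotope it along the four side faces through the cube's interior until it sits on the back face, reproducing the sixth $B_p$; when the deleted face is a bulk $\cA$-face one conjures the free loop from another bulk face using idempotence $B_p^2 = B_p$ instead of absorption and performs the same slide.

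Your route --- fuse the five retained loops into one web and then appeal to ``the killing property of the Kirby element'' to collapse the sixth --- has two genuine problems. First, killing (that an $\cA$-Kirby loop encircling a strand projects onto its transparent part) depends on modularity, and \S\ref{sec:Enriched} and \S\ref{sec:SphereAlgebra} are written precisely in the generality where $Z_2(\cA)$ may be nontrivial, so you cannot invoke it; the paper's argument avoids it entirely, using only absorption (or idempotence) together with $3$D isotopy of a single loop, neither of which needs modularity. Second, your account of the exception clause is off: if the deleted plaquette is the $\cX$-boundary face, the obstruction is not that a retained loop ``lives in $\cX$'' --- all five remaining Kirby sums run over $\Irr(\cA)$, not $\Irr(\cX)$ --- but that the absorption identity is one-directional. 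The $\cX$-Kirby loop absorbs $\Phi(a)$-strands because $\Phi^Z$ is central, but no analogous identity lets the $\cA$-Kirby loops synthesize an $\cX$-Kirby loop, so no amount of sliding the other five can manufacture the missing $\cX$-plaquette operator.
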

\begin{proof}
Suppose we have applied $B_p$ to all but the front and back faces of a cube. 
To do the most complicated case, we assume the front face is a boundary $\cX$-plaquette. 
Since
$$
\frac{1}{D_\cX}
\sum_{r\in \Irr(\cX)}
d_r
\tikzmath{
\draw[thick, red] (1.5,.5) to[out=90, in=-90] (2.5,1.3) --node[right]{$\scriptstyle \Phi(a)$} (2.5,2.2);
\filldraw[blue, thick, fill=gray!30, rounded corners=5pt] (1.6,1.1) rectangle (2.3,1.9);
\node[blue] at (1.4,1.5) {$\scriptstyle r$};
}
=
\frac{1}{D_\cX}
\sum_{s\in \Irr(\cX)}
d_s
\tikzmath{
\draw[thick, red] (1.5,.8) --node[left]{$\scriptstyle \Phi(a)$} (1.5,1.7) to[out=90,in=-90] (2.5,2.5);
\filldraw[blue, thick, fill=gray!30, rounded corners=5pt] (1.7,1.1) rectangle (2.4,1.9);
\node[blue] at (2.6,1.5) {$\scriptstyle s$};
}
\qquad\qquad
\textcolor{red}{\forall a\in \cA},
$$
the $\cX$-plaquette operator absorbs the $\cA$-plaquette operator, i.e. $B_p^\cA B_p^\cX = B_p^\cX$.
Hence if we apply an $\cX$-plaquette operator to the front of the cube, we may insert an extra $\cA$-plaquette operator around it. 
The $\cA$-plaquette operator may then be homotoped around the sides of the bulk cube as follows (suppressing constants).
\begin{align}
\tikzmath[scale=2]{
\fill[gray!30, rounded corners=5pt] (.8,1.6) rectangle (1.4,2.2);
\draw[thick, red, xshift=-.4cm, yshift=.4cm] (.9,.9) grid (2.1,2.1);
\foreach \x in {1,2}{
\foreach \y in {1,2}{
\draw[thick, red, knot] (\x,\y) -- ($ (\x,\y) + (-.45,.45) $);
}}
\foreach \x in {.6,1.6}{
\foreach \y in {1.4,2.4}{
\fill[red] (\x,\y) circle (.025cm);
}}
\draw[thick, orange, knot, rounded corners=5pt] (.7,1.7) -- (.7,2.1) -- (.9,1.9) -- (.9,1.3) -- (.7,1.5) -- (.7,1.7);
\draw[thick, orange, densely dotted, rounded corners=5pt] (1.7,1.7) -- (1.7,2.1) -- (1.9,1.9) -- (1.9,1.3) -- (1.7,1.5) -- (1.7,1.7);
\draw[thick, orange, knot, rounded corners=5pt] (1.5,2.1) -- (1.7,2.1) -- (1.5,2.3) -- (.9,2.3) -- (1.1,2.1) -- (1.5,2.1);
\draw[thick, orange, densely dotted, rounded corners=5pt] (1.5,1.1) -- (1.7,1.1) -- (1.5,1.3) -- (.9,1.3) -- (1.1,1.1) -- (1.5,1.1);
\draw[thick, knot] (.9,.9) grid (2.1,2.1);
\foreach \x in {1,2}{
\foreach \y in {1,2}{
\fill (\x,\y) circle (.025cm);
}}
\draw[thick, blue, knot, rounded corners=5pt] (1.2,1.2) rectangle (1.8,1.8);
}
\quad
&=
\quad
\tikzmath[scale=2]{
\fill[gray!30, rounded corners=5pt] (.8,1.6) rectangle (1.4,2.2);
\draw[thick, red, xshift=-.4cm, yshift=.4cm] (.9,.9) grid (2.1,2.1);
\foreach \x in {1,2}{
\foreach \y in {1,2}{
\draw[thick, red, knot] (\x,\y) -- ($ (\x,\y) + (-.45,.45) $);
}}
\foreach \x in {.6,1.6}{
\foreach \y in {1.4,2.4}{
\fill[red] (\x,\y) circle (.025cm);
}}
\draw[thick, orange, knot, rounded corners=5pt] (.7,1.7) -- (.7,2.1) -- (.9,1.9) -- (.9,1.3) -- (.7,1.5) -- (.7,1.7);
\draw[thick, orange, densely dotted, rounded corners=5pt] (1.7,1.7) -- (1.7,2.1) -- (1.9,1.9) -- (1.9,1.3) -- (1.7,1.5) -- (1.7,1.7);
\draw[thick, orange, knot, rounded corners=5pt] (1.5,2.1) -- (1.7,2.1) -- (1.5,2.3) -- (.9,2.3) -- (1.1,2.1) -- (1.5,2.1);
\draw[thick, orange, densely dotted, rounded corners=5pt] (1.5,1.1) -- (1.7,1.1) -- (1.5,1.3) -- (.9,1.3) -- (1.1,1.1) -- (1.5,1.1);
\draw[thick, knot] (.9,.9) grid (2.1,2.1);
\foreach \x in {1,2}{
\foreach \y in {1,2}{
\fill (\x,\y) circle (.025cm);
}}
\draw[thick, orange, knot, rounded corners=5pt] (1.1,1.1) rectangle (1.9,1.9);
\draw[thick, blue, knot, rounded corners=5pt] (1.2,1.2) rectangle (1.8,1.8);
}
\quad
=
\quad
\tikzmath[scale=2]{
\fill[gray!30, rounded corners=5pt] (.8,1.6) rectangle (1.4,2.2);
\draw[thick, orange, knot, rounded corners=5pt] (.9,1.5) -- (.9,2.1) -- (1.5,2.1);
\draw[thick, red, xshift=-.4cm, yshift=.4cm] (.9,.9) grid (2.1,2.1);
\foreach \x in {1,2}{
\foreach \y in {1,2}{
\draw[thick, red, knot] (\x,\y) -- ($ (\x,\y) + (-.45,.45) $);
}}
\foreach \x in {.6,1.6}{
\foreach \y in {1.4,2.4}{
\fill[red] (\x,\y) circle (.025cm);
}}
\draw[thick, orange, knot, rounded corners=5pt] (.7,1.7) -- (.7,2.1) -- (.8,1.9) -- (.8,1.3) -- (.7,1.5) -- (.7,1.7);
\draw[thick, orange, densely dotted, rounded corners=5pt] (1.7,1.7) -- (1.7,2.1) -- (1.8,1.9) -- (1.8,1.3) -- (1.7,1.5) -- (1.7,1.7);
\draw[thick, orange, knot, rounded corners=5pt] (1.5,2.2) -- (1.7,2.2) -- (1.5,2.3) -- (.9,2.3) -- (1.1,2.2) -- (1.5,2.2);
\draw[thick, orange, densely dotted, rounded corners=5pt] (1.5,1.2) -- (1.7,1.2) -- (1.5,1.3) -- (.9,1.3) -- (1.1,1.2) -- (1.5,1.2);
\draw[thick, orange, knot, rounded corners=5pt] (1.5,2.1) -- (1.9,2.1) -- (1.9,1.1) -- (.9,1.1) -- (.9,1.5);
\draw[thick, knot] (.9,.9) grid (2.1,2.1);
\foreach \x in {1,2}{
\foreach \y in {1,2}{
\fill (\x,\y) circle (.025cm);
}}
\draw[thick, blue, knot, rounded corners=5pt] (1.2,1.2) rectangle (1.8,1.8);
}
\notag
\\&=
\quad
\tikzmath[scale=2]{
\fill[gray!30, rounded corners=5pt] (.8,1.6) rectangle (1.4,2.2);
\draw[thick, orange, knot, rounded corners=5pt, xshift=-.2cm, yshift=.2cm] (.9,1.5) -- (.9,2.1) -- (1.5,2.1);
\draw[thick, red, xshift=-.4cm, yshift=.4cm] (.9,.9) grid (2.1,2.1);
\foreach \x in {1,2}{
\foreach \y in {1,2}{
\draw[thick, red, knot] (\x,\y) -- ($ (\x,\y) + (-.45,.45) $);
}}
\foreach \x in {.6,1.6}{
\foreach \y in {1.4,2.4}{
\fill[red] (\x,\y) circle (.025cm);
}}
\draw[thick, orange, knot, rounded corners=5pt, xshift=-.2cm, yshift=.2cm] (1.5,2.1) -- (1.9,2.1) -- (1.9,1.1) -- (.9,1.1) -- (.9,1.5);
\draw[thick, orange, knot, rounded corners=5pt, xshift=.1cm, yshift=-.1cm] (.7,1.7) -- (.7,2.1) -- (.8,1.9) -- (.8,1.3) -- (.7,1.5) -- (.7,1.7);
\draw[thick, orange, densely dotted, rounded corners=5pt, xshift=.1cm, yshift=-.1cm] (1.7,1.7) -- (1.7,2.1) -- (1.8,1.9) -- (1.8,1.3) -- (1.7,1.5) -- (1.7,1.7);
\draw[thick, orange, knot, rounded corners=5pt, xshift=.1cm, yshift=-.1cm] (1.5,2.2) -- (1.7,2.2) -- (1.5,2.3) -- (.9,2.3) -- (1.1,2.2) -- (1.5,2.2);
\draw[thick, orange, densely dotted, rounded corners=5pt, xshift=.1cm, yshift=-.1cm] (1.5,1.2) -- (1.7,1.2) -- (1.5,1.3) -- (.9,1.3) -- (1.1,1.2) -- (1.5,1.2);
\draw[thick, knot] (.9,.9) grid (2.1,2.1);
\foreach \x in {1,2}{
\foreach \y in {1,2}{
\fill (\x,\y) circle (.025cm);
}}
\draw[thick, blue, knot, rounded corners=5pt] (1.2,1.2) rectangle (1.8,1.8);
}
\quad
=
\quad
\tikzmath[scale=2]{
\filldraw[thick, orange, fill=gray!30, rounded corners=5pt] (.8,1.6) rectangle (1.4,2.2);
\draw[thick, red, xshift=-.4cm, yshift=.4cm] (.9,.9) grid (2.1,2.1);
\foreach \x in {1,2}{
\foreach \y in {1,2}{
\draw[thick, red, knot] (\x,\y) -- ($ (\x,\y) + (-.45,.45) $);
}}
\foreach \x in {.6,1.6}{
\foreach \y in {1.4,2.4}{
\fill[red] (\x,\y) circle (.025cm);
}}
\draw[thick, orange, knot, rounded corners=5pt] (.7,1.7) -- (.7,2.1) -- (.9,1.9) -- (.9,1.3) -- (.7,1.5) -- (.7,1.7);
\draw[thick, orange, densely dotted, rounded corners=5pt] (1.7,1.7) -- (1.7,2.1) -- (1.9,1.9) -- (1.9,1.3) -- (1.7,1.5) -- (1.7,1.7);
\draw[thick, orange, knot, rounded corners=5pt] (1.5,2.1) -- (1.7,2.1) -- (1.5,2.3) -- (.9,2.3) -- (1.1,2.1) -- (1.5,2.1);
\draw[thick, orange, densely dotted, rounded corners=5pt] (1.5,1.1) -- (1.7,1.1) -- (1.5,1.3) -- (.9,1.3) -- (1.1,1.1) -- (1.5,1.1);
\draw[thick, knot] (.9,.9) grid (2.1,2.1);
\foreach \x in {1,2}{
\foreach \y in {1,2}{
\fill (\x,\y) circle (.025cm);
}}
\draw[thick, blue, knot, rounded corners=5pt] (1.2,1.2) rectangle (1.8,1.8);
}
\label{eq:5=6}
\end{align}

The case when the front face is not an $\cX$-boundary plaquette follows from the same manipulation, using that $\cA$-plaquette operator is an idempotent. 
\end{proof}

Suppose we have a 3-ball $\cB$ in our lattice, which may intersect the $\cX$-boundary.
We have an \emph{evaluation map} $\eval$ from the tensor product over the $\cH_v$ corresponding to sites $v$ in the ball to the $\cA$-enriched skein module with the appropriate number of boundary points.
We let $N_p(\partial \cB)$ denote the number of plaquettes in the interior 
of $\cB$ on the $\cX$-boundary,
and we let $N_p(\cB^\circ)$ denote the number of plaquettes in the interior of $\cB$ which are not on the $\cX$-boundary.
(The $B_p$ terms for $p\in \partial \cB$ sum over simples in $\cX$, while the $B_p$ terms for $p\in \cB^\circ$ sum over simples in $\cA$.)
We let $N_c(\cB^\circ)$ denote the number of cubes in the interior of $\cB$.
The following proposition is the enriched analog of Theorem~\ref{thm:ProjectToSkeinModule} above.

\begin{thm}
Denoting the global dimensions of $\cA,\cX$ by $D_\cA, D_\cX$ respectively,
$$
\prod_{p\subset \cB} B_p= 
D_\cA^{-\big(N_p(\cB^\circ)-N_c(\cB^\circ)\big)}
D_\cX^{-N_p(\partial\cB)}
\cdot \eval^\dag\circ \eval.
$$ 
\end{thm}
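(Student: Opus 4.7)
The plan is to extend the inductive gluing argument of Theorem~\ref{thm:ProjectToSkeinModule} to the enriched setting, doing induction on the number of plaquettes and cubes in $\cB$ while carefully tracking the scalar factors contributed by each of the three types of plaquettes ($\cX$-boundary, mixed, and $\cA$-bulk) and the cube redundancy coming from Lemma~\ref{lem:5=6}.

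For the base cases I would treat a single plaquette of each type. For $\cX$-boundary plaquettes, Proposition~\ref{prop:MatrixCoefficientsOfBp} already gives the matrix coefficients, and reinterpreting the closed diagram there as the inner product on the enriched skein module $\cS^{\cA}_\cX(\bbD,4,8)$ using the shorthand \eqref{eq:SkeinModuleIPShorthand} yields $B_p = D_\cX^{-1}\cdot\eval^\dag\circ\eval$ for a single plaquette. A directly analogous computation handles bulk $\cA$-plaquettes and mixed plaquettes, with $D_\cA^{-1}$ replacing $D_\cX^{-1}$ where appropriate.

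For the inductive step, I would add one plaquette at a time to a region $\cB_0\subset\cB$ for which the formula is already known. When the new plaquette $p$ shares a single edge with $\partial\cB_0$, the argument proceeds essentially as in the un-enriched case: the gluing of enriched skein modules along the shared edge is unitary by the obvious enriched analog of Lemma~\ref{lem:GluingUnitary}, and $B_p$ contributes precisely its own scalar factor. The half-braidings $\sigma_{\Phi(a)}$ for $a\in\cA$ (available because $\Phi^Z:\cA\to Z(\cX)$ is a braided central functor) are used exactly as in Proposition~\ref{prop:DescriptionOfIm(PA)} to identify the glued enriched skein module correctly at mixed plaquettes, using that the ground state space locally looks like hom spaces in $\cX$ into tensor powers of $FI(1_\cX)$ and $F\Phi^Z(a)$.

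The essential new input is when the added plaquette completes a bulk cube. Here Lemma~\ref{lem:5=6} shows that the product of all $6$ plaquettes around the cube equals the product of any $5$, so the naive $D_\cA^{-1}$ contribution expected from the sixth plaquette is redundant and must be compensated by a compensating factor of $D_\cA$; this is exactly what produces the exponent $-\big(N_p(\cB^\circ)-N_c(\cB^\circ)\big)$ in the statement. The main obstacle is the bookkeeping during the inductive gluing, particularly for mixed plaquettes where $\cA$-strings pass between the bulk and the boundary: here one must invoke the half-braiding of $\Phi^Z(a)$ to reorganize the enriched skein module along the shared edge and check that the gluing map is unitary after accounting for all scalar normalizations in the inner product \eqref{eq:EnrichedSkeinModuleInnerProduct}. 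Once this combinatorial setup is in place, Lemma~\ref{lem:5=6} handles the cube redundancy automatically and the scalar counting matches the stated exponents.
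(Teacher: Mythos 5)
Your proposal uses the same two ingredients as the paper's proof, namely inductive gluing of enriched skein modules and Lemma~\ref{lem:5=6}, and your base cases and treatment of adjacent $\cX$-boundary plaquettes match the paper. The divergence, and the gap, is in how the bulk cube redundancy is handled. The paper first selects a subset $\cB'\subset\cB^\circ$ by deleting exactly one non-$\cX$-boundary plaquette per cube using a maximal spanning tree of the dual lattice; by Lemma~\ref{lem:5=6}, $\prod_{p\subset\cB'}B_p=\prod_{p\subset\cB^\circ}B_p$, and crucially the 2-complex $\bigcup_{p\in\cB'}p$ is simply connected at every inductive stage, so a deformation retraction onto $\partial\cB$ (using that $\cA$ is braided to project the red strings unambiguously) identifies the intermediate glued skein module with one of the form $\cS^{\cA}_\cX(\bbD,m,n)$, and each plaquette in $\cB'$ contributes its scalar cleanly.

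Your proposal instead adds every plaquette of $\cB^\circ$ and then ``compensates by $D_\cA$'' whenever a cube is completed, but this compensation is not a local scalar correction: when you add the sixth face of a cube, the target of the running evaluation map changes (the 2-complex built up so far is generically not simply connected before the cube closes up), so the inductive hypothesis is not in a form where Lemma~\ref{lem:5=6} and a scalar rescaling immediately yield the next step. To make your route rigorous you would either need to enforce an ordering of plaquettes that keeps the intermediate complex simply connected (which is essentially reconstructing the spanning tree choice), or strengthen the inductive hypothesis to cover the non-simply-connected intermediate configurations and show the 5=6 reduction repairs them. The paper sidesteps this by choosing $\cB'$ once and for all before starting the induction.
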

\begin{proof}
The proof proceeds by adjoining adjacent plaquettes by induction.
We provide the detailed argument for 2 adjacent plaquettes on the $\cX$-boundary, and we leave the remaining details to the reader.
That is, we will prove that the matrix coefficients of $B_oB_p$ for adjacent plaquettes $o,p$ are given by
\begin{align*}
C_{o,p}(\xi',\xi)
&=
\left\langle
\tikzmath{
\begin{scope}
\draw[step=1.0,black, knot] (.5,.5) grid (3.5,2.5);
\end{scope}
\foreach \i in {1,2,3} {
\foreach \j in {1,2} {
        \draw[thick, red] ($ (\i,\j) $) -- ($ (\i,\j) + (-.3,.3)$);
        \node at ($ (\i,\j) + (.25,-.15) $) {$\scriptstyle \xi_{\i,\j}$};
}}
\node[red] at (.7,1.5) {$\scriptstyle a$};
\node[red] at (1.7,1.5) {$\scriptstyle b$};
\node[red] at (2.7,1.5) {$\scriptstyle c$};
\node[red] at (2.7,2.5) {$\scriptstyle d$};
\node[red] at (1.7,2.5) {$\scriptstyle e$};
\node[red] at (.7,2.5) {$\scriptstyle f$};
\node at (.3,1) {$\scriptstyle q$};
\node at (1,.3) {$\scriptstyle r$};
\node at (2,.3) {$\scriptstyle s$};
\node at (3,.3) {$\scriptstyle t$};
\node at (3.7,1) {$\scriptstyle u$};
\node at (3.7,2) {$\scriptstyle v$};
\node at (3,2.7) {$\scriptstyle w$};
\node at (2,2.7) {$\scriptstyle x$};
\node at (1,2.7) {$\scriptstyle y$};
\node at (.3,2) {$\scriptstyle z$};
\node at (1.4,1.4) {$\scriptstyle o$};
\node at (2.4,1.4) {$\scriptstyle p$};
\node at (1.15,1.5) {$\scriptstyle g$};
\node at (2.15,1.5) {$\scriptstyle h$};
\node at (3.15,1.5) {$\scriptstyle i$};
\node at (1.7,.9) {$\scriptstyle j$};
\node at (2.7,.9) {$\scriptstyle k$};
\node at (2.7,1.9) {$\scriptstyle \ell$};
\node at (1.7,1.9) {$\scriptstyle m$};
}
\Bigg|
B_o B_p
\tikzmath{
\begin{scope}
\draw[step=1.0,black, knot] (.5,.5) grid (3.5,2.5);
\end{scope}
\foreach \i in {1,2,3} {
\foreach \j in {1,2} {
        \draw[thick, red] ($ (\i,\j) $) -- ($ (\i,\j) + (-.3,.3)$);
        \node at ($ (\i,\j) + (.25,-.15) $) {$\scriptstyle \xi'_{\i,\j}$};
}}
\node[red] at (.7,1.5) {$\scriptstyle a$};
\node[red] at (1.7,1.5) {$\scriptstyle b$};
\node[red] at (2.7,1.5) {$\scriptstyle c$};
\node[red] at (2.7,2.5) {$\scriptstyle d$};
\node[red] at (1.7,2.5) {$\scriptstyle e$};
\node[red] at (.7,2.5) {$\scriptstyle f$};
\node at (.3,1) {$\scriptstyle q$};
\node at (1,.3) {$\scriptstyle r$};
\node at (2,.3) {$\scriptstyle s$};
\node at (3,.3) {$\scriptstyle t$};
\node at (3.7,1) {$\scriptstyle u$};
\node at (3.7,2) {$\scriptstyle v$};
\node at (3,2.7) {$\scriptstyle w$};
\node at (2,2.7) {$\scriptstyle x$};
\node at (1,2.7) {$\scriptstyle y$};
\node at (.3,2) {$\scriptstyle z$};
\node at (1.4,1.4) {$\scriptstyle o$};
\node at (2.4,1.4) {$\scriptstyle p$};
\node at (1.15,1.5) {$\scriptstyle g'$};
\node at (2.15,1.5) {$\scriptstyle h'$};
\node at (3.15,1.5) {$\scriptstyle i'$};
\node at (1.7,.9) {$\scriptstyle j'$};
\node at (2.7,.9) {$\scriptstyle k'$};
\node at (2.7,1.9) {$\scriptstyle \ell'$};
\node at (1.7,1.9) {$\scriptstyle m'$};
}
\right\rangle
\\&=
\frac{1}{D^2_\cX\sqrt{\textcolor{red}{d_a\cdots d_f}d_q\cdots d_z}}
\tikzmath{
\begin{scope}
\clip (1.2,1.8) rectangle (4.5,3.6);
\begin{scope}[shift={(-.3,.3)}]
\draw[step=1.0,black] (0.5,0.5) grid (5,4);
\end{scope}
\draw[step=1.0,black, knot] (1,1) grid (5,4);
\end{scope}
\foreach \j in {2,3} {
\draw (1.2,\j) arc (270:90:.15cm);
\draw (4.5,\j) arc (-90:90:.15cm);
};
\foreach \i in {2,3,4} {
    \draw (\i,3.6) arc (0:180:.15cm);
    \draw (\i,1.8) arc (0:-180:.15cm);
};
\foreach \i in {1,2,3} {
\foreach \j in {1,2} {
        \draw[thick, red] ($ (\i,\j) + (1,1)$) -- ($ (\i,\j) + (.7,1.3)$);
        \node at ($ (\i,\j) + (1.25,.85) $) {$\scriptstyle \xi'_{\i,\j}$};
        \node at ($ (\i,\j) + (.45,1.45) $) {$\scriptstyle \overline{\xi_{\i,\j}}$};
}}
}\,,
\end{align*}
where $\xi,\xi'$ are choices of 6 vectors from our distinguished ONB of $\cH_v$.
Again, we have suppressed shadings on squares for readability.
This will exactly show that for every $\xi, \xi'$ in an ONB for 
$\cS^{\textcolor{red}{\cA}}_\cX(\bbD, \textcolor{red}{6},10)$,
$$
\langle \xi | B_oB_p \xi'\rangle_{\bigotimes_{v\in \cB} \cH_v} 
= 
D_\cX^{-2}
\langle \eval\xi | \eval\xi'\rangle_{\cS^{\textcolor{red}{\cA}}_\cX(\bbD, \textcolor{red}{6},10)}
\qquad\Longleftrightarrow\qquad
B_oB_p=D_\cX^{-2}\eval^\dag\circ \eval
$$
on the ball $\cB$ which only contains the 2 boundary plaquettes $o$ and $p$.

To ease the notation, we set
\begin{align*}
K:= \frac{1}{\sqrt{\textcolor{red}{d_a\cdots d_f}d_q\cdots d_z}},
\displaybreak[1]\\
K_o&:=\frac{\sqrt{d_jd_hd_md_g}}{\sqrt{d_{j'}d_{h''}d_{m'}d_{g'}}}
&
K_o'&:=\frac{1}{\sqrt{d_jd_{j'}d_{h}d_md_{m'}d_gd_{g'}}},
\displaybreak[1]\\
K_p(h'')&:=\frac{\sqrt{d_kd_id_\ell d_{h''}}}{\sqrt{d_{k'}d_{i'}d_{\ell'}d_{h'}}}
&
K_p'&:=\frac{1}{\sqrt{d_kd_{k'}d_id_{i'}d_\ell d_{\ell'}d_{h'}}}.
\end{align*}
In the calculation below, we label strings and vertices as much as possible, but sometimes we omit labels that can be determined from the previous diagram when the diagram is getting very intricate.
Moreover, since we have fixed $\eta_{i,j}\in \operatorname{ONB}$, some of the sums over simples from the plaquette operators collapse as in the proofs of Lemma \ref{lem:PlaquetteCoefficients}, Theorem \ref{thm:ProjectToSkeinModule}, and Proposition \ref{prop:MatrixCoefficientsOfBp}.
Again, we suppress shaded squares for readability.
\begin{align*}
C_{o,p}(\xi',\xi)
&=
\left\langle
\tikzmath{
\begin{scope}
\draw[step=1.0,black, knot] (.5,.5) grid (3.5,2.5);
\end{scope}
\foreach \i in {1,2,3} {
\foreach \j in {1,2} {
        \draw[thick, red] ($ (\i,\j) $) -- ($ (\i,\j) + (-.3,.3)$);
        \node at ($ (\i,\j) + (.25,-.15) $) {$\scriptstyle \xi_{\i,\j}$};
}}
\node[red] at (.7,1.5) {$\scriptstyle a$};
\node[red] at (1.7,1.5) {$\scriptstyle b$};
\node[red] at (2.7,1.5) {$\scriptstyle c$};
\node[red] at (2.7,2.5) {$\scriptstyle d$};
\node[red] at (1.7,2.5) {$\scriptstyle e$};
\node[red] at (.7,2.5) {$\scriptstyle f$};
\node at (.3,1) {$\scriptstyle q$};
\node at (1,.3) {$\scriptstyle r$};
\node at (2,.3) {$\scriptstyle s$};
\node at (3,.3) {$\scriptstyle t$};
\node at (3.7,1) {$\scriptstyle u$};
\node at (3.7,2) {$\scriptstyle v$};
\node at (3,2.7) {$\scriptstyle w$};
\node at (2,2.7) {$\scriptstyle x$};
\node at (1,2.7) {$\scriptstyle y$};
\node at (.3,2) {$\scriptstyle z$};
\node at (1.4,1.4) {$\scriptstyle o$};
\node at (2.4,1.4) {$\scriptstyle p$};
\node at (1.15,1.5) {$\scriptstyle g$};
\node at (2.15,1.5) {$\scriptstyle h$};
\node at (3.15,1.5) {$\scriptstyle i$};
\node at (1.7,.9) {$\scriptstyle j$};
\node at (2.7,.9) {$\scriptstyle k$};
\node at (2.7,1.9) {$\scriptstyle \ell$};
\node at (1.7,1.9) {$\scriptstyle m$};
}
\Bigg|
B_o B_p
\tikzmath{
\begin{scope}
\draw[step=1.0,black, knot] (.5,.5) grid (3.5,2.5);
\end{scope}
\foreach \i in {1,2,3} {
\foreach \j in {1,2} {
        \draw[thick, red] ($ (\i,\j) $) -- ($ (\i,\j) + (-.3,.3)$);
        \node at ($ (\i,\j) + (.25,-.15) $) {$\scriptstyle \xi'_{\i,\j}$};
}}
\node[red] at (.7,1.5) {$\scriptstyle a$};
\node[red] at (1.7,1.5) {$\scriptstyle b$};
\node[red] at (2.7,1.5) {$\scriptstyle c$};
\node[red] at (2.7,2.5) {$\scriptstyle d$};
\node[red] at (1.7,2.5) {$\scriptstyle e$};
\node[red] at (.7,2.5) {$\scriptstyle f$};
\node at (.3,1) {$\scriptstyle q$};
\node at (1,.3) {$\scriptstyle r$};
\node at (2,.3) {$\scriptstyle s$};
\node at (3,.3) {$\scriptstyle t$};
\node at (3.7,1) {$\scriptstyle u$};
\node at (3.7,2) {$\scriptstyle v$};
\node at (3,2.7) {$\scriptstyle w$};
\node at (2,2.7) {$\scriptstyle x$};
\node at (1,2.7) {$\scriptstyle y$};
\node at (.3,2) {$\scriptstyle z$};
\node at (1.4,1.4) {$\scriptstyle o$};
\node at (2.4,1.4) {$\scriptstyle p$};
\node at (1.15,1.5) {$\scriptstyle g'$};
\node at (2.15,1.5) {$\scriptstyle h'$};
\node at (3.15,1.5) {$\scriptstyle i'$};
\node at (1.7,.9) {$\scriptstyle j'$};
\node at (2.7,.9) {$\scriptstyle k'$};
\node at (2.7,1.9) {$\scriptstyle \ell'$};
\node at (1.7,1.9) {$\scriptstyle m'$};
}
\right\rangle
\displaybreak[1]\\&=
\frac{1}{\cD_\cX^2}
\sum_{n',h''\in \Irr(\cX)} 
\frac{K_p(h'')}{d_{n'}}
\left\langle
\tikzmath{
\begin{scope}
\draw[step=1.0,black, knot] (.5,.5) grid (3.5,2.5);
\end{scope}
\foreach \i in {1,2,3} {
\foreach \j in {1,2} {
        \draw[thick, red] ($ (\i,\j) $) -- ($ (\i,\j) + (-.3,.3)$);
}}
\node[red] at (.7,1.5) {$\scriptstyle a$};
\node[red] at (1.7,1.5) {$\scriptstyle b$};
\node[red] at (2.7,1.5) {$\scriptstyle c$};
\node[red] at (2.7,2.5) {$\scriptstyle d$};
\node[red] at (1.7,2.5) {$\scriptstyle e$};
\node[red] at (.7,2.5) {$\scriptstyle f$};
\node at (.3,1) {$\scriptstyle q$};
\node at (1,.3) {$\scriptstyle r$};
\node at (2,.3) {$\scriptstyle s$};
\node at (3,.3) {$\scriptstyle t$};
\node at (3.7,1) {$\scriptstyle u$};
\node at (3.7,2) {$\scriptstyle v$};
\node at (3,2.7) {$\scriptstyle w$};
\node at (2,2.7) {$\scriptstyle x$};
\node at (1,2.7) {$\scriptstyle y$};
\node at (.3,2) {$\scriptstyle z$};
\node at (1.15,1.5) {$\scriptstyle g$};
\node at (2.15,1.5) {$\scriptstyle h$};
\node at (3.15,1.5) {$\scriptstyle i$};
\node at (1.5,.9) {$\scriptstyle j$};
\node at (2.5,.9) {$\scriptstyle k$};
\node at (2.5,1.9) {$\scriptstyle \ell$};
\node at (1.5,1.9) {$\scriptstyle m$};
}
\Bigg|
B_o
\tikzmath{
\begin{scope}
\draw[step=1.0,black, knot] (.5,.5) grid (3.5,2.5);
\end{scope}
\foreach \i in {1,2,3} {
\foreach \j in {1,2} {
        \draw[thick, red] ($ (\i,\j) $) -- ($ (\i,\j) + (-.3,.3)$);
}}
\draw[thick, cyan] (2.3,1) -- (2,1.3);
\draw[knot, thick, cyan] (2.7,1) -- (3,1.3);
\draw[thick, cyan] (2.3,2) -- (2,1.7);
\draw[thick, cyan] (2.7,2) -- (3,1.7);
\node[cyan] at (2.3,1.3) {$\scriptstyle n'$};
\fill[fill=green] (2.3,1) circle (.05cm);
\fill[fill=green] (2.7,1) circle (.05cm);
\fill[fill=purple] (3,1.3) circle (.05cm);
\fill[fill=purple] (3,1.7) circle (.05cm);
\fill[fill=yellow] (2.3,2) circle (.05cm);
\fill[fill=yellow] (2.7,2) circle (.05cm);
\fill[fill=orange] (2,1.3) circle (.05cm);
\fill[fill=orange] (2,1.7) circle (.05cm);
\node[red] at (.7,1.5) {$\scriptstyle a$};
\node[red] at (1.6,1.4) {$\scriptstyle b$};
\node[red] at (2.7,1.5) {$\scriptstyle c$};
\node[red] at (2.7,2.5) {$\scriptstyle d$};
\node[red] at (1.7,2.5) {$\scriptstyle e$};
\node[red] at (.7,2.5) {$\scriptstyle f$};
\node at (.3,1) {$\scriptstyle q$};
\node at (1,.3) {$\scriptstyle r$};
\node at (2,.3) {$\scriptstyle s$};
\node at (3,.3) {$\scriptstyle t$};
\node at (3.7,1) {$\scriptstyle u$};
\node at (3.7,2) {$\scriptstyle v$};
\node at (3,2.7) {$\scriptstyle w$};
\node at (2,2.7) {$\scriptstyle x$};
\node at (1,2.7) {$\scriptstyle y$};
\node at (.3,2) {$\scriptstyle z$};
\node at (1.85,1.15) {$\scriptstyle h'$};
\node at (1.85,1.5) {$\scriptstyle h''$};
\node at (1.85,1.85) {$\scriptstyle h'$};
\node at (3.15,1.85) {$\scriptstyle i'$};
\node at (3.15,1.5) {$\scriptstyle i$};
\node at (3.15,1.15) {$\scriptstyle i'$};
\node at (2.15,.9) {$\scriptstyle k'$};
\node at (2.5,.9) {$\scriptstyle k$};
\node at (2.85,.9) {$\scriptstyle k'$};
\node at (2.15,2.1) {$\scriptstyle \ell'$};
\node at (2.5,2.1) {$\scriptstyle \ell$};
\node at (2.85,2.1) {$\scriptstyle \ell'$};
}
\right\rangle
\displaybreak[1]\\&=
\frac{1}{\cD_\cX^2}
\sum_{n,n',h''\in \Irr(\cX)} 
\frac{K_oK_p(h'')}{d_nd_{n'}}
\left\langle
\tikzmath{
\begin{scope}
\draw[step=1.0,black, knot] (.5,.5) grid (3.5,2.5);
\end{scope}
\foreach \i in {1,2,3} {
\foreach \j in {1,2} {
        \draw[thick, red] ($ (\i,\j) $) -- ($ (\i,\j) + (-.3,.3)$);
}}
\node[red] at (.7,1.5) {$\scriptstyle a$};
\node[red] at (1.7,1.5) {$\scriptstyle b$};
\node[red] at (2.7,1.5) {$\scriptstyle c$};
\node[red] at (2.7,2.5) {$\scriptstyle d$};
\node[red] at (1.7,2.5) {$\scriptstyle e$};
\node[red] at (.7,2.5) {$\scriptstyle f$};
\node at (.3,1) {$\scriptstyle q$};
\node at (1,.3) {$\scriptstyle r$};
\node at (2,.3) {$\scriptstyle s$};
\node at (3,.3) {$\scriptstyle t$};
\node at (3.7,1) {$\scriptstyle u$};
\node at (3.7,2) {$\scriptstyle v$};
\node at (3,2.7) {$\scriptstyle w$};
\node at (2,2.7) {$\scriptstyle x$};
\node at (1,2.7) {$\scriptstyle y$};
\node at (.3,2) {$\scriptstyle z$};
\node at (1.15,1.5) {$\scriptstyle g$};
\node at (2.15,1.5) {$\scriptstyle h$};
\node at (3.15,1.5) {$\scriptstyle i$};
\node at (1.5,.9) {$\scriptstyle j$};
\node at (2.5,.9) {$\scriptstyle k$};
\node at (2.5,1.9) {$\scriptstyle \ell$};
\node at (1.5,1.9) {$\scriptstyle m$};
}
\Bigg|
\tikzmath{
\begin{scope}
\draw[step=1.0,black, knot] (.5,.5) grid (3.5,2.5);
\end{scope}
\foreach \i in {1,2,3} {
\foreach \j in {1,2} {
        \draw[thick, red] ($ (\i,\j) $) -- ($ (\i,\j) + (-.3,.3)$);
}}
\draw[thick, blue] (1.3,1) -- (1,1.3);
\draw[knot, thick, blue] (1.7,1) -- (2,1.3);
\draw[thick, blue] (1.3,2) -- (1,1.7);
\draw[thick, blue] (1.7,2) -- (2,1.7);
\fill[fill=green] (1.3,1) circle (.05cm);
\fill[fill=green] (1.7,1) circle (.05cm);
\fill[fill=purple] (2,1.3) circle (.05cm);
\fill[fill=purple] (2,1.7) circle (.05cm);
\fill[fill=yellow] (1.3,2) circle (.05cm);
\fill[fill=yellow] (1.7,2) circle (.05cm);
\fill[fill=orange] (1,1.3) circle (.05cm);
\fill[fill=orange] (1,1.7) circle (.05cm);
\node[blue] at (1.3,1.3) {$\scriptstyle n$};
\draw[thick, cyan] (2.3,1) -- (2,1.2);
\draw[knot, thick, cyan] (2.7,1) -- (3,1.3);
\draw[thick, cyan] (2.3,2) -- (2,1.8);
\draw[thick, cyan] (2.7,2) -- (3,1.7);
\node[cyan] at (2.3,1.3) {$\scriptstyle n'$};
\node[red] at (.7,1.5) {$\scriptstyle a$};
\node[red] at (1.6,1.4) {$\scriptstyle b$};
\node[red] at (2.7,1.5) {$\scriptstyle c$};
\node[red] at (2.7,2.5) {$\scriptstyle d$};
\node[red] at (1.7,2.5) {$\scriptstyle e$};
\node[red] at (.7,2.5) {$\scriptstyle f$};
\node at (.3,1) {$\scriptstyle q$};
\node at (1,.3) {$\scriptstyle r$};
\node at (2,.3) {$\scriptstyle s$};
\node at (3,.3) {$\scriptstyle t$};
\node at (3.7,1) {$\scriptstyle u$};
\node at (3.7,2) {$\scriptstyle v$};
\node at (3,2.7) {$\scriptstyle w$};
\node at (2,2.7) {$\scriptstyle x$};
\node at (1,2.7) {$\scriptstyle y$};
\node at (.3,2) {$\scriptstyle z$};
\node at (.85,1.85) {$\scriptstyle g'$};
\node at (.85,1.5) {$\scriptstyle g$};
\node at (.85,1.15) {$\scriptstyle g'$};
\node at (1.85,1.5) {$\scriptstyle h$};
\node at (3.15,1.85) {$\scriptstyle i'$};
\node at (3.15,1.5) {$\scriptstyle i$};
\node at (3.15,1.15) {$\scriptstyle i'$};
\node at (1.15,.9) {$\scriptstyle j'$};
\node at (1.5,.9) {$\scriptstyle j$};
\node at (1.85,.9) {$\scriptstyle j'$};
\node at (2.15,.9) {$\scriptstyle k'$};
\node at (2.5,.9) {$\scriptstyle k$};
\node at (2.85,.9) {$\scriptstyle k'$};
\node at (2.15,2.1) {$\scriptstyle \ell'$};
\node at (2.5,2.1) {$\scriptstyle \ell$};
\node at (2.85,2.1) {$\scriptstyle \ell'$};
\node at (1.15,2.1) {$\scriptstyle m'$};
\node at (1.5,2.1) {$\scriptstyle m$};
\node at (1.85,2.1) {$\scriptstyle m'$};
}
\right\rangle
\displaybreak[1]\\&=
\frac{K}{\cD_\cX^2}
\sum_{n,n',h''\in \Irr(\cX)} 
\frac{K_o' K_p'}{d_nd_{n'}}
\tikzmath{
\clip (-1,-1) rectangle (9,5);
\draw[thick, red, mid<] (1,1) .. controls ++(135:.8cm) and ++(135:.8cm) .. node[above, yshift=.1cm]{$\scriptstyle a$} (0,0);
\draw[thick, red, mid<] (3,1) .. controls ++(135:.7cm) and ++(135:.8cm) .. (3,0) node[left, xshift=-.1cm]{$\scriptstyle b$} .. controls ++(-45:.8cm) and ++(-45:.7cm) .. (4,0);
\draw[thick, red, mid<] (3,3) .. controls ++(135:.8cm) and ++(135:.8cm) .. node[above, yshift=.1cm]{$\scriptstyle e$} (4,4);
\draw[thick, red, mid<] (1,3) .. controls ++(135:.8cm) and ++(-45:.8cm) .. node[right, xshift=.05cm]{$\scriptstyle f$} (0,4);
\draw[mid>] (1,1) -- node[below]{$\scriptstyle j'$} (1.5,1);
\draw[mid>] (1.5,1) .. controls ++(0:1.5cm) and ++(-90:2cm) .. node[right]{$\scriptstyle j$} (0,0);
\draw[mid>] (1,1) -- node[left]{$\scriptstyle g'$} (1,1.5);
\draw[mid>] (1,1.5) .. controls ++(90:1.5cm) and ++(180:2cm) .. node[above]{$\scriptstyle g$} (0,0);
\draw[mid>] (1,2.5) -- node[left]{$\scriptstyle g'$} (1,3);
\draw[mid>] (0,4) .. controls ++(180:2cm) and ++(-90:1.5cm) .. node[below]{$\scriptstyle g$} (1,2.5);
\draw[mid>] (1,3) -- node[above]{$\scriptstyle m'$} (1.5,3);
\draw[mid>] (1.5,3) .. controls ++(0:1.5cm) and ++(90:2cm) .. node[right]{$\scriptstyle m$} (0,4);
\draw[knot, mid>] (2.5,1) -- node[below]{$\scriptstyle j'$} (3,1);
\draw[knot, mid<] (2.5,1) .. controls ++(180:1.5cm) and ++(-90:2cm) .. node[below]{$\scriptstyle j$} (4,0);
\draw[mid>] (3,1) -- (3,1.5); 
\draw[mid>] (3,1.5) .. controls ++(90:1.5cm) and ++(0:2cm) .. node[above]{$\scriptstyle h$} (4,0);
\draw[mid>] (3,2.5) -- (3,3); 
\draw[mid>] (4,4) .. controls ++(0:2cm) and ++(-90:1.5cm) .. node[below]{$\scriptstyle h$} (3,2.5);
\draw[mid>] (2.5,3) -- node[above]{$\scriptstyle m'$} (3,3);
\draw[mid>] (4,4) .. controls ++(90:2cm) and ++(180:1.5cm) .. node[left]{$\scriptstyle m$} (2.5,3);
\draw[mid>] (0,0) to[out=90,in=180] node[above]{$\scriptstyle q$} (1,1);
\draw[mid>] (0,0) to[out=0,in=-90] node[below]{$\scriptstyle r$} (1,1);
\draw[mid>] (4,0) to[out=180,in=-90] node[below]{$\scriptstyle s$} (3,1);
\draw[mid<] (4,0) to[out=90,in=0] node[right]{$\scriptstyle k$} (3.5,1) -- (3,1);
\draw[mid<] (4,4) to[out=180,in=90] node[left]{$\scriptstyle x$} (3,3);
\draw[mid<] (4,4) to[out=-90,in=0] node[right]{$\scriptstyle \ell$} (3.5,3) -- (3,3);
\draw[mid<] (0,4) to[out=0,in=90] node[right]{$\scriptstyle y$} (1,3);
\draw[mid>]  (0,4) to[out=-90,in=180] node[below]{$\scriptstyle z$} (1,3);
\node at (.4,2.85) {$\scriptstyle \xi_{1,2}$};
\node at (.7,.7) {$\scriptstyle \xi_{1,1}$};
\node at (3.25,.8) {$\scriptstyle \xi_{2,1}$};
\node at (3.4,3.3) {$\scriptstyle \xi_{2,2}$};
\node at (-.3,-.2) {$\scriptstyle \eta^\dag_{1,1}$};
\node at (4.6,-.2) {$\scriptstyle \overline{\eta_{2,1}}$};
\node at (4.3,4.2) {$\scriptstyle \eta^\dag_{2,2}$};
\node at (-.3,4.2) {$\scriptstyle \overline{\eta_{1,2}}$};
\draw[thick, blue, mid>] (1.5,1) --node[above]{$\scriptstyle n$} (1,1.5);
\draw[thick, blue, mid>] (1,2.5) --node[right]{$\scriptstyle n$} (1.5,3);
\draw[thick, blue, mid>] (3,1.5) --node[left]{$\scriptstyle n$} (2.5,1);
\draw[thick, blue, mid>] (2.5,3) --node[below]{$\scriptstyle n$} (3,2.5);
\draw[thick, cyan, mid>] (3.5,1) --node[above]{$\scriptstyle n'$} (3,1.3);
\draw[thick, cyan, mid>] (3,2.7) --node[right]{$\scriptstyle n'$} (3.5,3);
\fill[fill=green] (1.5,1) circle (.05cm);
\fill[fill=green] (2.5,1) circle (.05cm);
\fill[fill=purple] (3,1.5) circle (.05cm);
\fill[fill=purple] (3,2.5) circle (.05cm);
\fill[fill=yellow] (1.5,3) circle (.05cm);
\fill[fill=yellow] (2.5,3) circle (.05cm);
\fill[fill=orange] (1,1.5) circle (.05cm);
\fill[fill=orange] (1,2.5) circle (.05cm);
\draw[thick, red, mid<] (6,1) .. controls ++(135:.7cm) and ++(135:.8cm) .. (6,0) node[left, xshift=-.1cm]{$\scriptstyle c$} .. controls ++(-45:.8cm) and ++(-45:.7cm) .. (7,0);
\draw[thick, red, mid<] (6,3) .. controls ++(135:.8cm) and ++(135:.8cm) .. node[above, yshift=.1cm]{$\scriptstyle d$} (7,4);
\draw[knot, mid>] (5.5,1) -- node[below]{$\scriptstyle k'$} (6,1);
\draw[knot, mid<] (5.5,1) .. controls ++(180:1.5cm) and ++(-90:2cm) .. node[below]{$\scriptstyle k$} (7,0);
\draw[mid>] (6,1) -- node[right]{$\scriptstyle i'$} (6,1.5);
\draw[mid>] (6,1.5) .. controls ++(90:1.5cm) and ++(0:2cm) .. node[above]{$\scriptstyle i$} (7,0);
\draw[mid>] (6,2.5) -- node[right]{$\scriptstyle i'$} (6,3);
\draw[mid>] (7,4) .. controls ++(0:2cm) and ++(-90:1.5cm) .. node[below]{$\scriptstyle i$} (6,2.5);
\draw[mid>] (5.5,3) -- node[above]{$\scriptstyle \ell'$} (6,3);
\draw[mid>] (7,4) .. controls ++(90:2cm) and ++(180:1.5cm) .. node[left]{$\scriptstyle \ell$} (5.5,3);
\draw[mid>] (7,0) to[out=180,in=-90] node[below]{$\scriptstyle t$} (6,1);
\draw[mid<] (7,0) to[out=90,in=0] node[right]{$\scriptstyle u$} (6.5,1) -- (6,1);
\draw[mid<] (7,4) to[out=-90,in=0] node[right]{$\scriptstyle v$} (6.5,3) -- (6,3);
\draw[mid<] (7,4) to[out=180,in=90] node[left]{$\scriptstyle w$} (6,3);
\draw[thick, cyan, mid>] (6,1.5) --node[left]{$\scriptstyle n'$} (5.5,1);
\draw[thick, cyan, mid>] (5.5,3) --node[below]{$\scriptstyle n'$} (6,2.5);
\node at (6.25,.8) {$\scriptstyle \xi_{3,1}$};
\node at (6.4,3.3) {$\scriptstyle \xi_{3,2}$};
\node at (7.6,-.2) {$\scriptstyle \overline{\eta_{3,1}}$};
\node at (7.3,4.2) {$\scriptstyle \eta^\dag_{3,2}$};
}
\displaybreak[1]\\&=
\frac{K}{\cD_\cX^2}
\sum_{n',h''\in \Irr(\cX)} 
\frac{K_p'\sqrt{d_{h''}}}{d_{n'}}
\tikzmath{
\clip (-1,-1.5) rectangle (9,4.5);
\draw[thick, red, mid<] (1,1) .. controls ++(135:1cm) and ++(135:1cm) .. node[above, yshift=.1cm]{$\scriptstyle a$} (0,0);
\draw[thick, red, mid<] (2,1) .. controls ++(135:1cm) and ++(135:1cm) .. (2,0) node[left, xshift=-.1cm]{$\scriptstyle b$} .. controls ++(-45:1cm) and ++(-45:1cm) .. (3,0);
\draw[thick, red, mid<] (2,2) .. controls ++(135:1cm) and ++(135:1cm) .. node[above, yshift=.1cm]{$\scriptstyle e$} (3,3);
\draw[thick, red, mid<] (1,2) .. controls ++(135:1cm) and ++(-45:1cm) .. node[right, xshift=.05cm]{$\scriptstyle f$} (0,3);
\draw[knot, mid>] (1,1) -- node[above, xshift=-.1cm]{$\scriptstyle j'$} (2,1);
\draw[mid>] (2,1) -- (2,2); 
\draw[mid>] (1,2) -- node[above]{$\scriptstyle m'$} (2,2);
\draw[mid>] (1,1) -- node[left]{$\scriptstyle g'$} (1,2);
\draw[knot, mid<] (0,0) to[out=-90,in=-90] node[above]{$\scriptstyle j$} (3,0);
\draw[mid<] (3,0) to[out=0,in=0] node[left]{$\scriptstyle h$} (3,3);
\draw[mid<] (0,3) to[out=90,in=90] node[below]{$\scriptstyle m$} (3,3);
\draw[mid<] (0,0) to[out=180,in=180] node[right]{$\scriptstyle g$} (0,3);
\draw[mid>] (0,0) to[out=90,in=180] node[above]{$\scriptstyle q$} (1,1);
\draw[mid>] (0,0) to[out=0,in=-90] node[below]{$\scriptstyle r$} (1,1);
\draw[mid>] (3,0) to[out=180,in=-90] node[below]{$\scriptstyle s$} (2,1);
\draw[mid<] (3,0) to[out=90,in=0] node[right]{$\scriptstyle k$} (2.5,1) -- (2,1);
\draw[mid<] (3,3) to[out=180,in=90] node[left]{$\scriptstyle x$} (2,2);
\draw[mid<] (3,3) to[out=-90,in=0] node[right]{$\scriptstyle \ell$} (2.5,2) -- (2,2);
\draw[mid<] (0,3) to[out=0,in=90] node[right]{$\scriptstyle y$} (1,2);
\draw[mid>]  (0,3) to[out=-90,in=180] node[below]{$\scriptstyle z$} (1,2);
\draw[thick, cyan, mid>] (2.5,1) --node[above]{$\scriptstyle n'$} (2,1.3);
\draw[thick, cyan, mid>] (2,1.7) --node[right]{$\scriptstyle n'$} (2.5,2);
\node at (1.3,1.8) {$\scriptstyle \xi_{1,2}$};
\node at (1.3,.8) {$\scriptstyle \xi_{1,1}$};
\node at (2.3,.8) {$\scriptstyle \xi_{2,1}$};
\node at (2.3,2.2) {$\scriptstyle \xi_{2,2}$};
\node at (-.3,-.2) {$\scriptstyle \eta^\dag_{1,1}$};
\node at (3.6,-.2) {$\scriptstyle \overline{\eta_{2,1}}$};
\node at (3.3,3.2) {$\scriptstyle \eta^\dag_{2,2}$};
\node at (-.3,3.2) {$\scriptstyle \overline{\eta_{1,2}}$};
\begin{scope}[yshift=-.5cm, xshift=-.5cm]
\draw[thick, red, mid<] (6,1) .. controls ++(135:.7cm) and ++(135:.8cm) .. (6,0) node[left, xshift=-.1cm]{$\scriptstyle c$} .. controls ++(-45:.8cm) and ++(-45:.7cm) .. (7,0);
\draw[thick, red, mid<] (6,3) .. controls ++(135:.8cm) and ++(135:.8cm) .. node[above, yshift=.1cm]{$\scriptstyle d$} (7,4);
\draw[knot, mid>] (5.5,1) -- node[below]{$\scriptstyle k'$} (6,1);
\draw[knot, mid<] (5.5,1) .. controls ++(180:1.5cm) and ++(-90:2cm) .. node[below]{$\scriptstyle k$} (7,0);
\draw[mid>] (6,1) -- node[right]{$\scriptstyle i'$} (6,1.5);
\draw[mid>] (6,1.5) .. controls ++(90:1.5cm) and ++(0:2cm) .. node[above]{$\scriptstyle i$} (7,0);
\draw[mid>] (6,2.5) -- node[right]{$\scriptstyle i'$} (6,3);
\draw[mid>] (7,4) .. controls ++(0:2cm) and ++(-90:1.5cm) .. node[below]{$\scriptstyle i$} (6,2.5);
\draw[mid>] (5.5,3) -- node[above]{$\scriptstyle \ell'$} (6,3);
\draw[mid>] (7,4) .. controls ++(90:2cm) and ++(180:1.5cm) .. node[left]{$\scriptstyle \ell$} (5.5,3);
\draw[mid>] (7,0) to[out=180,in=-90] node[below]{$\scriptstyle t$} (6,1);
\draw[mid<] (7,0) to[out=90,in=0] node[right]{$\scriptstyle u$} (6.5,1) -- (6,1);
\draw[mid<] (7,4) to[out=-90,in=0] node[right]{$\scriptstyle v$} (6.5,3) -- (6,3);
\draw[mid<] (7,4) to[out=180,in=90] node[left]{$\scriptstyle w$} (6,3);
\draw[thick, cyan, mid>] (6,1.5) --node[left]{$\scriptstyle n'$} (5.5,1);
\draw[thick, cyan, mid>] (5.5,3) --node[below]{$\scriptstyle n'$} (6,2.5);
\node at (6.25,.8) {$\scriptstyle \xi_{3,1}$};
\node at (6.4,3.3) {$\scriptstyle \xi_{3,2}$};
\node at (7.6,-.2) {$\scriptstyle \overline{\eta_{3,1}}$};
\node at (7.3,4.2) {$\scriptstyle \eta^\dag_{3,2}$};
\end{scope}
}
\displaybreak[1]\\&=
\frac{K}{\cD_\cX^2}
\sum_{n',h''\in \Irr(\cX)} 
\frac{K_p'\sqrt{d_{h''}}}{d_{n'}}
\cdot
\tikzmath{
\clip (-1.7,-1.7) rectangle (9,4.7);
\draw[thick, red, mid<] (1,1) .. controls ++(135:1cm) and ++(135:1cm) .. node[above, yshift=.1cm]{$\scriptstyle a$} (0,0);
\draw[thick, red, mid<] (2,1) .. controls ++(135:1cm) and ++(135:1cm) .. (2,0) node[left, xshift=-.1cm]{$\scriptstyle b$} .. controls ++(-45:1cm) and ++(-45:1cm) .. (3,0);
\draw[thick, red, mid<] (2,2) .. controls ++(135:1cm) and ++(135:1cm) .. node[above, yshift=.1cm]{$\scriptstyle e$} (3,3);
\draw[thick, red, mid<] (1,2) .. controls ++(135:1cm) and ++(-45:1cm) .. node[right, xshift=.05cm]{$\scriptstyle f$} (0,3);
\draw[knot, mid>] (1,1) -- node[above, xshift=-.1cm]{$\scriptstyle j'$} (2,1);
\draw[mid>] (2,1) -- (2,2); 
\draw[mid>] (1,2) -- node[above]{$\scriptstyle m'$} (2,2);
\draw[mid>] (1,1) -- node[left]{$\scriptstyle g'$} (1,2);
\draw[knot, mid<] (0,0) to[out=-90,in=-90] node[above]{$\scriptstyle j$} (3,0);
\draw[mid<] (3,0) .. controls ++(0:1cm) and ++(0:1cm) .. (3,-1.5) -- (0,-1.5) to[out=180,in=-90] (-1.5,1.5) node[right]{$\scriptstyle h$} to[out=90,in=180] (0,4.5) -- (3,4.5) .. controls ++(0:1cm) and ++(0:1cm) .. (3,3);
\draw[mid<] (0,3) to[out=90,in=90] node[below]{$\scriptstyle m$} (3,3);
\draw[mid<] (0,0) to[out=180,in=180] node[right]{$\scriptstyle g$} (0,3);
\draw[mid>] (0,0) to[out=90,in=180] node[above]{$\scriptstyle q$} (1,1);
\draw[mid>] (0,0) to[out=0,in=-90] node[below]{$\scriptstyle r$} (1,1);
\draw[mid>] (3,0) to[out=180,in=-90] node[below]{$\scriptstyle s$} (2,1);
\draw[mid<] (3,0) to[out=90,in=0] node[right]{$\scriptstyle k$} (2.5,1) -- (2,1);
\draw[mid<] (3,3) to[out=180,in=90] node[left]{$\scriptstyle x$} (2,2);
\draw[mid<] (3,3) to[out=-90,in=0] node[right]{$\scriptstyle \ell$} (2.5,2) -- (2,2);
\draw[mid<] (0,3) to[out=0,in=90] node[right]{$\scriptstyle y$} (1,2);
\draw[mid>]  (0,3) to[out=-90,in=180] node[below]{$\scriptstyle z$} (1,2);
\draw[thick, cyan, mid>] (2.5,1) --node[above]{$\scriptstyle n'$} (2,1.3);
\draw[thick, cyan, mid>] (2,1.7) --node[right]{$\scriptstyle n'$} (2.5,2);
\node at (1.3,1.8) {$\scriptstyle \xi_{1,2}$};
\node at (1.3,.8) {$\scriptstyle \xi_{1,1}$};
\node at (2.3,.8) {$\scriptstyle \xi_{2,1}$};
\node at (2.3,2.2) {$\scriptstyle \xi_{2,2}$};
\node at (-.3,-.2) {$\scriptstyle \eta^\dag_{1,1}$};
\node at (3.3,.2) {$\scriptstyle \overline{\eta_{2,1}}$};
\node at (3.3,2.8) {$\scriptstyle \eta^\dag_{2,2}$};
\node at (-.3,3.2) {$\scriptstyle \overline{\eta_{1,2}}$};
\fill[fill=green] (2.5,1) circle (.05cm);
\fill[fill=yellow] (2.5,2) circle (.05cm);
\fill[fill=orange] (2,1.3) circle (.05cm);
\fill[fill=orange] (2,1.7) circle (.05cm);
\begin{scope}[yshift=-.5cm, xshift=-1cm]
\draw[thick, red, mid<] (6,1) .. controls ++(135:.7cm) and ++(135:.8cm) .. (6,0) node[left, xshift=-.1cm]{$\scriptstyle c$} .. controls ++(-45:.8cm) and ++(-45:.7cm) .. (7,0);
\draw[thick, red, mid<] (6,3) .. controls ++(135:.8cm) and ++(135:.8cm) .. node[above, yshift=.1cm]{$\scriptstyle d$} (7,4);
\draw[knot, mid>] (5.5,1) -- node[below]{$\scriptstyle k'$} (6,1);
\draw[knot, mid<] (5.5,1) .. controls ++(180:1.5cm) and ++(-90:2cm) .. node[below]{$\scriptstyle k$} (7,0);
\draw[mid>] (6,1) -- node[right]{$\scriptstyle i'$} (6,1.5);
\draw[mid>] (6,1.5) .. controls ++(90:1.5cm) and ++(0:2cm) .. node[above]{$\scriptstyle i$} (7,0);
\draw[mid>] (6,2.5) -- node[right]{$\scriptstyle i'$} (6,3);
\draw[mid>] (7,4) .. controls ++(0:2cm) and ++(-90:1.5cm) .. node[below]{$\scriptstyle i$} (6,2.5);
\draw[mid>] (5.5,3) -- node[above]{$\scriptstyle \ell'$} (6,3);
\draw[mid>] (7,4) .. controls ++(90:2cm) and ++(180:1.5cm) .. node[left]{$\scriptstyle \ell$} (5.5,3);
\draw[mid>] (7,0) to[out=180,in=-90] node[below]{$\scriptstyle t$} (6,1);
\draw[mid<] (7,0) to[out=90,in=0] node[right]{$\scriptstyle u$} (6.5,1) -- (6,1);
\draw[mid<] (7,4) to[out=-90,in=0] node[right]{$\scriptstyle v$} (6.5,3) -- (6,3);
\draw[mid<] (7,4) to[out=180,in=90] node[left]{$\scriptstyle w$} (6,3);
\draw[thick, cyan, mid>] (6,1.5) --node[left]{$\scriptstyle n'$} (5.5,1);
\draw[thick, cyan, mid>] (5.5,3) --node[below]{$\scriptstyle n'$} (6,2.5);
\fill[fill=green] (5.5,1) circle (.05cm);
\fill[fill=purple] (6,1.5) circle (.05cm);
\fill[fill=purple] (6,2.5) circle (.05cm);
\fill[fill=yellow] (5.5,3) circle (.05cm);
\node at (6.25,.8) {$\scriptstyle \xi_{3,1}$};
\node at (6.4,3.3) {$\scriptstyle \xi_{3,2}$};
\node at (7.6,-.2) {$\scriptstyle \overline{\eta_{3,1}}$};
\node at (7.3,4.2) {$\scriptstyle \eta^\dag_{3,2}$};
\end{scope}
}
\displaybreak[1]\\&=
\frac{K}{\cD_\cX^2}
\cdot
\tikzmath{
\clip (-1.7,-1.7) rectangle (9,4.7);
\draw[thick, red, mid<] (1,1) .. controls ++(135:1cm) and ++(135:1cm) .. node[above, yshift=.1cm]{$\scriptstyle a$} (0,0);
\draw[thick, red, mid<] (2,1) .. controls ++(135:1cm) and ++(135:1cm) .. (2,0) node[left, xshift=-.1cm]{$\scriptstyle b$} .. controls ++(-45:1cm) and ++(-45:1cm) .. (3,0);
\draw[thick, red, mid<] (5,1) .. controls ++(135:1cm) and ++(135:1cm) .. (5,0) node[left, xshift=-.1cm]{$\scriptstyle c$} .. controls ++(-45:1cm) and ++(-45:1cm) .. (6,0);
\draw[thick, red, mid<] (5,2) .. controls ++(135:1cm) and ++(135:1cm) .. node[above, yshift=.1cm]{$\scriptstyle d$} (6,3);
\draw[thick, red, mid<] (2,2) .. controls ++(135:1cm) and ++(135:1cm) .. node[above, yshift=.1cm]{$\scriptstyle e$} (3,3);
\draw[thick, red, mid<] (1,2) .. controls ++(135:1cm) and ++(-45:1cm) .. node[right, xshift=.05cm]{$\scriptstyle f$} (0,3);
\draw[knot, mid>] (1,1) -- node[above, xshift=-.1cm]{$\scriptstyle j'$} (2,1);
\draw[mid>] (2,1) -- node[left]{$\scriptstyle h'$} (2,2);
\draw[mid>] (1,2) -- node[above]{$\scriptstyle m'$} (2,2);
\draw[mid>] (1,1) -- node[left]{$\scriptstyle g'$} (1,2);
\draw[mid>] (2,2) -- node[above]{$\scriptstyle \ell'$} (5,2);
\draw[knot, mid>] (2,1) -- node[above]{$\scriptstyle k'$} (5,1);
\draw[mid>] (5,1) -- node[left]{$\scriptstyle i'$} (5,2);
\draw[knot, mid<] (0,0) to[out=-90,in=-90] node[above]{$\scriptstyle j$} (3,0);
\draw[mid<] (3,0) .. controls ++(0:1cm) and ++(0:1cm) .. (3,-1.5) -- (0,-1.5) to[out=180,in=-90] (-1.5,1.5) node[right]{$\scriptstyle h$} to[out=90,in=180] (0,4.5) -- (3,4.5) .. controls ++(0:1cm) and ++(0:1cm) .. (3,3);
\draw[mid<] (0,3) to[out=90,in=90] node[below]{$\scriptstyle m$} (3,3);
\draw[mid<] (0,0) to[out=180,in=180] node[right]{$\scriptstyle g$} (0,3);
\draw[mid>] (6,3) to[out=0,in=0] node[right]{$\scriptstyle i$} (6,0);
\draw[mid>] (0,0) to[out=90,in=180] node[above]{$\scriptstyle q$} (1,1);
\draw[mid>] (0,0) to[out=0,in=-90] node[below]{$\scriptstyle r$} (1,1);
\draw[mid>] (3,0) to[out=180,in=-90] node[below]{$\scriptstyle s$} (2,1);
\draw[knot, mid<] (3,0) to[out=90,in=90] (4.5,0) to[out=-90,in=-90] node[below]{$\scriptstyle k$} (6,0);
\draw[mid<] (3,3) to[out=180,in=90] node[left]{$\scriptstyle x$} (2,2);
\draw[mid<] (3,3) to[out=-90,in=-90] (4,3) to[out=90, in=90] node[above]{$\scriptstyle \ell$} (6,3);
\draw[mid>] (6,0) to[out=180,in=-90] node[below]{$\scriptstyle t$} (5,1);
\draw[mid<] (6,0) to[out=90,in=0] node[right]{$\scriptstyle u$} (5,1);
\draw[mid<] (6,3) to[out=-90,in=0] node[right]{$\scriptstyle v$} (5,2);
\draw[mid<] (6,3) to[out=180,in=90] node[left]{$\scriptstyle w$} (5,2);
\draw[mid<] (0,3) to[out=0,in=90] node[right]{$\scriptstyle y$} (1,2);
\draw[mid>]  (0,3) to[out=-90,in=180] node[below]{$\scriptstyle z$} (1,2);
\node at (1.3,1.8) {$\scriptstyle \xi_{1,2}$};
\node at (1.3,.8) {$\scriptstyle \xi_{1,1}$};
\node at (2.3,.8) {$\scriptstyle \xi_{2,1}$};
\node at (2.3,2.2) {$\scriptstyle \xi_{2,2}$};
\node at (5.3,.8) {$\scriptstyle \xi_{3,1}$};
\node at (5.3,2.2) {$\scriptstyle \xi_{3,2}$};
\node at (-.3,-.2) {$\scriptstyle \eta^\dag_{1,1}$};
\node at (-.3,3.2) {$\scriptstyle \overline{\eta_{1,2}}$};
\node at (2.8,.2) {$\scriptstyle \overline{\eta_{2,1}}$};
\node at (2.8,2.7) {$\scriptstyle \eta^\dag_{2,2}$};
\node at (6.3,3.2) {$\scriptstyle \eta^\dag_{3,2}$};
\node at (6.3,.2) {$\scriptstyle \overline{\eta_{3,1}}$};
}
\end{align*}
This final diagram is exactly the $D_\cX^{-2}$ times the skein module $\cS^{\textcolor{red}{\cA}}_\cX(\bbD,\textcolor{red}{6},10)$ inner product.

It remains to account for the exponent of $\cD_\cA$ for the bulk plaquettes.  
To get the correct exponent, we introduce a subset $\cB'\subset \cB^\circ$ that we obtain by removing exactly one non $\cX$-boundary plaquette from each cube in $\cB^\circ$.  
We do so by picking a maximal spanning tree for the dual lattice intersected with $\cB$, which has a vertex for each cube and an edge for each face between cubes, plus one additional (non $\cX$-boundary)~edge connecting our maximal spanning tree to a cube in the bulk outside of $\cB$.  
By Lemma \ref{lem:5=6}, it follows that
$$
\prod\limits_{p\subset \cB'}B_p=\prod\limits_{p\subset \cB^\circ}B_p\,.
$$

The subset $\cB'\subset \cB$ is exactly the set of $N_p(\cB^\circ) - N_c(\cB^\circ)$ plaquettes of $\cB$ which do not intersect this spanning tree.  
Just as with the plaquettes on the boundary, we can multiply two plaquette operators which share an edge and compute the matrix coefficients of this new operator. 
Since the union of the plaquettes in $\cB'$ is simply connected, at each inductive step, the resulting enriched skein module is unitarily isomorphic to one of the form $\cS^{\textcolor{red}{\cA}}_\cX(\bbD, \textcolor{red}{m},n)$, via a deformation retraction onto the boundary. 
This topological map gives a well defined map on skein modules since $\cA$ is braided; i.e the red strings in the bulk may be unambigously projected onto the boundary via the deformation retraction. 
Therefore we may combine all $N_p(\cB^\circ) - N_c(\cB^\circ)$ non $\cX$-boundary plaquettes and compute matrix coefficients analogously to the two plaquette case. 
The matrix coefficients have a factor of $D_\cA^{-1}$ for each bulk plaquette in $\cB'$. 
Further multiplying the bulk plaquettes in $\cB'$ with the $\cX$-boundary plaquettes which each contribute a factor of $D_\cX$, 
the result follows.
\end{proof}

\begin{rem}
\label{rem:HostExcitationsAtVertices}
We can use enriched skein modules to give another string-net model for a(n uneriched) UFC $\cX$ due to Corey Jones where excitations can be localized on a single vertex rather than an edge and its neighboring plaquettes.
This is similar to Corey Jones' model presented in \cite{MR4642306} with extra strands corresponding to a condensable algebra
in order to implement anyon condensation.

We replace the local Hilbert space $\cH_v$ with
$$
\tikzmath{
\draw[thick] (-.5,0) node[left]{$\scriptstyle w$} -- (.5,0) node[right]{$\scriptstyle z$};
\draw[thick] (0,-.5) node[below]{$\scriptstyle x$} -- (0,.5) node[above]{$\scriptstyle y$};
\filldraw (0,0) circle (.05cm);
\draw[thick, blue, decorate, decoration={snake, segment length=1mm, amplitude=.2mm}] (0,0) -- (.4,-.4) node[right]{$\scriptstyle A$}; 
\node at (-.2,-.2) {$\scriptstyle v$};
\draw[red!50, very thin] (-.25,.5) -- (.25,-.5);
{\draw[red!50, very thin, -stealth] ($ (-.15,.3) - (.12,.06)$) to ($ (-.15,.3) + (.12,.06)$);}
{\draw[red!50, very thin, -stealth] ($ (.15,-.3) - (.12,.06)$) to ($ (.15,-.3) + (.12,.06)$);}
}
\leftrightarrow
\cH_v:=
\bigoplus_{\substack{
x\in \Irr(\cX)
\\
A\in \Irr(Z(\cX))
}}
\cX(wx \to yzF(A))
$$
where $F: Z(\cX)\to \cX$ is the forgetful functor.
This space caries the $\cS_\cX^{\textcolor{blue}{Z(\cX)}}(\bbD,\textcolor{blue}{1},4)$ enriched skein module inner product.
The total Hilbert space is the tensor product of local spaces.
Again, we have edge operators $A_\ell$ which require simple labels to match along edges and plaquette $B_p$ operators on faces.
Buulding on Proposition \ref{prop:DescriptionOfIm(PA)},
we can describe the ground state space $\im(P_A)$ of $-\sum_\ell A_\ell$ on a contractible patch $\Lambda$ of lattice as
$$
\bigoplus_{\vec{x}\in \partial \Lambda}
\cX(\vec{x} \to F(Z)^{\otimes \# v\in \Lambda} FI(1_\cX)^{\#p \in \Lambda})
$$
where $Z:= \bigoplus_{A\in \Irr(Z(\cX))} A$.

On $\im(P_A)$, the plaquette operator $B_p$ now uses the half-braiding for $X\in Z(\cX)$:
\begin{align*}
\frac{1}{D_\cX}\sum_{r\in \Irr(\cX)}
d_r\cdot
\tikzmath{
\draw[step=1.0,black,thin] (0.5,0.5) grid (2.5,2.5);
\filldraw[thick, DarkGreen, rounded corners=5pt, fill=gray!30] (1.15,1.15) rectangle (1.85,1.85);
\draw[thick, blue, decorate, decoration={snake, segment length=1mm, amplitude=.2mm}] (2,1) -- ($ (2,1) + (.4,-.4)$);
\draw[thick, blue, decorate, decoration={snake, segment length=1mm, amplitude=.2mm}] (1,1) -- ($ (1,1) + (.4,-.4)$);
\draw[knot, thick, blue, decorate, decoration={snake, segment length=1mm, amplitude=.2mm}] (1,2) -- ($ (1,2) + (.4,-.4)$);
\draw[thick, blue, decorate, decoration={snake, segment length=1mm, amplitude=.2mm}] (2,2) -- ($ (2,2) + (.4,-.4)$);
\node at (2.3,1.2) {$\scriptstyle \xi_{2,1}$};
\node at (.7,2.2) {$\scriptstyle \xi_{1,2}$};
\node at (.7,1.2) {$\scriptstyle \xi_{1,1}$};
\node at (2.3,2.2) {$\scriptstyle \xi_{2,2}$};
\node[DarkGreen] at (1.3,1.5) {$\scriptstyle r$};
\node at (1.5,.85) {$\scriptstyle g$};
\node at (2.15,1.5) {$\scriptstyle h$};
\node at (1.5,2.15) {$\scriptstyle i$};
\node at (.85,1.5) {$\scriptstyle j$};
}
&=
\frac{1}{D_\cX}\sum_{r,s,t,u,v \in \Irr(\cX)}
\frac{\sqrt{d_kd_\ell d_md_n}}{d_r\sqrt{d_gd_hd_id_j}}
\cdot
\tikzmath{
\draw[step=1.0,black,thin] (0.5,0.5) grid (2.5,2.5);
\fill[gray!30, rounded corners=5pt] (1.15,1.15) rectangle (1.85,1.85);
\draw[thick, DarkGreen] (1.3,1) -- (1,1.3);
\draw[thick, DarkGreen] (1.7,1) -- (2,1.3);
\draw[thick, DarkGreen] (1.3,2) -- (1,1.7);
\draw[thick, DarkGreen] (1.7,2) -- (2,1.7);
\draw[thick, blue, decorate, decoration={snake, segment length=1mm, amplitude=.2mm}] (2,1) -- ($ (2,1) + (.4,-.4)$);
\draw[thick, blue, decorate, decoration={snake, segment length=1mm, amplitude=.2mm}] (1,1) -- ($ (1,1) + (.4,-.4)$);
\draw[knot, thick, blue, decorate, decoration={snake, segment length=1mm, amplitude=.2mm}] (1,2) -- ($ (1,2) + (.4,-.4)$);
\draw[thick, blue, decorate, decoration={snake, segment length=1mm, amplitude=.2mm}] (2,2) -- ($ (2,2) + (.4,-.4)$);
\fill[fill=red] (1.3,1) circle (.05cm);
\fill[fill=red] (1.7,1) circle (.05cm);
\fill[fill=purple] (2,1.3) circle (.05cm);
\fill[fill=purple] (2,1.7) circle (.05cm);
\fill[fill=yellow] (1.3,2) circle (.05cm);
\fill[fill=yellow] (1.7,2) circle (.05cm);
\fill[fill=orange] (1,1.3) circle (.05cm);
\fill[fill=orange] (1,1.7) circle (.05cm);
\node at (2.3,.8) {$\scriptstyle \xi_{2,1}$};
\node at (.7,2.2) {$\scriptstyle \xi_{1,2}$};
\node at (.7,.8) {$\scriptstyle \xi_{1,1}$};
\node at (2.3,2.2) {$\scriptstyle \xi_{2,2}$};
\node at (1.15,.85) {$\scriptstyle g$};
\node at (1.85,.85) {$\scriptstyle g$};
\node at (2.15,1.15) {$\scriptstyle h$};
\node at (2.15,1.85) {$\scriptstyle h$};
\node at (1.15,2.15) {$\scriptstyle i$};
\node at (1.85,2.15) {$\scriptstyle i$};
\node at (.85,1.15) {$\scriptstyle j$};
\node at (.85,1.85) {$\scriptstyle j$};
\node at (1.5,.85) {$\scriptstyle k$};
\node at (2.15,1.5) {$\scriptstyle \ell$};
\node at (1.5,2.15) {$\scriptstyle m$};
\node at (.85,1.5) {$\scriptstyle n$};
}
\\&=
\sum_{\eta}
C(\xi,\eta)
\tikzmath{
\draw[step=1.0,black,thin] (0.5,0.5) grid (2.5,2.5);
\fill[gray!30, rounded corners=5pt] (1.1,1.1) rectangle (1.9,1.9);
\draw[thick, blue, decorate, decoration={snake, segment length=1mm, amplitude=.2mm}] (2,1) -- ($ (2,1) + (.4,-.4)$);
\draw[thick, blue, decorate, decoration={snake, segment length=1mm, amplitude=.2mm}] (1,1) -- ($ (1,1) + (.4,-.4)$);
\draw[thick, blue, decorate, decoration={snake, segment length=1mm, amplitude=.2mm}] (1,2) -- ($ (1,2) + (.4,-.4)$);
\draw[thick, blue, decorate, decoration={snake, segment length=1mm, amplitude=.2mm}] (2,2) -- ($ (2,2) + (.4,-.4)$);
\node at (2.3,1.2) {$\scriptstyle \eta_{2,1}$};
\node at (.7,2.2) {$\scriptstyle \eta_{1,2}$};
\node at (.7,1.2) {$\scriptstyle \eta_{1,1}$};
\node at (2.3,2.2) {$\scriptstyle \eta_{2,2}$};
}\,.
\end{align*}
Similar to Proposition \ref{prop:MatrixCoefficientsOfBp} in the $\cA$-enriched setting,
when all edge/link labels on simple tensors $\eta,\xi$ match
$$
\tikzmath{
\draw[step=1.0,black,thin] (0.5,0.5) grid (2.5,2.5);
\fill[gray!30, rounded corners=5pt] (1.1,1.1) rectangle (1.9,1.9);
\draw[thick, blue, decorate, decoration={snake, segment length=1mm, amplitude=.2mm}] (2,1) -- ($ (2,1) + (.4,-.4)$) node[right]{$\scriptstyle D$};
\draw[thick, blue, decorate, decoration={snake, segment length=1mm, amplitude=.2mm}] (1,1) -- ($ (1,1) + (.4,-.4)$) node[right]{$\scriptstyle A$};
\draw[thick, blue, decorate, decoration={snake, segment length=1mm, amplitude=.2mm}] (1,2) -- ($ (1,2) + (.4,-.4)$) node[right]{$\scriptstyle B$};
\draw[thick, blue, decorate, decoration={snake, segment length=1mm, amplitude=.2mm}] (2,2) -- ($ (2,2) + (.4,-.4)$) node[right]{$\scriptstyle C$};
\node at (1.7,1.2) {$\scriptstyle \eta_{2,1}$};
\node at (.7,2.2) {$\scriptstyle \eta_{1,2}$};
\node at (.7,1.2) {$\scriptstyle \eta_{1,1}$};
\node at (1.7,2.2) {$\scriptstyle \eta_{2,2}$};
\node at (.3,1) {$\scriptstyle s$};
\node at (1,.3) {$\scriptstyle t$};
\node at (2,.3) {$\scriptstyle u$};
\node at (2.7,1) {$\scriptstyle v$};
\node at (2.7,2) {$\scriptstyle w$};
\node at (2,2.7) {$\scriptstyle x$};
\node at (1,2.7) {$\scriptstyle y$};
\node at (.3,2) {$\scriptstyle z$};
}
\qquad\qquad
\tikzmath{
\draw[step=1.0,black,thin] (0.5,0.5) grid (2.5,2.5);
\fill[gray!30, rounded corners=5pt] (1.1,1.1) rectangle (1.9,1.9);
\draw[thick, blue, decorate, decoration={snake, segment length=1mm, amplitude=.2mm}] (2,1) -- ($ (2,1) + (.4,-.4)$) node[right]{$\scriptstyle D$};
\draw[thick, blue, decorate, decoration={snake, segment length=1mm, amplitude=.2mm}] (1,1) -- ($ (1,1) + (.4,-.4)$) node[right]{$\scriptstyle A$};
\draw[thick, blue, decorate, decoration={snake, segment length=1mm, amplitude=.2mm}] (1,2) -- ($ (1,2) + (.4,-.4)$) node[right]{$\scriptstyle B$};
\draw[thick, blue, decorate, decoration={snake, segment length=1mm, amplitude=.2mm}] (2,2) -- ($ (2,2) + (.4,-.4)$) node[right]{$\scriptstyle C$};
\node at (1.7,1.2) {$\scriptstyle \eta_{2,1}$};
\node at (.7,2.2) {$\scriptstyle \eta_{1,2}$};
\node at (.7,1.2) {$\scriptstyle \eta_{1,1}$};
\node at (1.7,2.2) {$\scriptstyle \eta_{2,2}$};
\node at (.3,1) {$\scriptstyle s$};
\node at (1,.3) {$\scriptstyle t$};
\node at (2,.3) {$\scriptstyle u$};
\node at (2.7,1) {$\scriptstyle v$};
\node at (2.7,2) {$\scriptstyle w$};
\node at (2,2.7) {$\scriptstyle x$};
\node at (1,2.7) {$\scriptstyle y$};
\node at (.3,2) {$\scriptstyle z$};
}\,,
$$
the matrix coefficient $C(\xi,\eta)$ is given by
$$
C(\xi,\eta)
=
\frac{1}{D_\cX\sqrt{\textcolor{blue}{d_Ad_Bd_Cd_D}d_sd_td_ud_vd_wd_xd_yd_z}}
\cdot
\tikzmath{
\begin{scope}
\clip (1.2,1.8) rectangle (3.5,3.6);
\begin{scope}[shift={(-.3,.3)}]
\draw[step=1.0,black] (0.5,0.5) grid (4,4);
\end{scope}
\draw[step=1.0,black, knot] (1,1) grid (4,4);
\end{scope}
\foreach \j in {2,3} {
\draw (1.2,\j) arc (270:90:.15cm);
\draw (3.5,\j) arc (-90:90:.15cm);
};
\foreach \i in {2,3} {
    \draw (\i,3.6) arc (0:180:.15cm);
    \draw (\i,1.8) arc (0:-180:.15cm);
};
\foreach \i in {1,2} {
\foreach \j in {1,2} {
        \draw[thick, blue, decorate, decoration={snake, segment length=1mm, amplitude=.2mm}] ($ (\i,\j) + (1,1)$) -- ($ (\i,\j) + (.7,1.3)$);
        \node at ($ (\i,\j) + (1.25,.8) $) {$\scriptstyle \overline{\eta_{\i,\j}}$};
        \node at ($ (\i,\j) + (.45,1.45) $) {$\scriptstyle \xi_{\i,\j}$};
}}
}\,.
$$
Otherwise, $C(\xi,\eta)=0$.
In particular, the $B_p$ are self-adjoint, commuting orthogonal projections.

Finally, the local Hamiltonian now has \emph{vertex terms} $C_v$ which enforce that the simple label on each blue edge is $1_{Z(\cX)}$.
Hence the ground states of
$$
H := -\sum_\ell A_\ell - \sum_p B_p - \sum_v C_v
$$
are exactly the ground states of the usual string-net model for $\cX$ from \S\ref{sec:StringNet}.

Now excitations can live on the blue edges, corresponding to a violation of a single $C_v$ term.
We can detect the type of the excitation as it corresponds to a representation of the abelian $\rm C^*$-algebra $\bbC^{\Irr(Z(\cX))}$ which acts on each blue edge.
Note that this algebra is Morita equivalent to $\Tube(\cX)$ as $\Rep(\Tube(\cX)) \cong Z(\cX) \cong \Rep(\bbC^{\Irr(Z(\cX))})$.
These vertex excitations are equivalent to the excitations discussed above in \S\ref{sec:StringOperators}; 
indeed, for each $(X,\sigma_X)\in \Irr(Z(\cX))$ and $\psi : F(X)\to x$ for $x\in \Irr(\cX)$ corresponding to the localized excitation \eqref{eq:BpvqProtection},
we have the hopping operator
$$
H^{(X,\sigma_X)}_\pi(\psi)
:=
B_pB_qT_\pi^{(X,\sigma_X)}(\psi)
$$
as in the cartoon below: 
$$
\frac{1}{D_\cX^2}
\sum_{\psi \in \operatorname{ONB}(Y\to x)}
\sum_{s,t\in \Irr(\cX)}
d_sd_t
\tikzmath{
\draw[thick] (.5,.5) grid (4.5,3.5);
\fill[white] (1.8,1.05) rectangle (2.2,1.95);
\filldraw[blue, thick, fill=gray!30, rounded corners=5pt] (1.1,1.1) rectangle (1.7,1.9);
\filldraw[blue, thick, fill=gray!30, rounded corners=5pt] (2.4,1.1) rectangle (2.9,1.9);
\node[blue] at (1.25,1.5) {$\scriptstyle s$};
\node[blue] at (2.55,1.5) {$\scriptstyle t$};
\foreach \x in {1,2,3,4}{
\foreach \y in {1,2,3}{
\filldraw (\x,\y) circle (.05cm);
}}
\foreach \x in {0.5,1.5,2.5}{
\foreach \y in {0.5,1.5}{
}}
\draw[thick] (2.15,1.3) --node[right, xshift=-.1cm]{$\scriptstyle x$} (2.15,1.7);
\draw[thick, knot, cyan, decorate, decoration={snake, segment length=1mm, amplitude=.2mm}] (2.15,.5) --node[right]{$\scriptstyle X$} (2.15,1) -- (2.15,1.3);
\draw[thick, knot, cyan, decorate, decoration={snake, segment length=1mm, amplitude=.2mm}] (2.15,1.7) -- (2.15,2.15) --node[above]{$\scriptstyle X$} (2.85,2.15) -- (3.15,2.15) -- (3.15,2.85) -- (3.4,2.6);
\filldraw[cyan] (2.15,1.7) node[left, xshift=.1cm]{$\scriptstyle \psi^\dag$} circle (.05cm);
\filldraw[cyan] (2.15,1.3) node[left]{$\scriptstyle \psi$} circle (.05cm);
}
$$
We apply the modified string operator $T_\pi^{(X,\sigma_X)}(\psi)$ which glues $\psi^\dag$ at the site $\ell$ of the excitation and repair the neighboring plaquettes to $\ell$ by applying $B_p$ and $B_q$.
By \eqref{eq:MendStringOperators}, this extends the excitation all the way to the boundary blue edge.
As we used $\psi$, we know that we have not obtained a zero string operator:
$$
0
\neq
\frac{1}{D_\cX}
\sum_{r\in \Irr(\cX)}
d_r
\cdot
\tikzmath[scale=1.5]{
\draw[blue, thick, rounded corners=5pt] (1.7,1.1) rectangle (2.3,1.9);
\node[blue] at (1.8,1.5) {$\scriptstyle r$};
\draw[thick] (2.15,1.3) --node[left]{$\scriptstyle x$} (2.15,1.7);
\draw[thick, knot, cyan, decorate, decoration={snake, segment length=1mm, amplitude=.2mm}] (2.15,.8) --node[right]{$\scriptstyle X$} (2.15,1) -- (2.15,1.3);
\draw[thick, knot, cyan, decorate, decoration={snake, segment length=1mm, amplitude=.2mm}] (2.15,1.7) -- (2.15,2) --node[right]{$\scriptstyle X$} (2.15,2.2);
\filldraw[cyan] (2.15,1.7) node[left, xshift=.1cm]{$\scriptstyle \psi^\dag$} circle (.05cm);
\filldraw[cyan] (2.15,1.3) node[left]{$\scriptstyle \psi$} circle (.05cm);
}
\in
\bbR_{>0} 
\id_{X}.
$$

By \eqref{eq:IzumiTubeAction}, the hopping operator $H^{(X,\sigma_X)}_\pi(\psi)$ takes vectors in the $\Tube(\cX)$-representation corresponding to $(X,\sigma_X)\in \Irr(Z(\cX))$
to vectors in the $\bbC^{\Irr(Z(\cX))}$-representation corresponding to $(X,\sigma_X)$, establishing the equivalence of the excitations.
\end{rem}

\subsection{Dome algebra and boundary excitations}
\label{sec:DomeAlgebra}

In this section, we discuss the analog of $\Tube(\cX)$ which will help us classify localized excitations in our enriched model.
For this section, $\cX$ is a unitary fusion category and $\cA$ is a unitary braided tensor subcategory of $Z(\cX)$ (which need not be modular).
This equips $\cX$ with the structure of an $\cA$-enriched fusion category, i.e., a unitary braided central functor $\Phi^{Z}:\cA\to Z(\cX)$, which in this setting is given by the fully faithful inclusion.
We write $\Phi: \cA\to \cX$ for the composite of $\Forget\circ \Phi^{Z}$, where $\Forget: Z(\cX)\to \cX$ is the forgetful functor.

The following definition was suggested to us by Corey Jones.

\begin{defn}
The \emph{dome algebra} $\Dome^{\textcolor{red}{\cA}}(\cX)$
is the quotient of $\Tube(\cX)$ by the $*$-ideal
$$
\cJ^{\textcolor{red}{\cA}}:=
\left\langle\;\;\sum_{v\in\Irr(\cX)}
\tikzmath{
\roundNbox{}{(0,0)}{.3}{.2}{.2}{$\phi$}

\draw[thick, red] (-.3,-.3) --node[left]{$\scriptstyle \Phi(a)$} (-.3,-.7);
\draw (.3,.7) --node[left]{$\scriptstyle v$} (.3,1.1);
\draw (-.3,-.7) --node[left]{$\scriptstyle v$} (-.3,-1.1);
\draw (.3,-.3) --node[right]{$\scriptstyle x$} (.3,-1.1);
\draw (-.3,.3) --node[left]{$\scriptstyle y$} (-.3,1.1);
\draw[thick, red] (.3,.3) --node[right]{$\scriptstyle \Phi(a)$} (.3,.7);
\fill[fill=blue] (.3,.7) circle (.05cm);
\fill[fill=blue] (-.3,-.7) circle (.05cm);
}
-
\tikzmath{
\draw[thick, red] (-.3,-.3) node[left, yshift=-.2cm]{$\scriptstyle \Phi(a)$} arc (-180:0:.6cm) -- (.9,.3) node[right]{$\scriptstyle \overline{\Phi(a)}$} arc (0:180:.3cm);
\draw[knot] (.3,-.3) -- (.3,-.9) --node[right]{$\scriptstyle x$} (.3,-1.3);
\draw (-.3,.3) --node[left]{$\scriptstyle y$} (-.3,1.1);
\roundNbox{}{(0,0)}{.3}{.2}{.2}{$\phi$}
}
\;\;\right\rangle.
$$
Here, the shaded nodes represent summing over an ONB of $\cX(\Phi(a)\to c)$ with the isometry inner product \eqref{eq:IsometryInnerProdDefn} and its adjoint.
\end{defn}

\begin{rem}
\label{rem:OtherDomeIdealGenerators}
By factoring through simples as in \eqref{eq:FactorTubeThroughSimples}, we see that we also have
$$
\cJ^{\textcolor{red}{\cA}}:=
\left\langle
\sum_{v\in \Irr(\cX)}
\tikzmath{
\draw (.3,-2.1) -- node[right]{$\scriptstyle y$} (.3,-1.3);
\draw (-.3,.3) -- node[left]{$\scriptstyle z$} (-.3,1.1);
\draw (.3,.7) --node[left]{$\scriptstyle v$} (.3,1.1);
\draw (-.3,-1.7) --node[left]{$\scriptstyle v$} (-.3,-2.1);
\draw[thick, red] (.3,.3) --  node[right]{$\scriptstyle \Phi(a)$} (.3,.7);
\draw[thick, red] (-.3,-1.3) -- node[left]{$\scriptstyle \Phi(a)$} (-.3,-1.7);
\draw (0,-.7) --node[right]{$\scriptstyle x$} (0,-.3);
\roundNbox{fill=white}{(0,0)}{.3}{.3}{.3}{$g^\dag$}
\roundNbox{fill=white}{(0,-1)}{.3}{.3}{.3}{$f$}
\fill[fill=blue] (.3,.7) circle (.05cm);
\fill[fill=blue] (-.3,-1.7) circle (.05cm);
}
-
\tikzmath{
\draw (-.3,.3) -- node[left]{$\scriptstyle z$} (-.3,.7);
\draw[thick, red] (-.3,-1.3) node[left, yshift=-.2cm]{$\scriptstyle \Phi(a)$} arc (-180:0:.6cm) -- (.9,.3) node[right]{$\scriptstyle \overline{\Phi(a)}$} arc (0:180:.3cm);
\draw (0,-.7) --node[right]{$\scriptstyle x$} (0,-.3);
\draw[knot] (.3,-1.3) -- node[right]{$\scriptstyle y$} (.3,-1.7) --(.3,-2.3);
\roundNbox{fill=white}{(0,0)}{.3}{.3}{.3}{$g^\dag$}
\roundNbox{fill=white}{(0,-1)}{.3}{.3}{.3}{$f$}
}
\right\rangle.
$$
\end{rem}

\begin{rem}
That $\cJ^{\textcolor{red}{\cA}}$ is a $*$-ideal follows from the fact that
$$
\tikzmath{
\draw[thick, red] (-.3,-.3) node[left, yshift=-.2cm]{$\scriptstyle \Phi(a)$} arc (-180:0:.6cm) -- (.9,.3) node[right]{$\scriptstyle \overline{\Phi(a)}$} arc (0:180:.3cm);
\draw[knot] (.3,-.3) -- (.3,-.9) --node[right]{$\scriptstyle x$} (.3,-1.3);
\draw (-.3,.3) --node[left]{$\scriptstyle y$} (-.3,.7);
\roundNbox{}{(0,0)}{.3}{.2}{.2}{$\phi$}
}
=\tikzmath{
\draw[thick, red] (-.3,-.3) -- (-.3,-.5)    arc (0:-180:.2cm) -- (-.7,.4) arc (0:180:.2cm)--node[left]{$\scriptstyle \Phi(a)$} (-1.1,-.2) arc (0:180:-1.cm) --  (.9,.3) node[right]{$\scriptstyle \overline{\Phi(a)}$} arc (0:180:.3cm);
\draw[knot] (.3,-.3) --node[right]{$\scriptstyle x$} (.3,-.9) -- (.3,-1.6);
\draw (-.3,.3) --node[left]{$\scriptstyle y$} (-.3,1);
\roundNbox{}{(0,0)}{.3}{.2}{.2}{$\phi$}
}
=
\tikzmath{
\draw[thick, red] (-1.2,1.3) .. controls ++(270:1cm) and ++(270:1cm) .. (.6,1.3);
\draw[thick, red, knot] (.6,1.3) arc (0:180:.15cm) -- (.3,.3);
\draw[thick, red, knot] (-.3,-.3) arc (0:-180:.3cm) --node[left]{$\scriptstyle \overline{\Phi(a)}$} (-.9,.7) -- (-.9,1.3) arc (0:180:.15cm);
\draw (.3,-.3) -- (.3,-.9) --node[right]{$\scriptstyle x$} (.3,-.7);
\draw[knot] (-.3,1.7) --node[left]{$\scriptstyle y$} (-.3,.3);
\roundNbox{}{(0,0)}{.3}{.2}{.2}{$\phi$}
}
=
\tikzmath[xscale=-1, yscale=-1]{
\draw[thick, red] (-.3,-.3) node[right, yshift=.2cm]{$\scriptstyle \Phi(a)$} arc (-180:0:.6cm) -- (.9,.3) node[left]{$\scriptstyle \overline{\Phi(a)}$} arc (0:180:.3cm);
\draw[knot] (.3,-.3) -- (.3,-.9) --node[right]{$\scriptstyle y$} (.3,-1.3);
\draw (-.3,.3) --node[left]{$\scriptstyle x$} (-.3,.7);
\roundNbox{}{(0,0)}{.3}{.2}{.2}{$\phi$}
}
\,.
$$
In the final equality above, we used that $Z(\cX)$ is ribbon.
(Note that unitary tensor functors between UFCs are automatically pivotal.)
\end{rem}

The name `dome algebra' comes from thinking about the $\cA$-enriched UFC $\cX$ as a unitary module tensor category; 
$\cX$ corresponds to a 2D boundary of a 3D $\cA$-bulk.
Pushing a morphism from the $\cA$-bulk into the boundary corresponds to applying $\Phi=\Forget\circ \Phi^Z:\cA\to \cX$.
Morphisms in the image of this action map can be lifted back out into the bulk.
Hence the annuli in the defining relation can be viewed as `domes' (or rather `bowls') where the 2D boundary is an annulus and the $\cA$-strings lift into the 3D bulk.\IN{$\cJ^{\textcolor{red}{\cA}}$ the ideal of $\Tube(\cX)$ corresponding to $\textcolor{red}{\cA}$}
$$
\cJ^{\textcolor{red}{\cA}}=
\left\langle
\,\,
\tikzmath{
	\pgfmathsetmacro{\xDome}{.4};	
	\pgfmathsetmacro{\yDome}{.15};	
	\coordinate (a) at (0,0);
	\coordinate (b) at ($ (a) + 4*(\xDome,0) $);
	\coordinate (c) at ($ (a) + (\xDome,0) $);	
	\halfDomeNoDots{(a)}{5*\xDome}{5*\yDome}{}
	\draw (a) --node[below=-.1cm] {\scriptsize{$x$}} ($ (a) + (\xDome,0) $);
	\draw ($ (a) + 3*(\xDome,0) $) --node[below=-.1cm] {\scriptsize{$y$}} ($ (a) + 4*(\xDome,0) $);
	\draw[thick, red] ($ (a) + 5*(\xDome,0) $) ellipse ({3.2*\xDome} and {3.2*\yDome});
    \node at ($ (a) + 7.1*(\xDome,0) + 8*(0,0) $) {\textcolor{red}{\scriptsize{$\Phi(a)$}}};
	\halfDome{(b)}{\xDome}{\yDome}{}
        \fill[white] ($ (c) + (\xDome,0) $) ellipse ({\xDome} and {\yDome});
	\halfDome{(c)}{\xDome}{\yDome}{$\scriptstyle \phi$}
}
\,\,\,\,-\,\,\,\,
\tikzmath{
	\pgfmathsetmacro{\xDome}{.4};	
	\pgfmathsetmacro{\yDome}{.15};	
	\coordinate (a) at (0,0);
	\coordinate (b) at ($ (a) + 4*(\xDome,0) $);
	\coordinate (c) at ($ (a) + (\xDome,0) $);	
	\draw[thick, red] ($ (a) + 2*(\xDome,0) + (240:{\xDome}) $) .. controls ++(240:1cm) and ++(-60:1cm) .. ($ (a) + 2*(\xDome,0) + (-60:\xDome)$) ;
	\node at ($ (a) + 2*(\xDome,0) - 8*(0,\yDome) $) {\textcolor{red}{\scriptsize{$a$}}};
        \draw[super thick, white] (a) arc (-180:0:{5*\xDome} and {5*\yDome});
	\halfDomeNoDots{(a)}{5*\xDome}{5*\yDome}{}
	\draw (a) --node[below=-.1cm] {\scriptsize{$x$}} ($ (a) + (\xDome,0) $);
	\draw ($ (a) + 3*(\xDome,0) $) --node[below=-.1cm] {\scriptsize{$y$}} ($ (a) + 4*(\xDome,0) $);
	\halfDome{(b)}{\xDome}{\yDome}{}
	\halfDome{(c)}{\xDome}{\yDome}{$\scriptstyle \phi$}
}
\,\,
\right\rangle
$$

\begin{thm}
\label{thm:DomeReps=EnrichedCenter}
Suppose $\cA\subset Z(\cX)$ is a braided subcategory and $(X, \sigma_X)\in Z(\cX)$.
The following are equivalent.
\begin{enumerate}
\item 
$(X,\sigma_X)\in Z^\cA(\cX)$, the M\"uger centralizer of $\cA$ in $Z(\cX)$,
\item
The half-braidings $\sigma_{X,a}$ and $\sigma_{a,X}^{-1}$
agree in $\cX(X\Phi(a)\to \Phi(a)X)$,
\item
The half-braidings $\sigma_{X,\Phi(a)}$ and $\sigma_{\Phi(a),X}^{-1}$ have the same matrix elements, i.e., for every
$m\in \cX(X\to y)$,
$n\in \cX(X\to z)$,
$f\in \cX(\Phi(a)y\to x)$,
and
$g\in \cX(z\Phi(a)\to x)$
for $x,y,z\in \Irr(\cX)$,

$$
\tikzmath{
\draw[thick, orange] (-.3,-.45) node[left, yshift=.2cm]{$\scriptstyle X$} .. controls ++(90:.3cm) and ++(-90:.3cm) .. (.3,.45);
\draw[knot, thick, red] (.2,-1.4) -- (.2,-.45) .. controls ++(90:.3cm) and ++(-90:.3cm) .. (-.2,.45) --node[left]{$\scriptstyle \Phi(a)$} (-.2,1.4);
\draw (.3,.95) --node[right]{$\scriptstyle y$} (.3,1.4);
\draw (-.3,-.95) --node[left]{$\scriptstyle z$} (-.3,-1.4);
\draw (0,2) --node[right]{$\scriptstyle x$} (0,2.4);
\draw (0,-2) --node[right]{$\scriptstyle x$} (0,-2.4);
\roundNbox{fill=white}{(0,1.7)}{.3}{.2}{.2}{$f$}
\roundNbox{fill=white}{(.3,.7)}{.25}{0}{0}{$m$}
\roundNbox{fill=white}{(-.3,-.7)}{.25}{0}{0}{$n^\dag$}
\roundNbox{fill=white}{(0,-1.7)}{.3}{.2}{.2}{$g^\dag$}
}
=
\tikzmath{
\draw[thick, red] (.2,-1.4) -- (.2,-.45) .. controls ++(90:.3cm) and ++(-90:.3cm) .. (-.2,.45) --node[left]{$\scriptstyle \Phi(a)$} (-.2,1.4);
\draw[thick, orange, knot] (-.3,-.45) node[left, yshift=.2cm]{$\scriptstyle X$} .. controls ++(90:.3cm) and ++(-90:.3cm) .. (.3,.45);
\draw (.3,.95) --node[right]{$\scriptstyle y$} (.3,1.4);
\draw (-.3,-.95) --node[left]{$\scriptstyle z$} (-.3,-1.4);
\draw (0,2) --node[right]{$\scriptstyle x$} (0,2.4);
\draw (0,-2) --node[right]{$\scriptstyle x$} (0,-2.4);
\roundNbox{fill=white}{(0,1.7)}{.3}{.2}{.2}{$f$}
\roundNbox{fill=white}{(.3,.7)}{.25}{0}{0}{$m$}
\roundNbox{fill=white}{(-.3,-.7)}{.25}{0}{0}{$n^\dag$}
\roundNbox{fill=white}{(0,-1.7)}{.3}{.2}{.2}{$g^\dag$}
}
$$
\item
The representation of the tube algebra $\Tube(\cX)$ associated to $(X,\sigma_X)$ descends to a representation of the dome algebra $\Dome^{\textcolor{red}{\cA}}(\cX)$.
\end{enumerate}
\end{thm}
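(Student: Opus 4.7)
\medskip

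\noindent\textbf{Proof plan.} The plan is to verify the equivalences in the cycle $(1) \Leftrightarrow (2) \Leftrightarrow (3) \Leftrightarrow (4)$.

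The equivalence $(1) \Leftrightarrow (2)$ is essentially by definition. Recall that the M\"uger centralizer $Z^\cA(\cX)$ consists of those $(X,\sigma_X) \in Z(\cX)$ whose double braiding with every object of (the essential image of) $\Phi^Z$ is trivial. Under the braiding of $Z(\cX)$, the braiding of $(X,\sigma_X)$ with $\Phi^Z(a)$ is exactly the half-braiding $\sigma_{X,\Phi(a)} : X\Phi(a) \to \Phi(a)X$. Hence $(X,\sigma_X)$ centralizes $\Phi^Z(\cA)$ if and only if $\sigma_{X,\Phi(a)} \circ \sigma_{\Phi(a),X} = \id_{\Phi(a)X}$, i.e., $\sigma_{X,\Phi(a)} = \sigma_{\Phi(a),X}^{-1}$ for every $a \in \cA$. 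The equivalence $(2) \Leftrightarrow (3)$ is just semisimplicity: a morphism in $\cX(X\Phi(a) \to \Phi(a)X)$ is determined by its matrix coefficients against ONBs for $\cX(X\Phi(a)\to x)$ and $\cX(\Phi(a)X \to x)$ indexed by simples $x \in \Irr(\cX)$, and these are obtained by composing with the trivalent vertices $f,g$ and the summands $m,n$ appearing in (3). Equality of morphisms is therefore equivalent to equality of all such matrix coefficients.

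The interesting implication is $(3) \Leftrightarrow (4)$, which is where the dome ideal $\cJ^\cA$ enters. By the discussion surrounding \eqref{eq:TraceInnerProduct} and \eqref{eq:FactorTubeThroughSimples}, the $\Tube(\cX)$-action on $\cH_X := \bigoplus_y \cX(X\to y)$ associated to $(X,\sigma_X)$ is completely determined by, and completely determines, the matrix units of the half-braidings $\sigma_{X,v}$ for $v \in \Irr(\cX)$. I would compute the action of the two generators of $\cJ^\cA$ separately: acting with the first diagram on $m \in \cX(X \to y)$ and pairing against $n \in \cX(X \to z)$ produces, after factoring $\phi$ through simples as in \eqref{eq:FactorTubeThroughSimples}, exactly a linear combination of the left-hand matrix units in (3) (involving $\sigma_{X,\Phi(a)}$ passing through the interior); acting with the second diagram --- where the $\Phi(a)$ loop encircles the $\phi$-tube and thus braids \emph{over} the relevant $X$-strand in the action picture --- yields the same linear combination but with $\sigma_{\Phi(a),X}^{-1}$ in place of $\sigma_{X,\Phi(a)}$. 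Thus the ideal $\cJ^\cA$ acts as zero on $\cH_X$ if and only if (3) holds for all choices of $a,x,y,z,f,g,m,n$, giving $(3) \Leftrightarrow (4)$.

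The main obstacle will be bookkeeping in the graphical calculus for $(3) \Leftrightarrow (4)$: one must carefully keep track of the convention \eqref{eq:HalfBraidConvention} (so that $X$ passes \emph{under} $\Phi(a)$ in the action), of which generator of $\cJ^\cA$ yields $\sigma_{X,\Phi(a)}$ versus $\sigma_{\Phi(a),X}^{-1}$, and of the fact that the alternative presentation of $\cJ^\cA$ recorded in Remark \ref{rem:OtherDomeIdealGenerators} (obtained by factoring $\phi$ through simples) produces precisely the matrix unit identities appearing in (3). Once these identifications are made, the equivalence is a direct comparison of diagrams, and no additional calculation is required.
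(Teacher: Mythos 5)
Your argument follows essentially the same route as the paper: the equivalences $(1)\Leftrightarrow(2)\Leftrightarrow(3)$ are unpacked the way one would expect (where the paper simply declares them "clear"), and for $(3)\Leftrightarrow(4)$ you correctly identify that one should invoke Remark~\ref{rem:OtherDomeIdealGenerators} and compare the $\Tube(\cX)$-actions of the two generators of $\cJ^\cA$ on $\cH_X$ via their matrix coefficients, which is exactly the computation carried out in the paper. The plan is sound and would fill in to the paper's proof with no change of strategy.
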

\begin{proof}
That (1), (2), and (3) are equivalent is clear. 
For $(3) \Leftrightarrow(4)$, we use Remark \ref{rem:OtherDomeIdealGenerators}. 
Consider
$$
\phi =
\sum_{v\in \Irr(\cX)}
\tikzmath{
\draw (.3,-2.1) -- node[right]{$\scriptstyle y$} (.3,-1.3);
\draw (-.3,.3) -- node[left]{$\scriptstyle z$} (-.3,1.1);
\draw (.3,.7) --node[left]{$\scriptstyle v$} (.3,1.1);
\draw (-.3,-1.7) --node[left]{$\scriptstyle v$} (-.3,-2.1);
\draw[thick, red] (.3,.3) --  node[right]{$\scriptstyle \Phi(a)$} (.3,.7);
\draw[thick, red] (-.3,-1.3) -- node[left]{$\scriptstyle \Phi(a)$} (-.3,-1.7);
\draw (0,-.7) --node[right]{$\scriptstyle x$} (0,-.3);
\roundNbox{fill=white}{(0,0)}{.3}{.3}{.3}{$g^\dag$}
\roundNbox{fill=white}{(0,-1)}{.3}{.3}{.3}{$f$}
\fill[fill=blue] (.3,.7) circle (.05cm);
\fill[fill=blue] (-.3,-1.7) circle (.05cm);
}
\qquad\text{and}\qquad 
\psi = 
\tikzmath{
\draw (-.3,.3) -- node[left]{$\scriptstyle z$} (-.3,.7);
\draw[thick, red] (-.3,-1.3) node[left, yshift=-.2cm]{$\scriptstyle \Phi(a)$} arc (-180:0:.6cm) -- (.9,.3) node[right]{$\scriptstyle \overline{\Phi(a)}$} arc (0:180:.3cm);
\draw (0,-.7) --node[right]{$\scriptstyle x$} (0,-.3);
\draw[knot] (.3,-1.3) -- node[right]{$\scriptstyle y$} (.3,-1.7) --(.3,-2.3);
\roundNbox{fill=white}{(0,0)}{.3}{.3}{.3}{$g^\dag$}
\roundNbox{fill=white}{(0,-1)}{.3}{.3}{.3}{$f$}
}.
$$
Given $(X,\sigma_X)\in Z(\cX)$, the representation $\cH_X$ of $\Tube(\cX)$ descends to a representation of $\Dome^{\textcolor{red}{\cA}}(\cX)$ if and only if for all $m\in \cX(X\to y)$ and $n\in \cX(X\to z)$,
$$
\langle n | \phi \rhd m \rangle_{\cX(X\to z)}
=
\langle n | \psi \rhd m \rangle_{\cX(X\to z)}.
$$
We calculate that
\begin{align*}
\langle n | \phi\rhd m \rangle 
&=
\frac{1}{d_X}
\tr^\cX_{X}\left(
\tikzmath{
\draw[thick, orange] (.3,-2) -- node[right]{$\scriptstyle X$} (.3,-1.6) --(.3,-1.3);
\draw[thick, orange] (-.3,1.7) -- node[right]{$\scriptstyle X$} (-.3,1.3);
\draw (.3,-.7) -- node[right]{$\scriptstyle y$} (.3,-.3);
\draw (-.3,.3) -- node[left]{$\scriptstyle z$} (-.3,.7);
\draw[knot, red, thick] (.3,.3) node[left, yshift=.15cm]{} arc (180:0:.3cm) -- node[right]{$\scriptstyle \overline{\Phi(a)}$} (.9,-1) arc (0:-180:.6cm) -- node[left]{$\scriptstyle \Phi(a)$} (-.3,-.3);
\roundNbox{fill=white}{(0,0)}{.3}{.3}{.3}{$\phi$}
\roundNbox{fill=white}{(.3,-1)}{.3}{0}{0}{$m$}
\roundNbox{fill=white}{(-.3,1)}{.3}{0}{0}{$n^\dag$}
}
\right)
=
\frac{1}{d_X}\tr^\cX_x\left( \tikzmath{
\draw[thick, orange] (-.3,-.45) node[left, yshift=.2cm]{$\scriptstyle X$} .. controls ++(90:.3cm) and ++(-90:.3cm) .. (.3,.45);
\draw[knot, thick, red] (.2,-1.4) -- (.2,-.45) .. controls ++(90:.3cm) and ++(-90:.3cm) .. (-.2,.45) --node[left]{$\scriptstyle \Phi(a)$} (-.2,1.4);
\draw (.3,.95) --node[right]{$\scriptstyle y$} (.3,1.4);
\draw (-.3,-.95) --node[left]{$\scriptstyle z$} (-.3,-1.4);
\draw (0,2) --node[right]{$\scriptstyle x$} (0,2.4);
\draw (0,-2) --node[right]{$\scriptstyle x$} (0,-2.4);
\roundNbox{fill=white}{(0,1.7)}{.3}{.2}{.2}{$f$}
\roundNbox{fill=white}{(.3,.7)}{.25}{0}{0}{$m$}
\roundNbox{fill=white}{(-.3,-.7)}{.25}{0}{0}{$n^\dag$}
\roundNbox{fill=white}{(0,-1.7)}{.3}{.2}{.2}{$g^\dag$}
}\right) 
&&\text{and}
\displaybreak[1]\\
\langle n | \psi\rhd m \rangle 
&=
\frac{1}{d_X}
\tr^\cX_{X}\left(
\tikzmath{
\draw[thick, orange] (0,-1.7) --node[right]{$\scriptstyle X$} (0,-1.3);
\draw[thick, orange] (0,1.7) --node[right]{$\scriptstyle X$} (0,1.3);
\draw (0,-.7) -- node[right]{$\scriptstyle y$} (0,-.3);
\draw (0,.3) -- node[right]{$\scriptstyle z$} (0,.7);
\roundNbox{fill=white}{(0,0)}{.3}{0}{0}{$\psi$}
\roundNbox{fill=white}{(0,-1)}{.3}{0}{0}{$m$}
\roundNbox{fill=white}{(0,1)}{.3}{0}{0}{$n^\dag$}
}
\right)
= 
\frac{1}{d_X}\tr^\cX_x\left( \tikzmath{
\draw[thick, red] (.2,-1.4) -- (.2,-.45) .. controls ++(90:.3cm) and ++(-90:.3cm) .. (-.2,.45) --node[left]{$\scriptstyle \Phi(a)$} (-.2,1.4);
\draw[thick, orange, knot] (-.3,-.45) node[left, yshift=.2cm]{$\scriptstyle X$} .. controls ++(90:.3cm) and ++(-90:.3cm) .. (.3,.45);
\draw (.3,.95) --node[right]{$\scriptstyle y$} (.3,1.4);
\draw (-.3,-.95) --node[left]{$\scriptstyle z$} (-.3,-1.4);
\draw (0,2) --node[right]{$\scriptstyle x$} (0,2.4);
\draw (0,-2) --node[right]{$\scriptstyle x$} (0,-2.4);
\roundNbox{fill=white}{(0,1.7)}{.3}{.2}{.2}{$f$}
\roundNbox{fill=white}{(.3,.7)}{.25}{0}{0}{$m$}
\roundNbox{fill=white}{(-.3,-.7)}{.25}{0}{0}{$n^\dag$}
\roundNbox{fill=white}{(0,-1.7)}{.3}{.2}{.2}{$g^\dag$}
}\right).
\end{align*}
These two expressions are equal for all $m,n$ if and only if (3) holds.
\end{proof}

\subsection{Localized enriched excitations, punctured enriched skein modules, and enriched string operators}
\label{sec:localEnrichedExcitations}

We now focus on localized boundary excitations of our $\cA$-enriched $\cX$-string net model.
Here, localized means that excitations violate (at most) 4 terms in the Hamiltonian, corresponding to a single boundary edge and its 3 adjacent plaquettes.
These excitations can thus be viewed as vectors in a punctured enriched skein module $\cS^{\textcolor{red}{\cA}}_\cX(\bbA,\textcolor{red}{m},n,1)$.
As above, this space carries an action of the $\rm C^*$-algebra $\fA$ from \eqref{eq:BigTube}, which by an isotopy/Morita equivalence argument, can be reduced to an action similar to \eqref{eq:TubeAction} of $\Tube(\cX)$ after applying $\pi^1_{\ell,v}$.
$$
\fA\curvearrowright
\tikzmath{
\draw[thick, red, xshift=-.4cm, yshift=.4cm] (.7,.7) grid (3.3,2.3);
\draw[thick, red, knot, xshift=-.2cm, yshift=.2cm] (.7,.7) grid (3.3,2.3);
\foreach \x in {1,2,3}{
\foreach \y in {1,2}{
\draw[thick, red] (\x,\y) -- ($ (\x,\y) + (-.6,.6) $);
}}
\foreach \x in {.8,1.8,2.8}{
\foreach \y in {1.2,2.2}{
\fill[red] (\x,\y) circle (.05cm);
}}
\foreach \x in {.6,1.6,2.6}{
\foreach \y in {1.4,2.4}{
\fill[red] (\x,\y) circle (.05cm);
}}
\fill[gray!60, opacity=.5, rounded corners=5pt] (.9,1.5) -- (.9,1.3) -- (1.1,1.1) -- (2.9,1.1) -- (2.9,1.9) -- (2.8,2.1) -- (.9,2.1) -- (.9,1.5);
\draw[thick, knot] (.7,1) -- (3.3,1);
\draw[thick, knot] (.7,2) -- (3.3,2);
\draw[thick, knot] (1,.7) -- (1,2.3);
\draw[thick, knot] (3,.7) -- (3,2.3);
\draw[thick, knot] (2,.7) -- (2,1.3);
\draw[thick, knot] (2,2.3) -- (2,1.7);
\foreach \x in {1,2,3}{
\foreach \y in {1,2}{
\fill (\x,\y) circle (.05cm);
}}
\node at (2.15,2.15) {$\scriptstyle v$};
\node at (2.15,1.75) {$\scriptstyle \ell$};
}
\begin{tikzcd}
\mbox{}
\arrow[r,squiggly,"\pi^1_{\ell,v}"]
&
\mbox{}
\end{tikzcd}
\Tube(\cX)\curvearrowright
\tikzmath{
\draw[thick, red, xshift=-.4cm, yshift=.4cm] (.7,.7) grid (3.3,2.3);
\draw[thick, red, knot, xshift=-.2cm, yshift=.2cm] (.7,.7) grid (3.3,2.3);
\foreach \x in {1,2,3}{
\foreach \y in {1,2}{
\draw[thick, red] (\x,\y) -- ($ (\x,\y) + (-.6,.6) $);
}}
\foreach \x in {.8,1.8,2.8}{
\foreach \y in {1.2,2.2}{
\fill[red] (\x,\y) circle (.05cm);
}}
\foreach \x in {.6,1.6,2.6}{
\foreach \y in {1.4,2.4}{
\fill[red] (\x,\y) circle (.05cm);
}}
\fill[gray!60, opacity=.5, rounded corners=5pt] (.9,1.5) -- (.9,1.3) -- (1.1,1.1) -- (2.9,1.1) -- (2.9,1.9) -- (2.8,2.1) -- (.9,2.1) -- (.9,1.5);
\draw[thick, knot] (.7,1) -- (3.3,1);
\draw[thick, knot] (.7,2) -- (3.3,2);
\draw[thick, knot] (1,.7) -- (1,2.3);
\draw[thick, knot] (3,.7) -- (3,2.3);
\draw[thick, knot] (2,.7) -- (2,1.3);
\draw[thick, knot] (2,2.3) -- (2,2);
\foreach \x in {1,2,3}{
\foreach \y in {1,2}{
\fill (\x,\y) circle (.05cm);
}}
}
$$

\begin{prop}
The ideal $\cJ^{\textcolor{red}{\cA}}\subset \Tube(\cX)$ acts as zero.
Thus the $\Tube(\cX)$ action descends to an action of $\Dome^{\textcolor{red}{\cA}}(\cX)$.
\end{prop}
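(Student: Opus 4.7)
The plan is to show that both generators of $\cJ^{\cA}$ produce identical actions on any state in the punctured enriched skein module, by exploiting the fact that $\Phi(a)$-strands on the $\cX$-boundary can be lifted into the $\cA$-bulk via $\Phi^Z : \cA \to Z(\cX)$ and freely isotoped there.

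First, I would write out the action of each ideal generator explicitly on $\cS^{\cA}_\cX(\bbA,m,n,1)$, using the analog of Equation \eqref{eq:TubeAction} adapted to handle the additional $\Phi(a)$-strands. The second (``loop'') term acts by gluing $\phi$ at the excitation site together with a closed $\Phi(a)$-loop that encircles $\phi$ along the boundary annulus. The first (``dome'') term acts by gluing $\phi$ with two $\Phi(a)$ half-strands emerging from the annulus, summed over $v \in \Irr(\cX)$ with the blue ONB vertices providing the resolution of $\Phi(a)$ into simples of $\cX$ via the isometry inner product \eqref{eq:IsometryInnerProdDefn}.

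Second, I would exploit the enriched structure: since $\Phi(a) = F(\Phi^Z(a))$ with $\Phi^Z(a) \in Z(\cX)$, any $\Phi(a)$-string lying along the 2D boundary may equivalently be drawn as an $a$-string in the 3D $\cA$-bulk which is pulled down to the boundary. This is the same bulk/boundary correspondence that underlies the slide move in Lemma \ref{lem:5=6}. Once lifted into the bulk, $\cA$-strands can be freely isotoped using the bulk $\cA$-plaquette operators, which are present in the ground state surrounding the excitation.

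Third, I would identify the two lifted configurations. Under the lift, the closed $\Phi(a)$-loop in the second term becomes a small $a$-loop in the bulk encircling the excitation. Under the lift of the first term, the Fourier sum $\sum_{v \in \Irr(\cX)}$ over the blue ONB vertices collapses by \eqref{eq:Fusion} into precisely the same closed $a$-loop, because the blue decomposition is the resolution of the identity morphism on $\Phi(a)$ through its simple summands in $\cX$. Hence both lifted expressions are identical in the enriched skein module, and the difference acts as zero.

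I expect the main obstacle to be the scalar bookkeeping: carefully tracking the factors from \eqref{eq:GluingOperatorWithScalars}, the $\tfrac{1}{d_v}$ factors in the $\Tube(\cX)$ multiplication, and the isometry-versus-trace inner products between \eqref{eq:IsometryInnerProdDefn} and \eqref{eq:EnrichedSkeinModuleInnerProduct}, so that the Fourier-type collapse of $\sum_v$ really reconstructs the closed loop on the nose rather than up to a scalar. Once this bookkeeping is done, the descent of the $\Tube(\cX)$-action to a $\Dome^{\cA}(\cX)$-action is immediate by definition of $\Dome^{\cA}(\cX) = \Tube(\cX)/\cJ^{\cA}$.
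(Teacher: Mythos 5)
Your proposal follows essentially the same strategy as the paper's proof: exploit the $\cA$-enrichment to lift $\Phi(a)$-strands off the 2D boundary into the 3D $\cA$-bulk, where the surrounding bulk $\cA$-plaquette operators (via the Lemma~\ref{lem:5=6} slide argument) let you isotope freely, and conclude that the two generators of $\cJ^{\cA}$ act identically. You correctly identify this bulk-slide mechanism as the core of the argument.

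Two small imprecisions worth flagging. First, the blue nodes in the dome-term generator are ONBs of $\cX(\Phi(a)\to v)$ with the \emph{isometry} inner product \eqref{eq:IsometryInnerProdDefn}, so the collapse $\sum_v \sum_e e^\dagger e = \id_{\Phi(a)}$ is the resolution of identity for that inner product, which is not literally \eqref{eq:Fusion} (a trivalent/skein-normalized relation). This matters for the scalar bookkeeping you correctly flag as the obstacle. Second, your description of step three is compressed in a way that could mislead: after resolving $\sum_v$ and applying the $\Tube(\cX)$-action, the dome term produces a $\Phi(a)$-loop that \emph{encircles the puncture}, while the loop term produces a small $\Phi(a)$-loop near $\phi$ that does \emph{not}. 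These are not equal on the 2D boundary; the equality is exactly what the bulk slide buys you, because the puncture is only a point on the boundary of a 3D bulk and an $a$-loop encircling it on the boundary is unlinked once lifted. You mention the bulk isotopy in step two as a general fact, but in step three you phrase it as if the Fusion/identity collapse alone produces the loop configuration. Making the slide explicit in step three would align your argument exactly with the paper's sequence of diagram manipulations.
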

\begin{proof}
We sketch the proof, which is similar to the calculation \eqref{eq:5=6}.
After applying $\pi^1_{\ell,v}$, we have effectively glued two cubes together, resulting in a large rectangular prism with 8 $\cA$-faces and one $\cX$-face.
The image of our excited state under $\pi^1_{\ell,v}$ still lies in the images of the 8 plaquette operators which act along the boundary of our rectangular prism.
We now compute 
\begin{align*}
\tikzmath{
\draw[thick, orange] (-.3,-.3) --node[left]{$\scriptstyle \Phi(a)$} (-.3,-.7);
\draw (.3,.7) --node[left]{$\scriptstyle c$} (.3,1.1);
\draw (-.3,-.7) --node[left]{$\scriptstyle c$} (-.3,-1.1);
\draw (.3,-.3) --node[right]{$\scriptstyle y$} (.3,-1.1);
\draw (-.3,.3) --node[left]{$\scriptstyle z$} (-.3,1.1);
\draw[thick, orange] (.3,.3) --node[right]{$\scriptstyle \Phi(a)$} (.3,.7);
\fill[fill=blue] (.3,.7) circle (.05cm);
\fill[fill=blue] (-.3,-.7) circle (.05cm);
\roundNbox{}{(0,0)}{.3}{.2}{.2}{$f$}
}
\rhd
\tikzmath{
\draw[thick, red, xshift=-.2cm, yshift=.2cm] (.7,.7) grid (3.3,2.3);
\foreach \x in {1,2,3}{
\foreach \y in {1,2}{
\draw[thick, red] (\x,\y) -- ($ (\x,\y) + (-.4,.4) $);
}}
\foreach \x in {.8,1.8,2.8}{
\foreach \y in {1.2,2.2}{
\fill[red] (\x,\y) circle (.05cm);
}}
\fill[gray!60, opacity=.5, rounded corners=5pt] (.9,1.5) -- (.9,1.3) -- (1.1,1.1) -- (2.9,1.1) -- (2.9,1.9) -- (2.8,2.1) -- (.9,2.1) -- (.9,1.5);
\draw[thick, knot] (.7,1) -- (3.3,1);
\draw[thick, knot] (.7,2) -- (3.3,2);
\draw[thick, knot] (1,.7) -- (1,2.3);
\draw[thick, knot] (3,.7) -- (3,2.3);
\draw[thick, knot] (2,.7) -- (2,1.3);
\draw[thick, knot] (2,2.3) -- (2,2);
\foreach \x in {1,2,3}{
\foreach \y in {1,2}{
\fill (\x,\y) circle (.05cm);
}}
\node at (2.15,1.1) {$\scriptstyle x$};
}
&=
\delta_{x=y}
\left(\frac{d_z}{d_x}\right)^{1/4}
\cdot
\tikzmath{
\draw[thick, red, xshift=-.2cm, yshift=.2cm] (.7,.7) grid (3.3,2.3);
\foreach \x in {1,2,3}{
\foreach \y in {1,2}{
\draw[thick, red] (\x,\y) -- ($ (\x,\y) + (-.4,.4) $);
}}
\foreach \x in {.8,1.8,2.8}{
\foreach \y in {1.2,2.2}{
\fill[red] (\x,\y) circle (.05cm);
}}
\draw[thick, knot] (.7,.7) grid (3.3,2.3);
\fill[white] (1.9,1.6) rectangle (2.1,1.95);
\filldraw[orange, thick, fill=gray!30] (1.5,1.1) -- (1.8,1.1) to[out=0,in=180] (2,1.3) to[out=0,in=180] (2.2,1.1) -- (2.8,1.1) arc (-90:0:.1cm) -- (2.9,1.8) arc (0:90:.1cm) -- (1.2,1.9) arc (90:180:.1cm) -- (1.1,1.2) arc (180:270:.1cm) -- (1.5,1.1);
\draw[thick] (2,1.3) -- (2,1.5);
\filldraw[orange] (2,1.3) node[right, yshift=.1cm]{$\scriptstyle f$} circle (.05cm);
\foreach \x in {1,2,3}{
\foreach \y in {1,2}{
\filldraw (\x,\y) circle (.05cm);
}}
\foreach \x in {0.5,1.5,2.5}{
\foreach \y in {0.5,1.5}{
}}
\node[orange] at (1.5,1.25) {$\scriptstyle \Phi(a)$};
}
\displaybreak[1]\\&=
\delta_{x=y}
\left(\frac{d_z}{d_x}\right)^{1/4}
\cdot
\tikzmath[scale=2]{
\draw[thick, red, xshift=-.4cm, yshift=.4cm] (.9,.9) grid (3.1,2.1);
\foreach \x in {1,2,3}{
\foreach \y in {1,2}{
\draw[thick, red] (\x,\y) -- ($ (\x,\y) + (-.5,.5) $);
}}
\foreach \x in {.6,1.6,2.6}{
\foreach \y in {1.4,2.4}{
\fill[red] (\x,\y) circle (.025cm);
}}
\draw[thick, orange, densely dotted, rounded corners=5pt] (.8,1.6) rectangle (1.4,2.2) ;
\draw[thick, orange, densely dotted, rounded corners=5pt, xshift=1cm] (.8,1.6) rectangle (1.4,2.2) ;
\draw[thick, orange, densely dotted, rounded corners=5pt, xshift=1cm] (1.7,1.7) -- (1.7,2.1) -- (1.9,1.9) -- (1.9,1.3) -- (1.7,1.5) -- (1.7,1.7);
\draw[thick, orange, densely dotted, rounded corners=5pt] (1.5,1.1) -- (1.7,1.1) -- (1.5,1.3) -- (.9,1.3) -- (1.1,1.1) -- (1.5,1.1);
\draw[thick, orange, densely dotted, rounded corners=5pt, xshift=1cm] (1.5,1.1) -- (1.7,1.1) -- (1.5,1.3) -- (.9,1.3) -- (1.1,1.1) -- (1.5,1.1);
\draw[thick, knot] (.9,.9) grid (3.1,2.1);
\fill[white] (1.9,1.6) rectangle (2.1,1.975);
\filldraw[orange, thick, fill=gray!30] (1.5,1.1) -- (1.8,1.1) to[out=0,in=180] (2,1.3) to[out=0,in=180] (2.2,1.1) -- (2.8,1.1) arc (-90:0:.1cm) -- (2.9,1.8) arc (0:90:.1cm) -- (1.2,1.9) arc (90:180:.1cm) -- (1.1,1.2) arc (180:270:.1cm) -- (1.5,1.1);
\draw[thick] (2,1.3) -- (2,1.5);
\filldraw[orange] (2,1.3) node[left]{$\scriptstyle f$} circle (.025cm);
\foreach \x in {1,2,3}{
\foreach \y in {1,2}{
\filldraw (\x,\y) circle (.025cm);
}}
\foreach \x in {0.5,1.5,2.5}{
\foreach \y in {0.5,1.5}{
}}
\draw[thick, orange, knot, rounded corners=5pt] (.7,1.7) -- (.7,2.1) -- (.9,1.9) -- (.9,1.3) -- (.7,1.5) -- (.7,1.7);
\draw[thick, orange, knot, rounded corners=5pt] (1.5,2.1) -- (1.7,2.1) -- (1.5,2.3) -- (.9,2.3) -- (1.1,2.1) -- (1.5,2.1);
\draw[thick, orange, knot, rounded corners=5pt, xshift=1cm] (1.5,2.1) -- (1.7,2.1) -- (1.5,2.3) -- (.9,2.3) -- (1.1,2.1) -- (1.5,2.1);
\node[orange] at (1.5,1.2) {$\scriptstyle \Phi(a)$};
}
\displaybreak[1]\\&=
\delta_{x=y}
\left(\frac{d_z}{d_x}\right)^{1/4}
\cdot
\tikzmath[scale=2]{
\draw[thick, red, xshift=-.4cm, yshift=.4cm] (.9,.9) grid (3.1,2.1);
\foreach \x in {1,2,3}{
\foreach \y in {1,2}{
\draw[thick, red] (\x,\y) -- ($ (\x,\y) + (-.5,.5) $);
}}
\foreach \x in {.6,1.6,2.6}{
\foreach \y in {1.4,2.4}{
\fill[red] (\x,\y) circle (.025cm);
}}
\draw[thick, orange, densely dotted, rounded corners=5pt] (.8,1.6) rectangle (1.4,2.2) ;
\draw[thick, orange, densely dotted, rounded corners=5pt, xshift=1cm] (.8,1.9) -- (.8,2.2) -- (1.4,2.2) -- (1.4,1.9);
\draw[thick, orange, densely dotted, rounded corners=5pt, xshift=1cm] (1.7,1.7) -- (1.7,2.1) -- (1.9,1.9) -- (1.9,1.3) -- (1.7,1.5) -- (1.7,1.7);
\draw[thick, orange, densely dotted, rounded corners=5pt] (1.5,1.15) -- (1.7,1.15) -- (1.5,1.25) -- (.9,1.25) -- (1.1,1.15) -- (1.5,1.15);
\draw[thick, orange, densely dotted, rounded corners=5pt, xshift=1cm] (1.5,1.15) -- (1.7,1.15) -- (1.5,1.25) -- (.9,1.25) -- (1.1,1.15) -- (1.5,1.15);
\draw[orange, thick, knot] (1.5,1.1) -- (1.8,1.1) to[out=0,in=180] (2,1.2) to[out=0,in=180] (2.2,1.1) -- (2.7,1.1) to[out=0,in=-45] (2.7,1.2) to[out=135, in=0] (2.5,1.3) -- (.9,1.3) to[out=180, in = 135] (.9,1.2) to[out=-45, in=180] (1.1,1.1) -- (1.5,1.1);
\draw[thick, knot] (.9,.9) grid (3.1,2.1);
\fill[white] (1.9,1.5) rectangle (2.1,1.975);
\draw[thick, orange, densely dotted, rounded corners=5pt, xshift=1cm] (.8,1.9) -- (.8,1.6) -- (1.4,1.6) -- (1.4,1.9) ;
\filldraw[orange] (2,1.2) node[right, xshift=.1cm]{$\scriptstyle f$} circle (.025cm);
\foreach \x in {1,2,3}{
\foreach \y in {1,2}{
\filldraw (\x,\y) circle (.025cm);
}}
\foreach \x in {0.5,1.5,2.5}{
\foreach \y in {0.5,1.5}{
}}
\draw[thick, orange, knot, rounded corners=5pt] (.7,1.7) -- (.7,2.1) -- (.9,1.9) -- (.9,1.3) -- (.7,1.5) -- (.7,1.7);
\draw[thick, orange, knot, rounded corners=5pt] (1.5,2.1) -- (1.7,2.1) -- (1.5,2.3) -- (.9,2.3) -- (1.1,2.1) -- (1.5,2.1);
\draw[thick, orange, knot, rounded corners=5pt, xshift=1cm] (1.5,2.1) -- (1.7,2.1) -- (1.5,2.3) -- (.9,2.3) -- (1.1,2.1) -- (1.5,2.1);
}
\\&=
\delta_{x=y}
\left(\frac{d_z}{d_x}\right)^{1/4}
\cdot
\tikzmath[scale=1.2]{
\draw[thick, red, xshift=-.2cm, yshift=.2cm] (.7,.7) grid (3.3,2.3);
\foreach \x in {1,2,3}{
\foreach \y in {1,2}{
\draw[thick, red] (\x,\y) -- ($ (\x,\y) + (-.4,.4) $);
}}
\foreach \x in {.8,1.8,2.8}{
\foreach \y in {1.2,2.2}{
\fill[red] (\x,\y) circle (.05cm);
}}
\fill[gray!60, opacity=.5, rounded corners=5pt] (.9,1.5) -- (.9,1.3) -- (1.1,1.1) -- (2.9,1.1) -- (2.9,1.9) -- (2.8,2.1) -- (.9,2.1) -- (.9,1.5);
\draw[thick, orange, knot] (2,1.3) circle (.15cm);
\draw[thick, knot] (.7,1) -- (3.3,1);
\draw[thick, knot] (.7,2) -- (3.3,2);
\draw[thick, knot] (1,.7) -- (1,2.3);
\draw[thick, knot] (3,.7) -- (3,2.3);
\draw[thick, knot] (2,.7) -- (2,1.6);
\draw[thick, knot] (2,2.3) -- (2,2);
\foreach \x in {1,2,3}{
\foreach \y in {1,2}{
\fill (\x,\y) circle (.05cm);
}}
\filldraw[orange] (2,1.45) node[right, yshift=.1cm]{$\scriptstyle f$} circle (.05cm);
}
\\&=
\tikzmath{
\draw[thick, orange] (-.3,-.3) node[left, yshift=-.2cm]{$\scriptstyle \Phi(a)$} arc (-180:0:.6cm) -- (.9,.3) node[right]{$\scriptstyle \overline{\Phi(a)}$} arc (0:180:.3cm);
\draw[knot] (.3,-.3) -- (.3,-.9) --node[right]{$\scriptstyle y$} (.3,-1.3);
\draw (-.3,.3) --node[left]{$\scriptstyle z$} (-.3,1.1);
\roundNbox{}{(0,0)}{.3}{.2}{.2}{$f$}
}
\rhd
\tikzmath{
\draw[thick, red, xshift=-.2cm, yshift=.2cm] (.7,.7) grid (3.3,2.3);
\foreach \x in {1,2,3}{
\foreach \y in {1,2}{
\draw[thick, red] (\x,\y) -- ($ (\x,\y) + (-.4,.4) $);
}}
\foreach \x in {.8,1.8,2.8}{
\foreach \y in {1.2,2.2}{
\fill[red] (\x,\y) circle (.05cm);
}}
\fill[gray!60, opacity=.5, rounded corners=5pt] (.9,1.5) -- (.9,1.3) -- (1.1,1.1) -- (2.9,1.1) -- (2.9,1.9) -- (2.8,2.1) -- (.9,2.1) -- (.9,1.5);
\draw[thick, knot] (.7,1) -- (3.3,1);
\draw[thick, knot] (.7,2) -- (3.3,2);
\draw[thick, knot] (1,.7) -- (1,2.3);
\draw[thick, knot] (3,.7) -- (3,2.3);
\draw[thick, knot] (2,.7) -- (2,1.3);
\draw[thick, knot] (2,2.3) -- (2,2);
\foreach \x in {1,2,3}{
\foreach \y in {1,2}{
\fill (\x,\y) circle (.05cm);
}}
\node at (2.15,1.1) {$\scriptstyle x$};
}
\end{align*}
This immediately implies that anything in $\cJ^{\textcolor{red}{\cA}}$ acts as zero.     
\end{proof}

Looking at our string operators from \S\ref{sec:StringOperators} above,
it is clear from \eqref{eq:IzumiTubeAction} that the dome ideal acts as zero whenever the anyon comes from $Z^\cA(\cX)$.
Indeed, if $X\in Z^\cA(\cX)$, then calculating in $Z(\cX)$, remembering that squiggly-drawn anyons represent $Z^\cA(\cX)^{\rev}$ as opposed to $Z^\cA(\cX)\subset Z(\cX)$ (and suppressing 4th roots),
$$
\sum_{x\in \Irr(\cX)}
\tikzmath{
\draw[red, thick] (-.4,0) arc (180:0:.4cm);
\draw (-.4,0) arc (-180:0:.4cm);
\filldraw[blue] (-.4,0) circle (.05cm);
\filldraw[blue] (.4,0) circle (.05cm);
\draw[thick, knot, orange, decorate, decoration={snake, segment length=1mm, amplitude=.2mm}](0,-.7) --node[right]{$\scriptstyle X$} (0,-.4) -- (0,0);
\node[red] at (-.7,.3) {$\scriptstyle \Phi(a)$};
\node at (-.45,-.3) {$\scriptstyle x$};
\draw[thick] (0,0) -- (0,.7);
\filldraw[red] (0,.4) node[right, yshift=.15cm]{$\scriptstyle f$} circle (.05cm);
\filldraw[orange] (0,0) node[right,xshift=-.05cm]{$\scriptstyle \psi$} circle (.05cm);
}
:=
\sum_{x\in \Irr(\cX)}
\tikzmath{
\draw[thick, orange](0,-.7) --node[right]{$\scriptstyle X$} (0,-.4) -- (0,0);
\draw[red, thick] (-.4,0) arc (180:0:.4cm);
\draw[knot] (-.4,0) arc (-180:0:.4cm);
\filldraw[blue] (-.4,0) circle (.05cm);
\filldraw[blue] (.4,0) circle (.05cm);
\node[red] at (-.7,.3) {$\scriptstyle \Phi(a)$};
\node at (-.45,-.3) {$\scriptstyle x$};
\draw[thick] (0,0) -- (0,.7);
\filldraw[red] (0,.4) node[right, yshift=.15cm]{$\scriptstyle f$} circle (.05cm);
\filldraw[orange] (0,0) node[right,xshift=-.05cm]{$\scriptstyle \psi$} circle (.05cm);
}
=
\tikzmath{
\draw[thick, orange](0,-.7) --node[right]{$\scriptstyle X$} (0,-.4) -- (0,0);
\draw[red, thick] (-.4,0) arc (180:0:.4cm);
\draw[red, thick, knot] (-.4,0) arc (-180:0:.4cm);
\node[red] at (-.7,.3) {$\scriptstyle \Phi(a)$};
\draw[thick] (0,0) -- (0,.7);
\filldraw[red] (0,.4) node[right, yshift=.15cm]{$\scriptstyle f$} circle (.05cm);
\filldraw[orange] (0,0) node[right,xshift=-.05cm]{$\scriptstyle \psi$} circle (.05cm);
}
=
\tikzmath{
\draw[red, thick] (-.4,0) arc (180:0:.4cm);
\draw[red, thick] (-.4,0) arc (-180:0:.4cm);
\draw[thick, knot, orange](0,-.7) --node[right]{$\scriptstyle X$} (0,-.4) -- (0,0);
\node[red] at (-.7,.3) {$\scriptstyle \Phi(a)$};
\draw[thick] (0,0) -- (0,.7);
\filldraw[red] (0,.4) node[right, yshift=.15cm]{$\scriptstyle f$} circle (.05cm);
\filldraw[orange] (0,0) node[right,xshift=-.05cm]{$\scriptstyle \psi$} circle (.05cm);
}
=
\tikzmath{
\draw[thick, orange](0,-.7) --node[right]{$\scriptstyle X$} (0,-.2);
\draw[red, thick] (-.2,.2) arc (180:0:.2cm);
\draw[red, thick] (-.2,.2) arc (-180:0:.2cm);
\draw[thick, knot] (0,-.2) -- (0,.7);
\filldraw[red] (0,.4) node[right, yshift=.15cm]{$\scriptstyle f$} circle (.05cm);
\filldraw[orange] (0,-.2) node[right,xshift=-.05cm]{$\scriptstyle \psi$} circle (.05cm);
}
=:
\tikzmath{
\draw[thick, orange, decorate, decoration={snake, segment length=1mm, amplitude=.2mm}](0,-.7) --node[right]{$\scriptstyle X$} (0,-.2);
\draw[red, thick] (-.2,.2) arc (180:0:.2cm);
\draw[red, thick] (-.2,.2) arc (-180:0:.2cm);
\draw[thick, knot] (0,-.2) -- (0,.7);
\node[red] at (-.5,.3) {$\scriptstyle \Phi(a)$};
\filldraw[red] (0,.4) node[right, yshift=.15cm]{$\scriptstyle f$} circle (.05cm);
\filldraw[orange] (0,-.2) node[right,xshift=-.05cm]{$\scriptstyle \psi$} circle (.05cm);
}
\,.
$$

\begin{rem}
 As in Remark~\ref{rem:TubeAlgebraMoritaCorrect}, we can additionally see that isomorphism classes of $\Dome^{\textcolor{red}{\cA}}(\cX)$ representations are in bijection with boundary anyon types.
 In the case of a twice-punctured sphere with internal $\cA$-edges, 
 $$
\tikzmath[scale=1.5]{
\begin{scope}[xshift=-.4cm, yshift=.4cm]
\draw[thick, knot, thick] (1,1) -- (3,1);
\draw[thick, knot, thick] (1,2) -- (3,2);
\draw[thick, knot, thick] (1,3) -- (3,3);
\draw[thick, knot, thick] (1,1) -- (1,3);
\draw[thick, knot, thick] (3,1) -- (3,3);
\draw[thick, knot, thick] (2,1) -- (2,2.3);
\end{scope}
\begin{scope}[xshift=-.2cm, yshift=.2cm]
\draw[thick, knot, thick] (1,1) -- (3,1);
\draw[thick, knot, thick, red] (1,2) -- (3,2);
\draw[thick, knot, thick] (1,3) -- (3,3);
\draw[thick, knot, thick] (1,1) -- (1,3);
\draw[thick, knot, thick] (3,1) -- (3,3);
\draw[thick, knot, thick, red] (2,1) -- (2,3);
\end{scope}
\begin{scope}
\draw[thick, knot, thick] (1,1) -- (3,1);
\draw[thick, knot, thick, red] (1,2) -- (3,2);
\draw[thick, knot, thick] (1,3) -- (3,3);
\draw[thick, knot, thick] (1,1) -- (1,3);
\draw[thick, knot, thick] (3,1) -- (3,3);
\draw[thick, knot, thick, red] (2,1) -- (2,3);
\end{scope}
\begin{scope}[xshift=.2cm, yshift=-.2cm]
\draw[thick, knot, thick] (1,1) -- (3,1);
\draw[thick, knot, thick] (1,2) -- (3,2);
\draw[thick, knot, thick] (1,3) -- (3,3);
\draw[thick, knot, thick] (1,1) -- (1,3);
\draw[thick, knot, thick] (3,1) -- (3,3);
\draw[thick, knot, thick] (2,1) -- (2,2.3);
\end{scope}
\foreach \x in {1,2,3}{
\foreach \y in {1,2,3}{
\fill ($ (\x,\y) + (-.4,.4) $) circle (.035cm);
\fill ($ (\x,\y) + (.2,-.2) $) circle (.035cm);
}}
\foreach \x in {1,2,3}{
\foreach \y in {1,3}{
\fill ($ (\x,\y) + (-.2,.2) $) circle (.035cm);
\fill (\x,\y) circle (.035cm);
}}
\fill (1,2) circle (.035cm);
\fill ($ (1,2) + (-.2,.2) $) circle (.035cm);
\fill[red] (2,2) circle (.035cm);
\fill[red] ($ (2,2) + (-.2,.2) $) circle (.035cm);
\fill (3,2) circle (.035cm);
\fill ($ (3,2) + (-.2,.2) $) circle (.035cm);
\foreach \x in {1,2,3}{
\foreach \y in {1,3}{
\draw[thick] ($ (\x,\y) + (.2,-.2) $) -- ($ (\x,\y) + (-.4,.4) $);
}}
\draw[thick] ($ (1,2) + (.2,-.2) $) -- ($ (1,2) + (-.4,.4) $);
\draw[thick] ($ (3,2) + (.2,-.2) $) -- ($ (3,2) + (-.4,.4) $);
}
$$
 the space of ground states is now the subspace of $\Tube(\cX)$ which is compatible with the $\textcolor{red}{\cA}$-bulk.
 Since $\Dome^{\textcolor{red}{\cA}}(\cX)$ is just the quotient of $\Tube(\cX)$ obtained by imposing the compatibility relations $\cJ^{\textcolor{red}{\cA}}$, this is exactly the regular representation of $\Dome^{\textcolor{red}{\cA}}(\cX)$, {i.e.} $\Dome^{\textcolor{red}{\cA}}(\cX)$ as a $\Dome^{\textcolor{red}{\cA}}(\cX)-\Dome^{\textcolor{red}{\cA}}(\cX)$ bimodule.
An argument similar to Remark~\ref{rem:TubeAlgebraMoritaCorrect} shows that anyon types are in bijective correspondence with $\Irr(\Rep(\Dome^{\textcolor{red}{\cA}}(\cX)))=\Irr(Z^{\cA}(\cX))$.
\end{rem}

\subsection{Sphere algebra, bulk excitations, and bulk enrichment of boundary}
\label{sec:SphereAlgebra}

Suppose $\cA$ is an arbitrary UBFC and $\cX$ is an $\cA$-enriched UFC such that the unitary braided central functor $\Phi^Z:\cA\to Z(\cX)$ is fully faithful.
In this section, we analyze the (3+1)D bulk $\cA$-excitations and show they are $Z_2(\cA)$ by constructing representations of the \emph{sphere algebra} $\Sphere(\cA)$, a further quotient of $\Tube(\cA)$.
This is consistent with \cite[Lem.~2.16]{MR4654609} which states that $\Omega Z(\Sigma \cA)= Z_2(\cA)$.
By \cite[Prop.~4.3]{MR3022755}, $Z^\cA(\cX)$ is canonically $Z_2(\cA)$-enriched as a UBFC, i.e., $Z_2(Z^\cA(\cX))=Z_2(\cA)$.
We realize this enrichment by using hopping operators to bring bulk excitations into the boundary.

The following definition was suggested to us by Corey Jones.

\begin{defn}
The \emph{sphere algebra} $\Sphere(\cA)$ \IN{$\Sphere(\cA)$ the sphere algebra of $\cA$}
is the quotient of $\Tube(\cA)$ by the $*$-ideal
$$
\cJ:=
\left\langle\;\;
\tikzmath{
\draw[thick, orange] (-.3,-.3) --node[left]{$\scriptstyle c$} (-.3,-.7);
\draw[thick, orange] (.3,-.3) --node[right]{$\scriptstyle a$} (.3,-.7);
\draw[thick, orange] (-.3,.3) --node[left]{$\scriptstyle b$} (-.3,.7);
\draw[thick, orange] (.3,.3) --node[right]{$\scriptstyle c$} (.3,.7);
\roundNbox{}{(0,0)}{.3}{.2}{.2}{$\phi$}
}
-
\tikzmath{
\draw[thick, orange] (-.3,-.3) node[left, yshift=-.2cm]{$\scriptstyle c$} arc (-180:0:.6cm) -- (.9,.3) node[right]{$\scriptstyle \overline{c}$} arc (0:180:.3cm);
\draw[knot, thick, orange] (.3,-.3) -- (.3,-.9) --node[right]{$\scriptstyle a$} (.3,-1.3);
\draw[thick, orange] (-.3,.3) --node[left]{$\scriptstyle b$} (-.3,1.1);
\roundNbox{}{(0,0)}{.3}{.2}{.2}{$\phi$}
}
\,,\,
\tikzmath{
\draw[thick, orange] (-.3,-.3) --node[left]{$\scriptstyle c$} (-.3,-.7);
\draw[thick, orange] (.3,-.3) --node[right]{$\scriptstyle a$} (.3,-.7);
\draw[thick, orange] (-.3,.3) --node[left]{$\scriptstyle b$} (-.3,.7);
\draw[thick, orange] (.3,.3) --node[right]{$\scriptstyle c$} (.3,.7);
\roundNbox{}{(0,0)}{.3}{.2}{.2}{$\psi$}
}
-
\tikzmath{
\draw[thick, orange] (.3,-.3) -- (.3,-.9) --node[right]{$\scriptstyle a$} (.3,-1.3);
\draw[knot, thick, orange] (-.3,-.3) node[left, yshift=-.2cm]{$\scriptstyle c$} arc (-180:0:.6cm) -- (.9,.3) node[right]{$\scriptstyle \overline{c}$} arc (0:180:.3cm);
\draw[thick, orange] (-.3,.3) --node[left]{$\scriptstyle b$} (-.3,1.1);
\roundNbox{}{(0,0)}{.3}{.2}{.2}{$\psi$}
}
\;\;\right\rangle.
$$
\end{defn}
Similar to the dome algebra, $\Sphere(\cA)$ corresponds to thinking of $\cA$ as an $\cA\boxtimes \cA^{\rev}$-enriched UFC.
$$
\tikzmath{
	\pgfmathsetmacro{\xDome}{.4};	
	\pgfmathsetmacro{\yDome}{.15};	
        \coordinate (a) at (0,0);
	\coordinate (b) at ($ (a) + 4*(\xDome,0) $);
	\coordinate (c) at ($ (a) + (\xDome,0) $);
	\draw[thick, orange] ($ (a) + 2*(\xDome,0) + (240:{\xDome}) $) .. controls ++(240:1cm) and ++(-60:1cm) .. ($ (a) + 2*(\xDome,0) + (-60:\xDome)$) ;
	\node at ($ (a) + 2*(\xDome,0) - 8*(0,\yDome) $) {\textcolor{orange}{\scriptsize{$c$}}};
    \draw[thick, orange] (a) --node[below=-.1cm] {\scriptsize{$a$}} ($ (a) + (\xDome,0) $);
	\draw[thick, orange] ($ (a) + 3*(\xDome,0) $) --node[below=-.1cm] {\scriptsize{$b$}} ($ (a) + 4*(\xDome,0) $);
        \draw[very thick, dotted] (a) arc (180:0:{5*\xDome} and {5*\yDome});
        \draw[very thick, knot] (a) arc (-180:0:{5*\xDome} and {5*\yDome});
        \draw[very thick] (a) arc (-180:180:{5*\xDome});
        \draw[very thick, dotted] (b) arc (180:0:{\xDome} and {\yDome});
        \draw[very thick] (b) arc (-180:0:{\xDome} and {\yDome});
        \draw[very thick] (b) arc (-180:180:{\xDome});
        \draw[very thick, dotted] (c) arc (180:0:{\xDome} and {\yDome});
        \draw[very thick] (c) arc (-180:0:{\xDome} and {\yDome});
        \draw[very thick] (c) arc (-180:180:{\xDome});
        \node at ($ (c) + (\xDome,0) $) {$\scriptstyle \phi$};
}
\,\,\simeq\,\,
\tikzmath{
	\pgfmathsetmacro{\xDome}{.4};	
	\pgfmathsetmacro{\yDome}{.15};	
        \coordinate (a) at (0,0);
	\coordinate (b) at ($ (a) + 4*(\xDome,0) $);
	\coordinate (c) at ($ (a) + (\xDome,0) $);
    \node at ($ (a) + 8.5*(\xDome,0) $) {\textcolor{orange}{\scriptsize{$c$}}};
    \draw[thick, orange] (a) --node[below=-.1cm] {\scriptsize{$a$}} ($ (a) + (\xDome,0) $);
	\draw[thick, orange] ($ (a) + 3*(\xDome,0) $) --node[below=-.1cm] {\scriptsize{$b$}} ($ (a) + 4*(\xDome,0) $);
        \draw[very thick, dotted] (a) arc (180:0:{5*\xDome} and {5*\yDome});
        \draw[very thick] (a) arc (-180:0:{5*\xDome} and {5*\yDome});
        \draw[very thick] (a) arc (-180:180:{5*\xDome});
        \draw[very thick, dotted] (b) arc (180:0:{\xDome} and {\yDome});
        \draw[very thick] (b) arc (-180:0:{\xDome} and {\yDome});
        \draw[very thick] (b) arc (-180:180:{\xDome});
        \draw[very thick, dotted] (c) arc (180:0:{\xDome} and {\yDome});
        \draw[very thick] (c) arc (-180:0:{\xDome});
        \node at ($ (c) + (\xDome,0) $) {$\scriptstyle \phi$};
	\draw[thick, orange, knot] ($ (b) + (\xDome,0) + (-163:{3.2*\xDome} and {3.4*\yDome}) $) arc (-163:140:{3.2*\xDome} and {3.4*\yDome});
	\draw[thick, orange, knot, densely dotted] ($ (b) + (\xDome,0) + (140:{3.2*\xDome} and {3.4*\yDome}) $) arc (140:163:{3.2*\xDome} and {3.4*\yDome});
    \draw[very thick] (c) arc (-180:0:{\xDome} and {\yDome});
        \draw[very thick] (c) arc (180:0:{\xDome});
}
\,\,\simeq\,\,
\tikzmath{
	\pgfmathsetmacro{\xDome}{.4};	
	\pgfmathsetmacro{\yDome}{.15};	
        \coordinate (a) at (0,0);
	\coordinate (b) at ($ (a) + 4*(\xDome,0) $);
	\coordinate (c) at ($ (a) + (\xDome,0) $);
        \draw[very thick, dotted] (a) arc (180:0:{5*\xDome} and {5*\yDome});
	\draw[thick, orange, knot] ($ (a) + 2*(\xDome,0) + (60:{\xDome}) $) .. controls ++(60:1cm) and ++(120:1cm) .. ($ (a) + 2*(\xDome,0) + (120:\xDome)$) ;
	\node at ($ (a) + 2*(\xDome,0) + 8*(0,\yDome) $) {\textcolor{orange}{\scriptsize{$c$}}};
    \draw[thick, orange] (a) --node[below=-.1cm] {\scriptsize{$a$}} ($ (a) + (\xDome,0) $);
	\draw[thick, orange] ($ (a) + 3*(\xDome,0) $) --node[below=-.1cm] {\scriptsize{$b$}} ($ (a) + 4*(\xDome,0) $);
        \draw[very thick] (a) arc (-180:0:{5*\xDome} and {5*\yDome});
        \draw[very thick] (a) arc (-180:180:{5*\xDome});
        \draw[very thick, dotted] (b) arc (180:0:{\xDome} and {\yDome});
        \draw[very thick] (b) arc (-180:0:{\xDome} and {\yDome});
        \draw[very thick] (b) arc (-180:180:{\xDome});
        \draw[very thick, dotted] (c) arc (180:0:{\xDome} and {\yDome});
        \draw[very thick] (c) arc (-180:0:{\xDome} and {\yDome});
        \draw[very thick] (c) arc (-180:180:{\xDome});
        \node at ($ (c) + (\xDome,0) $) {$\scriptstyle \phi$};
}
$$

The algebra $\Sphere(\cA)$ is actually abelian.
Indeed, consider the algebra maps:
\begin{equation}
\label{eqn:SphereAlgMaps}
\tikzmath{
\draw[thick, orange] (-.3,-.3) --node[left]{$\scriptstyle c$} (-.3,-.7);
\draw[thick, orange] (.3,-.3) --node[right]{$\scriptstyle a$} (.3,-.7);
\draw[thick, orange] (-.3,.3) --node[left]{$\scriptstyle b$} (-.3,.7);
\draw[thick, orange] (.3,.3) --node[right]{$\scriptstyle c$} (.3,.7);
\roundNbox{}{(0,0)}{.3}{.2}{.2}{$\phi$}
}
\overset{\lambda}{\longmapsto}
\delta_{a=b}
\cdot
\tikzmath{
\draw[thick, orange] (-.3,-.3) node[left, yshift=-.2cm]{$\scriptstyle c$} arc (-180:0:.6cm) -- (.9,.3) node[right]{$\scriptstyle \overline{c}$} arc (0:180:.3cm);
\draw[knot, thick, orange] (.3,-.3) -- (.3,-.9) --node[right]{$\scriptstyle a$} (.3,-1.3);
\draw[thick, orange] (-.3,.3) --node[left]{$\scriptstyle b$} (-.3,1.1);
\roundNbox{}{(0,0)}{.3}{.2}{.2}{$\phi$}
}
\qquad\text{and}\qquad
\tikzmath{
\draw[thick, orange] (-.3,-.3) --node[left]{$\scriptstyle c$} (-.3,-.7);
\draw[thick, orange] (.3,-.3) --node[right]{$\scriptstyle a$} (.3,-.7);
\draw[thick, orange] (-.3,.3) --node[left]{$\scriptstyle b$} (-.3,.7);
\draw[thick, orange] (.3,.3) --node[right]{$\scriptstyle c$} (.3,.7);
\roundNbox{}{(0,0)}{.3}{.2}{.2}{$\psi$}
}
\overset{\rho}{\longmapsto}
\delta_{a=b}
\cdot
\tikzmath{
\draw[thick, orange] (.3,-.3) -- (.3,-.9) --node[right]{$\scriptstyle a$} (.3,-1.3);
\draw[knot, thick, orange] (-.3,-.3) node[left, yshift=-.2cm]{$\scriptstyle c$} arc (-180:0:.6cm) -- (.9,.3) node[right]{$\scriptstyle \overline{c}$} arc (0:180:.3cm);
\draw[thick, orange] (-.3,.3) --node[left]{$\scriptstyle b$} (-.3,1.1);
\roundNbox{}{(0,0)}{.3}{.2}{.2}{$\psi$}
}\,.
\end{equation}
We see that if $\phi: ca\to ac$ and $\psi: da\to ad$, then $\phi\psi - \psi\phi \in \cJ$:
\begin{align*}
\phi\psi - \psi\phi
&=
\phi\psi 
- \lambda(\phi)\psi + \lambda(\phi)\psi
- \lambda(\phi)\lambda(\psi) + \underbrace{\lambda(\phi)\lambda(\psi)}_{=\lambda(\psi)\lambda(\phi)}
- \lambda(\psi)\phi + \lambda(\psi)\phi
-\psi\phi
\\
&=
\big(\phi - \lambda(\phi)\big)\psi 
+ 
\lambda(\phi)\big(\psi - \lambda(\psi)\big) 
+ 
\lambda(\psi)\big(\lambda(\phi)-\phi\big) 
+ 
\big(\lambda(\psi)-\psi\big)\phi
\in \cJ.
\end{align*}
The next lemma gives an abstract characterization of $\Sphere(\cA)$.

\begin{lem}
\label{lem:SphereAlgebraIso}
The following $*$-algebras are isomorphic:
\begin{enumerate}
    \item $\Sphere(\cA)$,
    \item $T_2/(T_2\cap \cJ)$ where $T_2\subset \Tube(\cX)$ is the subalgebra spanned by $\phi:ca\to bc$ for $a,b,c\in \Irr(\cA)$ such that $a=b\in Z_2(\cA)$, and
    \item $\bbC^{\Irr(Z_2(\cA))}$.
\end{enumerate}
\end{lem}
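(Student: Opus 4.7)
The plan is to establish (1) $\cong$ (3) via representation theory, (2) $\cong$ (3) by exhibiting an explicit idempotent basis, and then deduce (1) $\cong$ (2) by transitivity. We already know from the discussion preceding \eqref{eqn:SphereAlgMaps} that $\Sphere(\cA)$ is abelian, and it is finite-dimensional as a quotient of $\Tube(\cA)$, so all three algebras in question are finite-dimensional abelian $C^*$-algebras; it then suffices to enumerate their minimal central idempotents.

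For (1) $\cong$ (3), I would run the argument of Theorem \ref{thm:DomeReps=EnrichedCenter} in two versions, once for each of the two canonical braided functors $\cA \to Z(\cA)$ defined by the braiding and inverse braiding of $\cA$. A representation of $\Sphere(\cA)$ is a representation of $\Tube(\cA)$ that kills $\cJ$, so under the equivalence $\Rep(\Tube(\cA)) \simeq Z(\cA)$ recalled in Section \ref{sec:tubealgebra}, the generators $\phi - \lambda(\phi)$ correspond via the trace-pairing identity \eqref{eq:TraceInnerProduct} to the condition that $(X, \sigma_X)$ centralize the image of the positive embedding of $\cA$ in $Z(\cA)$, while $\psi - \rho(\psi)$ impose the analogous condition for the negative embedding. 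Together these cut $Z(\cA)$ down to the intersection of the two images, which is $Z_2(\cA)$. Hence $\Rep(\Sphere(\cA)) \simeq Z_2(\cA)$ and $\Sphere(\cA) \cong \bbC^{\Irr(Z_2(\cA))}$.

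For (2) $\cong$ (3), I would work directly inside $T_2$. For each $a \in \Irr(Z_2(\cA))$ the identity $\id_a \in \cA(1 \cdot a \to a \cdot 1)$ lies in $T_2$, and $\{\id_a\}_{a \in \Irr(Z_2(\cA))}$ form pairwise orthogonal idempotents under tube-algebra multiplication, giving a well-defined $*$-homomorphism $\bbC^{\Irr(Z_2(\cA))} \to T_2/(T_2 \cap \cJ)$ by $e_a \mapsto [\id_a]$. For surjectivity, any $\phi \in \cA(ca \to ac) \subset T_2$ with $a \in \Irr(Z_2(\cA))$ satisfies $\phi - \lambda(\phi) \in T_2 \cap \cJ$ because $\lambda(\phi) \in \cA(a \to a) = \bbC \cdot \id_a$ already sits in $T_2$, so $\phi$ reduces to a scalar multiple of $\id_a$ modulo $T_2 \cap \cJ$. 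Injectivity follows by composing with the canonical map $T_2/(T_2 \cap \cJ) \to \Sphere(\cA)$ and invoking (1) $\cong$ (3): each $[\id_a]$ lands on the minimal central idempotent of $\Sphere(\cA)$ associated to $a$. Transitivity then yields (1) $\cong$ (2).

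The main obstacle I anticipate is in the second paragraph: making the translation from the diagrammatic generators $\phi - \lambda(\phi)$ and $\psi - \rho(\psi)$ into the clean algebraic centralizer conditions on $\sigma_X$ fully rigorous. This requires carrying out the matrix-unit and Fourier-decomposition gymnastics of the proof of Theorem \ref{thm:DomeReps=EnrichedCenter} in the slightly more involved sphere topology, tracking the orientation of the wrapping $c$-strand carefully so that $\lambda$ and $\rho$ yield the positive and negative centralizer conditions respectively. Once that translation is in hand, the remainder of the proof is straightforward bookkeeping with idempotents and dimension counting.
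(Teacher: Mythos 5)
Your plan is sound and would yield a correct proof, but it is a genuinely different route from the paper's. The paper never invokes $\Rep(\Tube(\cA))\simeq Z(\cA)$ at all: it works entirely inside the algebra, first establishing two key facts about $\cJ$, namely (a) any $\phi:ca\to bc$ with $a\neq b$ already lies in $\cJ$ (since $\lambda(\phi)=0$), and (b) $\id_a\in\cJ$ whenever $a\notin Z_2(\cA)$, this last by the short computation $(S_{a,c}-d_ad_c)d_a^{-1}\cdot \id_a=\lambda(\beta_{a,c})-\rho(\beta_{a,c})\in\cJ$ for a suitable $c$. These two facts give $T_2+\cJ=\Tube(\cA)$, hence $(1)\cong(2)$ by the second isomorphism theorem, and show that $\delta_a\mapsto[\id_a]$ has $\lambda|_{T_2}$ as an explicit two-sided inverse, hence $(2)\cong(3)$. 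Your route is more conceptual but leaves a couple of things unspoken. The phrase ``cut $Z(\cA)$ down to the intersection of the two images'' compresses a computation: the $\lambda$-condition says $\sigma_{X,a}=\beta_{a,X}^{-1}$ for all $a$, which forces $\sigma_X$ to be one of the two canonical half-braidings on $X$, and the $\rho$-condition forces the other; having both at once says precisely that the two canonical lifts of $X$ to $Z(\cA)$ agree, i.e.\ $X\in Z_2(\cA)$ with its canonical inclusion. That is the crux and should be spelled out; it is exactly what the paper's $S$-matrix identity is encoding directly. Also, your $(2)\cong(3)$ injectivity step is not self-contained, as it leans on $(1)\cong(3)$ (and on knowing that each $[\id_a]$ is nonzero in $\Sphere(\cA)$, which needs a word), whereas the paper proves it directly via the retraction $\lambda|_{T_2}$. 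The trade-off is clear: your approach foregrounds the anyon interpretation and avoids the $S$-matrix characterization of $Z_2(\cA)$, while the paper's stays entirely inside $\Tube(\cA)$ and avoids the representation-theoretic translation you correctly flagged as requiring care.
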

\begin{proof}
First, given 
$a,b,c\in \Irr(\cA)$
and
$\phi : ca\to bc$ in $\Tube(\cA)$,
if $a\neq b$, then $\phi \in \cJ$.
If $a\neq b$, then $\lambda(\phi)=0$, so 
$\phi = \phi-\lambda(\phi)\in \cJ$.

Second, suppose $a\notin \Irr(Z_2(\cA))$, and pick $c\in \Irr(\cA)$ whose $S$-matrix entry $S_{a,c}\neq d_ad_c$.
We claim that $\id_a\in \cJ$.
Indeed, 
$$
\frac{S_{a,c}
-
d_ad_c}{d_a}
\cdot
\id_a
=
\tikzmath{
\draw[thick, orange] (0,0) -- (0,.5);
\draw[thick, red, knot] (0,0) circle (.3cm);
\draw[thick, orange, knot] (0,0) -- (0,-.5);
\node[red] at (-.45,0) {$\scriptstyle c$};
\node[orange] at (.15,0) {$\scriptstyle a$};
}
-
\tikzmath{
\draw[thick, orange] (0,-.5) -- (0,.5);
\draw[thick, red, knot] (0,0) circle (.3cm);
\node[red] at (-.45,0) {$\scriptstyle c$};
\node[orange] at (.15,0) {$\scriptstyle a$};
}
=
\lambda(\beta_{a,c})
-
\rho(\beta_{a,c})
=
\lambda(\beta_{a,c})
-
\beta_{a,c}
+
\beta_{a,c}
-
\rho(\beta_{a,c})
\in\cJ.
$$

\item[\underline{$(1)\cong(2)$:}]
By the second isomorphism theorem, $T_2/(T_2\cap \cJ) \cong (T_2+\cJ)/\cJ$.
By the two facts proven above, every representative of an element of $\Sphere(\cA)$ in $\Tube(\cA)$ can be taken from $T_2$.
This means the $*$-algebra map $(T_2+\cJ)/\cJ \to \Sphere(\cA)$
given by
$x+\cJ\mapsto x+\cJ$ 
is onto and thus an isomorphism.

\item[\underline{$(2)\cong(3)$:}]
By the two facts proven above, the map $\delta_a\mapsto \id_a+T_2\cap \cJ$ is a surjective $*$-algebra map $\bbC^{\Irr(Z_2(\cA))}\to T_2/(T_2\cap \cJ)$.
The inverse is given by $\lambda|_{T_2}$, which descends to $T_2/(T_2\cap \cJ)$ since $T_2\cap \cJ \subset \ker(\lambda|_{T_2})$.
\end{proof}

The proof of the following theorem is similar to Theorem \ref{thm:DomeReps=EnrichedCenter} and omitted.
The equivalences of (1)-(4) below are all well known; e.g., see \cite[Lem.~7.5]{MR1966525}.

\begin{thm}
Suppose $\cA$ is a braided fusion category.
The following are equivalent.
\begin{enumerate}
\item 
$a\in Z_2(\cA)$, the M\"uger center of $\cA$,
\item
The braidings $\beta_{a,b}$ and $\beta_{b,a}^{-1}$
agree for all $b\in \cA$,
\item
The two canonical lifts of $a\in\cA$ to the Drinfeld center $Z(\cA)$ agree,
\item
The braidings $\beta_{a,b}$ and $\beta_{b,a}^{-1}$ have the same matrix elements, i.e., 
for every
$m\in \cA(a\to c)$,
$n\in \cA(a\to d)$,
$f\in \cA(b c\to e)$,
and
$g\in \cA(d b\to e)$
for $c,d,e\in \Irr(\cA)$,
$$
\tikzmath{
\draw[thick, red] (-.3,-.45) node[left, yshift=.2cm]{$\scriptstyle a$} .. controls ++(90:.3cm) and ++(-90:.3cm) .. (.3,.45);
\draw[knot, thick, orange] (.2,-1.4) -- (.2,-.45) .. controls ++(90:.3cm) and ++(-90:.3cm) .. (-.2,.45) --node[left]{$\scriptstyle b$} (-.2,1.4);
\draw (.3,.95) --node[right]{$\scriptstyle c$} (.3,1.4);
\draw (-.3,-.95) --node[left]{$\scriptstyle d$} (-.3,-1.4);
\draw (0,2) --node[right]{$\scriptstyle e$} (0,2.4);
\draw (0,-2) --node[right]{$\scriptstyle e$} (0,-2.4);
\roundNbox{fill=white}{(0,1.7)}{.3}{.2}{.2}{$f$}
\roundNbox{fill=white}{(.3,.7)}{.25}{0}{0}{$m$}
\roundNbox{fill=white}{(-.3,-.7)}{.25}{0}{0}{$n^\dag$}
\roundNbox{fill=white}{(0,-1.7)}{.3}{.2}{.2}{$g^\dag$}
}
=
\tikzmath{
\draw[thick, orange] (.2,-1.4) -- (.2,-.45) .. controls ++(90:.3cm) and ++(-90:.3cm) .. (-.2,.45) --node[left]{$\scriptstyle b$} (-.2,1.4);
\draw[thick, red, knot] (-.3,-.45) node[left, yshift=.2cm]{$\scriptstyle a$} .. controls ++(90:.3cm) and ++(-90:.3cm) .. (.3,.45);
\draw (.3,.95) --node[right]{$\scriptstyle c$} (.3,1.4);
\draw (-.3,-.95) --node[left]{$\scriptstyle d$} (-.3,-1.4);
\draw (0,2) --node[right]{$\scriptstyle e$} (0,2.4);
\draw (0,-2) --node[right]{$\scriptstyle e$} (0,-2.4);
\roundNbox{fill=white}{(0,1.7)}{.3}{.2}{.2}{$f$}
\roundNbox{fill=white}{(.3,.7)}{.25}{0}{0}{$m$}
\roundNbox{fill=white}{(-.3,-.7)}{.25}{0}{0}{$n^\dag$}
\roundNbox{fill=white}{(0,-1.7)}{.3}{.2}{.2}{$g^\dag$}
}
$$
\item
The representation of the tube algebra $\Tube(\cA)$ associated to $(a,\beta_a)$ descends to a representation of the sphere algebra $\Sphere(\cA)$.
\end{enumerate}
\end{thm}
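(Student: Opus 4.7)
The plan is to follow the template of Theorem \ref{thm:DomeReps=EnrichedCenter} essentially verbatim, replacing the fully faithful braided central functor $\Phi^Z:\cA\to Z(\cX)$ with the pair of canonical lifts $\cA\to Z(\cA)$ given by $\beta$ and $\beta^{-1}$, and replacing the dome ideal $\cJ^\cA$ with the sphere ideal $\cJ$. The equivalences $(1)\Leftrightarrow(2)\Leftrightarrow(3)\Leftrightarrow(4)$ are the classical characterizations of the M\"uger center (see e.g.\ \cite[Lem.~7.5]{MR1966525}), and the only content requiring a genuine computation is $(4)\Leftrightarrow(5)$.

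For $(4)\Rightarrow(5)$, I would argue that every generator of $\cJ$ acts as zero on the $\Tube(\cA)$-representation $\cH_a=\bigoplus_{c\in\Irr(\cA)}\cA(a\to c)$ associated to $(a,\beta_a)\in Z(\cA)$. Concretely, for $\phi:ca\to bc$, the generator $\phi-\lambda(\phi)\in\cJ$ (with $\lambda$ as in \eqref{eqn:SphereAlgMaps}) has the property that after factoring $\phi$ through simples as in \eqref{eq:FactorTubeThroughSimples} and applying the trace formula \eqref{eq:TraceInnerProduct}, the matrix coefficients $\langle n|\phi\rhd m\rangle$ and $\langle n|\lambda(\phi)\rhd m\rangle$ become precisely the two sides of the matrix-unit identity appearing in (4), with $\beta_{a,c}$ playing the role of $\sigma_{X,\Phi(a)}$ and $\beta_{c,a}^{-1}$ playing the role of $\sigma_{\Phi(a),X}^{-1}$. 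Hypothesis (4) is then exactly the statement that these match, and the analogous argument handles $\psi-\rho(\psi)$. Hence $\cJ$ acts trivially, so the representation descends to $\Sphere(\cA)$.

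For $(5)\Rightarrow(4)$, I would run the same calculation in reverse. Given simples $c,d,e\in\Irr(\cA)$ together with $m:a\to c$, $n:a\to d$, $f:bc\to e$, $g:db\to e$, I would exhibit an explicit element of $\cJ$ whose matrix coefficient $\langle n|\cdot\rhd m\rangle$ on $\cH_a$ equals the difference of the two sides of (4). Vanishing of the action of $\cJ$ then forces that difference to be zero, which (as $m,n,f,g$ were arbitrary) yields (4).

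The main obstacle, exactly as in the proof of Theorem \ref{thm:DomeReps=EnrichedCenter}, is the bookkeeping translation between three presentations of the same data: (a) the generators of $\cJ$ written as differences $\phi-\lambda(\phi)$ and $\psi-\rho(\psi)$; (b) the trace expression for $\langle n|\phi\rhd m\rangle$ from \eqref{eq:TraceInnerProduct}, with the tube element factored through simples; and (c) the matrix-unit formulation of the centralizer condition in (4). Once the dictionary $X\leftrightarrow a$, $\Phi(a)\leftrightarrow c$, $\sigma_X\leftrightarrow\beta_a$ is fixed, the diagrammatic manipulations are formally identical to those in \S\ref{sec:DomeAlgebra}, so no new ideas are required beyond what was already developed there.
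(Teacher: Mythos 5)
Your proposal matches the paper's intended approach exactly: the paper omits the proof, stating only that it is ``similar to Theorem \ref{thm:DomeReps=EnrichedCenter}'' and citing \cite[Lem.~7.5]{MR1966525} for (1)--(4), and your plan is precisely to run the dome-algebra computation verbatim with the dictionary $X\leftrightarrow a$, $\Phi(a)\leftrightarrow$ tube object, $\cJ^{\cA}\leftrightarrow\cJ$. The only small point worth noting is that since the representation $\cH_a$ uses a fixed one of the two braidings to define the half-braiding, one of the two families of sphere-ideal generators $\phi-\lambda(\phi)$, $\psi-\rho(\psi)$ acts as zero for trivial reasons while the other encodes the nontrivial matrix-unit identity of (4); this does not affect your argument but is the reason (4), which asserts agreement of the two braidings, is exactly what one extracts.
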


\begin{rem}
While the sphere algebra is similar to the dome algebra, representations of the former correspond to $Z_2(\cA)$, while representations of the latter gives $Z^\cA(\cA)$, the centralizer $Z(\cA)$ of $A$ in $Z(\cA)$.
The canonical quotient map $\Dome^\cA(\cA)\twoheadrightarrow \Sphere(\cA)$ corresponds to the inclusion $Z_2(\cA)\hookrightarrow Z^\cA(\cA)$.
\end{rem}

Similar to \eqref{eq:TubeAction} above, we have a $\Tube(\cA)$-action on any state with a localized bulk excitation.
Here, localized means that the bulk Hamiltonian is violated at (at most) 5 terms, namely one edge term and the 4 neighboring plaquette terms.
By a Morita equivalence argument similar to that in \S\ref{sec:StringOperators}, it again suffices to consider the $\Tube(\cA)$-action at a single edge $\ell$ connected to one vertex of the form
$$
\tikzmath{
\draw[thick, orange] (-.3,-.3) --node[left]{$\scriptstyle d$} (-.3,-.7);
\draw[thick, orange] (.3,-.3) --node[right]{$\scriptstyle b$} (.3,-.7);
\draw[thick, orange] (-.3,.3) --node[left]{$\scriptstyle c$} (-.3,.7);
\draw[thick, orange] (.3,.3) --node[right]{$\scriptstyle d$} (.3,.7);
\roundNbox{}{(0,0)}{.3}{.2}{.2}{$f$}
}
\rhd
\tikzmath[scale=1.5]{
\draw[thick, red, xshift=-.4cm, yshift=.4cm] (.7,.7) grid (3.3,2.3);
\begin{scope}[xshift=-.2cm, yshift=.2cm]
\draw[thick, knot, thick, red] (.7,1) -- (3.3,1);
\draw[thick, knot, thick, red] (.7,2) -- (3.3,2);
\draw[thick, knot, thick, red] (1,.7) -- (1,2.3);
\draw[thick, knot, thick, red] (3,.7) -- (3,2.3);
\draw[thick, knot, thick, red] (2,.7) -- (2,1.3);
\draw[thick, knot, thick, red] (2,2.3) -- (2,2);
\end{scope}
\draw[thick, knot, red] (.7,.7) grid (3.3,2.3);
\foreach \x in {1,2,3}{
\foreach \y in {1,2}{
\draw[thick, red] ($ (\x,\y) + (.2,-.2) $) -- ($ (\x,\y) + (-.6,.6) $);
}}
\foreach \x in {.8,1.8,2.8}{
\foreach \y in {1.2,2.2}{
\fill[red] (\x,\y) circle (.03cm);
}}
\foreach \x in {.6,1.6,2.6}{
\foreach \y in {1.4,2.4}{
\fill[red] (\x,\y) circle (.03cm);
}}
\foreach \x in {1,2,3}{
\foreach \y in {1,2}{
\fill[red] (\x,\y) circle (.03cm);
}}
\node[red] at (1.9,1.5) {$\scriptstyle a$};
}
:=
\delta_{a=b}
\left(\frac{d_c}{d_a}\right)^{1/4}
\cdot
\tikzmath[scale=1.5]{
\draw[thick, red, xshift=-.4cm, yshift=.4cm] (.7,.7) grid (3.3,2.3);
\begin{scope}[xshift=-.2cm, yshift=.2cm]
\draw[thick, knot, thick, red] (.7,1) -- (3.3,1);
\draw[thick, knot, thick, red] (.7,2) -- (3.3,2);
\draw[thick, knot, thick, red] (1,.7) -- (1,2.3);
\draw[thick, knot, thick, red] (3,.7) -- (3,2.3);
\draw[thick, knot, thick, red] (2,.7) -- (2,1.3);
\draw[thick, knot, thick, red] (2,2.3) -- (2,2);
\end{scope}
\foreach \x in {.6,1.6,2.6}{
\foreach \y in {1.4,2.4}{
\fill[red] (\x,\y) circle (.03cm);
}}
\draw[thick, knot, red] (.7,.7) grid (3.3,2.3);
\foreach \x in {1,2,3}{
\foreach \y in {1}{
\draw[thick, red] ($ (\x,\y) + (.2,-.2) $) -- ($ (\x,\y) + (-.6,.6) $);
}}
\filldraw[orange, thick, fill=gray!30, xshift=-.2cm, yshift=.2cm] (1.5,1.1) -- (1.8,1.1) to[out=0,in=180] (2,1.3) to[out=0,in=180] (2.2,1.1) -- (2.8,1.1) arc (-90:0:.1cm) -- (2.9,1.8) arc (0:90:.1cm) -- (1.2,1.9) arc (90:180:.1cm) -- (1.1,1.2) arc (180:270:.1cm) -- (1.5,1.1);
\draw[thick, knot, red] (.7,.7) grid (3.3,2.3);
\foreach \x in {1,2,3}{
\foreach \y in {2}{
\draw[thick, red] ($ (\x,\y) + (.2,-.2) $) -- ($ (\x,\y) + (-.6,.6) $);
}}
\foreach \x in {.8,1.8,2.8}{
\foreach \y in {1.2,2.2}{
\fill[red] (\x,\y) circle (.03cm);
}}
\foreach \x in {1,2,3}{
\foreach \y in {1,2}{
\fill[red] (\x,\y) circle (.03cm);
}}
\node[orange] at (1.3,1.4) {$\scriptstyle d$};
\draw[thick, red] (1.8,1.5) -- (1.8,1.7);
\filldraw[orange] (1.8,1.5) node[right, yshift=.1cm]{$\scriptstyle f$} circle (.05cm);
}\,.
$$
One now applies the argument of \eqref{eq:5=6} to see that this equal to the action of
$$
\tikzmath{
\draw[thick, orange] (-.3,-.3) node[left, yshift=-.2cm]{$\scriptstyle d$} arc (-180:0:.6cm) -- (.9,.3) node[right]{$\scriptstyle \overline{d}$} arc (0:180:.3cm);
\draw[knot, thick, orange] (.3,-.3) -- (.3,-.9) --node[right]{$\scriptstyle b$} (.3,-1.3);
\draw[thick, orange] (-.3,.3) --node[left]{$\scriptstyle c$} (-.3,1.1);
\roundNbox{}{(0,0)}{.3}{.2}{.2}{$f$}
}
\qquad\text{ or }\qquad
\tikzmath{
\draw[thick, orange] (.3,-.3) -- (.3,-.9) --node[right]{$\scriptstyle b$} (.3,-1.3);
\draw[knot, thick, orange] (-.3,-.3) node[left, yshift=-.2cm]{$\scriptstyle d$} arc (-180:0:.6cm) -- (.9,.3) node[right]{$\scriptstyle \overline{d}$} arc (0:180:.3cm);
\draw[thick, orange] (-.3,.3) --node[left]{$\scriptstyle c$} (-.3,1.1);
\roundNbox{}{(0,0)}{.3}{.2}{.2}{$f$}
}
\,,
$$
and thus this $\Tube(\cA)$-action descends to an action of $\Sphere(\cA)$.
This argument also proves that any other similar $\Tube(\cA)$-action one writes down at this edge
(for example, we could glue in the $d$ string around $f$ to a different loop around the boundary)
will be equal to this one.


\begin{fact}[{\cite[Prop.~4.3]{MR3022755}}]
Suppose $\cX$ is an $\cA$-enriched UFC, where the braided central functor $\Phi^Z:\cA\to Z(\cX)$ is full.
The M\"uger center of the enriched center $Z^\cA(\cX)$ is $Z_2(\cA)$.
\end{fact}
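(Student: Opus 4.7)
The plan is to reduce the claim to M\"uger's double centralizer theorem for nondegenerate braided fusion categories. The essential inputs are (i) nondegeneracy of $Z(\cX)$, and (ii) the fact that $\Phi^Z:\cA \to Z(\cX)$ is a braided embedding, so that one may simply compute centralizers inside the ambient modular category.

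First I would invoke that $Z(\cX)$ is a nondegenerate (indeed modular) braided fusion category, which holds for the Drinfeld center of any spherical UFC by M\"uger's theorem. Combined with the hypothesis that $\Phi^Z$ is fully faithful, this lets us identify $\cA$ with its image, a braided fusion subcategory of $Z(\cX)$; under this identification, $Z^\cA(\cX)$ is precisely the M\"uger centralizer $\cA'$ of $\cA$ inside $Z(\cX)$.

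Next I would apply M\"uger's double centralizer theorem: if $\cB$ is a nondegenerate braided fusion category and $\cC\subset\cB$ is a braided fusion subcategory, then $(\cC')'=\cC$ and $\dim(\cC)\dim(\cC')=\dim(\cB)$. Specializing to $\cB=Z(\cX)$ and $\cC=\cA$ gives $(Z^\cA(\cX))' = (\cA')' = \cA$ inside $Z(\cX)$. Therefore
\begin{equation*}
Z_2(Z^\cA(\cX)) \;=\; Z^\cA(\cX)\cap (Z^\cA(\cX))' \;=\; \cA'\cap \cA \;=\; Z_2(\cA),
\end{equation*}
where the final equality uses that the intrinsic M\"uger center of $\cA$ coincides with $\cA\cap\cA'$ in any braided fusion category into which $\cA$ braided-embeds.

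The main obstacle is ensuring that the intrinsic M\"uger center $Z_2(\cA)$ agrees with $\cA\cap\cA'$ computed inside $Z(\cX)$: this is where the braided (not merely monoidal) nature of $\Phi^Z$ is essential, since the braiding on $\cA$ must be the restriction of the braiding of $Z(\cX)$. Once nondegeneracy of $Z(\cX)$ and the braided embedding of $\cA$ are in hand, everything else is a formal consequence of the double centralizer theorem; the content of the statement is therefore concentrated in these two structural inputs rather than in any delicate computation.
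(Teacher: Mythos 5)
Your proof is correct, and since the paper supplies no argument for this statement (it is a Fact cited to \cite[Prop.~4.3]{MR3022755}), there is nothing in the text to compare against; you have simply supplied the standard derivation from M\"uger's double centralizer theorem, which is the intended justification. The logic is sound: $Z(\cX)$ is modular, $\Phi^Z$ realizes $\cA$ as a full braided fusion subcategory so that $Z^\cA(\cX)=\cA'$ and $Z_2(\cA)=\cA\cap\cA'$, and nondegeneracy of the ambient category gives $\cA''=\cA$, whence $Z_2(\cA')=\cA'\cap\cA''=\cA'\cap\cA=Z_2(\cA)$. The only small point worth flagging is a wording mismatch: the Fact as printed asks only that $\Phi^Z$ be \emph{full}, whereas you (correctly) use \emph{fully faithful}. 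Fullness alone would not suffice --- both fullness and faithfulness are needed to identify $\cA$ with a full braided subcategory of $Z(\cX)$ and to transport $Z_2$ across $\Phi^Z$ --- so you are implicitly also relying on the section's standing assumption that $\Phi^Z$ is fully faithful, which is the right hypothesis. The $\Rep(G)\to Z(\Vect)$ counterexample given in the subsequent Remark illustrates exactly the failure mode when fullness is dropped, confirming that the hypothesis is the correct one to isolate.
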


\begin{rem}
The above lemma clearly fails when the braided functor $\cA\to Z(\cX)$ is not full.
For example, consider $\cA=\Rep(G)$ and $\cX=\Vect$ and $\cA\to Z(\cX)=\Vect$ is the forgetful functor.
\end{rem}

Given a bulk excitation $a\in \Irr(Z_2(\cA))$,
for every 
$\psi: \Phi(a)\to x$ where $x\in \Irr(\cX)$,
we can define the \emph{bulk-to-boundary hopping operator} 
$$
H^{(a,\beta_a)}_\pi(\psi)
:=
B_oB_pB_qB_r T_\pi^{(a,\beta_a)}(\psi)\pi_k^1
$$
as in the cartoon below.
$$
\tikzmath[scale=2]{
\draw[thick, red, xshift=-.4cm, yshift=.4cm] (.7,.7) grid (3.3,3.3);
\filldraw[thick, orange, rounded corners=5pt, fill=gray!30] (1.65,1.7) -- (1.65,2.3) -- (1.75,2.1) -- (1.75,1.3) -- (1.65,1.5) -- (1.65,1.7);
\begin{scope}[xshift=-.2cm, yshift=.2cm]
\draw[thick, knot, thick, red] (.7,1) -- (3.3,1);
\draw[thick, knot, thick, red] (.7,2) -- (3.3,2);
\draw[thick, knot, thick, red] (.7,3) -- (3.3,3);
\draw[thick, knot, thick, red] (1,.7) -- (1,3.3);
\draw[thick, knot, thick, red] (3,.7) -- (3,3.3);
\draw[thick, knot, thick, red] (2,.7) -- (2,1);
\draw[thick, knot, thick, red] (2,3.3) -- (2,2);
\end{scope}
\foreach \x in {1,2,3}{
\foreach \y in {1,2,3}{
\fill[red] ($ (\x,\y) + (-.4,.4) $) circle (.025cm);
}}
\filldraw[thick, orange, knot, rounded corners=5pt, fill=gray!30] (.9,1.3) rectangle (1.7,2.1);
\filldraw[thick, orange, knot, rounded corners=5pt, fill=gray!30] (1.9,1.3) rectangle (2.7,2.1);
\filldraw[thick, orange, rounded corners=5pt, xshift=.2cm, yshift=-.2cm, fill=gray!30] (1.65,1.7) -- (1.65,2.3) -- (1.75,2.1) -- (1.75,1.3) -- (1.65,1.5) -- (1.65,1.7);
\draw[thick, cyan, knot, decorate, decoration={snake, segment length=1mm, amplitude=.2mm}] (1.8,1.2) --node[right, xshift=-.07cm]{$\scriptstyle a$} (1.8,2.1) -- (2,2.1) -- (2.2,1.9);
\draw[thick, knot, red] (.7,.7) grid (3.3,3.3);
\draw[thick, cyan, knot, decorate, decoration={snake, segment length=1mm, amplitude=.2mm}] (2.2,1.9) -- (2.2,2.15) ;
\begin{scope}[xshift=.2cm, yshift=-.2cm]
\draw[thick, knot, thick] (.7,1) -- (3.3,1);
\draw[thick, knot, thick] (.7,2) -- (3.3,2);
\draw[thick, knot, thick] (.7,3) -- (3.3,3);
\draw[thick, knot, thick] (1,.7) -- (1,3.3);
\draw[thick, knot, thick] (3,.7) -- (3,3.3);
\draw[thick, knot, thick] (2,.7) -- (2,2);
\draw[thick, knot, thick] (2,3.3) -- (2,3);
\draw[thick, knot, thick] (2,2.35) -- (2,2.55);
\end{scope}
\foreach \x in {1,2,3}{
\foreach \y in {2,3}{
\draw[thick, red] ($ (\x,\y) + (.2,-.2) $) -- ($ (\x,\y) + (-.6,.6) $);
}}
\draw[thick, red] ($ (1,1) + (.2,-.2) $) -- ($ (1,1) + (-.6,.6) $);
\draw[thick, red] ($ (2,1) + (.2,-.2) $) -- ($ (2,1) + (-.3,.3) $);
\draw[thick, red] ($ (3,1) + (.2,-.2) $) -- ($ (3,1) + (-.3,.3) $);
\foreach \x in {1,2,3}{
\foreach \y in {1,2,3}{
\fill[red] ($ (\x,\y) + (-.2,.2) $) circle (.025cm);
\fill[red] (\x,\y) circle (.025cm);
\fill ($ (\x,\y) + (.2,-.2) $) circle (.025cm);
}}
\filldraw[cyan] (2.2,2.15) circle (.025cm) node[right]{$\scriptstyle \psi$};
}
$$
That is, since $\Sphere(\cA)\cong \bbC^{\Irr(Z_2(\cA))}$, the $\Sphere(\cA)$-representation at our bulk edge $\ell$ and vertex $v$ must be labelled by the simple object $a$.
We apply $\pi^1_{k}$ at the boundary edge $k$, we extend $a$ along the path $\pi$
placing $\psi$ at the end to get a simple $x\in \Irr(\cX)$ on the boundary,
and finally we apply the 4 bulk plaquette operators $B_oB_pB_qB_r$ which meet the bulk edge $\ell$.

By \eqref{eq:IzumiTubeAction}, the bulk-to-boundary hopping operator $H^{(a,\beta_a)}_\pi(\psi)$ takes vectors in the $\Sphere(\cA)$-representation corresponding to $a\in \Irr(Z_2(\cA))$
to vectors in the $\Dome^{\textcolor{red}{\cA}}(\cX)$-representation corresponding to $\Phi^Z(a)$.
Thus the hopping operator physically demonstrates that $Z^\cA(\cX)$ is a $Z_2(\cA)$-enriched UBFC \cite[Prop.~4.3]{MR3022755}.

The above discussion is sub-optimal, as the hopping operator depends on a choice of $\psi$.
This problem occurs because we have pushed $a\in\cA$ into $\cX$ via $\Phi$, and $\Sphere(\cA)$ is not actually a quotient of $\Dome^{\textcolor{red}{\cA}}(\cX)$, but only Morita equivalent to a quotient.
If we instead use a model for the boundary where boundary excitations can be hosted at a single vertex as in Remark \ref{rem:HostExcitationsAtVertices} above, 
$$
\tikzmath{
\draw[thick] (-.5,0) node[left]{$\scriptstyle w$} -- (.5,0) node[right]{$\scriptstyle z$};
\draw[thick] (0,-.5) node[below]{$\scriptstyle x$} -- (0,.5) node[above]{$\scriptstyle y$};
\filldraw (0,0) circle (.05cm);
\draw[thick, red] (0,0) -- (-.4,.4) node[left]{$\scriptstyle a$}; 
\draw[thick, blue, decorate, decoration={snake, segment length=1mm, amplitude=.2mm}] (0,0) -- (.4,-.4) node[right]{$\scriptstyle X$}; 
\node at (-.2,-.2) {$\scriptstyle v$};
\draw[purple!50, very thin] (-.25,.5) -- (.25,-.5);
{\draw[purple!50, very thin, -stealth] ($ (-.15,.3) - (.12,.06)$) to ($ (-.15,.3) + (.12,.06)$);}
{\draw[purple!50, very thin, -stealth] ($ (.15,-.3) - (.12,.06)$) to ($ (.15,-.3) + (.12,.06)$);}
}
\leftrightarrow
\cH_v:=
\bigoplus_{\substack{
x\in \Irr(\cX)
\\
X\in \Irr(Z^\cA(\cX))
}}
\cX(\Phi(a)wx \to yzF(X))
$$
we can avoid the choice of $\psi$ altogether.
Indeed, $\bbC^{\Irr(Z_2(\cA))}$ is an honest quotient of $\bbC^{\Irr(Z^{\cA}(\cX))}$, which is the algebra of local operators (Morita equivalent to $\Dome^{\textcolor{red}{\cA}}(\cX)$) acting at this type of ancilla.
In this setting, we can define the bulk-to-boundary hopping operator by
$$
H_\pi^{(a,\beta_a)}:=
B_oB_pB_qB_r T_\pi^{(a,\beta_a)}
$$
as in the cartoon below:
$$
\tikzmath[scale=2]{
\draw[thick, red, xshift=-.4cm, yshift=.4cm] (.7,.7) grid (3.3,2.3);
\filldraw[thick, orange, rounded corners=5pt, fill=gray!30] (1.65,1.7) -- (1.65,2.3) -- (1.75,2.1) -- (1.75,1.3) -- (1.65,1.5) -- (1.65,1.7);
\begin{scope}[xshift=-.2cm, yshift=.2cm]
\draw[thick, knot, thick, red] (.7,1) -- (3.3,1);
\draw[thick, knot, thick, red] (.7,2) -- (3.3,2);
\draw[thick, knot, thick, red] (1,.7) -- (1,2.3);
\draw[thick, knot, thick, red] (3,.7) -- (3,2.3);
\draw[thick, knot, thick, red] (2,.7) -- (2,1);
\draw[thick, knot, thick, red] (2,2.3) -- (2,2);
\end{scope}
\foreach \x in {1,2,3}{
\foreach \y in {1,2}{
\fill[red] ($ (\x,\y) + (-.4,.4) $) circle (.025cm);
}}
\filldraw[thick, orange, knot, rounded corners=5pt, fill=gray!30] (.9,1.3) rectangle (1.7,2.1);
\filldraw[thick, orange, knot, rounded corners=5pt, fill=gray!30] (1.9,1.3) rectangle (2.7,2.1);
\filldraw[thick, orange, rounded corners=5pt, xshift=.2cm, yshift=-.2cm, fill=gray!30] (1.65,1.7) -- (1.65,2.3) -- (1.75,2.1) -- (1.75,1.3) -- (1.65,1.5) -- (1.65,1.7);
\draw[thick, cyan, knot, decorate, decoration={snake, segment length=1mm, amplitude=.2mm}] (1.8,1.2) --node[right, xshift=-.07cm]{$\scriptstyle a$} (1.8,2.1) -- (2,2.1) -- (2.5,1.6);
\draw[thick, knot, red] (.7,.7) grid (3.3,2.3);
\draw[thick, knot, xshift=.2cm, yshift=-.2cm] (.7,.7) grid (3.3,2.3);
\foreach \x in {1,2,3}{
\foreach \y in {2}{
\draw[thick, red] ($ (\x,\y) + (.2,-.2) $) -- ($ (\x,\y) + (-.6,.6) $);
}}
\draw[thick, red] ($ (1,1) + (.2,-.2) $) -- ($ (1,1) + (-.6,.6) $);
\draw[thick, red] ($ (2,1) + (.2,-.2) $) -- ($ (2,1) + (-.3,.3) $);
\draw[thick, red] ($ (3,1) + (.2,-.2) $) -- ($ (3,1) + (-.3,.3) $);
\foreach \x in {1,2,3}{
\foreach \y in {1}{
\draw[thick, blue, decorate, decoration={snake, segment length=1mm, amplitude=.2mm}] ($ (\x,\y) + (.2,-.2) $) -- ($ (\x,\y) + (.4,-.4) $);
}}
\draw[thick, blue, decorate, decoration={snake, segment length=1mm, amplitude=.2mm}] ($ (1,2) + (.2,-.2) $) -- ($ (1,2) + (.4,-.4) $);
\draw[thick, blue, decorate, decoration={snake, segment length=1mm, amplitude=.2mm}] ($ (3,2) + (.2,-.2) $) -- ($ (3,2) + (.4,-.4) $);
\foreach \x in {1,2,3}{
\foreach \y in {1,2}{
\fill[red] ($ (\x,\y) + (-.2,.2) $) circle (.025cm);
\fill[red] (\x,\y) circle (.025cm);
\fill ($ (\x,\y) + (.2,-.2) $) circle (.025cm);
}}
}
$$
Now using the isomorphism $\Sphere(\cA)\cong \bbC^{\Irr(Z_2(\cA))}$ from Remark \ref{lem:SphereAlgebraIso}, we see that this hopping operator intertwines the $\Sphere(\cA)$-action coming from the bulk with the $\bbC^{\Irr(Z_2(\cA))}\subset \bbC^{\Irr(Z^\cA(\cX))}$-action hosted on the blue edges.

\section{Conclusion}

In this manuscript, we have reviewed the explicit construction of the 2D boundary boundary theory of the 3D Walker-Wang model given from \cite{MR4640433} by a UMTC $\mathcal{A}$ as a sort of enriched Levin-Wen model based on an $\mathcal{A}$-enriched UFC $\cX$. 
Using techniques from TQFT, we have confirmed the prediction of \cite{MR4640433} that the excitations of this 2D model are given by the enriched center/M\"uger centralizer $Z^\mathcal{A}(\cX)$ of $\mathcal{A}$ in $Z(\cX)$. 
We achieved this by defining the algebra $\Dome^{\cA}(\cX)$ and showing that its irreducible representations are in one-to-one correspondence with the excitations of the boundary theory.  
This is analogous to how the irreducible representations of the tube algebra $\Tube(\cX)$ are in one-to-one correspondence of the excitations of an unenriched Levin-Wen model given by a UFC $\cX$.

All of these arguments are made on finite lattices. 
There is also an operator algebraic approach \cite{MR3617688} for identifying excitation types on infinite lattices. 
It would be interesting to check that the excitation types on the infinite lattice align with those from our analysis.

In our construction, we have not considered protection by 0-form symmetries. 
A direction for future research would be to endow the bulk theory with such a symmetry and to study the excitations in the resulting anomalous $G$-enriched topological order in the 2D boundary theory.

\printindex

\bibliographystyle{alpha}
{\footnotesize{
\bibliography{bibliography}
}}
\end{document}